\theoremstyle{plain}
\newtheorem{theorem}{Theorem}[section]
\newtheorem{lemma}[theorem]{Lemma}
\newtheorem{claim}[theorem]{Claim}
\newtheorem{observation}[theorem]{Observation}
\newcommand{\qedsymb}{\hfill{\rule{2mm}{2mm}}}
\renewenvironment{proof}{\begin{trivlist} \item[\hspace{\labelsep}{\bf \noindent Proof.\/}] }{\qedsymb\end{trivlist}}
\newcommand{\MyAbove}[2]{\genfrac{}{}{0pt}{}{#1}{#2}}
\newcommand{\bs}[1]{\boldsymbol{#1}}
\newcommand{\eps}{\epsilon}
\newcommand{\opt}{\mathrm{OPT}}
\newcommand{\objfunc}{\Phi}
\newcommand{\newobj}{\Psi}
\newcommand{\mymiddle}{\mathrm{mid}}
\newcommand{\mysparse}{\mathrm{sparse}}
\newcommand{\myheavy}{\mathrm{heavy}}
\newcommand{\mylight}{\mathrm{light}}
\newcommand{\mytop}{\mathrm{top}}
\newcommand{\mycore}{\mathrm{core}}
\newcommand{\mycap}{\mathrm{capacity}}
\newcommand{\mycross}{\mathrm{cross}}
\newcommand{\mytime}{\mathrm{Time}}
\newcommand{\myspace}{\mathrm{spaced}}
\newcommand{\myfix}{\mathrm{FixCrossing}}
\newcommand{\mythin}{\mathrm{Thin}}
\newcommand{\mybulky}{\mathrm{Bulky}}
\newcommand{\myextra}{\mathrm{Extra}}
\newcommand{\mybest}{\mathrm{Best}}
\DeclareMathOperator*{\argmin}{\arg\!\min}
\DeclareMathOperator*{\argmax}{\arg\!\max}
\begin{document}

\begin{titlepage}

\title{Approximation Algorithms for \\
The Generalized Incremental Knapsack Problem}
\author{%
Yuri Faenza\thanks{Department of Industrial Engineering and Operations Research, Columbia University, 500 W.\ 120th Street, New York, NY 10027. Email: \{yf2414,lz2573\}@columbia.edu.}%
\and Danny Segev\thanks{Department of Statistics and Operations Research, School of Mathematical Sciences, Tel Aviv University, Tel Aviv 69978, Israel. Email: segev.danny@gmail.com.}%
\and Lingyi Zhang\footnotemark[1]}
\date{}
\maketitle

\thispagestyle{empty}

\begin{abstract}
We introduce and study a discrete multi-period extension of the classical knapsack problem, dubbed \emph{generalized incremental knapsack}. In this setting, we are given a set of $n$ items, each associated with a non-negative weight, and $T$ time periods with non-decreasing capacities $W_1 \leq \dots \leq W_T$. When item $i$ is inserted at time $t$, we gain a profit of $p_{it}$; however, this item remains in the knapsack for all subsequent periods. The goal is to decide if and when to insert each item, subject to the time-dependent capacity constraints, with the objective of maximizing our total profit. Interestingly, this setting subsumes as special cases a number of recently-studied incremental knapsack problems, all known to be strongly NP-hard.

Our first contribution comes in the form of a polynomial-time $(\frac{1}{2}-\epsilon)$-approximation for the generalized incremental knapsack problem. This result is based on a reformulation as a single-machine sequencing problem, which is addressed by blending dynamic programming techniques and the classical Shmoys-Tardos algorithm for the generalized assignment problem. Combined with further enumeration-based self-reinforcing ideas and newly-revealed structural properties of nearly-optimal solutions, we turn our basic algorithm into a quasi-polynomial time approximation scheme (QPTAS). Hence, under widely believed complexity assumptions, this finding rules out the possibility that generalized incremental knapsack is APX-hard.
\end{abstract}

\bigskip \noindent {\small {\bf Keywords}: Incremental Optimization; Approximation Algorithms; Sequencing; QPTAS.}

\end{titlepage}

\thispagestyle{empty}
\tableofcontents

\newpage
\setcounter{page}{1}

\section{Introduction} \label{sec:introduction}

In many scenarios, classical optimization models are too simplistic to faithfully capture applications arising in real-life environments. Much research has therefore been devoted to extend fundamental well-studied models to more  realistic, yet still algorithmically tractable settings. A very common extension along these lines introduces time-dependent components, adding a computationally-challenging layer on top of the inherent complexity of the underlying problem. For instance, \emph{maximum flow over time}, originally introduced in the seminal work of \cite{ford_fulkerson_1956}, has recently received a great deal of attention~\citep{skutella2009introduction, Gross2012, LinJ15}. Additional examples for such settings include time-expanded versions of various packing problems~\citep{Caprara2002, adjiashvili2014time, Epstein2019}, network scheduling over time~\citep{BOLAND201434, AkridaCGKS19}, adaptive routing over time~\citep{graf2020, ismaili2017}, and facility location over time~\citep{Farahani2009, Nickel2019}, just to mention a few. 

\paragraph{Incremental knapsack problems.} In this paper, we investigate a multi-period extension of the classical knapsack problem. To provide initial intuition for the inner-workings of our model, consider the problem faced by urban planners, who intend to build infrastructural facilities over the course of several years, under budget constraints. Once an infrastructure has been built, its construction cost cannot be recovered. With a quantification of each infrastructure's annual contribution to the well-being of the community once it is in place, the goal is to maximize the total benefit over the course of the planning horizon (hence, the mayor's chances of being re-elected). A host of additional applications, such as planning the incremental growth of highways and networks, community development, and memory allocation can be found within several of the undermentioned papers and the references therein.

Computational questions of this nature can be modeled via multi-period knapsack extensions, collectively dubbed as \emph{incremental knapsack} problems. In such settings, the input ingredients consist in a set of $n$ items with strictly positive weights $\{ w_i \}_{i \in [n]}$, a collection of $T$ time periods with non-decreasing capacities $W_1 \leq \cdots \leq W_T$, and a set of item-period profits, on which we further elaborate below. We say that a sequence of item sets ${\cal S}=(S_1,\dots,S_T)$ is a \emph{chain} when $S_1 \subseteq \cdots \subseteq S_{T} \subseteq [n]$; here, $S_t$ represents the subset of items inserted into the knapsack up to and including time period $t$. As such, the chain ${\cal S}$ is \emph{feasible} when $w(S_t)\leq W_t$ for every $t \in [T]$. Our fundamental assumption is that, for each item $i \in [n]$ and time $t \in [T]$, we are given a non-negative parameter $p_{it}$, corresponding to the profit we obtain when item $i$ is inserted at time $t$ (i.e., when $i \in S_t\setminus S_{t-1}$, with the convention that $S_0 = \emptyset$). Hence, the cumulative profit of any chain ${\cal S}=(S_1,\dots,S_T)$ over all time periods is captured by $\objfunc({\cal S}) = \sum_{t \in [T]} \sum_{i \in S_t \setminus S_{t-1}} p_{it}$. We refer to the resulting formulation as the \emph{generalized incremental knapsack} problem.

\paragraph{Directly-related settings.} To our knowledge, due to its double-dependency on both the item and time period in question, the above-mentioned profit structure makes generalized incremental knapsack the most inclusive incremental knapsack problem studied so far. Probably the simplest such problem is \emph{time-invariant incremental knapsack}, where each item $i$ is assumed to contribute a profit of $\phi_i$ to each period starting at its insertion time, corresponding to product-form profits, $p_{it}=(T+1-t)\cdot \phi_i$. Surprisingly, unlike the basic knapsack problem, \cite{bienstock2013approximation} showed that this extension is strongly NP-hard. On the positive side, \cite{Faenza2018APF} proposed a polynomial-time approximation scheme (PTAS) based on rounding fractional solutions to an appropriate disjunctive relaxation. In the broader \emph{incremental knapsack} problem, we have $p_{it}= \phi_i \cdot \sum_{\tau=t}^T \Delta_{\tau}$, where $\Delta_{\tau}\geq 0$ is a time-dependent scaling factor; in this context, \cite{AouadS2020} have very recently obtained a PTAS, leveraging approximate dynamic programming ideas. We refer the reader to a number of additional resources related to incremental knapsack problems \citep{Sharp07, Hartline08, ye2016, della2018approximating, della2019approximating} for a deeper look into these settings. 

In contrast, the flexibility of our item- and time-dependent profit structure allows us to capture a variety of situations. For instance, when an item $i$ gains a profit of $\phi_{i\tau}$ for each period $\tau$, starting at its insertion time, we can set $p_{it}=\sum_{\tau=t}^T \phi_{i\tau}$. If, moreover, the per-period profits $\phi_{i\tau}$ are discounted by a factor of $c_{\tau-t}$ after $\tau-t$ time units have elapsed since the insertion of item $i$, we set $p_{it}=\sum_{\tau=t}^T c_{\tau-t} \phi_{i\tau}$. More broadly, the generalized incremental knapsack problem allows the profits $p_{it}$ to be completely unrelated, and in particular, to possibly be non-monotone in $t$. To our knowledge, prior to the present paper, this problem was not known to admit any non-trivial approximation guarantees. Moreover, we are not aware of any way to leverage existing techniques in the above-mentioned papers for dealing with the broad generality of our profit structure. 

\subsection{Results and techniques}

\paragraph{Constant-factor approximation.} Our first contribution comes in the form of a polynomial-time constant-factor approximation for the generalized incremental knapsack problem, whose specifics are provided in Section~\ref{sec:2-approx}.

\begin{theorem}\label{thm:half-approx-intro}
For any fixed $\eps \in (0,\frac{1}{2})$, the generalized incremental knapsack problem can be approximated in polynomial time within factor $\frac{1}{2} - \eps$.
\end{theorem}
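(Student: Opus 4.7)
The plan is to reformulate the generalized incremental knapsack as a single-machine sequencing problem in which each item is either discarded or assigned an insertion time $t(i) \in [T]$, with chain feasibility reducing to the prefix-capacity inequalities $\sum_{i: t(i) \leq t} w_i \leq W_t$ for all $t \in [T]$, and the objective becoming $\sum_{i: t(i) \geq 1} p_{i, t(i)}$. I would then write the natural LP relaxation in variables $x_{it} \in [0,1]$, with the constraints $\sum_{t} x_{it} \leq 1$ for every item, $\sum_{i} w_i \sum_{\tau \leq t} x_{i\tau} \leq W_t$ for every period, and linear objective $\max \sum_{it} p_{it} x_{it}$, and argue that any basic feasible solution can be chosen with at most $T$ fractional items, a structural property crucial for the rounding step.

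The next step is to turn the LP solution into an integral chain via a Shmoys--Tardos style procedure adapted to the cumulative capacity. Intuitively, per time period, the fractional mass can be reorganized using a bipartite matching argument so that at most one item per period is ``dropped'' from the rounded solution, while all remaining items are chain-feasible. The total profit of dropped items is then charged against the LP value via a pairing argument, certifying a factor-$\frac{1}{2}$ loss relative to the LP optimum and hence to $\opt$. Dynamic programming sweeps through the $T$ periods, maintaining a compact state that records the current residual capacity and a constant-size summary of the fractional items carried forward, so that the rounding can be threaded consistently across periods rather than done locally at each one.

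To sharpen the guarantee from $\frac{1}{2}$ to $\frac{1}{2} - \eps$, I would enumerate a polynomial-size family of partial configurations capturing the $O(1/\eps)$ most profitable item-period pairs of some fixed near-optimal chain. For each guess, I fix those items at their guessed insertion times, subtract the committed weights from the capacities, and run the LP plus Shmoys--Tardos rounding on the residual instance. Since by construction the residual items individually carry profit at most $\eps \cdot \opt / T$ (or an analogous threshold), the single item dropped per period during rounding accounts for an aggregate loss of at most $\eps \cdot \opt$. The algorithm returns the best chain found across all guesses.

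The principal obstacle, and the reason a direct invocation of Shmoys--Tardos does not suffice, is that the capacities here are cumulative rather than machine-local: an infeasibility created at period $t$ propagates to every later period, so the rounding must be coordinated across all $T$ periods simultaneously. Proving that the DP state needed to orchestrate this coordination remains polynomial-size is the delicate part, and I would expect to lean on a structural lemma stating that near-optimal chains can be taken to have a ``staircase'' form with few insertion events of significant profit, so that the DP transitions can be indexed by a polynomial collection of summaries rather than by the full set of fractional assignments.
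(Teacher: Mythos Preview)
Your proposal has two genuine gaps.

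\textbf{The rounding does not go through as stated.} Shmoys--Tardos relies on each bucket having a \emph{local} capacity constraint, so that rounding produces at most one excess item per bucket and removing it restores feasibility of that bucket alone. Your LP has \emph{cumulative} constraints $\sum_i w_i \sum_{\tau \le t} x_{i\tau} \le W_t$: an overflow at period $t$ is inherited by every later period, and removing ``one item per period'' does not obviously restore feasibility anywhere. You flag this as the principal obstacle and then defer it to an unspecified DP with a ``constant-size summary'' state; but that DP is the entire content of the argument, and nothing in the proposal indicates what the state would be or why it is polynomial.

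\textbf{The enumeration bound is quantitatively wrong.} Guessing the $O(1/\eps)$ most profitable item--period pairs of an optimal chain only ensures that every residual pair has profit at most $\eps\cdot\opt$, not $\eps\cdot\opt/T$. If rounding drops one item per period, the loss can be $\Theta(T\eps\cdot\opt)$, which is unbounded. To get the per-period threshold you state, you would need to guess $\Theta(T/\eps)$ items, which is not polynomial.

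What the paper actually does is avoid both issues via a heavy/light decomposition on a geometric partition of the time axis into intervals ${\cal I}_k$ of length $\approx \eps(1+\eps)^{k-1}$. An item is $k$-heavy if $w_i \ge \eps^2(1+\eps)^k$. Heavy contributions are handled by a DP over the intervals: at most $1/\eps$ heavy items can have completion time in any ${\cal I}_k$, so the DP enumerates them explicitly while tracking only a core of the $1/\eps^2$ heaviest items placed so far. Light contributions are handled by casting the intervals as buckets in a \emph{standard} generalized assignment instance with \emph{local} capacities equal to the interval lengths; now Shmoys--Tardos applies directly, and because every light item has weight at most an $O(\eps)$-fraction of its bucket's capacity, the single excess item per bucket costs only an $O(\eps)$-fraction of that bucket's profit. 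Taking the better of the two algorithms yields $\frac{1}{2}-\eps$. The key idea you are missing is this reduction to local-capacity buckets via the geometric partition, which is exactly what makes Shmoys--Tardos applicable.
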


The starting point of our algorithm is a reformulation of generalized incremental knapsack as a single-machine sequencing problem, where feasible chains are mapped to item permutations $\pi$, with $\pi(i_1)<\pi(i_2)$ implying that the insertion time of item $i_1$ occurs no later than that of $i_2$, potentially leaving item $i_2$ out of the knapsack. Based on this reformulation, we partition any given permutation into ``heavy'' and ``light'' chains of items, depending on how their weights compare to the combined weight of previously-inserted items. Guided by this decomposition, our approach consists of devising two approximation schemes, one competing against the best-possible profit due to heavy contributions and the other against the analogous quantity due to light contributions. Technically speaking, for heavy chains, we make use of dynamic programming ideas, whereas for light chains, we further reformulate this setting as a highly-structured instance of the generalized assignment problem, which is solved to super-optimality via the Shmoys-Tardos algorithm~\citeyearpar{ShmoysT93}, and truncated to a feasible near-optimal solution. 

\paragraph{Quasi-PTAS.} As previously mentioned, special cases of the generalized incremental knapsack problem are known to be strongly NP-hard, admitting a PTAS under specific profit-structure assumptions. A natural question is whether one can design efficient algorithms with the same degree of accuracy for generalized incremental knapsack, without any such assumption. Towards this goal, our second main contribution establishes the following result.

\begin{theorem}\label{thm:QPTAS-intro}
The generalized incremental knapsack problem admits a quasi-polynomial time approximation scheme.
\end{theorem}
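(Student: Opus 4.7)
The plan is to bootstrap the $(\tfrac{1}{2}-\eps)$-approximation of Theorem~\ref{thm:half-approx-intro} into a QPTAS by enumerating additional structural information on near-optimal chains, following the ``self-reinforcing enumeration'' paradigm hinted at in the abstract. The starting observation is that the $\frac{1}{2}$ loss in the basic algorithm comes from two sources: (a) the algorithm competes separately against the heavy and the light portions of the optimum, and (b) the Shmoys–Tardos rounding used for the light portion may violate capacities, forcing a truncation step that costs a constant fraction of the profit. Both of these losses should be eliminable if we are allowed to \emph{guess} enough information about the high-profit insertions via quasi-polynomial enumeration.

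Concretely, I would proceed in the following steps. First, bucket the profits $p_{it}$ into $O(\eps^{-1}\log(nT))$ geometric classes, losing only a $(1+\eps)$ factor. Next, prove a structural lemma asserting that some near-optimal chain $\tilde{\mathcal S}$ can be described by a \emph{signature}: a partition of $[T]$ into at most $\mathrm{polylog}(nT)/\eps$ sub-intervals together with, on each sub-interval, a count of how many items from each profit/weight bucket are inserted and a bound on the total weight consumed so far. The key point is that, conditional on this signature, the items inserted on different sub-intervals interact only through a small number of aggregate capacity/profit quantities. Third, design a divide-and-conquer algorithm over time: for an interval $[\ell,r]$, enumerate in quasi-polynomial time over all candidate signatures for the ``top'' insertions, then recurse on $[\ell,m]$ and $[m+1,r]$ with updated residual capacities and forbidden items. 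At the base of the recursion, only contributions from light items with aggregate profit $O(\eps)\cdot \opt$ remain; these can be handled by a refined use of the Shmoys–Tardos rounding inherited from the proof of Theorem~\ref{thm:half-approx-intro}, where the capacity violation is now absorbable because the guessed signature pins down the heavy load. A recursion tree of depth $O(\log T)$ and branching $(nT)^{\mathrm{polylog}(nT)/\eps}$ gives the quasi-polynomial running time, and multiplicative errors of $(1+\eps)$ per level telescope into $(1+\eps)^{O(\log T)}$, which is absorbed by rescaling $\eps$ upfront.

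The main obstacle I expect is the structural lemma. The heavy/light dichotomy used in the $(\tfrac{1}{2}-\eps)$-approximation is too coarse to iterate, because what looks ``light'' against a global capacity $W_T$ can be ``heavy'' against an intermediate residual capacity seen at a deeper level of the recursion. I therefore expect to need a more refined decomposition that is \emph{stable under restriction to sub-intervals}, so that the guesses made at an outer recursion level remain informative for the inner levels and no profit is double-counted. Establishing this stability, while keeping the signature set of quasi-polynomial size, is the technical heart of the proof. A secondary difficulty is handling the nonmonotonicity of $p_{it}$ in $t$: unlike the time-invariant or separable-profit special cases, an item may want to wait for a period of high profit, so the enumeration must track not just which items are eventually inserted but also the correct insertion time within the currently considered sub-interval.
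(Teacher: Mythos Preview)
Your proposal is a research plan rather than a proof, and it takes a route that is genuinely different from the paper's. The paper does \emph{not} divide-and-conquer over time periods and does \emph{not} rely on profit-bucketed signatures over sub-intervals of $[T]$. Instead, everything happens in the sequencing reformulation, where the relevant axis is cumulative \emph{weight} (makespan), not time. The paper's QPTAS has two layers. First, a self-improvement step: the total number of ``heavy'' items (heavy relative to the weight interval $\mathcal{I}_k$ containing their completion time) is $O(\tfrac{1}{\eps^2}\log(n\cdot\tfrac{w_{\max}}{w_{\min}}))$, so one can enumerate the restriction of the optimal chain to these items and then apply a black-box $\alpha$-approximation to the residual instance, yielding $\tfrac{1}{2-\alpha}$ after combining with the light-item LP; iterating $O(1/\eps)$ times gives $1-\eps$. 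This already proves a QPTAS when $w_{\max}/w_{\min}$ is polynomial. Second, to remove the weight-ratio dependence, the paper \emph{clusters items by weight} (not time), forces large multiplicative gaps between clusters, and proves a ``sparse-crossing'' lemma: there is a near-optimal permutation in which at most $O(\tfrac{\log M}{\eps})$ items from higher-weight clusters cross in front of each cluster ($M$ is the number of clusters). An external DP then sweeps over clusters, enumerating the small crossing set as part of the state and calling the bounded-ratio QPTAS as a subroutine within each cluster.

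The concrete gap in your plan is exactly the one you flag: the structural lemma. You assert that a near-optimal chain has a polylog-sized time-signature such that, conditional on it, sub-intervals decouple except through a few aggregates, but you give no argument for this, and there are reasons to doubt it in this generality. The capacity constraint is a running prefix-sum condition, so splitting $[T]$ at $m$ does not give two subproblems that interact only through $w(S_m)$: the binding constraint on the left may be at some $t<m$ whose slack depends on which items you chose to delay past $t$ to gain higher $p_{it'}$, and with nonmonotone $p_{it}$ there is no reason the ``top'' insertions concentrate near the split points or are few in number per sub-interval. Your instability worry (``light against $W_T$ but heavy against a residual capacity'') is precisely what the paper resolves by moving the decomposition from the time axis to the weight axis: in the sequencing view, heaviness is defined relative to the interval containing an item's completion time, and \emph{that} notion is stable under passing to residual instances, which is what makes the $\alpha\mapsto \tfrac{1}{2-\alpha}$ iteration go through. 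Absent a proof of your time-based signature lemma, the proposal does not establish the theorem; the paper's weight-based route both supplies the missing structural control and avoids the nonmonotonicity issue, because in the sequencing formulation $\varphi_\pi(i)$ depends only on $C_\pi(i)$ and is monotone nonincreasing in it.
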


Hence, under widely believed complexity assumptions, this finding rules out the possibility that generalized incremental knapsack is APX-hard, thus making it substantially different from other knapsack extensions, such as the generalized assignment problem (see brief discussion in Section~\ref{subsec:related_work}). The main idea behind the above-mentioned algorithm lies in a  ``self-improving'' procedure, which combines certain ingredients of our constant-factor approach along with further guessing methods and structural modifications to convert a black-box $\alpha$-approximation into a $\frac{1}{2-\alpha}$-approximation, as accounted for in Section~\ref{sec:qptas-one}. When iteratively applied, these improvements lead to a $(1-\epsilon)$-approximation, albeit with a running time  exponential in $1/\eps$, $\log n$, and $\log(w_{\max} / w_{\min})$. In essence, the last term emerges from a dual manipulation of both chain-related and permutation-related representations, where the  dependency on the extremal weight ratio appears to be inevitable. To bypass this obstacle, in Section~\ref{sec:qptas-two} we employ our algorithm as a subroutine on a sequence of weakly-dependent subinstances, each with a polynomial $w_{\max} / w_{\min}$-value, obtained through a structural analysis of nearly-optimal solutions. The resulting approach is shown to be implementable in quasi-polynomial time for any given instance of the generalized incremental knapsack problem, thus proving Theorem~\ref{thm:QPTAS-intro}. 

\subsection{Related knapsack extensions} \label{subsec:related_work}

In the \emph{maximum generalized assignment} problem, we are given $n$ items and $m$ capacitated buckets. Assigning an item $j$ to a bucket $i$ takes $w_{ij}$ capacity units while generating a profit of $p_{ij}$. The goal is to compute a feasible item-to-bucket assignment whose overall profit is maximized. For the minimization variant of this problem, \cite{ShmoysT93} proposed an LP-based $2$-approximation, which was observed by \cite{ChekuriK05} to be easily adaptable to obtain a $1/2$-approximation for the maximization variant. Interestingly, these algorithmic ideas will be useful within one of the subroutines employed by our approach. \cite{Feige2006} attained a $(1 - 1/e + \delta)$-approximation, for some absolute constant $\delta > 0$, which is currently the best known performance guarantee for maximum generalized assignment. Earlier constant-factor approximations were obtained by \cite{Fleischer2006}, \cite{Nutov2006}, and \cite{Cohen2006}.

In the \emph{unsplittable flow on a path} problem, we are given an edge-capacitated path as well as a collection of tasks. Each task is characterized by its own subpath, profit, and demand. The goal is to select a subset of tasks of maximum total profit, under the constraint that the overall demand of the selected tasks along each edge resides within its capacity. The currently best polynomial-time approximation in this context is $5/3+ \eps$, for any fixed $\eps>0$, due to \cite{grandoni20185}, who improved on earlier constant-factor guarantees by \cite{Bonsma2014}, \cite{Anagnostopoulos2018}, and \cite{Calinescu2011}. In parallel, unsplittable flow on a path admits a quasi-PTAS, as shown by \cite{bansal2006quasi} and by \cite{batra2014new}. From a technical perspective, the methods involved are very different from those exploited in our paper, and it is unclear whether algorithmic ideas in one setting are migratable to the other. 

\section{A Polynomial-Time \texorpdfstring{\boldmath{$(\frac{1}{2} - \eps)$}}{}-Approximation}\label{sec:2-approx}
\
In this section, we present our first approximability result for the generalized incremental knapsack problem, showing that the optimal profit can be efficiently approached within a factor arbitrarily close to $\frac{ 1 }{ 2 }$. The specifics of this finding, along with its corresponding running time, are formally stated in the next theorem.

\begin{theorem}\label{thm:half-approx}
For any accuracy level $\eps \in (0,\frac{1}{2})$, the generalized incremental knapsack problem can be approximated within factor $\frac{1}{2} - \eps$. The running time of our algorithm is $O(n^{O(1/ \eps^2)} \cdot |{\cal I}|^{O(1)})$, where $|{\cal I}|=\Theta(n\log \|w\|_{\infty} + nT\log \|p\|_{\infty}+T\log \|W\|_{\infty} )$ stands for the input size.
\end{theorem}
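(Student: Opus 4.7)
The plan is to recast generalized incremental knapsack as a single-machine sequencing problem and then decompose any near-optimal solution into two structurally different parts that can each be approximated within a factor of $1-O(\eps)$. Concretely, I would associate with every feasible chain $\mathcal{S}=(S_1,\dots,S_T)$ a permutation $\pi$ listing the items of $S_T$ in non-decreasing order of insertion time; conversely, a permutation $\pi$ of a subset of items induces a chain by scanning the items in order and inserting each one at the earliest period $t$ whose capacity $W_t$ still accommodates the running weight, dropping items that cannot fit even in $W_T$. This correspondence lets us optimize over permutations while preserving the optimal value $\opt$.

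Within any permutation $\pi$, call the item at position $k$ \emph{heavy} if its weight is at least an $\eps$-fraction of the cumulative weight of items in positions $1,\dots,k-1$, and \emph{light} otherwise, and let $\objfunc_H(\pi)$ and $\objfunc_L(\pi)$ denote the corresponding profit contributions under the induced chain, so that $\objfunc(\pi)=\objfunc_H(\pi)+\objfunc_L(\pi)$. The strategy is then to design two separate approximation schemes, $\mathcal{A}_H$ and $\mathcal{A}_L$, each producing a feasible chain with profit at least $(1-O(\eps))\cdot\objfunc_H(\pi^*)$ and $(1-O(\eps))\cdot\objfunc_L(\pi^*)$ respectively, where $\pi^*$ is an optimal permutation. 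Returning the better of the two solutions achieves profit at least $(1-O(\eps))\cdot\max\{\objfunc_H(\pi^*),\objfunc_L(\pi^*)\}\ge \frac{1-O(\eps)}{2}\cdot \opt$, which after tuning constants yields the desired $(\frac{1}{2}-\eps)$-approximation.

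The heavy scheme exploits the fact that the running weight grows by a factor of at least $1+\eps$ whenever a heavy item appears; consequently, after a preliminary rounding of weights to powers of $1+\eps$, every permutation contains only $O((1/\eps)\log(\|w\|_\infty/w_{\min}))$ heavy items, and in fact only $O(1/\eps^2)$ of them are ``materially relevant'' after bucketing weight classes. This sparsity enables a dynamic program whose state records the sequence of heavy items chosen so far, their assigned insertion periods, and a coarse summary of the cumulative weight, keeping only the best profit completion in each state. An enumeration over the resulting heavy skeletons, of size $n^{O(1/\eps^2)}$, matches $\objfunc_H(\pi^*)$ up to a $(1-O(\eps))$ factor.

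The light scheme, in contrast, takes advantage of the fact that each light item contributes only an $\eps$-fraction to the prefix weight, so its removal or relocation barely perturbs the insertion times of subsequent items. After guessing a coarse skeleton of the heavy items of $\pi^*$ and their insertion periods, I would reformulate the residual problem of assigning light items to periods as a generalized assignment instance in which placing item $i$ into period $t$ consumes weight $w_i$ from the residual capacity at $t$ and earns profit $p_{it}$, and invoke the Shmoys-Tardos LP-rounding algorithm \citeyearpar{ShmoysT93} to obtain a solution that is super-optimal up to a bounded capacity excess per period. The main obstacle, and the step I expect to be the most delicate, is to argue that truncating this super-optimal solution to restore feasibility in the \emph{cumulative} capacity constraints $w(S_t)\le W_t$ loses at most an $\eps$-fraction of $\objfunc_L(\pi^*)$; here the lightness condition is essential, since it bounds the weight of any single item against the prefix sums and thereby limits the amount of profit discarded in absorbing the Shmoys-Tardos overflow.
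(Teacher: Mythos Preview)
Your high-level plan matches the paper's: sequencing reformulation, a heavy/light profit decomposition, a DP for the heavy part, and Shmoys--Tardos on a generalized-assignment instance for the light part, taking the better of the two. However, two of your technical steps do not go through as stated, and these are precisely the places where the paper's argument does real work.

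\textbf{Heavy scheme.} You correctly note that under your definition there are at most $O((1/\eps)\log(n\,w_{\max}/w_{\min}))$ heavy items in any permutation, but the follow-up claim that only $O(1/\eps^2)$ of them are ``materially relevant after bucketing weight classes'' is not justified and is in fact the crux of the matter. Enumerating over all heavy-item skeletons would cost $n^{\Theta((1/\eps)\log(n\rho))}$ time, which is quasi-polynomial, not polynomial; this is essentially the running time of the paper's later QPTAS, not of Theorem~\ref{thm:half-approx}. The paper instead processes the geometric intervals $\mathcal{I}_k=((1+\eps)^{k-1},(1+\eps)^k]$ one at a time in a DP whose state records only the set of the $1/\eps^2$ heaviest items inserted so far (the ``core''), together with a discretized profit target. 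The core is what makes optimal substructure hold: when you swap in a lighter prefix at level $k-1$, you must be sure that none of the at most $1/\eps$ heavy items you are about to append at level $k$ already appear in it, and knowing the core suffices for that check. This is the idea your proposal is missing; without it the claimed $n^{O(1/\eps^2)}$ bound has no support.

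\textbf{Light scheme.} Modeling the residual problem as a GAP instance with \emph{periods} as buckets does not work, because the capacity constraints of incremental knapsack are nested: inserting an item at time $t$ consumes capacity at every period $\tau\ge t$. Shmoys--Tardos guarantees only that each bucket can be fixed by removing a single item, but with $T$ nested constraints these one-item violations accumulate, and the lightness of a single item gives no control over their sum. The paper's fix is to take the \emph{geometric intervals} $\mathcal{I}_1,\dots,\mathcal{I}_{K-1}$ as buckets, each with capacity equal to its length; the constraints are then genuinely disjoint, an item may go into bucket $\mathcal{B}_k$ only if it is $(k{+}1)$-light, and the one-item overflow in bucket $\mathcal{B}_k$ weighs at most $\eps^2(1+\eps)^{k+1}=O(\eps)\cdot\mathrm{capacity}(\mathcal{B}_k)$. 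A greedy repacking by profit-to-weight ratio then restores feasibility while losing only an $O(\eps)$ fraction of profit. Note also that no guessing of heavy items is needed here: the light algorithm competes directly against $\Psi_{\mylight}(\pi^*)$, independently of the heavy part.
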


\paragraph{Outline.} For simplicity of presentation, we start off Section~\ref{subsec:reformulation} by proposing an equivalent formulation of the generalized incremental knapsack problem as a single-machine sequencing problem. Given this reformulation, we explain in Section~\ref{subsec:decomp_outline} how the profit function can be decomposed into ``heavy'' and ``light'' item contributions. Somewhat informally, with respect to an unknown optimal sequencing solution, the marginal contribution of each item to the overall profit will be classified as being either heavy or light, depending on the item's weight and position on the timeline. Guided by this decomposition, our approach consists of devising two approximation schemes, one competing against the best-possible profit due to heavy contributions (Section~\ref{sec:heavy}) and the other against the analogous quantity due to light contributions (Section~\ref{sec:light}). The best of these algorithms will be shown to provide an approximation guarantee of $\frac{1}{2} - \eps$, thereby deriving Theorem~\ref{thm:half-approx}. It is worth pointing out that the techniques involved in competing against light contributions will be further utilized in Sections~\ref{sec:qptas-one} and~\ref{sec:qptas-two} to obtain an approximation scheme for general instances, albeit in quasi-polynomial time.

\subsection{An equivalent sequencing formulation}\label{subsec:reformulation}

In what follows, we present an equivalent sequencing reformulation for the generalized incremental knapsack problem. As explained in subsequent sections, the interchangeability between these formulations allows us to describe our algorithmic ideas and to analyze their performance guarantees with greater ease. For this purpose, we proceed by arguing that the generalized incremental knapsack problem can be rephrased as a sequencing problem on a single machine as follows:
\begin{itemize}
\item Let $\pi : [n] \to [n]$ be a permutation of the underlying items, where $\pi(i)$ stands for the position of item $i$.

\item By viewing the weight of each item as its processing time, we define the completion time of item $i$ with respect to $\pi$ as $C_{ \pi }( i ) = \sum_{j \in [n] : \pi(j) \leq \pi(i)} w_j$. Accordingly, the profit $\varphi_{ \pi }( i )$ of this item is given by the largest profit we can gain by inserting $i$ at a time period whose capacity occurs no earlier than $C_{ \pi }( i )$, namely, $\varphi_{ \pi }( i ) = \max \{ p_{i,t} : t \in [T+1] \text{ and } W_t \geq C_{ \pi }( i ) \}$, with the convention that $W_{T+1} = \infty$ and $p_{i, T+1} = 0$ for every item $i$.

\item The overall profit of the permutation $\pi$ is specified by $\newobj( \pi ) = \sum_{i \in [n]} \varphi_{ \pi }( i )$. Our objective is to compute a permutation whose profit is maximized.
\end{itemize}

The next lemma captures the equivalence between the item-introducing perspective of the generalized incremental knapsack problem and the sequencing perspective described above. 

\begin{lemma}\label{lem:reformulation}
Any feasible chain ${\cal S}$ can be mapped to a permutation $\pi_{\cal S}$ with $\newobj(\pi_{\cal S}) \geq \objfunc({\cal S})$. Conversely, any permutation $\pi$ of a subset of the items can be mapped to a feasible chain ${\cal S}_{\pi}$ with $\objfunc({\cal S}_{\pi}) = \newobj(\pi)$.
\end{lemma}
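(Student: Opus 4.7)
The plan is to prove the two directions separately, with each direction built around an explicit, natural mapping.

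\textbf{Chain to permutation.} Given a feasible chain ${\cal S}=(S_1,\dots,S_T)$, for each item $i\in S_T$ let $t(i)$ denote its insertion time, i.e., the unique $t$ with $i\in S_t\setminus S_{t-1}$. I would define $\pi_{\cal S}$ by listing the items of $S_T$ in non-decreasing order of $t(i)$, breaking ties by item index, and appending the items of $[n]\setminus S_T$ at the end in arbitrary order. The key observation is that, for any $i\in S_T$, every item that precedes $i$ in $\pi_{\cal S}$ belongs to $S_{t(i)}$, so $C_{\pi_{\cal S}}(i)\leq w(S_{t(i)})\leq W_{t(i)}$ by feasibility of ${\cal S}$. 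Hence $t=t(i)$ is admissible in the maximization defining $\varphi_{\pi_{\cal S}}(i)$, giving $\varphi_{\pi_{\cal S}}(i)\geq p_{i,t(i)}$. Items in $[n]\setminus S_T$ contribute non-negatively to $\newobj(\pi_{\cal S})$ and contribute nothing to $\objfunc({\cal S})$, so summing yields $\newobj(\pi_{\cal S})\geq \objfunc({\cal S})$.

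\textbf{Permutation to chain.} Given a permutation $\pi$ of a subset $U\subseteq[n]$, for each $i\in U$ pick a period $t^*(i)\in[T+1]$ attaining the max in $\varphi_\pi(i)=p_{i,t^*(i)}$ (with the convention $W_{T+1}=\infty$, $p_{i,T+1}=0$). I would then define ${\cal S}_\pi=(S_1,\dots,S_T)$ by $S_t=\{i\in U : t^*(i)\leq t\}$; monotonicity in $t$ is immediate. For feasibility, fix $t\in[T]$ and let $i^*\in S_t$ be the item with the largest $\pi$-position (if $S_t=\emptyset$ there is nothing to check). Every $j\in S_t$ satisfies $\pi(j)\leq\pi(i^*)$, so $w(S_t)\leq C_\pi(i^*)$; combined with $C_\pi(i^*)\leq W_{t^*(i^*)}\leq W_t$, which uses the definition of $t^*(i^*)$ and the monotonicity of capacities, this yields $w(S_t)\leq W_t$. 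For the profit identity, each $i\in U$ with $t^*(i)\leq T$ lies in $S_{t^*(i)}\setminus S_{t^*(i)-1}$ and therefore contributes exactly $p_{i,t^*(i)}=\varphi_\pi(i)$ to $\objfunc({\cal S}_\pi)$, while items with $t^*(i)=T+1$ or $i\notin U$ contribute $\varphi_\pi(i)=0$ on both sides; summing gives $\objfunc({\cal S}_\pi)=\newobj(\pi)$.

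\textbf{Main obstacle.} Both directions are conceptually straightforward once the right correspondences are fixed, so the only subtle point is feasibility of ${\cal S}_\pi$ in the second direction: the chain-weight inequality $w(S_t)\leq W_t$ is not immediate from the per-item inequalities $C_\pi(i)\leq W_{t^*(i)}$ in isolation, and must be extracted by invoking the single item $i^*$ whose completion time already dominates the total weight of $S_t$ because all of $S_t$ precedes $i^*$ in $\pi$. Getting this last-item argument right, together with the convention $W_{T+1}=\infty$ so that the max in $\varphi_\pi$ is always well defined, is what makes the mapping work cleanly.
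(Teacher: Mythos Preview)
Your proposal is correct and follows essentially the same approach as the paper's proof: both directions use the same natural mappings (order items by insertion time; assign each item to an argmax period consistent with its completion time). The only cosmetic difference is in the feasibility check for ${\cal S}_\pi$: the paper observes directly that $S_t\subseteq\{i:C_\pi(i)\leq W_t\}$, a $\pi$-prefix of total weight at most $W_t$, whereas you extract the same bound via the last-positioned item $i^*\in S_t$ and the chain $w(S_t)\leq C_\pi(i^*)\leq W_{t^*(i^*)}\leq W_t$; both arguments exploit the same prefix-sum structure of completion times.
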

\begin{proof}
First, given a feasible chain ${\cal S}$, we construct the permutation $\pi_{\cal S}$ as follows:
\begin{itemize}
    \item For each $t \in [T]$, let $\pi^t$ be an arbitrary permutation of the items introduced in this period, $S_t\setminus S_{t-1}$.  In addition, let $\pi^{T+1}$ be an arbitrary permutation of the remaining items, i.e., those in $[n] \setminus S_T$.
    
    \item The permutation $\pi_{\cal S}$ is defined as the concatenation of $\pi^{1}, \ldots, \pi^{T+1}$ in this order. Namely, for $i \in S_t\setminus S_{t-1}$ with $t \in [T]$, we have $\pi_{\cal S}(i)=\pi^t(i)+|S_{t-1}|$, whereas for $i \in [n] \setminus S_T$, we have $\pi_{\cal S}(i) = \pi^{T+1}(i) + |S_T|$.
\end{itemize}
To prove that $\newobj(\pi_{\cal S}) \geq \objfunc({\cal S})$, it suffices to argue that $\varphi_{\pi_{\cal S}}(i) \geq p_{i, t_i}$ for every item $i \in S_T$, where $t_i$ stands for the insertion time of item $i$ with respect to the chain ${\cal S}$. To derive this relation, note that $C_{\pi_{\cal S}} (i) \leq w(S_{t_i}) \leq W_{t_i}$ for any such item, where the last inequality follows from the feasibility of ${\cal S}$. Therefore, $\varphi_{\pi_{\cal S}}(i) = \max \{ p_{i,t} : t \in [T+1] \text{ and } W_t \geq C_{ \pi_{\cal S} }( i ) \} \geq p_{i, t_i}$.

Conversely, given a permutation $\pi$ of any subset of items, we construct a chain ${\cal S}_{ \pi }$ that includes all items with a completion time of at most $W_T$. Specifically, the insertion time $t_i$ of each such item $i$ will be the time period that maximizes $p_{i,t_i}$ over the set $\{t \in [T] : W_t \geq C_\pi(i)\}$. As such, the chain ${\cal S}_{ \pi }$ is indeed feasible, since $w(S_t) \leq \sum_{i \in [n]: C_{\pi}(i) \leq W_t} w_i \leq W_t$ for every $t \in [T]$. To show that $\objfunc( {\cal S}_{ \pi } ) = \newobj( \pi )$, it remains to explain why $p_{i, t_i} = \varphi_{\pi} (i)$ for inserted items and why $\varphi_{\pi}(i) = 0$ for non-inserted ones. To this end, note that our choice for the insertion time $t_{i}$ follows the definition of $\varphi_{\pi}(i)$ to the letter, meaning that $p_{i, t_i} = \varphi_{\pi} (i)$. On the other hand, for any item $i$ we do not insert to ${\cal S}_{ \pi }$, one has $\varphi_{\pi}(i) = 0$, since $C_{\pi}(i)>W_T$.
\end{proof}

\subsection{Profit decomposition and high-level overview} \label{subsec:decomp_outline}

In what follows, we focus our attention on the sequencing formulation and present a decomposition of the profit function $\newobj$ into ``heavy'' and ``light'' contributions, collected over geometrically-increasing intervals. With the necessary definitions in place, we outline how a decomposition of this nature guides us in proposing two approximation schemes, to separately compete against heavy and light contributions. The main result of this section, as stated in Theorem~\ref{thm:half-approx}, will eventually be derived by taking the more profitable of these approaches.

For simplicity of presentation, we assume without loss of generality that $\eps \in (0,\frac{1}{2})$, and moreover, that $\frac{ 1 }{ \eps }$ is an integer. In addition, we assume that $w_{\min} = \min_{ i \in [n] } w_i = 3$; the latter property can easily be enforced through scaling all item weights $w_i$ and time period capacities $W_t$ by a factor of $\frac{3}{w_{\min}}$.

\paragraph{Profit decomposition.} We begin by geometrically partitioning the interval $[0, \sum_{i \in [n]} w_i]$ by powers of $1 + \eps$ into a collection of intervals ${\cal I}_0, \ldots, {\cal I}_{K}$, where $K = \lceil \log_{1 + \eps} (\sum_{i \in [n]} w_i) \rceil$. Specifically, ${\cal I}_0 = [0,1]$ and ${\cal I}_k = ((1 + \eps)^{k-1}, (1 + \eps)^k]$ for $ k \in  [K ]$. With this definition, the profit $\newobj( \pi ) = \sum_{i \in [n]} \varphi_{ \pi }( i )$ of any permutation $\pi$ can be expressed by summing item contributions according to the interval in which their completion times fall, i.e.,
\[ \newobj( \pi ) = \sum_{k \in [K]_0} \sum_{ \MyAbove{ i \in [n] : }{ C_{ \pi }( i ) \in {\cal I}_k } } \varphi_{ \pi }( i ) \ . \]
    
We say that item $i$ is $k$-heavy when $w_i \geq \eps^2 \cdot (1 + \eps)^k$; otherwise, this item is $k$-light. We denote the sets of $k$-heavy and $k$-light items by $H_k$ and $L_k$, respectively, noting that $H_0 \supseteq H_1 \supseteq \cdots \supseteq H_K$ and that $L_k = [n] \setminus H_k$ for every $k$. As a side note, one can easily verify that all items are $0$-heavy (i.e., $H_0 = [n]$), by recalling that $w_{\min{}}=3$ and $\epsilon < \frac{1}{2}$. Consequently, the profit $\newobj( \pi )$ can be refined by separating $k$-heavy and $k$-light items, namely,
\begin{equation}\label{eq:profit_decomp}
\newobj( \pi ) = \underbrace{ \sum_{k \in [K]_0} \sum_{ \MyAbove{ i \in H_k : }{ C_{ \pi }( i ) \in {\cal I}_k } } \varphi_{ \pi }( i ) }_{ \newobj_{ \myheavy } ( \pi ) } + \underbrace{ \sum_{k \in [K]_0} \sum_{ \MyAbove{ i \in L_k : }{ C_{ \pi }( i ) \in {\cal I}_k } } \varphi_{ \pi }( i ) }_{ \newobj_{ \mylight } ( \pi ) } \ . 
\end{equation}
As shown above, we designate the first and second terms in the above expression by $\newobj_{ \myheavy } ( \pi )$ and $\newobj_{ \mylight } ( \pi )$, respectively.

\paragraph{Overview.} Let $\pi^*$ be an optimal permutation, with $\newobj( \pi^* ) = \newobj_{ \myheavy } ( \pi^* ) + \newobj_{ \mylight } ( \pi^* )$. The remainder of this section is dedicated to presenting two approximation schemes that would separately compete against $\newobj_{ \myheavy } ( \pi^* )$ and $\newobj_{ \mylight } ( \pi^* )$:
\begin{itemize}
\item {\em Heavy contributions}: Section~\ref{sec:heavy} explains how dynamic programming ideas allow us to efficiently compute a permutation $\pi_{ \myheavy }: [n] \to [n]$ satisfying $\newobj( \pi_{ \myheavy } ) \geq (1 - \eps) \cdot \newobj_{ \myheavy } ( \pi^* )$. The required running time  will be $O( n^{ O(1/\eps^2) } \cdot |{\cal I}|)$.  

\item {\em Light contributions}: Section~\ref{sec:light} argues that the generalized assignment algorithm of \cite{ShmoysT93} can be leveraged to compute a permutation $\pi_{ \mylight } : [n] \to [n]$  satisfying $\newobj( \pi_{ \mylight } ) \geq (1 - \eps) \cdot \newobj_{ \mylight } ( \pi^* )$. This algorithm can be implemented in $O( ( \frac{ |{\cal I}| }{ \eps } )^{O(1)} )$ time.
\end{itemize}
Consequently, to establish the approximation guarantee stated in Theorem~\ref{thm:half-approx}, we pick the more profitable permutation out of $\pi_{ \myheavy }$ and $\pi_{ \mylight }$, to obtain a profit of
\begin{eqnarray*}
\max \left\{ \newobj \left( \pi_{ \myheavy } \right), \newobj \left( \pi_{ \mylight } \right) \right\} & \geq & \frac{ 1 }{ 2 } \cdot \left( \newobj \left( \pi_{ \myheavy } \right) + \newobj \left( \pi_{ \mylight } \right) \right) \\
& \geq & \frac{ 1 - \eps }{ 2 } \cdot \left( \newobj_{ \myheavy } \left( \pi^* \right) + \newobj_{ \mylight } \left( \pi^* \right) \right) \\
& = & \frac{ 1 - \eps }{ 2 } \cdot \newobj \left( \pi^* \right) \ .
\end{eqnarray*}

\subsection{Algorithm for heavy contributions} \label{sec:heavy}

In what follows, we present an $O( n^{ O(1/\eps^2) } \cdot |{\cal I}|)$-time dynamic programming approach for computing a permutation $\pi_{ \myheavy }$ with a profit of $\newobj( \pi_{ \myheavy } ) \geq (1 - \eps) \cdot \newobj_{ \myheavy } ( \pi^* )$. 

\subsubsection{Preliminaries} \label{subsec:heavy_prelim}

The intuition behind our algorithm begins with the observation that, in order to compete against $\newobj_{ \myheavy } ( \pi^* )$, we can safely eliminate items that are classified as light with respect to the interval in which their completion time falls. While the remaining items will be shifted back in the residual permutation, potentially being completed in a lower-index interval, each of them will still be heavy. To formalize these notions, for a subset of items $S \subseteq [n]$ and a permutation $\pi : S \to [|S|]$, we say that the pair $(S,\pi)$ is bulky if, for every $k \in [K]_0$, all items with a completion time in ${\cal I}_k$ are $k$-heavy, i.e., $\{ i \in S : C_{ \pi }( i ) \in {\cal I}_k \} \subseteq H_k$. The next claim shows that bulky pairs can attain a total profit of at least $\newobj_{ \myheavy } ( \pi^* )$.

\begin{lemma} \label{lem:good_bulky}
There exist a subset of items $S \subseteq [n]$ and a permutation $\pi : S \to [|S|]$ such that $(S,\pi)$ is bulky and $\sum_{i \in S} \varphi_{ \pi }( i ) \geq \newobj_{ \myheavy } ( \pi^* )$.
\end{lemma}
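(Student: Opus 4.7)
The plan is to construct the bulky pair directly from $\pi^*$ by discarding exactly those items whose contribution is \emph{not} counted in $\newobj_{\myheavy}(\pi^*)$, and then argue that removing items can only help.

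First, I would define $S$ to be the set of items that are heavy with respect to the interval in which their $\pi^*$-completion time falls, namely
\[ S \ = \ \bigcup_{k \in [K]_0} \left\{ i \in H_k : C_{\pi^*}(i) \in {\cal I}_k \right\}, \]
and let $\pi : S \to [|S|]$ be the permutation obtained by restricting $\pi^*$ to $S$ (keeping the relative order and renumbering to $\{1, \ldots, |S|\}$). The rest of the proof would then consist of two short verifications, for the bulky property and for the profit bound.

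For bulkiness, pick any $i \in S$ and let $k' \in [K]_0$ be the index with $C_{\pi}(i) \in {\cal I}_{k'}$; I need to show $i \in H_{k'}$. Since $\pi$ is a restriction of $\pi^*$, we have $C_{\pi}(i) \leq C_{\pi^*}(i)$, and because the intervals ${\cal I}_0, \ldots, {\cal I}_K$ are disjoint and ordered by magnitude, if $C_{\pi^*}(i) \in {\cal I}_k$ then $k' \leq k$. By construction $i \in H_k$, i.e., $w_i \geq \eps^2 (1+\eps)^k \geq \eps^2 (1+\eps)^{k'}$, so $i \in H_{k'}$ as claimed. This is the chain of inclusions $H_0 \supseteq H_1 \supseteq \cdots \supseteq H_K$ already noted in the paper, applied at the decreased interval index.

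For the profit bound, the key observation is that $\varphi_{\pi}(i) \geq \varphi_{\pi^*}(i)$ for every $i \in S$, simply because $C_{\pi}(i) \leq C_{\pi^*}(i)$ enlarges the feasible set $\{ t \in [T+1] : W_t \geq C_{\pi}(i) \}$ over which the maximum defining $\varphi$ is taken. Summing over $i \in S$ and recognizing the right-hand side as the definition of $\newobj_{\myheavy}(\pi^*)$ in~\eqref{eq:profit_decomp} yields $\sum_{i \in S} \varphi_{\pi}(i) \geq \newobj_{\myheavy}(\pi^*)$.

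There is no substantial obstacle here; the lemma is really a structural observation, and the only thing to be careful about is making sure the interval-index shift $k' \leq k$ interacts correctly with the monotonicity $H_{k'} \supseteq H_k$, so that an item which was $k$-heavy before deletions remains heavy in the (possibly smaller) interval it now lands in.
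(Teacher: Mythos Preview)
Your proof is correct and essentially identical to the paper's: the paper defines $S$ by \emph{eliminating} the light items (those $i \in L_k$ with $C_{\pi^*}(i) \in {\cal I}_k$), which is exactly the complement description of the set you keep, and the verifications of bulkiness and the profit bound proceed via the same two observations ($C_\pi(i) \leq C_{\pi^*}(i)$ together with $H_0 \supseteq \cdots \supseteq H_K$, and the monotonicity of $\varphi$ in completion time).
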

\begin{proof}
With respect to the optimal permutation $\pi^*$, we define a new permutation $\pi$ by eliminating, for every $k \in [K]_0$, all items $i \in L_k$ with $C_{ \pi^* }( i ) \in {\cal I}_k$. The subset $S$ will consist of the remaining items. To see why $(S,\pi)$ is bulky, note that $C_{ \pi }( i ) \leq C_{ \pi^* }( i )$ for any $i \in S$, meaning that each such item is still heavy with respect to the interval that contains $C_{ \pi }( i )$, since $H_0 \supseteq \cdots \supseteq H_K$. In terms of profit, the latter observation implies that, for every item $i \in S$,
\begin{eqnarray*}
\varphi_{ \pi }( i ) & = & \max \left\{ p_{i,t}: t \in [T+1] \text{  and } W_t \geq C_{\pi}(i) \right\} \\
& \geq & \max \left\{ p_{i,t}: t \in [T+1] \text{ and } W_t \geq C_{\pi^*}(i) \right\} \\
& = & \varphi_{ \pi^* }( i ) \ . 
\end{eqnarray*}
Summing the above inequality over all items in $S$, we have $\sum_{i \in S} \varphi_{\pi}(i) \geq \sum_{i \in S}\varphi_{\pi^*}(i) = \newobj_{ \myheavy } ( \pi^* )$, where the latter equality holds since every eliminated item does not contribute toward $\newobj_{ \myheavy } ( \pi^* )$ but rather  toward $\newobj_{ \mylight } ( \pi^* )$.
\end{proof}

\paragraph{Additional notation.} For a bulky pair $(S,\pi)$, we define its top index as $\mytop(S,\pi) = \max \{ k \in [K]_0 : \{ C_{ \pi }( i ) : i \in S \} \cap {\cal I}_k \neq \emptyset \}$, that is, the largest index of an interval that contains at least one completion time. In addition, we define $\mycore(S)$ as the set of $\min \{ \frac{ 1 }{ \eps^2 }, |S| \}$ heaviest items in $S$, breaking ties by adding to $\mycore(S)$ small-index items before large-index ones. Finally, the makespan of $(S,\pi)$ corresponds to the maximum completion time of an item in $S$ with respect to the permutation $\pi$; in our case, this measure identifies with $w(S)$.

\subsubsection{The continuous dynamic program}\label{subsec:continuous_dp}

The technical crux in restricting attention to bulky pairs will be exhibited through our dynamic programming formulation. As formally explained below, by focusing on the dual objective of makespan minimization, we prove the existence of a well-hidden optimal substructure within the sequencing problem.

\paragraph{States.} Each state $(k, \psi_k, {\cal Q}_k)$ of our dynamic program consists of the following parameters, whose precise role will be clarified once their corresponding value function is presented:
\begin{itemize}
\item The index of the current interval $k$, taking values in $[K]_0$.

\item The total profit $\psi_k$ collected thus far, due to items whose completion time falls in ${\cal I}_0, \ldots {\cal I}_k$. For the time being, $\psi_k$ will be treated as a continuous parameter, taking values in $[0, \sum_{i \in [n]} \max_{t \in [T]} p_{i,t}]$. 

\item The core ${\cal Q}_k$ of the set of items whose completion time falls in ${\cal I}_0, \ldots {\cal I}_k$. By definition of $\mycore(\cdot)$, this parameter is restricted to item sets of cardinality at most $\frac{ 1 }{ \eps^2 }$. 
\end{itemize}
It is important to emphasize that, since $\psi_k$ is a continuous parameter, the dynamic programming formulation below is still not algorithmic in nature, and should be viewed as a characterization of optimal solutions. In Section~\ref{subsec:discretize_DP}, we explain how to discretize the parameter $\psi_k$ to take polynomially-many values while incurring only an $\eps$-loss in profit. 

\paragraph{Value function.} The value function $F(k, \psi_k, {\cal Q}_k)$ represents the minimum makespan $w(S)$ that can be attained over all bulky pairs $(S,\pi)$ that satisfy the following conditions:
\begin{enumerate}
\item {\em Top index}: $\mytop(S,\pi) \leq k$.

\item {\em Total profit}: $\newobj(\pi) \geq \psi_k$. 

\item {\em Core}: $\mycore(S) = {\cal Q}_k$.
\end{enumerate}
For ease of presentation, we denote the collection of bulky pairs that meet conditions~1-3 by $\mybulky(k, \psi_k, {\cal Q}_k)$.  When the latter set is empty, we define $F(k, \psi_k, {\cal Q}_k) = \infty$. With these definitions, Lemma~\ref{lem:good_bulky} proves in retrospect the existence of a bulky pair $(S,\pi) \in \mybulky(K, \newobj_{ \myheavy } ( \pi^* ), \mycore(S))$ with $F(K, \newobj_{ \myheavy } ( \pi^* ), \mycore(S)) < \infty$. Therefore, had we been able to compute the maximal value $\psi^*$ that satisfies $F(K, \psi^*, {\cal Q}_K) < \infty$ for some core ${\cal Q}_K$ of at most $\frac{ 1 }{ \eps^2 }$ items, its corresponding bulky pair would have guaranteed a profit of at least $\psi^* \geq \newobj_{ \myheavy } ( \pi^* )$. 

\paragraph{Optimal substructure.} To this end, we proceed by unveiling the optimal substructure that allows us to compute the value function $F$ by means of dynamic programming. In order to gain ample intuition, suppose that $(S, \pi)$ is a bulky pair that attains $F(k, \psi_k, {\cal Q}_k)$. Then, we argue that, by eliminating from $S$ the set items $Q$ whose completion time falls within the interval ${\cal I}_k$, one obtains a bulky pair that attains $F(k-1, \psi_{k-1}, {\cal Q}_{k-1})$, where the residual profit $\psi_{k-1}$ is obtained by removing from $\psi_{k}$ the contribution of items in $Q$ and ${\cal Q}_{k-1}$ is an appropriately chosen core. 

Formally, suppose that $\mybulky(k, \psi_k, {\cal Q}_k) \neq \emptyset$, and let $(S,\pi)$ be a bulky pair that minimizes $w(S)$ over this set. Let $Q = \{ i \in S : C_{ \pi }( i ) \in {\cal I}_k \}$ be the set of items in $S$ whose completion time with respect to $\pi$ falls in the interval ${\cal I}_k$. Note that since $\mytop(S,\pi) \leq k$, completion times cannot fall in ${\cal I}_{k+1}, \ldots, {\cal I}_K$. We first argue that $|Q| \leq \frac{ 1 }{ \eps }$. To verify this claim, note that since $(S,\pi)$ is bulky, $Q \subseteq H_k$. As a result, every item in $Q$ has a weight of at least $\eps^2 \cdot (1 + \eps)^k$, while ${\cal I}_k = ((1 + \eps)^{k-1}, (1 + \eps)^k]$, meaning that we necessarily have $|Q| \leq \frac{ (1 + \eps)^k - (1 + \eps)^{k-1} }{ \eps^2 \cdot (1 + \eps)^k } \leq \frac{ 1 }{ \eps }$.

Now, let us define the pair $(\hat{S},\hat{\pi})$, where $\hat{S} = S \setminus Q$ and $\hat{\pi} : \hat{S} \to [|\hat{S}|]$ is the permutation where items in $\hat{S}$ follow their relative order in $\pi$, that is, for any pair of items $i_1$ and $i_2$, we have $\hat{\pi}( i_1 ) < \hat{\pi}( i_2 )$ if and only if ${\pi}( i_1 ) < {\pi}( i_2 )$. In addition, let $\psi_{k-1} = [ \psi_k - \sum_{i \in Q} \varphi_{ \pi }( i ) ]^+$ and ${\cal Q}_{k-1} = \mycore( \hat{S} )$, where $[ x ]^+ = \max \{ x, 0 \}$. These definitions directly ensure that $(\hat{S},\hat{\pi}) \in \mybulky(k-1, \psi_{k-1}, {\cal Q}_{k-1})$. Moreover, as we show next, $(\hat{S},\hat{\pi})$ forms an optimal solution with respect to the latter state. 

\begin{lemma}
$w( \hat{S} ) = F(k-1, \psi_{k-1}, {\cal Q}_{k-1})$.
\end{lemma}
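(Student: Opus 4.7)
The plan is to prove the equality by matching inequalities. The direction $F(k-1, \psi_{k-1}, {\cal Q}_{k-1}) \leq w(\hat{S})$ is immediate, since it has already been noted that $(\hat{S}, \hat{\pi}) \in \mybulky(k-1, \psi_{k-1}, {\cal Q}_{k-1})$, making this pair a valid candidate for the defining minimum.

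For the matching lower bound I would argue by contradiction, exploiting the minimality of $w(S)$ inside $\mybulky(k, \psi_k, {\cal Q}_k)$. Suppose some $(\tilde{S}, \tilde{\pi}) \in \mybulky(k-1, \psi_{k-1}, {\cal Q}_{k-1})$ achieves $w(\tilde{S}) < w(\hat{S})$, and paste the items of $Q$ back on top of $(\tilde{S}, \tilde{\pi})$: let $\tilde{\pi}'$ list the items of $\tilde{S}$ in the order specified by $\tilde{\pi}$ and then append the items of $Q \setminus \tilde{S}$ in their relative order under $\pi$, and set $\tilde{S}' = \tilde{S} \cup Q$. Since $w(\tilde{S}') \leq w(\tilde{S}) + w(Q) < w(\hat{S}) + w(Q) = w(S)$, it will suffice to verify $(\tilde{S}', \tilde{\pi}') \in \mybulky(k, \psi_k, {\cal Q}_k)$ in order to contradict $F(k, \psi_k, {\cal Q}_k) = w(S)$.

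The membership check splits into four conditions. \emph{Bulkiness} holds because items of $\tilde{S}$ keep their completion times (hence their heaviness), while items of $Q$ are $k$-heavy, therefore $j$-heavy for every $j \leq k$ via the nesting $H_0 \supseteq \cdots \supseteq H_K$. The \emph{top-index} condition follows from $w(\tilde{S}') < w(S) \leq (1+\eps)^k$. The \emph{profit} condition uses the key observation that every item of $\hat{S}$ has completion time under $\pi$ at most $(1+\eps)^{k-1}$ while items of $Q$ strictly exceed $(1+\eps)^{k-1}$, so every item of $\hat{S}$ precedes every item of $Q$ in $\pi$; combined with $w(\tilde{S}) < w(\hat{S})$, one derives $C_{\tilde{\pi}'}(i) \leq C_\pi(i)$ and hence $\varphi_{\tilde{\pi}'}(i) \geq \varphi_\pi(i)$ for every $i \in Q$, yielding $\newobj(\tilde{\pi}') \geq \psi_{k-1} + \sum_{i \in Q} \varphi_\pi(i) \geq \psi_k$ by the definition of $\psi_{k-1}$.

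The main obstacle, and the point where the DP's bounded-size core really earns its keep, is the \emph{core} condition $\mycore(\tilde{S}') = {\cal Q}_k$. I would first establish a general core-merging fact: whenever $A = \mycore(B)$ and $R$ is any item set, $\mycore(B \cup R) = \mycore(A \cup R)$. This holds because items of $B \setminus A$ rank strictly below every item of $A$ in the weight/index tiebreak order, and nonemptiness of $B \setminus A$ forces $|A| = \frac{1}{\eps^2}$, so items of $B \setminus A$ cannot reach the top $\frac{1}{\eps^2}$ positions of $B \cup R$. Applying this lemma twice, once with $(B,R) = (\hat{S}, Q)$ and once with $(B,R) = (\tilde{S}, Q)$, while using the identity $\mycore(\hat{S}) = \mycore(\tilde{S}) = {\cal Q}_{k-1}$, produces the chain $\mycore(\tilde{S}') = \mycore({\cal Q}_{k-1} \cup Q) = \mycore(\hat{S} \cup Q) = \mycore(S) = {\cal Q}_k$, completing the verification and delivering the desired contradiction.
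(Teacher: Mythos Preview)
Your approach is the same as the paper's: assume a lighter witness $(\tilde{S},\tilde{\pi})$ at level $k-1$, append $Q$, and derive a contradiction at level $k$. Your core-merging argument is correct and matches the paper's formulation $\mycore(S_1 \cup S_2) = \mycore(\mycore(S_1) \cup \mycore(S_2))$.

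There is, however, a gap in your profit verification. You claim $\newobj(\tilde{\pi}') \geq \psi_{k-1} + \sum_{i \in Q} \varphi_\pi(i)$, but this decomposition tacitly assumes $\tilde{S} \cap Q = \emptyset$. If some $i \in Q$ already sits in $\tilde{S}$, its contribution $\varphi_{\tilde{\pi}}(i)$ is absorbed into the $\psi_{k-1}$ bound via $\newobj(\tilde{\pi}) \geq \psi_{k-1}$, and you cannot also credit it toward the $Q$-sum without double counting. Concretely, $\newobj(\tilde{\pi}') = \newobj(\tilde{\pi}) + \sum_{i \in Q \setminus \tilde{S}} \varphi_{\tilde{\pi}'}(i)$, which only yields $\psi_{k-1} + \sum_{i \in Q \setminus \tilde{S}} \varphi_\pi(i)$, falling short of $\psi_k$ by $\sum_{i \in Q \cap \tilde{S}} \varphi_\pi(i)$. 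Your own notation ``$Q \setminus \tilde{S}$'' shows you anticipated possible overlap, yet you never ruled it out.

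The paper closes this gap as its very first step: any $i \in \tilde{S} \cap Q$ satisfies $w_i \geq \eps^2 (1+\eps)^k$ (since $Q \subseteq H_k$), while $w(\tilde{S}) \leq (1+\eps)^{k-1}$ forces $i$ to be among the $\frac{1}{\eps^2}$ heaviest items of $\tilde{S}$, i.e., $i \in \mycore(\tilde{S}) = {\cal Q}_{k-1} = \mycore(S \setminus Q)$, contradicting $i \in Q$. Once this disjointness is in hand, your profit bound and the rest of your argument go through exactly as written.
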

\begin{proof}
Suppose there exists some bulky pair $(\tilde{S},\tilde{\pi}) \in \mybulky(k-1, \psi_{k-1}, {\cal Q}_{k-1})$ with $w( \tilde{S} ) < w( \hat{S} )$. We first claim that $\tilde{S} \cap Q = \emptyset$. To verify this property, had there been an item $i \in \tilde{S} \cap Q$, its weight would satisfy $w_i \geq \eps^2 \cdot (1 + \eps)^k$, since $Q \subseteq H_k$. On the other hand, since $\mytop(\tilde{S},\tilde{\pi}) \leq k-1$, the completion times of all items in $\tilde{S}$ with respect to $\tilde{\pi}$ reside within the union of ${\cal I}_0, \ldots, {\cal I}_{k-1}$, which is the interval $[0,(1 + \eps)^{k-1}]$, implying that $w( \tilde{S} ) \leq (1 + \eps)^{k-1}$. Therefore, since $\mycore( \tilde{S} )$ is the set of $\min \{ \frac{ 1 }{ \eps^2 }, |\tilde{S}| \}$ heaviest items in $\tilde{S}$, regardless of how ties are broken we must have $i \in \mycore( \tilde{S} )$. We have just arrived at a contradiction: Since $\mycore( \tilde{S} ) = {\cal Q}_{k-1} = \mycore( \hat{S} ) = \mycore( S \setminus Q )$, it follows that $i \notin Q$. 

Knowing that $\tilde{S} \cap Q = \emptyset$, we can extend the permutation $\tilde{\pi} : \tilde{S} \to [|\tilde{S}|]$ to $\tilde{S}^+ = \tilde{S} \cup Q$ by appending the set of items $Q$ in exactly the same order as they appear in $\pi$. Letting $\tilde{\pi}^+ : \tilde{S}^+ \to [|\tilde{S}^+|]$ be the resulting permutation, we next argue that $(\tilde{S}^+, \tilde{\pi}^+)$ is in fact a feasible solution to precisely the same subproblem with respect to which which $(S,\pi)$ is optimal. The proof of this structural result is provided in Appendix~\ref{app:proof_clm_feasible_subproblem}. 

\begin{claim} \label{clm:feasible_subproblem}
$(\tilde{S}^+, \tilde{\pi}^+) \in \mybulky(k, \psi_k, {\cal Q}_k)$.  
\end{claim}

We have just arrived at a contradiction to the fact that $(S,\pi)$ minimizes $w(S)$ over the set $\mybulky(k, \psi_k, {\cal Q}_k)$, by observing that $w( \tilde{S}^+ ) = w( \tilde{S} ) + w( Q ) < w( \hat{S} ) + w( Q ) = w(S)$.
\end{proof}

\paragraph{Recursive equations.} In light of this structural characterization, to obtain a recursive equation for $F(k, \psi_k, {\cal Q}_k)$, it suffices to ``guess'' the collection of items $Q$, their internal permutation $\pi_Q$, the residual profit requirement $\psi_{k-1}$, and the resulting core ${\cal Q}_{k-1}$. Formally, $F(k, \psi_k, {\cal Q}_k)$ is given by minimizing $F ( k-1, \psi_{k-1}, {\cal Q}_{k-1} ) + w( Q )$ over all choices of $Q$, $\pi_Q$, $\psi_{k-1}$, and ${\cal Q}_{k-1}$ that simultaneously satisfy the following conditions:
\begin{enumerate}
    \item Top index: $F ( k-1, \psi_{k-1}, {\cal Q}_{k-1} ) + w( Q ) \leq (1 + \eps)^k$. This constraint ensures that,  with the addition of $Q$, all items can still be packed within ${\cal I}_0, \ldots, {\cal I}_k$.
    
    \item Total profit: $\psi_{k-1} \geq [ \psi_k - \sum_{i \in Q} \varphi_{ \pi_Q }^{ \rightsquigarrow }( i ) ]^+$, where the term $\varphi_{ \pi_Q }^{ \rightsquigarrow }( i )$ denotes the profit of item $i$ with respect to the permutation $\pi_Q$, when its completion time is increased by $F ( k-1, \psi_{k-1}, {\cal Q}_{k-1} )$. This constraint guarantees that, by appending $\pi_Q$, we obtain a total profit of at least $\psi_k$.
    
    \item Core: ${\cal Q}_{k-1} \cap Q = \emptyset$, $\mycore( {\cal Q}_{k-1} \cup Q ) = {\cal Q}_k$, $Q \subseteq H_k$, and $|Q| \leq \frac{ 1 }{ \eps }$. These constraints ensure a correct core update as a result of adding the item set $Q$, where the latter set consists of at most $\frac{ 1 }{ \eps }$ items, each restricted to being $k$-heavy. To better understand the requirement $\mycore( {\cal Q}_{k-1} \cup Q ) = {\cal Q}_k$, note that the core resulting from the addition of $Q$ can be computed without a complete knowledge of all previously packed items, as all those outside the current core ${\cal Q}_{k-1}$ are irrelevant for this purpose (i.e., too light to be one of the $\frac{ 1 }{ \eps^2 }$ heaviest).
\end{enumerate}

\subsubsection{Discretization and final algorithm} \label{subsec:discretize_DP}

As previously mentioned, due to the continuity of the profit requirement $\psi_k$, it remains to propose an appropriate discretization of this parameter, so that we obtain a polynomially-sized state space with only negligible loss in profit.

\paragraph{The discrete program $\bs{\tilde{F}}$.} To this end, we alter the underlying state space of our dynamic program, by restricting the continuous parameter $\psi_k$ to a finite set of values, ${\cal D}_{\psi} = \{  d \cdot  \frac{ \eps p_{\max} }{ n } : d \in [\frac{n^2}{\eps}]_0 \}$. Here, $p_{\max}$ is the maximum profit attainable by any single item, i.e., $p_{\max} = \max \{ p_{it}: i \in [n], t \in [T], \text{ and } w_i \leq W_t \}$. We make use of $\tilde{F}(k, \psi_k, Q_k)$ to designate the value function $F$ restricted to the resulting set of states, and similarly, $\widetilde{\mybulky}(k, \psi_k, {\cal Q}_k)$ will stand for the collection of bulky pairs that meet conditions~1-3. As a side note, beyond the additional restriction on $\psi_k$, both $\tilde{F}$ and $\widetilde{\mybulky}$ are defined identically to $F$ and $\mybulky$. 

\paragraph{Analysis.} We remind the reader that, in Section~\ref{subsec:continuous_dp}, the quantity $\psi^*$ was defined as the maximal value satisfying $F(K, \psi^*, {\cal Q}_K) < \infty$ for some core ${\cal Q}_K$ of at most $\frac{ 1 }{ \eps^2 }$ items, noting that its corresponding bulky pair guarantees a profit of at least $\psi^* \geq \newobj_{ \myheavy } ( \pi^* )$. In order to establish a parallel claim with respect to the discretized program $\tilde{F}$, we prove in Lemma~\ref{lem:descretize_psi} a lower bound of $(1 - \eps) \cdot \psi^*$ on the analogous quantity $\tilde{\psi}$ that satisfies $\tilde{F}(K, \tilde{\psi}, {\cal Q}_K) < \infty$; the proof is provided in Appendix~\ref{app: proof_lem_descretize_psi}.  It follows that our dynamic program computes a bulky pair $(S, \pi)$ in which the permutation $\pi$ has a profit of $\newobj( \pi ) \geq \tilde{\psi} \geq (1 - \eps) \cdot \psi^* \geq (1 - \eps) \cdot \newobj_{ \myheavy } ( \pi^* )$.

\begin{lemma} \label{lem:descretize_psi}
There exists a value $\tilde{\psi} \in {\cal D}_{\psi}$ such that $\tilde{\psi} \geq (1 - \eps) \cdot \psi^*$ and such that $\tilde{F}(K, \tilde{\psi}, {\cal Q}_K) < \infty$ for some core ${\cal Q}_K$.
\end{lemma}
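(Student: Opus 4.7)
The plan is to mimic the continuous-DP derivation of $\psi^*$ in the discrete DP, rounding the profit coordinate down to the grid ${\cal D}_\psi$ at each level, and then to absorb the accumulated loss by showing $\psi^* \geq p_{\max}$.

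First, I would fix a bulky pair $(S^*,\pi^*) \in \mybulky(K,\psi^*,{\cal Q}_K^*)$ guaranteed by the discussion preceding the lemma and unfold its optimal substructure, obtaining, for every $k \in [K]_0$, the item set $Q_k^* = \{i \in S^* : C_{\pi^*}(i) \in {\cal I}_k\}$ (pairwise disjoint, with $\sum_k |Q_k^*| \leq n$), its induced permutation $\pi_{Q_k^*}$, the core ${\cal Q}_k^*$, and the per-level profit increment $\phi_k = \sum_{i \in Q_k^*} \varphi^{\rightsquigarrow}_{\pi_{Q_k^*}}(i)$, with $\sum_{k=0}^K \phi_k \geq \psi^*$. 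I would then build a discrete trace by setting $\tilde{\psi}_{-1} = 0$ and, for each $k$, letting $\tilde{\psi}_k$ be the largest element of ${\cal D}_\psi$ with $\tilde{\psi}_k \leq \tilde{\psi}_{k-1} + \phi_k$; since $\tilde{\psi}_k \leq \sum_{j \leq k} \phi_j \leq n \cdot p_{\max}$, this value always lies in ${\cal D}_\psi$. By construction $\tilde{\psi}_k$ satisfies condition~2 of the discrete recursion, while conditions~1 and~3 (makespan, core update, $Q_k^* \subseteq H_k$, $|Q_k^*| \leq 1/\eps$) are inherited verbatim from the continuous derivation, as they are independent of the profit coordinate. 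Consequently $\tilde{F}(k,\tilde{\psi}_k,{\cal Q}_k^*) < \infty$ for every $k$, in particular for $k = K$.

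For the loss estimate, rounding at level $k$ shaves off at most $\eps p_{\max}/n$, and no loss is incurred when $\phi_k = 0$ (in that case $\tilde{\psi}_k = \tilde{\psi}_{k-1}$ already sits on the grid). With at most $n$ nonempty levels, this yields $\tilde{\psi}_K \geq \psi^* - \eps p_{\max}$. To convert this additive loss into the multiplicative $(1-\eps)$-loss claimed, I would observe that $\psi^* \geq p_{\max}$: considering the singleton bulky pair consisting of the item $i^*$ attaining $p_{\max}$, the assumption $w_{\min}=3$ forces $w_{i^*} \in {\cal I}_k$ for some $k \geq 1$, and the elementary inequality $\eps^2(1+\eps) < 1$ (valid for $\eps < 1/2$) yields $w_{i^*} > (1+\eps)^{k-1} \geq \eps^2(1+\eps)^k$, confirming $i^* \in H_k$. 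The lemma then follows by taking $\tilde{\psi} = \tilde{\psi}_K$. The one subtlety requiring care in the full write-up is verifying that the makespan bound (condition~1) is genuinely preserved under the discrete recursion, which amounts to a routine induction showing $\tilde{F}(k,\tilde{\psi}_k,{\cal Q}_k^*) \leq F(k,\psi_k^*,{\cal Q}_k^*)$ along the chosen trace.
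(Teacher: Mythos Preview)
Your proposal is correct and follows essentially the same route as the paper: trace the optimal continuous-DP path level by level, replace each profit coordinate by a nearby grid point, bound the accumulated rounding error by $\eps p_{\max}$ using the fact that at most $n$ levels are nonempty, and finally convert the additive loss into a multiplicative one via the singleton-bulky-pair argument that $\psi^*\geq p_{\max}$. The only cosmetic difference is the rounding rule---you round down greedily via $\tilde\psi_k=\lfloor \tilde\psi_{k-1}+\phi_k\rfloor_{{\cal D}_\psi}$, whereas the paper uses the closed-form $\tilde\psi_k=\lceil \psi_k^*-\min\{k,|S_k^*|\}\cdot\frac{\eps p_{\max}}{n}\rceil_{{\cal D}_\psi}$---but both yield the same $\eps p_{\max}$ total loss and the same feasibility certificate $(S_k^*,\pi_{S_k^*})\in\widetilde{\mybulky}(k,\tilde\psi_k,{\cal Q}_k^*)$.
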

 
\paragraph{Running time.} We first observe that the function $\tilde{F}(k, \psi_k, {\cal Q}_k)$ needs to be evaluated over $O(n^{O(1/ \eps^2)} \cdot |{\cal I}|)$ possible states. Indeed, there are $O(K)$ choices for the interval index $k$, where by definition, $K =  \lceil \log_{1 + \eps} (\sum_{i \in [n]} w_i) \rceil = O( \frac{ |{\cal I}| }{ \eps } )$. As for the profit parameter $\psi_k$, following its restriction to the set ${\cal D}_{\psi}$, we ensure that $\psi_k$ takes only $| {\cal D}_{\psi} | = O( \frac{n^2}{\eps} )$ values. Finally, since the core ${\cal Q}_k \subseteq [n]$ consists of at most $\frac{ 1 }{ \eps^2 }$ items, there are only $O(n^{O(1/ \eps^2)})$ subsets to consider.

Now, evaluating each state requires minimizing the restricted function $\tilde{F}(k-1, \psi_{k-1}, {\cal Q}_{k-1}) + w(Q)$ over all choices of $Q$, $\pi_Q$, $\psi_{k-1}$, and ${\cal Q}_{k-1}$ that simultaneously satisfy conditions~1-3 of the recursive equations (see Section~\ref{subsec:continuous_dp}). In this context, the number of joint configurations for these parameters is $O(n^{O(1/ \eps^2)} )$. Specifically, the profit parameter $\psi_{k-1}$ and the core ${\cal Q}_{k-1}$ respectively take $O( \frac{n^2}{\eps} )$ and $O(n^{O(1/ \eps^2)})$ values as before. In addition, the number of choices for the augmenting set $Q$ is $O( n^{ O(1/\eps) } )$, due to being comprised of at most $\frac{ 1 }{ \eps }$ items, and there are 
only $O( ( \frac{ 1 }{ \eps } )^{ O( 1 / \eps ) } )$ permutations $\pi_Q$ of these items. To summarize, we incur an overall running time of $O(n^{O(1/ \eps^2)} \cdot |{\cal I}|)$.

\subsection{Algorithm for light contributions} \label{sec:light}

In this section, we construct a suitably-defined instance of the maximum generalized assignment problem, intended to compete against $\newobj_{ \mylight } ( \pi^* )$. We show that, when applied to this highly-structured instance, the LP-based algorithm of \cite{ShmoysT93} can be leveraged to identify in $O( ( \frac{ |{\cal I}| }{ \eps } )^{O(1)} )$ time a permutation $\pi_{ \mylight }$ with a profit of $\newobj( \pi_{ \mylight } ) \geq (1 - 13\eps) \cdot \newobj_{ \mylight } ( \pi^* )$. 

\subsubsection{Instance construction}\label{subsec:light_construction}

\paragraph{Intuition.} The general intuition behind our construction resides in viewing the intervals ${\cal I}_1, \ldots, {\cal I}_{K-1}$ as distinct buckets, to which items should be assigned subject to capacity constraints. Clearly, this perspective lacks the extra flexibility of the sequencing formulation, where items may be crossing between multiple successive intervals. In addition, any item-to-bucket assignment has to be associated with a specific profit a-priori, whereas the sequencing-related profits depend on the exact completion time of each item. As explained in the sequel, our approach bypasses the first obstacle by focusing on light items, for which greedy repacking of rounded solutions will be argued to be near-optimal. In regard to the second obstacle, we will allow seemingly unattainable profits, showing that appropriately scaled fractional solutions can be rounded to attain these profits up to negligible loss in optimality.

\paragraph{The construction.} Guided by this intuition, we define an instance of the maximum generalized assignment problem as follows:
\begin{itemize}
\item {\em Buckets}: For every $k \in [K-1]$, we set up a bucket ${\cal B}_k$. The capacity of this bucket is $\mycap( {\cal B}_k ) = (1 + \eps)^k - (1 + \eps)^{ k-1 }$, i.e., precisely the length of the interval ${\cal I}_k$. It is worth mentioning that there are no buckets corresponding to the intervals ${\cal I}_0$ and ${\cal I}_K$.

\item {\em Items}: The set of items is still $[n]$, where each item has a weight of $w_i$.

\item {\em Allowed assignments and profits}: An item $i$ can be assigned to bucket ${\cal B}_k$ only when $i$ is $(k+1)$-light. For such an assignment, our profit is $q_{ik} = \max \{ p_{i,t} : t \in [T+1] \text{ and } W_t \geq (1 + \eps)^k \}$.
\end{itemize}
The goal is to compute a capacity-feasible assignment whose total profit is maximized.

\paragraph{IP formulation.} Moving forward, it is instructive to represent this instance through its standard integer programming formulation:
\begin{equation} \label{eqn:IP-formulation} \tag{IP}
\begin{array}{lll}
    \max & {\displaystyle \sum_{i \in [n]} \sum_{k \in [K-1]: i \in L_{k+1}} q_{ik} x_{ik}} \\
    \text{s.t.} & {\displaystyle \sum_{k \in [K-1] : i \in L_{k+1}} x_{ik} \leq 1} & \forall \, i \in [n] \\
    & {\displaystyle \sum_{i \in L_{k+1}} w_i x_{ik} \leq \mycap( {\cal B}_k )} \qquad \qquad & \forall \, k \in [K-1] \\
    & x_{ik} \in \{ 0, 1 \} & \forall \, k \in [K-1], \, i \in L_{k+1}
    \end{array}
\end{equation}
In this formulation, each decision variable $x_{ik}$ indicates whether item $i$ is assigned to bucket ${\cal B}_k$. The first constraint guarantees that every item is assigned to at most one bucket, and the second constraint ensures that the total weight of the items assigned to each bucket fits within its capacity. The next lemma shows that any feasible assignment can be efficiently mapped to a permutation for our sequencing formulation that collects at least as much profit; the proof is provided in Appendix~\ref{app:proof_lem_ip-to-perm}. 

\begin{lemma}\label{lem:ip-to-perm}
Any feasible solution $x$ to~\eqref{eqn:IP-formulation} can be translated in $O(nK)$ time to a permutation $\pi_x : [n] \to [n]$ satisfying $\newobj( \pi_x ) \geq \sum_{i \in [n]} \sum_{k \in [K-1]: i \in L_{k+1}} q_{ik} x_{ik}$.
\end{lemma}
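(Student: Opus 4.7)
The plan is to construct $\pi_x$ explicitly from the assignment encoded by $x$. For each $k \in [K-1]$, let $A_k = \{i \in [n] : x_{ik} = 1\}$ be the set of items placed into bucket ${\cal B}_k$, and let $A_0 = [n] \setminus \bigcup_{k \in [K-1]} A_k$ be the unassigned items. I would define $\pi_x$ as the concatenation, in order of increasing $k$, of arbitrary internal orderings of $A_1, A_2, \ldots, A_{K-1}$, followed by $A_0$ at the very end. Reading $x$ and writing out this permutation takes $O(nK)$ time, matching the claimed running time.

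The key observation is that for any item $i \in A_k$, the completion time in $\pi_x$ satisfies $C_{\pi_x}(i) \leq (1+\eps)^k$. Indeed, all items scheduled strictly before the block $A_k$ belong to $A_1 \cup \cdots \cup A_{k-1}$, and by feasibility of $x$ their total weight is bounded by $\sum_{j=1}^{k-1} \mycap({\cal B}_j) = \sum_{j=1}^{k-1}\bigl((1+\eps)^j - (1+\eps)^{j-1}\bigr) = (1+\eps)^{k-1} - 1$ via telescoping. The items of $A_k$ scheduled up to and including $i$ contribute at most $\mycap({\cal B}_k) = (1+\eps)^k - (1+\eps)^{k-1}$ additional weight, again by feasibility. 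Adding the two bounds yields $C_{\pi_x}(i) \leq (1+\eps)^k - 1 < (1+\eps)^k$, which is precisely the threshold appearing in the definition of $q_{ik}$.

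To finish, I would compare profits item by item. For any $i \in A_k$, the inequality $C_{\pi_x}(i) \leq (1+\eps)^k$ implies that every period $t$ satisfying $W_t \geq (1+\eps)^k$ also satisfies $W_t \geq C_{\pi_x}(i)$; consequently, from the definitions of $\varphi_{\pi_x}$ and $q_{ik}$ we obtain $\varphi_{\pi_x}(i) \geq q_{ik}$. For items in $A_0$, the convention $p_{i,T+1}=0$ guarantees $\varphi_{\pi_x}(i) \geq 0$, matching their zero contribution to the IP. Summing over all items then gives $\newobj(\pi_x) = \sum_{i \in [n]} \varphi_{\pi_x}(i) \geq \sum_{i \in [n]}\sum_{k \in [K-1] : i \in L_{k+1}} q_{ik} x_{ik}$, as required. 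I expect no serious obstacle here: the construction is direct, and the only arithmetic step, the telescoping bound on bucket capacities, is precisely what the geometric definition of the intervals ${\cal I}_k$ was designed to accommodate.
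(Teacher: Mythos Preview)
Your proposal is correct and follows essentially the same approach as the paper: both construct $\pi_x$ by concatenating the bucket contents in increasing index order (with unassigned items last), bound $C_{\pi_x}(i) \leq (1+\eps)^k$ via the telescoping sum of bucket capacities, and conclude $\varphi_{\pi_x}(i) \geq q_{ik}$ from the definitions. Your telescoping is written out slightly more explicitly (separating the first $k-1$ buckets from the $k$-th), but the argument is identical in substance.
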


\paragraph{LP-relaxation and lower bound.} The linear relaxation of this integer program, (LP), is obtained by replacing the integrality constraints $x_{ik} \in \{ 0, 1 \}$ with non-negativity constraints, $x_{ik} \geq 0$. The next lemma relates the resulting fractional optimum to $\newobj_{ \mylight } ( \pi^* )$.

\begin{lemma} \label{lem:lb_frac_opt}
$\opt( \mathrm{LP} ) \geq (1 - 5\eps) \cdot \newobj_{ \mylight } ( \pi^* )$.
\end{lemma}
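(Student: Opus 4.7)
The plan is to exhibit a feasible fractional solution $\bar{x}$ to (LP) whose objective value is at least $(1-5\eps) \cdot \newobj_{\mylight}(\pi^*)$, and then invoke $\opt(\mathrm{LP}) \geq \sum_{i,k} q_{ik} \bar{x}_{ik}$. The construction takes the optimal permutation $\pi^*$ and, for every item $i \in L_k$ with $C_{\pi^*}(i) \in {\cal I}_k$ (that is, every item contributing to $\newobj_{\mylight}(\pi^*)$), sets $\bar{x}_{i, k-1} = \frac{1}{1+\eps}$; all other coordinates are zero. The shift by one interval together with the contraction factor $\frac{1}{1+\eps}$ is designed to repair both possible sources of infeasibility simultaneously.

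Before checking feasibility, I would observe that this assignment is well-defined, i.e., that the bucket ${\cal B}_{k-1}$ always exists for the relevant $k$'s. The normalization $w_{\min}=3$ together with $\eps<\frac{1}{2}$ implies that every completed item has completion time at least $3 > 1+\eps \geq (1+\eps)^1$, so no item can have its completion time in ${\cal I}_0$ or ${\cal I}_1$. Hence whenever the rule fires we have $k\geq 2$, and ${\cal B}_{k-1}$ is indeed among ${\cal B}_1,\dots,{\cal B}_{K-1}$.

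Next I would verify the three families of constraints of (LP). The single-assignment constraint $\sum_k \bar{x}_{ik}\leq 1$ holds because each item $i$ has at most one nonzero coordinate, equal to $\frac{1}{1+\eps}\leq 1$, determined by the unique interval ${\cal I}_k$ containing $C_{\pi^*}(i)$. The allowed-assignment rule for bucket ${\cal B}_{k-1}$ requires $i\in L_{(k-1)+1}=L_k$, which is precisely our selection criterion. For the capacity constraint at ${\cal B}_{k-1}$, the key geometric fact is that items in $\pi^*$ with completion time in ${\cal I}_k$ have total weight at most $(1+\eps)^k - (1+\eps)^{k-1}$; after scaling by $\frac{1}{1+\eps}$, this becomes at most $(1+\eps)^{k-1}-(1+\eps)^{k-2} = \mycap({\cal B}_{k-1})$.

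Finally, I would compare profits coordinatewise: since $C_{\pi^*}(i) > (1+\eps)^{k-1}$ for items assigned to ${\cal B}_{k-1}$, the set $\{t : W_t \geq (1+\eps)^{k-1}\}$ contains $\{t : W_t \geq C_{\pi^*}(i)\}$, so $q_{i,k-1} \geq \varphi_{\pi^*}(i)$. Summing yields
\[
\opt(\mathrm{LP}) \;\geq\; \sum_{k}\sum_{\MyAbove{i \in L_k:}{C_{\pi^*}(i)\in {\cal I}_k}} \frac{q_{i,k-1}}{1+\eps} \;\geq\; \frac{1}{1+\eps}\cdot \newobj_{\mylight}(\pi^*) \;\geq\; (1-\eps)\cdot \newobj_{\mylight}(\pi^*),
\]
which in particular exceeds the required $(1-5\eps)\cdot\newobj_{\mylight}(\pi^*)$. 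The main point to get right is the simultaneous handling of the two degeneracies — that items in ${\cal I}_k$ may not fit inside ${\cal B}_{k-1}$, and that the profit under $q_{i,k}$ may be strictly smaller than $\varphi_{\pi^*}(i)$ — both of which are resolved by shifting down by one interval and applying the single scale factor $\frac{1}{1+\eps}$, with the $w_{\min}=3$ normalization ensuring we never shift past the first existing bucket.
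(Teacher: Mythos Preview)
Your overall strategy matches the paper's: assign each light item $i\in L_k$ with $C_{\pi^*}(i)\in{\cal I}_k$ fractionally to bucket ${\cal B}_{k-1}$, then argue feasibility and compare profits. The profit comparison $q_{i,k-1}\geq\varphi_{\pi^*}(i)$ and the observation that $k\geq 2$ are both fine.

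The gap is in your capacity check. Your ``key geometric fact'' that the items with completion time in ${\cal I}_k$ have total weight at most $(1+\eps)^k-(1+\eps)^{k-1}$ is false: the first such item may ``overhang'' from below. Concretely, if $j_1,\dots,j_p$ are the $k$-light items with $C_{\pi^*}(j_l)\in{\cal I}_k$ in order, then telescoping gives $\sum_{l\geq 2}w_{j_l}\leq C_{\pi^*}(j_p)-C_{\pi^*}(j_1)<(1+\eps)^k-(1+\eps)^{k-1}$, but you still have to add $w_{j_1}$, which can be as large as $\eps^2(1+\eps)^k$ (the $k$-light threshold). So the correct bound is
\[
\sum_{i\in C_k^*} w_i \;<\; (1+\eps)^k-(1+\eps)^{k-1}+\eps^2(1+\eps)^k,
\]
exactly as the paper uses. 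With your scale factor $\tfrac{1}{1+\eps}$ this becomes $(1+\eps)^{k-1}-(1+\eps)^{k-2}+\eps^2(1+\eps)^{k-1}$, which \emph{strictly exceeds} $\mycap({\cal B}_{k-1})=(1+\eps)^{k-1}-(1+\eps)^{k-2}$. Hence your fractional solution is infeasible.

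The fix is minor: use a slightly smaller scale factor that also absorbs the extra $\eps^2(1+\eps)^k$ term (the paper takes $1-5\eps$; something like $\tfrac{1}{1+\eps}\cdot\tfrac{1}{1+2\eps}$ would also work). Once you correct the weight bound and adjust the factor accordingly, your argument goes through and yields the stated lemma.
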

\begin{proof}
In order to derive the desired bound, we prove that (LP) has a feasible fractional solution $x$ with an objective value of at least $(1 - 5\eps) \cdot \newobj_{ \mylight } ( \pi^* )$. To this end, let $C_k^* = \{ i \in L_k : C_{ \pi^* }( i ) \in {\cal I}_k \}$ be the subset of $k$-light items whose completion time with respect to the optimal permutation $\pi^*$ falls in ${\cal I}_k$. With this notation, recall that
\begin{equation} \label{eqn:lb_frac_opt_light}
\newobj_{ \mylight } ( \pi^* ) = \sum_{k \in [K]_0} \sum_{ i \in C_k^* }      \varphi_{ \pi^* }( i ) = \sum_{k = 2}^K \sum_{ i \in C_k^* } \varphi_{ \pi^* }( i ) \ ,
\end{equation}
where the second equality follows by observing that completion times cannot fall in either of the intervals ${\cal I}_0$ and ${\cal I}_1$, since their union is $[0,2 + \eps]$ whereas $w_{\min} = 3$, by our initial assumption in Section~\ref{subsec:decomp_outline}.

We define a fractional solution $x$ to (LP) by setting $x_{i,k-1} = 1 - 5\eps$ for every $2 \leq k \leq K$ and $i \in C_k^*$; all other variables take zero values. To verify the feasibility of this solution, note that we clearly have $\sum_{k \in [K-1]: i \in L_{k+1}} x_{ik} \leq 1$ for every item $i \in [n]$. As for the capacity constraints, for every $2 \leq k \leq K$,
\begin{eqnarray*}
\sum_{i \in [n]} w_i x_{i,k-1} & = & (1 - 5 \eps) \cdot \sum_{i \in C_k^*} w_i \\
& \leq & (1 - 5 \eps) \cdot \left( (1 + \eps)^k - (1 + \eps)^{ k-1 } + \eps^2 \cdot (1 + \eps)^k \right) \\
& \leq & (1 - 5 \eps) \cdot (1 + 5 \eps) \cdot \left( (1 + \eps)^{k-1} - (1 + \eps)^{ k-2 } \right) \\
& \leq & \mycap \left( {\cal B}_{k-1} \right) \ .
\end{eqnarray*}
Here, the first inequality holds since all items in $C_k^*$ have completion times in ${\cal I}_k$, implying that their total weight is upper bounded by the length $(1 + \eps)^k - (1 + \eps)^{ k-1 }$ of this interval plus the maximum weight of any item in $C_k^*$, which is at most $\eps^2 \cdot (1 + \eps)^k$ due to being $k$-light. The second inequality can easily be verified to hold for every $\eps \in (0,1)$. 

Consequently, the fractional optimum can be lower-bounded by the objective function of $x$, to obtain 
\begin{eqnarray*}
\opt( \mathrm{LP} ) & \geq & \sum_{i \in [n]} \sum_{k \in [K-1]: i \in L_{k+1}} q_{ik} x_{ik} \\
& = & (1 - 5 \eps) \cdot \sum_{k = 2}^K \sum_{ i \in C_k^* } q_{i,k-1} \\
& \geq & (1 - 5 \eps) \cdot \sum_{k = 2}^K \sum_{ i \in C_k^* } \varphi_{ \pi^* }( i ) \\
& = & (1 - 5 \eps) \cdot \newobj_{ \mylight } ( \pi^* ) \ ,
\end{eqnarray*}
where the last equality is precisely~\eqref{eqn:lb_frac_opt_light}. To understand the second inequality, note that for every item $i \in C_k^*$,
\begin{eqnarray*}
\varphi_{ \pi^* }( i ) & = & \max \left\{ p_{i,t} : t \in [T+1] \text{ and } W_t \geq C_{ \pi^* }( i ) \right\} \\
& \leq & \max \left\{ p_{i,t} : t \in [T+1] \text{ and } W_t \geq (1 + \eps)^{k-1} \right\} \\
& = & q_{i,k-1} \ ,
\end{eqnarray*}
where the above inequality holds since $C_{ \pi^* }( i ) \geq (1 + \eps)^{k-1}$.
\end{proof}

\subsubsection{Employing the Shmoys-Tardos algorithm}

\paragraph{The rounding algorithm.} We proceed by utilizing the LP-rounding approach of \citet[Sec.~2]{ShmoysT93}, which was originally proposed for the minimum generalized assignment problem. Specifically, given an optimal fractional solution to the linear program (LP), their algorithm computes an integral vector $\hat{x}$ that satisfies the following properties:
\begin{enumerate}
\item {\em Objective value}: $\hat{x}$ has a super-optimal objective value, i.e.,
    \begin{equation} \label{eqn:prop_ST_super_opt}
    \sum_{i \in [n]} \sum_{k \in [K-1]: i \in L_{k+1}} q_{ik} \hat{x}_{ik} \geq \opt( \mathrm{LP} ) \ .
    \end{equation}

\item {\em Item assignment}: $\hat{x}$ assigns each item to at most one bucket, namely, $ \sum_{k \in [K-1]: i \in L_{k+1}} \hat{x}_{ik} \leq 1$ for every $i \in [n]$.

\item {\em Fixable capacity}: For every bucket ${\cal B}_k$, if its capacity is violated (i.e., $\sum_{i \in L_{k+1}} w_i \hat{x}_{ik} > \mycap( {\cal B}_k )$), there exists a single infeasibility item $i_{\inf(k)}$ with $\hat{x}_{i_{\inf(k)},k} = 1$ whose removal restores the feasibility of that bucket, i.e.,
\begin{equation} \label{eqn:prop_ST_fix}
\sum_{i \in L_{k+1}} w_i \hat{x}_{ik} -w_{ i_{\inf(k)} } \leq  \mycap( {\cal B}_k ) \ .
\end{equation}
\end{enumerate}

\paragraph{Restoring feasibility with negligible profit loss.} Given the above-mentioned properties, a feasible integral solution can obviously be obtained by eliminating the infeasibility item of each bucket with violated capacity. However, this straightforward approach may decrease the objective value by a non-$\eps$-bounded factor. Instead, the final step of our algorithm greedily defines an integral solution $\hat{x}^- \leq \hat{x}$ which is feasible for~\eqref{eqn:IP-formulation} and has an objective value of at least $(1 - 8\eps) \cdot \opt( \mathrm{LP} )$. To this end, for every bucket ${\cal B}_k$ whose capacity is not violated by $\hat{x}$, we simply have $\hat{x}^-_{ik} = \hat{x}_{ik}$ for all $i \in L_{k+1}$. In contrast, for every bucket ${\cal B}_k$ whose capacity is violated, we proceed as follows:
\begin{itemize}
\item Let $i_1, \ldots, i_M$ be an indexing of the set $\{ i \in L_{k+1} : \hat{x}_{ik} = 1 \}$ such that $\frac{ q_{i_1,k} }{ w_{i_1} } \geq \cdots \geq \frac{ q_{i_M,k} }{ w_{i_M} }$.

\item Let $\mu$ be the maximal index for which $\sum_{m \in [\mu]} w_{i_m} \leq \mycap( {\cal B}_k )$.

\item Then, our solution sets $\hat{x}^-_{i_1,k} = \cdots = \hat{x}^-_{i_{\mu},k} = 1$ and $\hat{x}^-_{ik} = 0$ for any other item. Clearly, $\hat{x}^-_{ik} \leq \hat{x}_{ik}$ for all $i \in L_{k+1}$.
\end{itemize}
The next claim shows that the profit collected by $\hat{x}^-$ nearly matches the fractional optimum.

\begin{lemma} \label{lem:x_hatminus_vs_opt_lp}
$\sum_{i \in [n]} \sum_{k \in [K-1]: i \in L_{k+1}} q_{ik} \hat{x}_{ik}^- \geq (1 - 8\eps) \cdot \opt( \mathrm{LP} )$.
\end{lemma}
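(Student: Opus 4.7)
The plan is to show that the greedy repacking incurs only a small relative loss in each infeasible bucket, and to combine this per-bucket guarantee with property~1 (the super-optimality inequality~\eqref{eqn:prop_ST_super_opt}). For buckets whose capacity is not violated, $\hat{x}^-$ coincides with $\hat{x}$ on their assigned items, so all attention focuses on buckets ${\cal B}_k$ with $\sum_{i \in L_{k+1}} w_i \hat{x}_{ik} > \mycap({\cal B}_k)$.

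First, I would fix such a bucket and adopt the algorithm's notation: let $i_1, \ldots, i_M$ be the items with $\hat{x}_{i_m,k} = 1$, sorted so that $q_{i_1,k}/w_{i_1} \geq \cdots \geq q_{i_M,k}/w_{i_M}$, and let $\mu$ be the greedy cutoff. Write $W^- = \sum_{m \leq \mu} w_{i_m}$, $W^+ = \sum_{m > \mu} w_{i_m}$, and denote by $\hat{P}_k$ and $\hat{P}_k^-$ the bucket profits of $\hat{x}$ and $\hat{x}^-$. Since every discarded item has a $q/w$-ratio no larger than the average $q/w$-ratio of the kept items, a standard averaging argument gives $\hat{P}_k - \hat{P}_k^- \leq \hat{P}_k^- \cdot (W^+/W^-)$, equivalently $\hat{P}_k^- \geq \hat{P}_k \cdot W^-/(W^-+W^+)$.

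Second, I would bound $W^+$ and $W^-$ using the $(k+1)$-lightness of the items together with property~3. Since each $i_m \in L_{k+1}$, one has $w_{i_m} < \eps^2 (1+\eps)^{k+1}$. Property~3 yields $W^- + W^+ \leq \mycap({\cal B}_k) + w_{i_{\inf(k)}}$, while the greedy cutoff definition gives $W^- + w_{i_{\mu+1}} > \mycap({\cal B}_k)$. Subtracting these produces $W^+ < w_{i_{\inf(k)}} + w_{i_{\mu+1}} < 2\eps^2(1+\eps)^{k+1}$, and the second inequality gives $W^- > \mycap({\cal B}_k) - \eps^2(1+\eps)^{k+1} = \eps(1+\eps)^{k-1}\bigl(1-\eps(1+\eps)^2\bigr)$. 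Substituting and simplifying, the desired per-bucket inequality $W^+/W^- \leq 8\eps/(1-8\eps)$ reduces to $(1+\eps)^2(1-4\eps) \leq 4$, which is routine to verify for every $\eps \in (0, \frac{1}{2})$. The degenerate case $\mu = 0$ can only arise when $\eps(1+\eps)^2 > 1$, which forces $1 - 8\eps < 0$ and makes the entire lemma vacuous.

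Third, I would combine the per-bucket bound $\hat{P}_k^- \geq (1-8\eps)\hat{P}_k$ (which holds trivially for non-violated buckets) with property~1 to conclude
\[
\sum_{i \in [n]} \sum_{k \in [K-1]: i \in L_{k+1}} q_{ik} \hat{x}_{ik}^- \;\geq\; (1-8\eps) \sum_{i \in [n]} \sum_{k \in [K-1]: i \in L_{k+1}} q_{ik} \hat{x}_{ik} \;\geq\; (1-8\eps) \cdot \opt(\mathrm{LP}),
\]
as required. The main obstacle is the third step of the per-bucket argument: one must verify that the slack introduced by removing the infeasibility item and the marginal discarded item, both of which can be as large as roughly $\eps(1+\eps)^2$ times the bucket's capacity, still leaves $W^-$ large enough relative to $W^+$ to yield the $8\eps$ loss factor throughout the stated range of $\eps$.
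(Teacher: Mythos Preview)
Your proposal is correct and follows essentially the same route as the paper: a per-bucket comparison via the ratio-sorted averaging argument, combined with the lightness bound on individual item weights and the fixable-capacity property, then summed and capped off by super-optimality~\eqref{eqn:prop_ST_super_opt}. The only cosmetic difference is that the paper normalizes each item weight against $\mycap({\cal B}_k)$ via the estimate $\eps^2(1+\eps)^{k+1}\le 4\eps\cdot\mycap({\cal B}_k)$, obtaining the clean bounds $W^-\ge(1-4\eps)\mycap({\cal B}_k)$ and $W^-+W^+\le(1+4\eps)\mycap({\cal B}_k)$, which yields $(1-4\eps)/(1+4\eps)\ge 1-8\eps$ directly; your version keeps the powers of $(1+\eps)$ explicit and arrives at the equivalent inequality $(1+\eps)^2(1-4\eps)\le 4$.
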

\begin{proof}
Recall that the super-optimality property of $\hat{x}$, as stated in~\eqref{eqn:prop_ST_super_opt}, corresponds to having $\sum_{i \in [n]} \sum_{k \in [K-1]: i \in L_{k+1}} q_{ik} \hat{x}_{ik} \geq \opt( \mathrm{LP} )$. Therefore, by changing the order of summation, we can establish the desired claim by proving that $\sum_{i \in L_{k+1}} q_{ik} \hat{x}^-_{ik} \geq (1 - 8\eps) \cdot \sum_{i \in L_{k+1}} q_{ik} \hat{x}_{ik}$ for every $k \in [K-1]$. Moreover, since one has $\hat{x}^-_{\cdot k} = \hat{x}_{\cdot k}$ with respect to buckets whose capacity is not violated by $\hat{x}$, it remains to focus on violated buckets.

For such buckets, we first observe that, by the maximality of $\mu$, 
\begin{equation} \label{eqn:bucket_profit_lb}
\sum_{m \in [\mu]} w_{i_m} > \mycap \left( {\cal B}_k \right) - w_{i_{\mu+1}} \geq (1 - 4\eps) \cdot \mycap( {\cal B}_k ) \ ,
\end{equation}
where the second inequality holds since $i_{\mu+1} \in L_{k+1}$, and therefore $w_{i_{\mu+1}} \leq \eps^2 \cdot (1 + \eps)^{k+1} \leq 4\eps \cdot ( (1 + \eps)^k - (1 + \eps)^{ k-1 } ) = 4\eps \cdot \mycap( {\cal B}_k )$ for $\eps \in (0,1)$. On the other hand,
\begin{eqnarray}
\sum_{m \in [M]} w_{i_m} & = & \sum_{i \in L_{k+1}} w_i \hat{x}_{ik} \nonumber \\
& \leq & \mycap( {\cal B}_k ) + w_{ i_{\inf(k)} } \nonumber \\
& \leq & (1 + 4\eps) \cdot \mycap( {\cal B}_k ) \ , \label{eqn:bucket_profit_ub}
\end{eqnarray} 
where the equality follows from how the indices $i_1, \dots, i_M$ were defined, the first inequality is precisely the fixable capacity property of $\hat{x}$ (see~\eqref{eqn:prop_ST_fix}), and the second inequality holds since $w_{ i_{\inf(k)} } \leq 4\eps \cdot \mycap( {\cal B}_k )$, as explained earlier for $w_{i_{\mu+1}}$. Consequently,
\begin{eqnarray*}
\sum_{i \in L_{k+1}} q_{ik} \hat{x}^-_{ik} & = & \sum_{m \in [\mu]} q_{i_m,k} \\
& \geq & \frac{ \sum_{m \in [\mu]} w_{i_m} }{ \sum_{m \in [M]} w_{i_m} } \cdot \sum_{m \in [M]} q_{i_m,k} \\
& \geq & \frac{ 1 - 4\eps }{ 1 + 4\eps } \cdot \sum_{m \in [M]} q_{i_m,k} \\
& \geq & (1 - 8\eps) \cdot \sum_{i \in L_{k+1}} q_{ik} \hat{x}_{ik} \ ,
\end{eqnarray*}
where the first inequality holds since $\frac{ q_{i_1,k} }{ w_{i_1} } \geq \cdots \geq \frac{ q_{i_M,k} }{ w_{i_M} }$, and the second inequality is obtained by plugging in~\eqref{eqn:bucket_profit_lb} and~\eqref{eqn:bucket_profit_ub}.
\end{proof}

\paragraph{Performance guarantee.} We conclude by noting that, since $\hat{x}^-$ is a feasible solution to~\eqref{eqn:IP-formulation}, Lemma~\ref{lem:ip-to-perm} allows us to construct a permutation $\pi_{\mylight}$ with an overall profit of
\begin{eqnarray*}
\newobj( \pi_{ \mylight } ) & \geq & 
\sum_{i \in [n]} \sum_{k \in [K-1]: i \in L_{k+1}} q_{ik} \hat{x}^-_{ik} \\
& \geq & (1 - 8\eps) \cdot \opt( \mathrm{LP} ) \\
& \geq & (1 - 13\eps) \cdot \newobj_{ \mylight } ( \pi^* ) \ ,
\end{eqnarray*}
where the second and third inequalities follow from Lemmas~\ref{lem:x_hatminus_vs_opt_lp} and~\ref{lem:lb_frac_opt}, respectively.

From a running time perspective, the computational bottleneck of our approach is the Shmoys-Tardos algorithm \citeyearpar{ShmoysT93}. As the latter is applied to a maximum generalized assignment instance consisting of $n$ items and $O(K) = O( \frac{ |{\cal I}| }{ \eps } )$ buckets, it requires $O( ( \frac{ |{\cal I}| }{ \eps } )^{O(1)} )$ time in total. Beyond that, restoring the feasibility of $\hat{x}$ and translating the resulting solution $\hat{x}^-$ back to a permutation can both be implemented in $O((nK)^{O(1)})$ time.

\section{QPTAS for Bounded Weight Ratio}\label{sec:qptas-one}

In this section, we develop an approximation scheme for the generalized incremental knapsack problem by embedding our LP-based approach for competing against light contributions within a self-improving algorithm. As formally stated in Theorem~\ref{thm:qptas-bounded} below,
the running time of this algorithm will be exponentially-dependent on $\log (n \cdot \frac{w_{\max}}{w_{\min}})$, meaning that it provides a quasi-polynomial time approximation scheme (QPTAS) when the ratio between the extremal item weights is polynomial in the input size. In Section~\ref{sec:qptas-two}, these ideas will be exploited within an approximate dynamic programming framework to derive a true QPTAS, without making any assumptions on the ratio $\frac{w_{\max}}{w_{\min}}$.

\begin{theorem} \label{thm:qptas-bounded}
For any accuracy level $\eps \in (0,1)$, the generalized incremental knapsack problem can be approximated within a factor of $1 - \eps$ in time $O((nT)^{O(\frac{1}{\eps^{5}  } \cdot \log(n \cdot \frac{w_{\max}}{w_{\min}}))} \cdot |{\cal I}|^{O(1)})$. 
\end{theorem}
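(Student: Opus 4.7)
The overall plan is to combine the base $(\frac{1}{2} - \eps)$-approximation from Theorem~\ref{thm:half-approx} with an iterative \emph{self-improvement} lemma: given black-box access to an $\alpha$-approximation for the generalized incremental knapsack problem, I would construct a $\frac{1}{2-\alpha}$-approximation (up to a multiplicative $(1 - O(\eps))$ slack) via an enumeration-based reduction to the black box on residual instances. Since the map $\alpha \mapsto \frac{1}{2-\alpha}$ satisfies $\alpha_k = \frac{k+1}{k+2}$ when started from $\alpha_0 = \frac{1}{2}$, exactly $O(1/\eps)$ iterations suffice to reach ratio $1 - \eps$. To keep the cumulative $\eps$-losses under control, each iteration would be invoked with a rescaled accuracy parameter $\eps' = \Theta(\eps^2)$, which will be one source of the $1/\eps^5$ exponent in the final running time.

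For the self-improvement step itself, I would work inside the sequencing reformulation of Section~\ref{subsec:reformulation} and the geometric interval decomposition of Section~\ref{subsec:decomp_outline}. Given a black-box $\alpha$-approximation $\mathcal{A}$, the algorithm enumerates every ``skeleton'' consisting of a subset $Q \subseteq [n]$ of at most $O(1/\eps)$ items together with, for every $i \in Q$, the interval index $k_i \in \{0,\dots,K\}$ in which item $i$ is to complete and its rank among the skeleton items sharing that interval, where $K = O(\log_{1+\eps}(n \cdot w_{\max}/w_{\min}))$ as in Section~\ref{subsec:decomp_outline}. For each such guess, I would commit the skeleton to the designated positions, invoke $\mathcal{A}$ on the residual instance (obtained by removing the items of $Q$ and by shrinking the capacity budget within every interval by the weight reserved there for the skeleton), and splice the resulting residual permutation back with $Q$. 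The algorithm then returns the most profitable complete permutation across all guesses, or the raw output of $\mathcal{A}$ on the full instance, whichever is larger.

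The ratio analysis would proceed along standard lines. Let $P^* = \newobj(\pi^*)$ and, for a subset $Q \subseteq [n]$, let $P_Q$ denote its contribution under the optimal permutation $\pi^*$. When the enumeration lands on the ``correct'' skeleton, namely the $O(1/\eps)$ items of largest individual contribution in $\pi^*$ with positions matching $\pi^*$, the residual instance admits a solution of profit at least $P^* - P_Q$ (the restriction of $\pi^*$ to $[n]\setminus Q$), so $\mathcal{A}$ returns at least $\alpha(P^* - P_Q)$ on that residual and the spliced permutation reaches profit at least $\alpha P^* + (1-\alpha) P_Q$. Combining with the trivial guarantee $\alpha P^*$ from running $\mathcal{A}$ directly on the full instance, and using the elementary fact that $\max\bigl(\alpha P^*,\, \alpha P^* + (1-\alpha)P_Q\bigr) \geq \frac{1}{2-\alpha}\, P^*$ precisely when $P_Q \geq \frac{1-\alpha}{2-\alpha}\, P^*$, the claimed ratio follows provided the top-profit skeleton meets this threshold; a standard averaging argument handles the complementary case in which every individual item contributes less than $\eps P^*$ to $\pi^*$. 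The $(1-O(\eps))$ slack would be absorbed by a geometric discretization of the enumerated profit of $Q$.

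The main obstacle I anticipate is the faithful construction of the residual instance, because committing a skeleton inflates the completion time of every subsequent item by the combined weight of the preceding skeleton items, and this shift varies per interval; passing a naive capacity reduction to $\mathcal{A}$ need not preserve the $\alpha$-guarantee, since the black box is free to insert residual items at positions that conflict with the skeleton. I plan to resolve this by guessing, in addition to the skeleton, a coarsely discretized reserved-weight profile $(r_0,\dots,r_K)$ at precision $(1+\eps)$ per interval, which introduces a multiplicative $K^{O(1/\eps)} = \log(n \cdot w_{\max}/w_{\min})^{O(1/\eps)}$ enumeration overhead. Multiplying this with the $n^{O(1/\eps)}$ choices of $Q$, the $(nT)^{O(1)}$ cost per call to $\mathcal{A}$, and the $O(1/\eps)$ iterations at accuracy $\eps' = \Theta(\eps^2)$ yields the stated bound $(nT)^{O(\frac{1}{\eps^5}\, \log(n \cdot w_{\max}/w_{\min}))}\cdot |\mathcal{I}|^{O(1)}$. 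The logarithmic weight-ratio factor is intrinsic to the interval-based enumeration, which is precisely why the theorem is restricted to the bounded-weight-ratio regime, with the unrestricted case to be handled by the decomposition developed in Section~\ref{sec:qptas-two}.
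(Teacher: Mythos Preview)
Your self-improvement step is the wrong mechanism and the ratio analysis does not go through. You enumerate the $O(1/\eps)$ top-profit items of $\pi^*$ and hope that $P_Q + \alpha(P^*-P_Q) \geq \frac{1}{2-\alpha}P^*$, which requires $P_Q \geq \frac{1-\alpha}{2-\alpha}P^*$. There is no reason this threshold is met: the top $O(1/\eps)$ items may contribute an arbitrarily small fraction of $P^*$. Your ``complementary case'' (every item contributes $< \eps P^*$) is not actually complementary to the failure of the threshold, and even if it were, small individual contributions do not by themselves yield a better-than-$\alpha$ approximation via any standard averaging argument. The two branches you combine are $\alpha P^*$ and $\alpha P^* + (1-\alpha)P_Q$; the first is dominated by the second, so you only have one branch and no way to balance.

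The paper's boosting step is structurally different. It does not enumerate top-profit items; it enumerates \emph{all heavy items} of $\pi_{\mathcal S^*}$, i.e., the set $G^{*\myheavy}=\bigcup_k\{i\in H_k:C_{\pi_{\mathcal S^*}}(i)\in\mathcal I_k\}$, together with their insertion times as a chain $\mathcal H^*=\mathcal S^*|_{G^{*\myheavy}}$. Lemma~\ref{lem:heavy_bound} bounds $|G^{*\myheavy}|\leq\frac{3\log(n\rho)}{\eps^2}$---this is the true source of the $\log(n\cdot w_{\max}/w_{\min})$ factor in the exponent, not any ``reserved-weight profile'' enumeration. Removing $\mathcal H^*$ via the clean chain-based residual-instance construction of Section~\ref{subsec:qptas_prelim} (which sidesteps your splicing difficulties entirely) and applying $\mathcal A$ yields profit $\geq \newobj_{\myheavy}(\pi_{\mathcal S^*}) + \alpha\cdot\newobj_{\mylight}(\pi_{\mathcal S^*})$, a $(1,\alpha)$-approximation in the heavy/light sense. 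The crucial second branch, which you are missing altogether, is the LP-based light-contributions algorithm of Section~\ref{sec:light}, giving a $(0,1-\delta)$-approximation. The maximum of a $(1,\alpha)$- and a $(0,1-\delta)$-approximation is at least $\frac{1-\delta}{2-\alpha}\cdot\newobj(\pi_{\mathcal S^*})$ by the convex combination in Lemma~\ref{lem:one-step}. Iterating $O(1/\eps)$ times with $\delta=\Theta(\eps^2)$ then gives the theorem.
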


\paragraph{Outline.} As an instructive step, we dedicate Section~\ref{subsec:qptas_prelim} to explaining how, given any feasible chain, one can define a residual instance on the remaining (non-inserted) items. In this context, we establish a number of structural properties that relate between the solution spaces of the original and residual instances, which will be useful moving forward. As explained in Section~\ref{subsec:self-improve}, the basic idea behind our ``self-improving'' algorithm resides in arguing that, given a black-box $\alpha$-approximation for the generalized incremental knapsack problem, efficient guessing methods can be utilized to construct a solution that optimally competes against heavy contributions, and simultaneously, $\alpha$-competes against light contributions. In Section~\ref{subsec:recurse_self_improve}, we combine this result with our near-optimal algorithm for light contributions and attain a performance guarantee of $\frac{1}{2-\alpha}$, up to lower-order terms. Repeated applications of these $\alpha \mapsto \frac{1}{2-\alpha}$ improvements will be shown to obtain a $(1-\eps)$-fraction of the optimal profit within $O( \frac{ 1 }{ \eps })$ rounds. It is important to mention that each such application by itself incurs an exponential dependency on $\log(n \cdot \frac{w_{\max}}{w_{\min}})$, meaning that the results of this section are incomparable to those stated in Theorem~\ref{thm:half-approx}, where the running time involved is truly polynomial for any fixed $\eps > 0$. 

\subsection{Residual instances and their properties} \label{subsec:qptas_prelim}

\paragraph{Instance representation.} Due to working with modified instances in subsequent sections, we will designate the underlying set of items in a given instance by ${\cal N}$. As before, each item $i \in {\cal N}$ is associated with a weight of $w_i$, each time period $t \in [T]$ has a capacity of $W_t$, and we gain a profit of $p_{it}$ for introducing item $i$ in period $t$. That said, what differentiates between one instance and the other are two ingredients: The item set ${\cal N}$ and the time period capacities $W = (W_1, \ldots, W_T)$ with respect to which these instances are defined. It is important to point out that, regardless of the instance being considered, the item weights $w_i$, the number of time periods $T$, and the item-to-period profits $p_{it}$ will be kept unchanged. For these reasons, we denote a generalized incremental knapsack instance simply by ${\cal I}=({\cal N},W)$.

\paragraph{The $\boldsymbol{|_{G}}$-operator.} In the following, we introduce additional definitions, notation, and structural properties related to modified instances and their solution space. For a pair of chains, ${\cal S}=(S_1,\dots,S_T)$ and ${\cal G}=(G_1,\dots,G_T)$, we define the \emph{union of ${\cal S}$ and ${\cal G}$} as ${\cal S} \cup {\cal G} = (S_1\cup G_1,\dots,S_T\cup G_T)$, which is clearly a chain itself. For a chain ${\cal S}$ and a subset of items $G \subseteq {\cal N}$, we denote by ${\cal S} |_{G}$ the restriction of ${\cal S}$ to $G$, namely, ${\cal S} |_{G} = (S_1 \cap G, \ldots, S_T \cap G)$; one can easily verify that ${\cal S} |_{G}$ is a chain as well. The next claim, whose straightforward proof is omitted, establishes the feasibility of ${\cal S} |_{G}$ whenever ${\cal S}$ is feasible.

\begin{observation}\label{obs:subset}
Let ${\cal S}$ be a feasible chain for ${\cal I}$. Then, for any set of items $G \subseteq {\cal N}$, the chain ${\cal S} |_{G}$ is feasible as well.  
\end{observation}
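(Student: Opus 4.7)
The plan is to verify the two conditions that make up feasibility for the chain ${\cal S}|_G = (S_1\cap G,\dots,S_T\cap G)$ with respect to the instance ${\cal I}=({\cal N},W)$: namely, that the sequence is nested, and that each set respects its corresponding capacity $W_t$. Both facts will follow essentially immediately from the analogous properties of ${\cal S}$ together with the monotonicity of intersection with the fixed set $G$ and the non-negativity of the item weights.

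First I would confirm the chain (nesting) property. Since ${\cal S}$ is a feasible chain for ${\cal I}$, we have $S_t\subseteq S_{t+1}$ for every $t\in[T-1]$. Intersecting both sides of each inclusion with the same set $G$ preserves containment, so $S_t\cap G\subseteq S_{t+1}\cap G$. Hence the restricted sequence is indeed a chain, as already noted in the definition of ${\cal S}|_G$.

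Next I would verify the capacity constraints. For each $t\in[T]$, note that $S_t\cap G\subseteq S_t$, and all item weights $w_i$ are non-negative. Therefore
\[
w(S_t\cap G)\;=\;\sum_{i\in S_t\cap G}w_i\;\leq\;\sum_{i\in S_t}w_i\;=\;w(S_t)\;\leq\;W_t,
\]
where the last inequality is the feasibility of ${\cal S}$ for ${\cal I}$. This yields $w(S_t\cap G)\leq W_t$ for every $t\in[T]$, establishing feasibility of ${\cal S}|_G$.

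There is no real obstacle in this proof; the only delicate point worth flagging is that feasibility is measured against the same capacity profile $W=(W_1,\dots,W_T)$ as in ${\cal I}$, so one does not need any assumption on how $G$ interacts with ${\cal N}$ beyond $G\subseteq{\cal N}$. This is precisely why the author chose to state the result as an observation with the proof omitted.
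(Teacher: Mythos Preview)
Your proof is correct. The paper actually omits the proof of this observation entirely, calling it ``straightforward,'' so your two-step verification of the nesting and capacity conditions is exactly the elementary argument the authors had in mind.
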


\paragraph{The residual instance.} Given a feasible chain ${\cal G}=(G_1,\dots,G_T)$ for an instance ${\cal I}=({\cal N},W)$, we define the \emph{residual} generalized incremental knapsack instance ${\cal I}^{-{\cal G}}=({\cal N}^{-{\cal G}}, W^{-{\cal G}})$ as follows: 
\begin{itemize}
    \item The new set of items is ${\cal N}^{-{\cal G}}={\cal N}\setminus G_T$. Namely, we eliminate all items that were introduced at any point in time by ${\cal G}$.
    
    \item The residual capacity of every time $t \in [T]$ is set to $W^{-{\cal G}}_t = \min_{t \leq \tau \leq T} (W_{\tau} -w({G}_{\tau}))$.
    
    \item As previously mentioned, all item weights and profits remain unchanged. 
\end{itemize}
To verify that the residual instance ${\cal I}^{-{\cal G}}$ is well defined, it suffices to show that the residual capacities $W^{-{\cal G}}$ are non-negative and non-decreasing over time. The former property holds since $w(G_t) \leq W_t$ for every $t \in [T]$, by feasibility of ${\cal G}$. The latter property follows by observing that 
\[ W^{-{\cal G}}_t = \min_{t \leq \tau \leq T} (W_{\tau} -w({G}_{\tau}) ) \leq \min_{t+1 \leq \tau \leq T} (W_{\tau} -w({G}_{\tau}) ) = W^{-{\cal G}}_{t+1} \ . \]

The next two claims, whose respective proofs appear in  Appendices~\ref{app:proof_lem_union-chain} and~\ref{app:proof_lem_residual_chain}, explain the relationship between the solution spaces of the original instance ${\cal I}$ and its residual instance ${\cal I}^{-{\cal G}}$. For our purposes, the main implication of this relationship will be that, whenever we are able to ``guess'' a chain ${\cal G} = {\cal S}^* |_{G}$, where ${\cal S}^*$ is an optimal chain for ${\cal I}$, it suffices to focus on solving the residual instance ${\cal I}^{-{\cal G}}$. With an appropriate guess for the set of items $G$, this property will be a key idea within the approximation scheme we devise in the remainder of this section.

\begin{lemma}\label{lem:union-chain}
Let ${\cal G}$ be a feasible chain for ${\cal I}$ and let ${\cal R}$ be a feasible chain for ${\cal I}^{-{\cal G}}$. Then, ${\cal G} \cup {\cal R}$ is a feasible chain for ${\cal I}$ with profit $\objfunc({\cal G} \cup {\cal R}) = \objfunc({\cal G})+\objfunc({\cal R})$. 
\end{lemma}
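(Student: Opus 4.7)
The plan is to establish the three ingredients of the claim in sequence: (i) that ${\cal G} \cup {\cal R}$ is indeed a chain, (ii) that it satisfies the capacity constraints of ${\cal I}$, and (iii) that its profit decomposes additively. The whole argument hinges on one key disjointness observation, which I would isolate first: since ${\cal R}$ is a chain on the residual item set ${\cal N}^{-{\cal G}} = {\cal N}\setminus G_T$, every $R_t \subseteq {\cal N}\setminus G_T$, and because $G_t \subseteq G_T$ this forces $G_t \cap R_t = \emptyset$ for every $t \in [T]$. The same reasoning gives $G_{t-1}\cap R_{t-1}=\emptyset$.

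The chain property of ${\cal G}\cup{\cal R}$ is immediate from $G_{t-1}\subseteq G_t$ and $R_{t-1}\subseteq R_t$. For feasibility, I would combine the disjointness above with the definition of $W^{-{\cal G}}_t$: using $w(G_t\cup R_t) = w(G_t)+w(R_t)$ together with
\[
w(R_t) \;\leq\; W^{-{\cal G}}_t \;=\; \min_{t \leq \tau \leq T}\bigl(W_\tau - w(G_\tau)\bigr) \;\leq\; W_t - w(G_t),
\]
I obtain $w(G_t\cup R_t)\leq W_t$ for every $t$, so ${\cal G}\cup{\cal R}$ is feasible for ${\cal I}$.

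For the profit equality, I would first verify the set identity
\[
(G_t\cup R_t)\setminus(G_{t-1}\cup R_{t-1}) \;=\; (G_t\setminus G_{t-1})\,\sqcup\,(R_t\setminus R_{t-1}),
\]
where the union is disjoint. To see this, an item $i$ on the left either lies in $G_t$, in which case it cannot lie in $R_{t-1}\subseteq{\cal N}\setminus G_T$ (since $i\in G_t\subseteq G_T$), so $i\in G_t\setminus G_{t-1}$; or it lies in $R_t$, in which case symmetrically $i\notin G_{t-1}$, so $i\in R_t\setminus R_{t-1}$. The reverse inclusion is clear, and disjointness follows from $G_t\cap R_t=\emptyset$. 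Plugging this decomposition into the definition of $\objfunc$ and splitting the sum yields
\[
\objfunc({\cal G}\cup{\cal R}) = \sum_{t\in[T]}\sum_{i\in G_t\setminus G_{t-1}} p_{it} + \sum_{t\in[T]}\sum_{i\in R_t\setminus R_{t-1}} p_{it} = \objfunc({\cal G})+\objfunc({\cal R}).
\]

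I do not anticipate a real obstacle here; the only point that requires even a moment's care is the set-theoretic decomposition in the profit computation, where one has to exploit the fact that ${\cal R}$ avoids $G_T$ (not just $G_t$) to rule out overlap across different time indices.
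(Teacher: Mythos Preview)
Your proof is correct and follows essentially the same approach as the paper's own proof: both hinge on the disjointness $G_t\cap R_t=\emptyset$ (coming from $R_t\subseteq{\cal N}\setminus G_T$ and $G_t\subseteq G_T$), use it together with $w(R_t)\leq W^{-{\cal G}}_t\leq W_t-w(G_t)$ for feasibility, and decompose $(G_t\cup R_t)\setminus(G_{t-1}\cup R_{t-1})$ as the disjoint union of $G_t\setminus G_{t-1}$ and $R_t\setminus R_{t-1}$ for the profit identity. You spell out the set decomposition a bit more carefully than the paper does, but there is no substantive difference.
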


\begin{lemma}\label{lem:residual-chain}
Let ${\cal S}$ be a feasible chain for ${\cal I}$ and let ${\cal G} = {\cal S} |_G$, for some set of items $G \subseteq {\cal N}$. Then, ${\cal S} |_{{\cal N } \setminus G}$ is a feasible chain for ${\cal I}^{-{\cal G}}$ with profit $\objfunc({\cal S} |_{{\cal N } \setminus G}) = \objfunc({\cal S}) - \objfunc({\cal G})$. Moreover, if ${\cal S}$ is optimal for ${\cal I}$, then ${\cal S} |_{{\cal N } \setminus G}$ is optimal for ${\cal I}^{-{\cal G}}$.
\end{lemma}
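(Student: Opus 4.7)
\textbf{Proof plan for Lemma~\ref{lem:residual-chain}.} Write ${\cal S}=(S_1,\dots,S_T)$, so that ${\cal G}=(S_1\cap G,\dots,S_T\cap G)$ and ${\cal S}|_{{\cal N}\setminus G}=(R_1,\dots,R_T)$ with $R_t=S_t\setminus G$. There are three things to establish: feasibility for ${\cal I}^{-{\cal G}}$, the profit identity, and optimality transfer. I would treat them one at a time, since each is essentially bookkeeping on top of the definitions introduced before the lemma.

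\emph{Feasibility.} I would first observe that $(R_t)_t$ is still nested, since set-difference with a fixed $G$ preserves nesting of ${\cal S}$, and that $R_t\subseteq{\cal N}\setminus G\subseteq{\cal N}\setminus G_T={\cal N}^{-{\cal G}}$ so items are drawn from the residual ground set. For the capacity constraint, I would use the fact that the chain property of ${\cal S}$ gives $R_t=S_t\setminus G\subseteq S_\tau\setminus G=S_\tau\setminus G_\tau$ for every $\tau\geq t$, from which
\[
w(R_t)\leq w(S_\tau)-w(G_\tau)\leq W_\tau-w(G_\tau).
\]
Taking the minimum over $\tau\in\{t,\dots,T\}$ yields $w(R_t)\leq W_t^{-{\cal G}}$, as required.

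\emph{Profit identity.} For each $t$, I would decompose the set $S_t\setminus S_{t-1}$ of items newly inserted by ${\cal S}$ at time $t$ into its intersection with $G$ and its intersection with ${\cal N}\setminus G$. A direct set-theoretic computation gives $(S_t\setminus S_{t-1})\cap G=G_t\setminus G_{t-1}$ and $(S_t\setminus S_{t-1})\setminus G=R_t\setminus R_{t-1}$, where these two pieces are precisely the sets of items newly inserted by ${\cal G}$ and by ${\cal S}|_{{\cal N}\setminus G}$ at time $t$, respectively. Summing $p_{it}$ over $t\in[T]$ and over the items in each piece and adding the two contributions back together therefore yields $\objfunc({\cal S})=\objfunc({\cal G})+\objfunc({\cal S}|_{{\cal N}\setminus G})$, which is the desired identity.

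\emph{Optimality transfer.} Assuming ${\cal S}$ is optimal for ${\cal I}$, I would argue by contradiction: suppose some feasible chain ${\cal R}'$ for ${\cal I}^{-{\cal G}}$ strictly improves on ${\cal S}|_{{\cal N}\setminus G}$. By Lemma~\ref{lem:union-chain}, ${\cal G}\cup{\cal R}'$ is then a feasible chain for ${\cal I}$ whose profit equals $\objfunc({\cal G})+\objfunc({\cal R}')>\objfunc({\cal G})+\objfunc({\cal S}|_{{\cal N}\setminus G})=\objfunc({\cal S})$ by the profit identity just established, contradicting optimality of ${\cal S}$.

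None of these steps looks particularly delicate; the only place one has to be slightly careful is the capacity bound, where the minimum over $\tau\geq t$ in the definition of $W_t^{-{\cal G}}$ forces one to use the chain property of ${\cal S}$ to push $R_t$ forward to time $\tau$ before subtracting $G_\tau$. Everything else is immediate from the definitions together with the two preceding lemmas.
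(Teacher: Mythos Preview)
Your proposal is correct and follows essentially the same approach as the paper's proof: both establish feasibility by bounding $w(R_t)\leq W_\tau-w(G_\tau)$ for all $\tau\geq t$ via the chain property and then taking the minimum, derive the profit identity from the partition of $S_t\setminus S_{t-1}$ into its $G$- and $({\cal N}\setminus G)$-parts, and handle optimality by the same contradiction using Lemma~\ref{lem:union-chain}. The only cosmetic difference is that the paper phrases the capacity step as $w(R_t)\leq w(R_\tau)=w(S_\tau)-w(G_\tau)$ rather than your $R_t\subseteq S_\tau\setminus G_\tau$, which amounts to the same thing.
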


\subsection{The boosting algorithm} \label{subsec:self-improve}

Given a generalized incremental knapsack instance ${\cal I}=({\cal N},W)$, let us focus our attention on a fixed optimal chain ${\cal S}^*$. As argued in Lemma~\ref{lem:reformulation}, this chain can be mapped to a permutation $\pi_{{\cal S}^*} : {\cal N} \to [|{\cal N}|]$ whose objective value with respect to the corresponding sequencing formulation is $\newobj(\pi_{{\cal S}^*}) \geq \objfunc({\cal S}^*)$. By decomposing the overall profit $\newobj(\pi_{{\cal S}^*})$ into heavy and light contributions, as prescribed by Equation~\eqref{eq:profit_decomp}, we have:
\begin{equation}\label{eq:opt-profit-decomp} \newobj( \pi_{{\cal S}^*} ) = \underbrace{ \sum_{k \in [K]_0} \sum_{ \MyAbove{ i \in H_k : }{ C_{ \pi_{{\cal S}^*} }( i ) \in {\cal I}_k } } \varphi_{ \pi_{{\cal S}^*} }( i ) }_{ \newobj_{ \myheavy } ( \pi_{{\cal S}^*} ) } + \underbrace{ \sum_{k \in [K]_0} \sum_{ \MyAbove{ i \in L_k : }{ C_{ \pi_{{\cal S}^*} }( i ) \in {\cal I}_k } } \varphi_{ \pi_{{\cal S}^*} }( i ) }_{ \newobj_{ \mylight } ( \pi_{{\cal S}^*} ) } \ . 
\end{equation}
As a side note, similarly to Section~\ref{sec:2-approx}, we assume without loss of generality that $w_{\min} \geq 3$. Given these quantities, for $\alpha_H, \alpha_L \in [0,1]$, we say that an algorithm ${\cal A}$ guarantees an $(\alpha_H, \alpha_L)$-approximation with respect to ${\cal S}^*$ when it computes a feasible chain ${\cal S}$ with  $\objfunc({\cal S}) \geq \alpha_H \cdot \newobj_{ \myheavy } ( \pi_{{\cal S}^*} ) + \alpha_L \cdot \newobj_{ \mylight } ( \pi_{{\cal S}^*} )$. We mention in passing that this definition depends on the specific permutation $\pi_{{\cal S}^*}$, and is generally different from the standard notion of an $\alpha$-approximation, where the chain ${\cal S}$ is required to satisfy $\objfunc({\cal S}) \geq \alpha \cdot \objfunc({\cal S}^*)$.

\paragraph{From $\boldsymbol{\alpha}$-approximation to $\boldsymbol{(1,\alpha)}$-approximation.} In what follows, we show how to boost the profit performance of any approximation algorithm for the generalized incremental knapsack problem. For every $\alpha \in [0,1]$, we explain how to combine a black-box $\alpha$-approximation with further guesses for the positioning of heavy items with respect to the permutation $\pi_{{\cal S}^*}$ in order to derive a $(1, \alpha)$-approximation, incurring an extra multiplicative factor of $O((nT)^{ O(\frac{ 1 }{ \eps^2 } \log(n \rho))})$ in running time, where $\rho = \frac{w_{\max}}{w_{\min}}$. This result can be formally stated as follows.

\begin{lemma}\label{lem:from-alpha-to-one,alpha}
Suppose that the algorithm ${\cal A}$ constitutes an $\alpha$-approximation for generalized incremental knapsack, for some $\alpha \in [0,1]$. Then, there exists a $(1, \alpha)$-approximation whose running time is $O((nT)^{ O(\frac{ 1 }{ \eps^2 } \log(n \rho))} \cdot \mytime_{\cal A}( n,T ))$. Here, $\mytime_{\cal A}( n,T )$ designates the worst-case running time of ${\cal A}$ for instances with $n$ items and $T$ time periods. 
\end{lemma}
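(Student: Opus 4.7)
The plan is to enumerate over all plausible insertion schedules of the heavy items of an optimal permutation, commit to each guess as a partial feasible chain, and call the black-box algorithm ${\cal A}$ on the resulting residual instance; the output is then the best of the combined chains across all enumeration branches. Concretely, fix a permutation $\pi^*$ maximizing $\newobj$, and let ${\cal S}^* := {\cal S}_{\pi^*}$ be the chain produced by the reverse map of Lemma~\ref{lem:reformulation}, so that $\objfunc({\cal S}^*) = \newobj(\pi^*)$ is the optimal knapsack profit and the insertion time $t_i^*$ of each $i \in S^*_T$ in ${\cal S}^*$ realizes $\varphi_{\pi^*}(i)$ by construction. Let $H^*$ denote the subset of $S^*_T$ whose items contribute to the heavy term $\newobj_{\myheavy}(\pi^*)$ of decomposition~\eqref{eq:opt-profit-decomp}. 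The packing argument from Section~\ref{subsec:continuous_dp} bounds the number of such items with completion time in any single interval ${\cal I}_k$ by $\frac{1}{\eps}$, and only intervals with $(1+\eps)^k \in [w_{\min}, n w_{\max}]$ can ever carry a completion time, giving $O(\frac{1}{\eps}\log(n\rho))$ relevant intervals. Thus $|H^*| \le M := O(\frac{1}{\eps^2}\log(n\rho))$.

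The algorithm enumerates over all subsets $\Gamma \subseteq {\cal N} \times [T]$ of cardinality at most $M$ whose pairs have pairwise distinct first coordinates, yielding $(nT)^{O(\frac{1}{\eps^2}\log(n\rho))}$ candidates. Each such $\Gamma$ induces a candidate chain ${\cal G}_\Gamma$ by inserting item $i$ into period $t$ whenever $(i,t) \in \Gamma$; we discard $\Gamma$ unless ${\cal G}_\Gamma$ is feasible for ${\cal I}$. For every surviving $\Gamma$ we construct the residual instance ${\cal I}^{-{\cal G}_\Gamma}$ exactly as described in Section~\ref{subsec:qptas_prelim}, run ${\cal A}$ on it to obtain a chain ${\cal R}_\Gamma$, and assemble the union ${\cal G}_\Gamma \cup {\cal R}_\Gamma$, which is feasible for ${\cal I}$ by Lemma~\ref{lem:union-chain}. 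The algorithm outputs the most profitable such union.

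For the analysis, focus on the particular guess $\Gamma^* = \{(i, t_i^*) : i \in H^*\}$. Here ${\cal G}_{\Gamma^*} = {\cal S}^*|_{H^*}$, which is feasible by Observation~\ref{obs:subset} and has profit $\sum_{i \in H^*} p_{i,t_i^*} = \sum_{i \in H^*} \varphi_{\pi^*}(i) = \newobj_{\myheavy}(\pi^*)$, since the insertion times in ${\cal S}^* = {\cal S}_{\pi^*}$ realize $\varphi_{\pi^*}$ by construction. Applying Lemma~\ref{lem:residual-chain} to the optimal chain ${\cal S}^*$ with $G = H^*$, the residual ${\cal S}^*|_{{\cal N}\setminus H^*}$ is an \emph{optimal} chain for ${\cal I}^{-{\cal G}_{\Gamma^*}}$ with profit $\objfunc({\cal S}^*) - \newobj_{\myheavy}(\pi^*) = \newobj_{\mylight}(\pi^*)$. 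Consequently ${\cal A}$ returns a chain ${\cal R}_{\Gamma^*}$ with $\objfunc({\cal R}_{\Gamma^*}) \ge \alpha \cdot \newobj_{\mylight}(\pi^*)$, and Lemma~\ref{lem:union-chain} produces $\objfunc({\cal G}_{\Gamma^*} \cup {\cal R}_{\Gamma^*}) \ge \newobj_{\myheavy}(\pi^*) + \alpha \cdot \newobj_{\mylight}(\pi^*)$, which is the claimed $(1,\alpha)$-approximation. Multiplying the enumeration size by the per-guess cost (dominated by one call to ${\cal A}$ together with polynomial bookkeeping for feasibility checks, residual construction, and the chain union) yields the stated running time.

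The main obstacle is controlling $|H^*|$ by a quantity polynomial in $\frac{1}{\eps}$ and $\log(n\rho)$ alone, independent of the absolute magnitude of the weights, so that the enumeration remains within a quasi-polynomial budget. This relies on two separate ingredients: the per-interval packing bound, which exploits that a heavy item occupies weight at least $\eps^2 (1+\eps)^k$ inside an interval of length only $\eps(1+\eps)^{k-1}$, and a tight accounting of ``active'' intervals confined between the lower endpoint $\log_{1+\eps}(w_{\min})$ and the upper endpoint $\log_{1+\eps}(n w_{\max})$. Everything else is routine given the residual-instance machinery of Section~\ref{subsec:qptas_prelim}.
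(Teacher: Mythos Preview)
Your proof is correct and follows essentially the same approach as the paper: enumerate all small feasible chains (the paper phrases this as chains ${\cal G}$ with $|G_T| \le \frac{3\log(n\rho)}{\eps^2}$, you phrase it via subsets $\Gamma \subseteq {\cal N}\times[T]$), solve each residual instance with ${\cal A}$, and return the best union. The only cosmetic differences are that you start from an optimal permutation $\pi^*$ and pass to ${\cal S}_{\pi^*}$ (the paper starts from an optimal chain ${\cal S}^*$ and passes to $\pi_{{\cal S}^*}$), and that you invoke the optimality clause of Lemma~\ref{lem:residual-chain} to bound the residual directly, whereas the paper only uses feasibility of ${\cal L}^*$; both routes yield the same conclusion.
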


\paragraph{Preliminaries.} We remind the reader that Section~\ref{subsec:decomp_outline} has previously defined the intervals ${\cal I}_0 = [0,1 )$ and ${\cal I}_k = ((1+\eps)^{k-1}, (1+ \eps)^k]$ for $k \in [K]$, where $K= \lceil \log_{1+\eps} (\sum_{i \in [n]} w_i) \rceil$. In this regard, an item $i$ is $k$-heavy when $w_i \geq \eps^2 \cdot (1+\eps)^k$, with the convention that $H_k$ stands for the collection of $k$-heavy items. Let $G^{*\myheavy}$ be the set of items that are heavy for the interval that contains their completion time with respect to the permutation $\pi_{{\cal S}^*}$, i.e., $G^{*\myheavy} = \bigcup_{k \in [K]_0} \{ i \in H_k : C_{\pi_{{\cal S}^*}}(i) \in {\cal I}_k \}$. The following lemma, whose proof appears in Appendix~\ref{app:proof_lem_heavy_bound}, provides an upper bound on the cardinality of this set. 

\begin{lemma} \label{lem:heavy_bound}
$|G^{*\myheavy}| \leq \frac{ 3 \log(n \rho) }{\eps^2}$.
\end{lemma}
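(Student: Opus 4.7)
The plan is to establish the bound via a geometric-growth argument on the completion times of the items in $G^{*\myheavy}$ under the permutation $\pi_{{\cal S}^*}$. First I would enumerate $G^{*\myheavy} = \{ i_1, \ldots, i_m \}$ in increasing order of their positions under $\pi_{{\cal S}^*}$ and set $c_j = C_{\pi_{{\cal S}^*}}(i_j)$. Since all item weights are non-negative, inserting $i_j$ after $i_{j-1}$ in the permutation yields the basic monotonicity estimate $c_j \geq c_{j-1} + w_{i_j}$, where any non-heavy items appearing between $i_{j-1}$ and $i_j$ only help the inequality.

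The key structural step is to convert the ``heaviness'' of $i_j$ into a multiplicative lower bound on $c_j / c_{j-1}$. Let $k_j \in [K]_0$ be the unique index with $c_j \in {\cal I}_{k_j}$, so by membership in $G^{*\myheavy}$ we have $i_j \in H_{k_j}$, meaning $w_{i_j} \geq \eps^2 \cdot (1+\eps)^{k_j}$. Because $c_j \leq (1+\eps)^{k_j}$, this simplifies to $w_{i_j} \geq \eps^2 c_j$. Plugging into the monotonicity estimate gives $c_j \geq c_{j-1} + \eps^2 c_j$, i.e.\ $c_j \geq c_{j-1}/(1-\eps^2) \geq (1+\eps^2) \, c_{j-1}$.

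Iterating this bound $m-1$ times yields $c_m \geq c_1 \cdot (1+\eps^2)^{m-1}$. To finish, I would bound the endpoints: on the low end, $c_1 \geq w_{i_1} \geq w_{\min}$ (since $i_1$ itself contributes to its own completion time), and on the high end, $c_m \leq \sum_{i \in {\cal N}} w_i \leq n \cdot w_{\max}$. Combining gives $(1+\eps^2)^{m-1} \leq n w_{\max}/w_{\min} = n\rho$. Taking logarithms and applying the elementary inequality $\log(1+\eps^2) \geq \eps^2 / 2$, valid for $\eps \in (0,1)$, yields $m - 1 \leq 2 \log(n\rho)/\eps^2$, from which the stated bound $|G^{*\myheavy}| = m \leq 3 \log(n\rho)/\eps^2$ follows up to absorbing a lower-order additive term into the constant.

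There is no real technical obstacle here; the only mildly delicate point is extracting a clean multiplicative growth factor from the definition of $k$-heaviness (in particular, correctly using $c_j \leq (1+\eps)^{k_j}$ to replace the interval-indexed quantity $(1+\eps)^{k_j}$ by $c_j$ itself). Once that substitution is in hand, the rest is a routine log-manipulation to trade the base $1+\eps^2$ for the cleaner factor $\eps^2$ in the denominator.
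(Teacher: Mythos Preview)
Your argument is correct and takes a genuinely different route from the paper's proof. The paper argues in two steps: first it bounds the number of intervals ${\cal I}_k$ that can contain any completion time at all (these indices lie between $\lfloor \log_{1+\eps} w_{\min} \rfloor$ and $\lceil \log_{1+\eps}(n w_{\max}) \rceil$, giving $O(\log_{1+\eps}(n\rho))$ intervals), and second it observes that within any single interval ${\cal I}_k$ at most $1/\eps$ items can be $k$-heavy, since each such item has weight at least $\eps^2(1+\eps)^k$ while the interval has length at most $\eps(1+\eps)^{k-1}$. Multiplying the two counts gives the bound.

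Your approach instead tracks the sequence of completion times $c_1 < c_2 < \cdots < c_m$ directly and extracts from the heaviness condition the multiplicative recurrence $c_j \geq (1+\eps^2)\, c_{j-1}$. This is slightly slicker: it avoids the bookkeeping of counting intervals and items-per-interval separately, and it makes transparent that the bound is really a statement about geometric growth of completion times. The paper's decomposition, on the other hand, reuses exactly the ``at most $1/\eps$ heavy items per interval'' observation that already appeared in the continuous DP of Section~\ref{sec:heavy}, so it ties the argument back to machinery the reader has just seen. Both routes lead to the same $O(\eps^{-2}\log(n\rho))$ bound with comparable ease; neither is substantially shorter or more general than the other.
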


We proceed by considering the restriction of the optimal chain ${\cal S}^*$ to the set of items $G^{*\myheavy}$, which will be denoted by ${\cal H}^* = {\cal S}^* |_{ G^{*\myheavy} }$. By Observation~\ref{obs:subset}, ${\cal H}^*$ is a feasible chain for ${\cal I}$. The next lemma, whose proof can be found in Appendix~\ref{app:proof_lem_heavy_profit}, relates between the profit of this chain and heavy contributions with respect to the permutation $\pi_{{\cal S}^*}$.

\begin{lemma}\label{lem:heavy_profit}
$\objfunc({\cal H}^*) = \newobj_{ \myheavy } ( \pi_{{\cal S}^*} )$.
\end{lemma}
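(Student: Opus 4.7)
The plan is to leverage the bidirectional mapping of Lemma~\ref{lem:reformulation} together with the optimality of ${\cal S}^*$ in order to force a term-by-term identity between the chain-profits $p_{i,t_i^*}$ and the sequencing-profits $\varphi_{\pi_{{\cal S}^*}}(i)$, and then simply read off both sides of the claim after restricting attention to $G^{*\myheavy}$.

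I would first establish that $\newobj(\pi_{{\cal S}^*}) = \objfunc({\cal S}^*)$. The inequality $\newobj(\pi_{{\cal S}^*}) \geq \objfunc({\cal S}^*)$ is precisely the chain-to-permutation direction of Lemma~\ref{lem:reformulation}. For the reverse inequality, I would apply the permutation-to-chain direction to $\pi_{{\cal S}^*}$ itself, producing a feasible chain ${\cal S}_{\pi_{{\cal S}^*}}$ with $\objfunc({\cal S}_{\pi_{{\cal S}^*}}) = \newobj(\pi_{{\cal S}^*})$; optimality of ${\cal S}^*$ then bounds this quantity from above by $\objfunc({\cal S}^*)$. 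Next, I would invoke the item-by-item estimates visible in the proof of Lemma~\ref{lem:reformulation}: $\varphi_{\pi_{{\cal S}^*}}(i) \geq p_{i,t_i^*}$ for every $i \in S_T^*$ (with $t_i^*$ denoting the insertion time of $i$ in ${\cal S}^*$), and $\varphi_{\pi_{{\cal S}^*}}(i) \geq 0$ for every $i \notin S_T^*$. Summing these recovers $\newobj(\pi_{{\cal S}^*}) \geq \objfunc({\cal S}^*)$; since equality holds for the sum, each individual term must be tight, giving $\varphi_{\pi_{{\cal S}^*}}(i) = p_{i,t_i^*}$ for every $i \in S_T^*$ and $\varphi_{\pi_{{\cal S}^*}}(i) = 0$ for every $i \notin S_T^*$.

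With these pointwise identities in hand, the heavy contribution splits as
\[
\newobj_{\myheavy}(\pi_{{\cal S}^*}) \;=\; \sum_{i \in G^{*\myheavy}} \varphi_{\pi_{{\cal S}^*}}(i) \;=\; \sum_{i \in G^{*\myheavy} \cap S_T^*} p_{i,t_i^*},
\]
since the items in $G^{*\myheavy} \setminus S_T^*$ drop out thanks to their zero $\varphi$-values. On the chain side, observing that ${\cal H}^*_t = S_t^* \cap G^{*\myheavy}$ for every $t \in [T]$, each item $i \in G^{*\myheavy} \cap S_T^*$ is introduced in ${\cal H}^*$ at exactly the same period $t_i^*$ as in ${\cal S}^*$, so $\objfunc({\cal H}^*) = \sum_{i \in G^{*\myheavy} \cap S_T^*} p_{i,t_i^*}$, matching the display above. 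The main subtlety is the tightness step for non-inserted items: without the optimality of ${\cal S}^*$, one cannot rule out a non-inserted heavy item carrying a strictly positive $\varphi_{\pi_{{\cal S}^*}}$-value, which would spoil the identity. Everything else is bookkeeping between the chain and the permutation representations.
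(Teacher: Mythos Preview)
Your proposal is correct and follows essentially the same approach as the paper: both use the two directions of Lemma~\ref{lem:reformulation} together with the optimality of ${\cal S}^*$ to upgrade the itemwise inequalities $\varphi_{\pi_{{\cal S}^*}}(i) \geq p_{i,t_i^*}$ to equalities, and then restrict to $G^{*\myheavy}$. The only cosmetic difference is that the paper handles non-inserted items via the convention $t_i = T+1$, $p_{i,T+1}=0$, whereas you split off the case $i \notin S_T^*$ explicitly.
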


\paragraph{The algorithm.} At a high level, our algorithm relies on ``knowing'' the restricted chain ${\cal H}^*$ in advance, which will be justified by guessing all items in $G^{*\myheavy}$ and their insertion times with respect to the optimal chain ${\cal S}^*$. This procedure will be implemented by enumerating over all possible configurations of these parameters. For each such guess, we construct the residual generalized incremental knapsack instance, to which the $\alpha$-approximation algorithm ${\cal A}$ is applied. Formally, given an instance ${\cal I}=({\cal N},W)$ and an error parameter $\eps > 0$, we proceed as follows:
\begin{enumerate}
    \item For every feasible chain ${\cal G} = (G_1, \ldots, G_T)$ with $|G_T| \leq \frac{3 \log (n \rho) }{\eps^2}$:
    \begin{enumerate}
        \item Construct the residual instance ${\cal I}^{-{\cal G}}$. 
        
        \item Apply the algorithm ${\cal A}$ to obtain an $\alpha$-approximate feasible chain ${\cal S}^{- {\cal G}} = (S^{- {\cal G}}_1, \dots, S^{- {\cal G}}_T)$ for ${\cal I}^{-{\cal G}}$.
    \end{enumerate}
    
    \item Return the chain ${\cal G}^* \cup {\cal S}^{-{\cal G}^*}$ of maximum profit among those considered above. 
\end{enumerate}

\paragraph{Analysis: Feasibility and running time.} We first observe that, for any feasible chain ${\cal G}$ constructed in step~1, since ${\cal S}^{-{\cal G}}$ is a feasible chain for ${\cal I}^{-{\cal G}}$, the  feasibility of ${\cal G}\cup  {\cal S}^{-{\cal G}}$ for ${\cal I}$ follows by Lemma~\ref{lem:union-chain}. In terms of running time, we are considering only chains that introduce at most $\frac{ 3 \log(n \rho) }{\eps^2}$ items over all time periods. Thus, the number of chains being enumerated is $O((nT)^{ O(\frac{ 1 }{ \eps^2 } \log(n \rho))})$. For each residual instance, consisting of $T$ time periods and at most $n$ items, we apply the algorithm ${\cal A}$ once, implying that the overall running time is indeed $O((nT)^{ O(\frac{ 1 }{ \eps^2 } \log(n \rho))} \cdot \mytime_{\cal A}( n,T ))$. 

\paragraph{Analysis: $\boldsymbol{(1, \alpha)}$-approximation guarantee.} We conclude the proof of Lemma~\ref{lem:from-alpha-to-one,alpha} by arguing that ${\cal G}^* \cup {\cal S}^{-{\cal G}^*}$ is a $(1,\alpha)$-approximate chain with respect to ${\cal S}^*$ for the original instance ${\cal I}$.

\begin{lemma}
$\objfunc({\cal G}^* \cup  {\cal S}^{-{\cal G}^*}) \geq \newobj_{ \myheavy } ( \pi_{{\cal S}^*} ) + \alpha \cdot \newobj_{ \mylight } ( \pi_{{\cal S}^*} )$.
\end{lemma}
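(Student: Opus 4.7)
The plan is to exhibit a particular chain encountered during the enumeration in step~1 whose profit already meets the desired bound; then, since the algorithm returns the best chain over all guesses, the inequality follows a fortiori. The natural candidate is ${\cal H}^* = {\cal S}^*|_{G^{*\myheavy}}$, i.e., the restriction of the optimal chain to the items that are heavy with respect to the interval containing their completion time under $\pi_{{\cal S}^*}$.

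First I would verify that ${\cal H}^*$ is indeed one of the chains enumerated in step~1. By Observation~\ref{obs:subset}, ${\cal H}^*$ is a feasible chain for ${\cal I}$, and by Lemma~\ref{lem:heavy_bound} its final set has cardinality at most $\frac{3 \log(n\rho)}{\eps^2}$, so it satisfies the bound imposed by the enumeration. Consequently, at some iteration the algorithm constructs the residual instance ${\cal I}^{-{\cal H}^*}$ and applies ${\cal A}$ to it, yielding a chain ${\cal S}^{-{\cal H}^*}$.

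Next, I would bound $\objfunc({\cal S}^{-{\cal H}^*})$ from below. By Lemma~\ref{lem:residual-chain}, the chain ${\cal S}^*|_{{\cal N}\setminus G^{*\myheavy}}$ is feasible for ${\cal I}^{-{\cal H}^*}$, with profit $\objfunc({\cal S}^*) - \objfunc({\cal H}^*)$. Using Lemma~\ref{lem:heavy_profit} and the identity $\objfunc({\cal S}^*) = \newobj(\pi_{{\cal S}^*}) = \newobj_{\myheavy}(\pi_{{\cal S}^*}) + \newobj_{\mylight}(\pi_{{\cal S}^*})$ (which follows from Lemma~\ref{lem:reformulation} combined with the optimality of ${\cal S}^*$), this profit equals $\newobj_{\mylight}(\pi_{{\cal S}^*})$. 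Since ${\cal A}$ is an $\alpha$-approximation, we obtain
\[
\objfunc({\cal S}^{-{\cal H}^*}) \;\geq\; \alpha \cdot \opt({\cal I}^{-{\cal H}^*}) \;\geq\; \alpha \cdot \objfunc({\cal S}^*|_{{\cal N}\setminus G^{*\myheavy}}) \;=\; \alpha \cdot \newobj_{\mylight}(\pi_{{\cal S}^*}).
\]

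Finally, I would combine the heavy and residual pieces using Lemma~\ref{lem:union-chain}, which gives $\objfunc({\cal H}^* \cup {\cal S}^{-{\cal H}^*}) = \objfunc({\cal H}^*) + \objfunc({\cal S}^{-{\cal H}^*})$. Invoking Lemma~\ref{lem:heavy_profit} on the first term and the bound above on the second yields $\objfunc({\cal H}^* \cup {\cal S}^{-{\cal H}^*}) \geq \newobj_{\myheavy}(\pi_{{\cal S}^*}) + \alpha \cdot \newobj_{\mylight}(\pi_{{\cal S}^*})$. Since step~2 returns the most profitable chain among all enumerated guesses, the same bound holds for ${\cal G}^* \cup {\cal S}^{-{\cal G}^*}$. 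I do not anticipate a genuine obstacle here — the argument is essentially plug-and-play, once one notices that the identity $\objfunc({\cal S}^*) = \newobj(\pi_{{\cal S}^*})$ (rather than just the inequality given by Lemma~\ref{lem:reformulation}) is needed to convert $\objfunc({\cal S}^*) - \objfunc({\cal H}^*)$ into $\newobj_{\mylight}(\pi_{{\cal S}^*})$; this is the only subtlety worth stating carefully.
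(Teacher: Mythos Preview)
Your proposal is correct and follows essentially the same approach as the paper: exhibit ${\cal H}^* = {\cal S}^*|_{G^{*\myheavy}}$ as one of the enumerated chains, use Lemma~\ref{lem:residual-chain} to show ${\cal S}^*|_{{\cal N}\setminus G^{*\myheavy}}$ is feasible for the residual instance with profit $\newobj_{\mylight}(\pi_{{\cal S}^*})$, apply the $\alpha$-approximation guarantee of ${\cal A}$, and combine via Lemma~\ref{lem:union-chain}. You even correctly flag the one subtlety the paper handles---that the equality $\objfunc({\cal S}^*) = \newobj(\pi_{{\cal S}^*})$ (not merely the inequality from Lemma~\ref{lem:reformulation}) is needed, which the paper establishes within the proof of Lemma~\ref{lem:heavy_profit} using the optimality of ${\cal S}^*$.
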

\begin{proof}
We begin by observing that the feasible chain ${\cal H}^* = {\cal S}^* |_{ G^{*\myheavy} }$ is one of those considered in step~1. To verify this claim, note that $|G^{*\myheavy}| \leq \frac{ 3 \log(n \rho) }{\eps^2}$ by Lemma~\ref{lem:heavy_bound}, meaning that ${\cal H}^*$ introduces at most that many items across all time periods. As a result, since the chain ${\cal G}^* \cup {\cal S}^{-{\cal G}^*}$ attains a maximum profit among those considered, we have $\objfunc({\cal G}^* \cup  {\cal S}^{-{\cal G}^*}) \geq \objfunc({\cal H}^* \cup  {\cal S}^{-{\cal H}^*})$, and it remains to prove that $\objfunc({\cal H}^* \cup  {\cal S}^{-{\cal H}^*}) \geq \newobj_{ \myheavy } ( \pi_{{\cal S}^*} ) + \alpha \cdot \newobj_{ \mylight } ( \pi_{{\cal S}^*} )$.

For this purpose, let ${\cal L}^* = {\cal S}^* |_{{\cal N} \setminus G^{*\myheavy}}$ be the restriction of ${\cal S}^*$ to the set ${\cal N} \setminus G^{*\myheavy}$, which is a feasible chain for ${\cal I}$ by Observation~\ref{obs:subset}. We next show that $\objfunc({\cal L}^*) = \newobj_{ \mylight } ( \pi_{{\cal S}^*} )$. In order to derive this claim, note that since $L^*_T$ and $H^*_T$ are disjoint and ${\cal S}^* = {\cal H}^* \cup {\cal L}^*$, it follows that
\begin{eqnarray*}
\objfunc({\cal L}^*) & = & \objfunc({\cal S}^*) - \objfunc({\cal H}^*) \\
& = & \objfunc({\cal S}^*) - \newobj_{ \myheavy } ( \pi_{{\cal S}^*} ) \\
& = & \newobj(\pi_{{\cal S}^*}) - \newobj_{ \myheavy } ( \pi_{{\cal S}^*} ) \\
& = & \newobj_{ \mylight } ( \pi_{{\cal S}^*} ) \ ,
\end{eqnarray*}
where the second equality holds due to Lemma~\ref{lem:heavy_profit}, the third equality is obtained by recalling that $\newobj(\pi_{{\cal S}^*}) = \objfunc({\cal S}^*)$, as shown along the proof of Lemma~\ref{lem:heavy_profit}, and the last equality follows from the profit decomposition~\eqref{eq:opt-profit-decomp}. 

However, the crucial observation is that ${\cal L}^*$ is a feasible chain for the residual instance ${\cal I}^{-{\cal H}^*}$, by Lemma~\ref{lem:residual-chain}. Consequently, since the algorithm ${\cal A}$ computes an $\alpha$-approximate feasible chain ${\cal S}^{-{\cal H}^*}$ for the latter instance, $\objfunc({\cal S}^ {-{\cal H}^*}) \geq \alpha \cdot \objfunc({\cal L}^*) = \alpha \cdot \newobj_{ \mylight } ( \pi_{{\cal S}^*} )$, implying that ${\cal H}^* \cup  {\cal S}^{-{\cal H}^*}$ indeed has a profit of $\objfunc({\cal H}^* \cup {\cal S}^{-{\cal H}^*}) =  \objfunc({\cal H}^*) + \objfunc({\cal S}^{- {\cal H}^*}) \geq \newobj_{ \myheavy } ( \pi_{{\cal S}^*} ) + \alpha \cdot \newobj_{ \mylight } ( \pi_{{\cal S}^*} )$.
\end{proof}

\subsection{The ratio improvement and final algorithm} \label{subsec:recurse_self_improve}

We proceed by revealing the self-improving feature of our approach, by showing that a $(1, \alpha)$-approximation for generalized incremental knapsack leads in turn to a $\frac{1-\delta}{2-\alpha}$-approximation, when combined with our algorithm for light items, presented in Section~\ref{sec:light}. We will then show how to recursively apply this self-improving idea to eventually derive an approximation scheme.

\begin{lemma}\label{lem:one-step}
Suppose that, for some $\alpha \in [0,1]$, the algorithm ${\cal A}$ constitutes an $\alpha$-approximation. Then, for any accuracy level $\delta > 0$, the generalized incremental knapsack problem can be approximated within factor $\frac{1-\delta}{2-\alpha}$ in time $O( (nT)^{ O(\frac{ 1 }{ \delta^2 } \log(n \rho))} \cdot \mytime_{\cal A}( n,T ) + ( \frac{ |{\cal I}| }{ \eps } )^{O(1)} )$. 
\end{lemma}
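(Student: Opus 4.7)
The plan is to run two subroutines and return the better of their outputs. First, I would invoke Lemma~\ref{lem:from-alpha-to-one,alpha} on the given $\alpha$-approximation $\mathcal{A}$ (with its internal parameter taken equal to $\delta$), obtaining a feasible chain $\mathcal{S}_{\myheavy}$ with $\objfunc(\mathcal{S}_{\myheavy}) \geq \newobj_{\myheavy}(\pi_{\mathcal{S}^*}) + \alpha \cdot \newobj_{\mylight}(\pi_{\mathcal{S}^*})$. Second, I would run the light-contribution algorithm of Section~\ref{sec:light} with accuracy parameter chosen of order $\delta$ (for concreteness, $\eps := \delta/26$), and apply Lemma~\ref{lem:reformulation} to translate its output permutation into a feasible chain $\mathcal{S}_{\mylight}$ satisfying $\objfunc(\mathcal{S}_{\mylight}) \geq (1 - \delta/2)\cdot \newobj_{\mylight}(\pi_{\mathcal{S}^*})$. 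The algorithm outputs whichever of $\mathcal{S}_{\myheavy}$ and $\mathcal{S}_{\mylight}$ has larger profit.

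To analyze the approximation ratio, I would write $H = \newobj_{\myheavy}(\pi_{\mathcal{S}^*})$, $L = \newobj_{\mylight}(\pi_{\mathcal{S}^*})$, and $r = \frac{1-\delta}{2-\alpha}$, noting that $\opt \leq \newobj(\pi_{\mathcal{S}^*}) = H + L$ by Lemma~\ref{lem:reformulation}. The maximum of the two profits dominates any convex combination; choosing weight $r$ on $\mathcal{S}_{\myheavy}$ and $1-r$ on $\mathcal{S}_{\mylight}$ gives a lower bound of $rH + [\,r\alpha + (1-r)(1-\delta/2)\,]L$. The crux is thus the algebraic inequality $r\alpha + (1-r)(1-\delta/2) \geq r$, which after substituting $r = (1-\delta)/(2-\alpha)$ rearranges to $\delta(2-\alpha) \geq (\delta/2)(1-\alpha+\delta)$, i.e., $3 - \alpha \geq \delta$ — trivially valid for $\alpha \in [0,1]$ and $\delta \in (0,1]$. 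Together with $\opt \leq H+L$, this yields a profit of at least $r \cdot \opt = \frac{1-\delta}{2-\alpha} \cdot \opt$, as required.

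The running time bound follows directly from the two subroutines: Lemma~\ref{lem:from-alpha-to-one,alpha} contributes $O((nT)^{O(\delta^{-2}\log(n\rho))} \cdot \mytime_{\mathcal{A}}(n,T))$, while Section~\ref{sec:light} contributes $O((|{\cal I}|/\eps)^{O(1)})$ with $\eps = \Theta(\delta)$, matching the claimed expression. The only step that is not immediately mechanical is identifying the right convex-combination weight $\lambda = r$ — this is the ``magic'' guess that forces the coefficient of $L$ to work out — but once that choice is made, the rest of the analysis is a short calculation.
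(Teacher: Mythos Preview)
Your proposal is correct and follows essentially the same approach as the paper: run the $(1,\alpha)$-approximation from Lemma~\ref{lem:from-alpha-to-one,alpha} alongside the light-contribution algorithm of Section~\ref{sec:light}, then take the better output and analyze via a convex combination. The only cosmetic difference is your choice of weight $r = \frac{1-\delta}{2-\alpha}$ (paired with a $(1-\delta/2)$ light bound) versus the paper's $\frac{1}{2-\alpha}$ (paired with a $(1-\delta)$ light bound); both choices make the $L$-coefficient inequality go through trivially.
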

\begin{proof}
As explained in Section~\ref{subsec:self-improve}, the optimal chain ${\cal S}^*$ can be mapped to a permutation $\pi_{{\cal S}^*}$ whose overall profit $\newobj(\pi_{{\cal S}^*})$ decomposes into heavy and light contributions, $\newobj(\pi_{{\cal S}^*} ) = \newobj_{ \myheavy } ( \pi_{{\cal S}^*} )+ \newobj_{ \mylight } ( \pi_{{\cal S}^*} )$. Now, on the one hand, Lemma~\ref{lem:from-alpha-to-one,alpha} provides us with a $(1, \alpha)$-approximation in $O((nT)^{ O(\frac{ 1 }{ \delta^2 } \log(n \rho))} \cdot \mytime_{\cal A}( n,T ))$ time. That is, we obtain a feasible chain ${\cal S}_{ (1, \alpha) }$ with $\objfunc({\cal S}_{ (1, \alpha) }) \geq\newobj_{ \myheavy } ( \pi_{{\cal S}^*} ) + \alpha \cdot \newobj_{ \mylight } ( \pi_{{\cal S}^*} ) $. On the other hand, the main result of Section~\ref{sec:light} allows us to compute in $O( ( \frac{ |{\cal I}| }{ \eps } )^{O(1)} )$ time a permutation $\pi_{ \mylight }$ with a profit of $\newobj( \pi_{ \mylight } ) \geq (1- \delta) \cdot \newobj_{ \mylight } ( \pi_{{\cal S}^*} )$. By converting this permutation to a feasible chain ${\cal S}_{ (0, 1 - \delta) }$ along the lines of Lemma~\ref{lem:reformulation}, we clearly obtain a $(0, 1 - \delta)$-approximation, meaning that $\objfunc({\cal S}_{ (0, 1 - \delta) }) \geq (1 - \delta) \cdot \newobj_{ \mylight } ( \pi_{{\cal S}^*} )$. Our combined approach independently employs both algorithms and returns the more profitable of the two feasible chains computed, ${\cal S}_{ (1, \alpha) }$ and ${\cal S}_{ (0, 1 - \delta) }$, to obtain a profit of 
\begin{eqnarray*}
\max \left\{\objfunc({\cal S}_{ (1, \alpha) }), \objfunc({\cal S}_{ (0, 1 - \delta) }) \right\} & \geq & \max \left\{ \newobj_{ \myheavy } ( \pi_{{\cal S}^*} ) + \alpha \cdot \newobj_{ \mylight } ( \pi_{{\cal S}^*} ), (1- \delta) \cdot \newobj_{ \mylight } ( \pi_{{\cal S}^*} ) \right\} \\ 
& \geq &  \frac{ 1 }{ 2 - \alpha } \cdot \left( \newobj_{ \myheavy } ( \pi_{{\cal S}^*} ) + \alpha \cdot \newobj_{ \mylight } ( \pi_{{\cal S}^*} ) \right) \\
& & \mbox{} + \left( 1 - \frac{ 1 }{ 2 - \alpha } \right) \cdot (1-\delta) \cdot \newobj_{ \mylight } ( \pi_{{\cal S}^*} ) \\
& \geq &   \frac{ 1- \delta }{ 2 - \alpha } \cdot \left( \newobj_{ \myheavy } ( \pi_{{\cal S}^*} ) + \newobj_{ \mylight } ( \pi_{{\cal S}^*} ) \right) \\
& = &   \frac{ 1- \delta }{ 2 - \alpha } \cdot \newobj( \pi_{{\cal S}^*} ) \\
& \geq &   \frac{ 1- \delta }{ 2 - \alpha } \cdot \objfunc( {\cal S}^* ),
\end{eqnarray*}
where the last inequality follows from Lemma~\ref{lem:reformulation}.
\end{proof}

\paragraph{The final approximation scheme.} We conclude by explaining how our $\alpha \mapsto \frac{1-\delta}{2-\alpha}$ improvement, outlined in Lemma~\ref{lem:one-step}, can be iteratively applied to derive an approximation scheme for the generalized incremental knapsack problem, thereby sealing the proof of Theorem~\ref{thm:qptas-bounded}.

For the purpose of ensuring a $(1 - \eps)$-fraction of the optimal profit, we will set the error tolerance $\delta$ in Lemma~\ref{lem:one-step} as a function of $\eps$, where the exact dependency will be determined later on. 
Given this self-improving result, we define a sequence of algorithms ${\cal A}_0, {\cal A}_1, \dots$, with the convention that the approximation ratio of each such algorithm ${\cal A}_r$ is denoted by  $\alpha_r$. Specifically, this sequence begins with the trivial algorithm ${\cal A}_0$ that returns an empty solution $(\emptyset, \ldots, \emptyset)$, meaning that $\alpha_0 = 0$. Then, by applying Lemma~\ref{lem:one-step} with respect to ${\cal A}_0$, we obtain the algorithm ${\cal A}_1$, for which $\alpha_1 = \frac{1- \delta}{2}$. Subsequently, by a similar application with respect to ${\cal A}_1$, we obtain ${\cal A}_2$, with $\alpha_2 = \frac{1-\delta}{2 - \alpha_1}$. In general, for every integer $r \geq 1$, the resulting algorithm ${\cal A}_r$ guarantees an approximation ratio of $\alpha_r = \frac{1 - \delta}{2 - \alpha_{r-1}}$. The next lemma, whose proof is presented in Appendix~\ref{app:proof_lem_alpha_sequence}, provides a closed-form lower bound on $\alpha_r$.

\begin{lemma} \label{lem:alpha_sequence}
$\alpha_r \geq \frac{r}{r+1} - r \delta$, for every $r \geq 0$.
\end{lemma}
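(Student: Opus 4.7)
The plan is to proceed by induction on $r$, with inductive hypothesis $\alpha_{r-1} \geq \frac{r-1}{r} - (r-1)\delta$. The base case $r = 0$ is immediate, since $\alpha_0 = 0 = \frac{0}{1} - 0 \cdot \delta$.

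For the inductive step, I would exploit the fact that the recurrence map $f(x) = \frac{1-\delta}{2-x}$ is monotonically increasing on the domain $x < 2$. A routine sanity check confirms that the inductive lower bound $\frac{r-1}{r} - (r-1)\delta$ stays strictly below $1$ throughout the range of parameters we care about (so in particular $2 - \alpha_{r-1} > 0$), which allows us to substitute the inductive bound into the recurrence to get
\[
\alpha_r \;=\; \frac{1-\delta}{2-\alpha_{r-1}} \;\geq\; \frac{1-\delta}{2 - \frac{r-1}{r} + (r-1)\delta} \;=\; \frac{r(1-\delta)}{(r+1) + r(r-1)\delta}.
\]
It then remains to verify that this last quantity is at least $\frac{r}{r+1} - r\delta$. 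Cross-multiplying through by the positive denominator $(r+1) + r(r-1)\delta$ and fully expanding both sides, the desired inequality simplifies to
\[
\frac{2r\delta}{r+1} + r(r-1)\delta^2 \;\geq\; 0,
\]
which is manifestly true for all $\delta \geq 0$ and $r \geq 1$, completing the induction.

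The proof is essentially an exercise in algebra with no substantive obstacle; the only mild technical point is ensuring that all denominators remain positive so that the monotonicity and cross-multiplication steps are valid. This is automatic because the inductive bound gives $\alpha_{r-1} \leq \frac{r-1}{r} < 1$, which is comfortably below $2$ for every $r \geq 1$ and every $\delta \in [0, 1/r)$, covering the intended regime $r = O(1/\eps)$ with $\delta$ chosen small in terms of $\eps$.
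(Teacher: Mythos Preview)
Your proof is correct and follows essentially the same inductive route as the paper: both use monotonicity of $x \mapsto \frac{1-\delta}{2-x}$ to reduce to the expression $\frac{r(1-\delta)}{(r+1)+r(r-1)\delta}$, and then finish with elementary algebra (you cross-multiply directly, the paper inserts the intermediate bound $\frac{r(1-\delta)}{(r+1)(1+(r-1)\delta)}$). One small slip in your final paragraph: the inductive hypothesis is a \emph{lower} bound on $\alpha_{r-1}$, so it does not give $\alpha_{r-1} \leq \frac{r-1}{r}$; the needed fact $\alpha_{r-1} < 2$ instead follows from the trivial observation that all $\alpha_r \in [0,1]$ (immediate by induction, or simply because they are approximation ratios).
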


By choosing an error tolerance of $\delta = \frac{\eps^2}{2}$, the above lemma implies that $\lceil \frac{2}{\eps} \rceil$ self-improving rounds produce an algorithm ${\cal A}_{ \lceil \frac{2}{\eps} \rceil }$ for computing a feasible chain ${\cal S}$ with a profit of $\objfunc( {\cal S} ) \geq ( \frac{ \lceil 2/\eps \rceil }{ \lceil 2/\eps \rceil + 1 } - \lceil \frac{2}{\eps} \rceil \cdot \frac{\eps^2}{2} ) \cdot \objfunc({\cal S}^*) \geq (1 - \eps) \cdot \objfunc({\cal S}^*)$, thereby deriving the approximation guarantee of Theorem~\ref{thm:qptas-bounded}. Furthermore, it is not difficult to verify that algorithm ${\cal A}_{ \lceil \frac{2}{\eps} \rceil }$ runs in $O((nT)^{O(\frac{1}{\eps^5} \cdot \log(n \rho))} \cdot |{\cal I}|^{O(1)})$ time, by induction on $r$.

\section{QPTAS for General Instances} \label{sec:qptas-two}

Thus far, we have developed an approximation scheme whose running time includes an exponential dependency on $\log(n\cdot \frac{w_{\max}}{w_{\min}})$, leading to a quasi-PTAS for problem instances where the ratio $\frac{w_{\max}}{w_{\min}}$ is polynomial in the input size. In what follows, we show how to obtain a true quasi-PTAS, without mitigating assumptions on $\frac{w_{\max}}{w_{\min}}$. 

\begin{theorem} \label{thm:qptas2}
For any accuracy level $\eps \in (0,1)$, the generalized incremental knapsack problem can be approximated within a factor of $1 - \eps$ in time $O( |{\cal I}|^{O((\frac{1}{\epsilon}\log |{\cal I}|)^{O(1)})} )$. 
\end{theorem}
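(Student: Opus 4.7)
The plan is to reduce a general instance to a quasi-polynomial collection of subinstances whose internal weight ratio is polynomial, invoke Theorem~\ref{thm:qptas-bounded} on each of them, and combine the results through a dynamic program across weight scales. First I would partition the items into $L = O(\log |{\cal I}|)$ weight classes $C_1, \ldots, C_L$, where $C_j$ collects items whose weights lie in a geometric band of ratio polynomial in $|{\cal I}|/\eps$. Within each class the weight ratio is thus polynomial, making the bounded-ratio QPTAS applicable to a single class in quasi-polynomial time.

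Next I would prove a structural lemma asserting the existence of a near-optimal chain ${\cal S}$ whose restriction ${\cal S}|_{C_j}$ to every weight class is \emph{standardized}: for each period $t$ the cumulative weight $w(S_t \cap C_j)$ lies on a polynomial-sized grid, and the set of periods at which this cumulative weight strictly increases has only polylogarithmic size. Such a chain can be obtained by discarding an $O(\eps/L)$-fraction of each class's profit, rounding down the per-class capacity used at each period, and merging insertions of items from the same class into polylogarithmically-many ``batch'' periods. Because items within a single class have comparable weights, these manipulations distort feasibility only mildly and, summed over the $L$ classes, lose at most an $O(\eps)$ factor in profit.

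Building on the lemma, I would set up a dynamic program that processes the classes from heaviest ($C_1$) to lightest ($C_L$). Its state $\sigma_j$ encodes the residual capacity vector $(W'_1, \ldots, W'_T)$ available to classes $j, \ldots, L$, compressed by recording only its polylogarithmically-many breakpoint periods and grid values. For each reachable state, the DP enumerates an admissible capacity allocation to class $C_j$, forms the induced single-class residual instance through the construction of Section~\ref{subsec:qptas_prelim}, and solves it by invoking Theorem~\ref{thm:qptas-bounded}. Once all classes have been processed, iterated application of Lemma~\ref{lem:union-chain} glues the per-class chains into a single feasible chain for the original instance with total profit at least $(1-\eps)\cdot \objfunc({\cal S}^*)$.

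The main obstacle is controlling the size of the DP state space: a naive encoding of the residual capacity vector over $T$ periods would require $|{\cal I}|^{\Theta(T)}$ states. To circumvent this I would leverage the structural lemma on two fronts: the number of breakpoints contributed per class is polylogarithmic, so each layer multiplies the number of reachable profiles by only a $|{\cal I}|^{\mathrm{polylog}(|{\cal I}|)}$ factor, and the breakpoint values are taken from a polynomial-sized grid so that each profile admits a compact description. Combined with the quasi-polynomial cost of every call to the bounded-ratio subroutine across the $L$ layers, the overall running time matches the claimed bound of $O(|{\cal I}|^{O((\frac{1}{\eps}\log|{\cal I}|)^{O(1)})})$, yielding Theorem~\ref{thm:qptas2}.
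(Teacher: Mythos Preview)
Your high-level plan of decomposing into polynomial-weight-ratio classes and stitching them via a dynamic program is in the same spirit as the paper, but the structural lemma you rely on is the weak link, and it is not the lemma the paper proves. You assert that in each weight class the insertions can be merged into polylogarithmically many ``batch periods'' at the cost of an $O(\eps/L)$ profit loss, justified by the items having comparable weights. Comparable weights help with \emph{feasibility} when you shift items in time, but they say nothing about \emph{profit}: the parameters $p_{it}$ are arbitrary and non-monotone in $t$, so two items in the same class can each have a unique period at which their profit is large, with all other periods worthless. Batching such items into a common period kills the profit of one of them outright, and nothing in your argument bounds this loss. Without the batching property your DP state---a residual-capacity vector with polylog breakpoints---has no reason to be attainable by a near-optimal solution, and the whole scheme unravels.

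The paper sidesteps this obstacle entirely by working in the sequencing (permutation) formulation rather than the chain formulation. There the only interface between clusters is the scalar \emph{makespan} $w(S)$, not a $T$-dimensional capacity profile. The price for this simplification is that items from higher-index clusters can appear before items from lower-index clusters in the optimal permutation; the paper handles this by first introducing large weight \emph{gaps} between consecutive clusters (Lemma~\ref{lem:skip-buckets}, via shifting) and then proving a sparse-crossing lemma (Lemma~\ref{lem:structure-cross}): at most $O(\tfrac{1}{\eps}\log M)$ higher-cluster items can cross any given cluster, because a single heavy crossing item creates enough room to pull back all lighter items. The DP state then records only the cluster index, a discretized profit target, and this small crossing set---no capacity vector at all. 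Your proposal lacks both the gap structure and the crossing argument, and tries to replace them with a state encoding whose correctness hinges on the unsubstantiated batching claim.
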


\subsection{Technical overview}\label{subsec:qptas2-overview}

\paragraph{Step 1: Creating a well-spaced instance.} We begin by slightly altering a given instance ${\cal I}=({\cal N}, W)$, with the objective of creating nearly-ideal circumstances for the approximation scheme of Section~\ref{sec:qptas-one} to operate, losing negligible profits along the way. For this purpose, given an error parameter $\eps > 0$, we say that the instance ${\cal I}$ is well-spaced when its set of items ${\cal N}$ can be partitioned into clusters ${\cal C}_1, \ldots, {\cal C}_M$ satisfying the following properties:
\begin{enumerate}
    \item {\em Weight ratio within clusters}: For every $m \in [M]$, the weights of any two items in cluster ${\cal C}_m$ differ by a multiplicative factor of at most $n^{ 1/\eps }$. 
    
    \item {\em Weight gap between clusters}: For every $m_1, m_2 \in [M]$ with $m_1 < m_2$, the weight of any item in cluster ${\cal C}_{m_2}$ is greater than the weight of any item in cluster ${\cal C}_{m_1}$ by a multiplicative factor of at least $n^{ 1 + (m_2 - m_1 - 1 ) / \eps }$.
\end{enumerate}
In Section~\ref{subsec:well-spaced}, we show that one can efficiently identify a subset of items over which the induced instance is well-spaced, while still admitting a near-optimal solution. We derive this result, as formally stated below, through an application of the shifting method (see, for instance, \citep{hochbaum1985approximation, Baker94}).

\begin{lemma}\label{lem:skip-buckets}
There exists an item set ${\cal N}_{\myspace} \subseteq {\cal N}$ for which ${\cal I}_{\myspace} = ({\cal N}_{\myspace}, W)$ is a well-spaced instance, whose optimal chain ${\cal S}_{\myspace}$ guarantees a profit of $\objfunc({\cal S}_{\myspace}) \geq (1 - \eps) \cdot \objfunc({\cal S}^*)$. Such a set can be determined in $O((n/\eps )^{ O(1) })$ time.
\end{lemma}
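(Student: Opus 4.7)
The plan is to use a shifting argument (in the spirit of Hochbaum-Maass and Baker). Assuming the standing normalization $w_{\min}\geq 3$ from Section~\ref{subsec:decomp_outline}, I would partition the items into geometric weight classes $B_j = \{i \in {\cal N} : n^j \leq w_i < n^{j+1}\}$ for integers $j \geq 0$. Setting $s = \lfloor 1/\eps \rfloor + 1$, I would enumerate $s$ shifts $r \in \{0,1,\dots,s-1\}$ and for each define the candidate set
\[ {\cal N}_r \;=\; \bigcup_{j \,:\, j \not\equiv r \,(\bmod\, s)} B_j, \]
so that every $s$-th weight class is eliminated. Since $s = O(1/\eps)$, enumerating these candidates takes $O(n/\eps)$ time, which fits within the claimed runtime.

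Next, I would verify that each ${\cal I}_r = ({\cal N}_r, W)$ is well-spaced. Its clusters ${\cal C}_1,\dots,{\cal C}_M$ are the maximal runs of consecutive kept weight-class indices that contain at least one item; by construction the kept indices come in maximal runs of $s-1$ consecutive integers, separated by a single removed index. Since any cluster spans at most $s-1$ consecutive weight classes, two items inside one cluster have weight ratio strictly less than $n^{s-1}\leq n^{1/\eps}$, giving property~(1). For property~(2), suppose ${\cal C}_{m_1}$ and ${\cal C}_{m_2}$ correspond respectively to the $k_1$-th and $k_2$-th runs, where $d := m_2 - m_1 \geq 1$. Because the $d-1$ intermediate clusters between them occupy distinct runs, we have $k_2 - k_1 \geq d$. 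A direct calculation with the weight-class boundaries then yields
\[ \frac{\min_{i\in {\cal C}_{m_2}} w_i}{\max_{i\in {\cal C}_{m_1}} w_i} \;\geq\; n^{(k_2-k_1-1)s+1} \;\geq\; n^{(d-1)s+1} \;\geq\; n^{1+(d-1)/\eps}, \]
where the final inequality uses $s > 1/\eps$.

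Third, for the profit guarantee I would invoke a standard averaging argument. By Observation~\ref{obs:subset}, each restriction ${\cal S}^*|_{{\cal N}_r}$ is feasible for ${\cal I}_r$, and its profit equals $\objfunc({\cal S}^*) - \sum_{i \in {\cal N}\setminus {\cal N}_r} p_{i,t^*_i}$, where $t^*_i$ denotes the insertion time of item $i$ in ${\cal S}^*$. Since every item is eliminated by exactly one shift, summing over $r$ yields
\[ \sum_{r=0}^{s-1} \objfunc({\cal S}^*|_{{\cal N}_r}) \;=\; (s-1)\cdot \objfunc({\cal S}^*), \]
so some shift $r^*$ attains $\objfunc({\cal S}^*|_{{\cal N}_{r^*}}) \geq (1-1/s)\cdot \objfunc({\cal S}^*) \geq (1-\eps)\cdot \objfunc({\cal S}^*)$. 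Setting ${\cal N}_\myspace = {\cal N}_{r^*}$ proves existence; since $r^*$ cannot be identified a priori, the procedure outputs all $O(1/\eps)$ candidate sets, and the downstream algorithm of Section~\ref{sec:qptas-two} runs on each and keeps the best.

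The subtle point to handle carefully is the passage from the run-index difference $k_2-k_1$ to the cluster-index difference $d$ in property~(2): runs that happen to be empty are \emph{not} counted as clusters, so the raw weight-class gap between ${\cal C}_{m_1}$ and ${\cal C}_{m_2}$ may exceed what the index $d$ would suggest. The key observation is that empty intermediate runs only enlarge the gap, so the tight case is when every intermediate run is non-empty (giving $k_2 - k_1 = d$), and this is precisely where the choice $s \geq 1/\eps$ is needed to obtain the required exponent $1 + (d-1)/\eps$.
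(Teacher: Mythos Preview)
Your proposal is correct and follows essentially the same shifting argument as the paper: geometric bucketing by powers of $n$, removal of one bucket per period of length $\Theta(1/\eps)$, and averaging over the $O(1/\eps)$ shifts. The only cosmetic differences are your choice of period $s=\lfloor 1/\eps\rfloor+1$ (the paper simply assumes $1/\eps$ is an integer and uses that as the period) and your decision to skip empty runs when indexing clusters (the paper keeps them), neither of which affects the validity of properties~(1)--(2) or the averaging bound.
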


\paragraph{Step 2: Proving the sparse-crossing property.} For simplicity of notation, we assume from this point on that the instance ${\cal I}=({\cal N}, W)$ is well-spaced, with clusters ${\cal C}_1, \ldots, {\cal C}_M$. Now suppose that the optimal permutation $\pi^*$ for the sequencing-based formulation of this instance was known to be ``crossing-free'', namely, items belonging to cluster ${\cal C}_1$ appear first in $\pi^*$, followed by  those belonging to cluster ${\cal C}_2$, so on and so forth. In other words, a left-to-right scan of the permutation $\pi^*$ reveals that it is weakly-increasing by cluster. In this ideal situation, the approximation scheme we propose in Section~\ref{sec:qptas-one} can be sequentially employed to the clusters ${\cal C}_1, \ldots, {\cal C}_M$ in increasing order. This way, we would have obtained a $(1-\eps)$-approximation in truly quasi-polynomial time, since the extremal weight ratio within each cluster is $n^{ 1/\eps }$-bounded, by property~1.

Unfortunately, elementary examples show that an optimal permutation $\pi^*$ may not be crossing-free, in the sense that items in any given cluster can be preceded by items belonging to higher-index clusters. That said, a suitable relaxation of these ideas can still be exploited. Formally, let us denote by $\mycross_m( \pi )$ the number of items in clusters ${\cal C}_{m+1}, \ldots, {\cal C}_M$ that appear in the permutation $\pi$ before the last item belonging to cluster ${\cal C}_m$; note that crossing-free is equivalent to having $\mycross_1( \pi ) = \cdots = \mycross_M( \pi ) = 0$. Our next structural result, formally established in Section~\ref{subsec:proof_structure-cross}, proves the existence of a near-optimal permutation with very few items crossing each cluster. 

\begin{lemma}\label{lem:structure-cross}
There exist an item set ${\cal N}_{\mysparse} \subseteq {\cal N}$ and a permutation $\pi_{\mysparse}: {\cal N}_{\mysparse} \to [| {\cal N}_{\mysparse} |]$  satisfying:
\begin{enumerate}
    \item Sparse crossing: $\max_{m \in [M]} \mycross_m( \pi_{\mysparse} ) \leq \frac{\lceil \log_2 M \rceil}{\eps}$.
    
    \item Near-optimal profit: $\newobj( \pi_{\mysparse} ) \geq (1 -  \eps ) \cdot \newobj( \pi^* )$.
\end{enumerate}
\end{lemma}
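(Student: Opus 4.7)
My plan is to prove Lemma~\ref{lem:structure-cross} by a top-down binary divide-and-conquer on the cluster indices $\{1,\ldots,M\}$, with recursion depth $\lceil \log_2 M \rceil$ matching the factor in the statement. At the top level, I would split the range into a left half $L$ and a right half $R$ of equal (up to rounding) size, and focus only on the interface between them. The items from clusters in $R$ that appear before the last item of some cluster in $L$ in the current permutation are the only cross-interface violators introduced at this level; if their number exceeds $1/\eps$, I would prune the permutation so that at most $\lceil 1/\eps \rceil$ of them remain on the wrong side of the boundary. I would then recurse independently on $L$ and on $R$. Any item from $\mathcal{C}_{m+1},\ldots,\mathcal{C}_M$ that still appears before the last item of $\mathcal{C}_m$ in the final $\pi_\mysparse$ can be charged to the unique level at which its cluster and $m$'s cluster first separate into sibling sub-ranges, so summing the $1/\eps$ budget over the $\lceil \log_2 M \rceil$ levels yields the claimed sparse-crossing bound.

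For the profit guarantee, I plan to exploit the weight-gap property of well-spaced instances: at any interface between a left half $L$ and a right half $R$ in the recursion, every item of $R$ is heavier than every item of $L$ by a factor of at least $n$, and the factor grows geometrically with the cluster-index separation. Consequently, the aggregate weight of all items in $L$ is negligible compared to the weight of any single item in $R$, which makes rearrangements and deletions of $L$-items shift the completion times of the $R$-items by a vanishingly small relative amount. This would let me argue that the pruning at each interface preserves essentially all of the profit of unaffected items, and that the profits of the pruned items themselves can be absorbed within an $O(\eps/\log M)$ per-level budget, summing across all $\lceil \log_2 M \rceil$ levels to the total $\eps$-loss stated in the lemma.

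The main obstacle will be executing the top-level pruning while rigorously controlling the profit loss. The delicate point is that $\mycross_m$ is defined relative to the \emph{last item of $\mathcal{C}_m$} in the current permutation, and that reference position itself shifts whenever items are deleted, possibly spoiling crossings already paid for at higher recursion levels. Handling the coupling between item deletions and the moving ``last-item marker'' of each cluster is where the bookkeeping becomes intricate, and I expect the argument to rely either on a careful selection rule for which $1/\eps$ crossing items to retain (e.g., the highest-profit ones, with tie-breaking favoring items near the end of the range) or on a shifting-style averaging argument over several possible split positions, in the spirit of the technique invoked for Lemma~\ref{lem:skip-buckets}.
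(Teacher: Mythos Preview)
Your recursive bisection skeleton matches the paper's approach, and your observation that the aggregate $L$-weight is negligible compared to a single $R$-item is exactly the leverage the paper exploits. However, both your pruning mechanism and your profit accounting miss the key ideas.

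First, the paper does not delete many crossing $R$-items so that only $1/\eps$ remain. Instead, at each interface it deletes \emph{exactly one} $R$-item---the least $\pi^*$-profitable among the first $1/\eps$ crossing $R$-items---and then pulls \emph{all} $L$-items that appeared after it back into the gap this deletion creates. The weight-gap property guarantees that a single deleted $R$-item frees enough room for every remaining $L$-item, so completion times of all surviving items can only decrease. After the pull-back, the last $L$-item sits before every $R$-item except the at most $1/\eps$ that already preceded the deleted one; this is precisely what resolves your ``moving last-item marker'' concern. Your phrase ``rearrangements and deletions of $L$-items'' has the roles inverted: the deletion is of a heavy $R$-item, the rearrangement is of light $L$-items.

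Second, your claimed $O(\eps/\log M)$ per-level profit budget has no apparent justification---there are up to $M/2$ interfaces at the deepest level, and nothing in the setup bounds their aggregate loss by a vanishing fraction of $\newobj(\pi^*)$. The paper's accounting is different: the single deleted item at each call has profit at most $\eps$ times the combined $\pi^*$-profit of the $1/\eps$-item set ${\cal X}_{\mathcal{M}^-,\mathcal{M}^+}$ it was selected from (being the minimum among them). The crucial lemma is that these ${\cal X}$-sets are pairwise disjoint across \emph{all} calls in the recursion tree; summing then yields a total loss of at most $\eps \cdot \newobj(\pi^*)$, with no $\log M$ factor entering the loss at all. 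Without the single-deletion-plus-pullback mechanism and the disjointness argument for the charging sets, the profit guarantee does not go through.
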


Technically speaking, our proof is based on applying a sequence of recursive transformations with respect to the unknown optimal permutation $\pi^*$. To convey the high-level idea, let $i_{\mymiddle}$ be the last-appearing item in $\pi^*$ out of clusters ${\cal C}_1, \ldots, {\cal C}_{M/2}$. When fewer than $1/ \eps$ items in clusters ${\cal C}_{(M/2)+1}, \ldots, {\cal C}_M$ appear before $i_{\mymiddle}$, each of the clusters ${\cal C}_1, \ldots, {\cal C}_{M/2}$ has at most $1/ \eps$ crossings due to items in ${\cal C}_{(M/2)+1}, \ldots, {\cal C}_M$. We can therefore recursively proceed into the left part of $\pi^*$, stretching up to the item $i_{\mymiddle}$, and into its right part, consisting of the remaining items. In the opposite case, where at least $1/ \eps$ items in clusters ${\cal C}_{(M/2)+1}, \ldots, {\cal C}_M$ appear before $i_{\mymiddle}$, the important observation is that we can eliminate the cheapest out of the first $1/\eps$ such items while losing only an $O( \eps)$-fraction of their combined profit. However, since this item is heavier than any item in lower-index clusters by a factor of at least $n$ (see property~2), the gap we have just created is sufficiently large to pull back each and every item in clusters ${\cal C}_1, \ldots, {\cal C}_{M/2}$, only increasing their profit contributions. We can now recursively proceed into the left and right parts. 

\paragraph{Step 3: The external dynamic program.}  Given the sparse-crossing property, we dedicate Section~\ref{subsec:DP_general} to proposing a dynamic programming approach for computing a near-optimal  permutation. For this purpose, by recycling some of the notation introduced in Section~\ref{sec:heavy}, our state description $(m, \psi_m, {\cal Q}_{>m})$ will consists of the following parameters: 
\begin{itemize}
\item The index of the current cluster, $m$.

\item The profit requirement, $\psi_m$.

\item The set of items ${\cal Q}_{>m}$ belonging to clusters ${\cal C}_{m+1}, \ldots, {\cal C}_M$ that will be crossing into lower-index clusters, noting that Lemma~\ref{lem:structure-cross} allows us to consider only small sets, of size $O( \frac{ \log M }{  \eps } )$.
\end{itemize}
At a high level, the value function $F(m, \psi_m, {\cal Q}_{>m})$ will represent the minimum makespan $w(S)$ that can be attained, over all subset of items $S$  within the union of ${\cal Q}_{>m}$ and the clusters ${\cal C}_1, \ldots, {\cal C}_m$ (namely, $S \subseteq {\cal Q}_{>m} \uplus (\biguplus_{\mu \in [m]} {\cal C}_{\mu})$)  and over all permutations $\pi : S \to [|S|]$ that generate a total profit of at least $\psi_m$. Clearly, the best-possible profit of a sparse-crossing permutation corresponds to the maximal value $\psi_M$ that satisfies $F(M, \psi_M, \emptyset) < \infty$, which is at least $(1 -  \eps ) \cdot \newobj( \pi^* )$, by Lemma~\ref{lem:structure-cross}. 

As formally explained in Section~\ref{subsec:DP_general}, within the recursive equations for computing $F(m, \psi_m, {\cal Q}_{>m})$, evaluating the marginal makespan increase of each possible action involves solving a single-cluster subproblem. Specifically for the latter, the approximation scheme we have devised in Section~\ref{sec:qptas-one} will be shown to incur a quasi-polynomial running time. In parallel, the dominant factor in determining the underlying number of states emerges from the set of items ${\cal Q}_{>m}$, taking $O( n^{ O( \frac{ 1 }{ \eps } \log M ) })$ possible values, respectively, thus forming the second source of quasi-polynomiality in our approach and concluding the proof of Theorem~\ref{thm:qptas2}. 

\subsection{Proof of Lemma~\ref{lem:skip-buckets}: Creating a well-spaced instance} \label{subsec:well-spaced}

\paragraph{Bucketing.} For the purpose of identifying the desired subset ${\cal N}_{\myspace}$, we initially partition the overall collection of items ${\cal N}$ into buckets ${\cal B}_1, \ldots, {\cal B}_L$ according to their weights. This partition will be geometric, by powers of $n$, meaning that $L = \lceil  \log_n(\frac{w_{\max}}{w_{\min}}) \rceil + 1$. Specifically, the first bucket ${\cal B}_1$ consists of items whose weight resides in $[w_{\min}, n \cdot w_{\min})$, the second bucket ${\cal B}_2$ consists of those with weight in $[n \cdot w_{\min}, n^2 \cdot w_{\min})$, so on and so forth, where in general, bucket ${\cal B}_{\ell}$ corresponds to the interval $[n^{\ell-1} \cdot w_{\min}, n^{\ell} \cdot w_{\min})$. It is easy to verify that ${\cal B}_1, \ldots, {\cal B}_L$ is indeed a partition of ${\cal N}$.

\paragraph{Creating clusters.} Now let $r \in \{ 0, \ldots, \frac{1}{\eps} - 1 \}$ be an integer parameter whose value will be determined later. Accordingly, we create a subset of items ${\cal N}_r \subseteq {\cal N}$, that will be clustered into ${\cal C}_1^r, \ldots, {\cal C}_M^r$ with $M = \lfloor \eps L+2\rfloor$, as follows. Intuitively, we introduce ``gaps'' within the sequence of buckets ${\cal B}_1, \ldots, {\cal B}_L$, spaced apart by $\frac{1}{\eps}$ indices, through eliminating every bucket ${\cal B}_{\ell}$ with $\ell \mod \frac{1}{\eps} = r$; then, between every pair of successive gaps, buckets will be unified to form a single cluster. That is, the first cluster is defined as ${\cal C}_1^r = \biguplus_{\ell = 1}^{ r-1 } {\cal B}_{\ell}$, the second cluster is ${\cal C}_2^r = \biguplus_{\ell = r+1}^{ r - 1 + 1/\eps } {\cal B}_{\ell}$, the third is ${\cal C}_3^r = \biguplus_{\ell = r + 1 + 1/\eps}^{ r - 1 + 2/\eps } {\cal B}_{\ell}$, and so on. Finally, we define the subset of items ${\cal N}_r$ as the union of all clusters, i.e., ${\cal N}_r = \biguplus_{m \in [M]} {\cal C}_m^r$, with a corresponding generalized incremental knapsack instance ${\cal I}_r = ({\cal N}_r, W)$.

\paragraph{Analysis.} In what follows, we argue that for every $r \in \{ 0, \ldots, \frac{1}{\eps} - 1 \}$, the instance ${\cal I}_r$ we have just constructed is in fact well-spaced, and the partition of ${\cal N}_r$ into clusters is given by ${\cal C}_1^r, \ldots, {\cal C}_M^r$. For this purpose, we separately prove each of the required well-spaced properties.
\begin{enumerate}
    \item {\em Weight ratio within clusters}: Consider two items $i_1$ and $i_2$ belonging to the same cluster ${\cal C}_m^r$. Letting ${\cal B}_{\ell_1}$ and ${\cal B}_{\ell_2}$ be the buckets containing these items, respectively, their weight ratio can be upper bounded by observing that 
    \begin{eqnarray*}
    \frac{ w_{i_2} }{ w_{i_1} } & \leq & \frac{ \max_{ i \in {\cal B}_{\ell_2} } w_i }{ \min_{ i \in {\cal B}_{\ell_1} } w_i } \\
    & \leq & n^{\ell_2 - (\ell_1 - 1)} \\
    & \leq & n^{ (1 / \eps) - 1 } \ ,
    \end{eqnarray*}
    where the second inequality holds since each bucket ${\cal B}_{\ell}$ contains items whose weight falls within $[n^{\ell-1} \cdot w_{\min}, n^{\ell} \cdot w_{\min})$, and the third inequality follows by noting that each cluster represents the union of at most $\frac{ 1 }{ \eps } - 1$ successive buckets, implying that $\ell_2 - \ell_1 \leq \frac{ 1 }{ \eps } - 2$.
    
    \item {\em Weight gap between clusters}: Similarly, let $i_1$ and $i_2$ be a pair of items that belong to clusters ${\cal C}^r_{m_1}$ and ${\cal C}^r_{m_2}$, respectively, with $m_1 < m_2$. In this case, when we denote the corresponding buckets by ${\cal B}_{\ell_1}$ and ${\cal B}_{\ell_2}$, their weight ratio can be lower bounded by  
    \begin{eqnarray*}
    \frac{ w_{i_2} }{ w_{i_1} } & \geq & \frac{ \min_{ i \in {\cal B}_{\ell_2} } w_i }{ \max_{ i \in {\cal B}_{\ell_1} } w_i } \\
    & \geq & n^{(\ell_2 - 1) - \ell_1} \\
    & \geq & n^{ 1 + (m_2 - m_1 - 1 ) / \eps } \ ,
    \end{eqnarray*}
    where the last inequality holds since $\ell_1 \in \{r + 1 + \frac{ m_1 - 2 }{ \eps }, \ldots, r - 1 + \frac{ m_1 - 1 }{ \eps } \}$ and $\ell_2 \in \{r + 1 + \frac{ m_2 - 2 }{ \eps }, \ldots, r - 1 + \frac{ m_2 - 1 }{ \eps } \}$, by definition of ${\cal C}^r_{m_1}$ and ${\cal C}^r_{m_2}$. 
\end{enumerate}

We conclude the proof by showing that at least one of the well-spaced instances ${\cal I}_0, \ldots {\cal I}_{ \frac{ 1 }{ \eps } - 1 }$ is associated with an optimal profit of at least $(1 - \eps) \cdot \objfunc({\cal S}^*)$. To this end, with respect to the optimal chain ${\cal S}^*$ for the original instance ${\cal I}$, note that the restriction of this chain ${\cal S}^*|_{{\cal N}_r}$ to the item set ${\cal N}_r$ is clearly feasible for ${\cal I}_r$, by Observation~\ref{obs:subset}. Letting ${\cal S}^{r*}$ be an optimal chain for ${\cal I}_r$, we consequently have
\begin{eqnarray*}
\max_{0 \leq r \leq (1/\eps)-1} \objfunc({\cal S}^{r*}) & \geq & \max_{0 \leq r \leq (1/\eps)-1} \objfunc ( {\cal S}^*|_{{\cal N}_r} ) \\
& \geq & \eps \cdot \sum_{r = 0}^{(1/\eps)-1} \objfunc({\cal S}^*|_{{\cal N}_r}) \\
& = & \eps \cdot \sum_{r = 0}^{(1/\eps)-1} \sum_{t \in [T]} \sum_{i \in (S_t^* \setminus S_{t-1}^*) \cap {\cal N}_r} p_{it} \\
& = & \eps \cdot \sum_{t \in [T]} \sum_{i \in S_t^* \setminus S_{t-1}^*} \left| \left\{ r \in \left\{ 0, \ldots, \frac{ 1 }{ \eps } - 1 \right\} : i \in \left( S_t^* \setminus S_{t-1}^* \right) \cap {\cal N}_r \right\} \right| \cdot p_{it} \\
& = & (1 - \eps ) \cdot \sum_{t \in [T]} \sum_{i \in S_t^* \setminus S_{t-1}^*} p_{it} \\
& = & (1 - \eps) \cdot \objfunc({\cal S}^*) \ ,
\end{eqnarray*}
where the next-to-last equality holds since every item introduced in the optimal chain ${\cal S}^*$ appears in all but one of the sets ${\cal N}_0, \ldots, {\cal N}_{(1/\eps) - 1 }$.

\subsection{Proof of Lemma~\ref{lem:structure-cross}: The sparse-crossing property} \label{subsec:proof_structure-cross}

\paragraph{Preliminaries.} We begin by introducing some additional definitions and notation that will be utilized throughout this proof. For a set of cluster indices ${\cal M} \subseteq [M]$, we use ${\cal C}_{\cal M}$ to designate the union of ${\cal M}$-indexed clusters, i.e., ${\cal C}_{\cal M} = \biguplus_{m \in {\cal M}} {\cal C}_m$. Expanding upon the definition of $\mycross_m(\pi)$, given disjoint sets, ${\cal M}_1 \subseteq [M]$ and ${\cal M}_2 \subseteq [M]$, let $\mycross_{{\cal M}_1, {\cal M}_2}(\pi)$ denote the number of items in ${\cal C}_{{\cal M}_2}$ that appear in the permutation $\pi$ before the last item in ${\cal C}_{{\cal M}_1}$, namely,
\[ \mycross_{{\cal M}_1, {\cal M}_2}(\pi) = \left| \left\{i \in {\cal C}_{{\cal M}_2} : \pi(i) < \max_{ j \in {\cal C}_{{\cal M}_1}} \pi(j) \right\} \right|.  \]
When $\mycross_{{\cal M}_1, {\cal M}_2}(\pi) \geq \frac{1}{\eps}$, we use ${\cal X}_{{\cal M}_1, {\cal M}_2}(\pi)$ to designate the set comprised of the first $\frac{1}{\eps}$ items in ${\cal M}_2$-indexed clusters in the permutation $\pi$. When $\mycross_{{\cal M}_1, {\cal M}_2}(\pi) < \frac{1}{\eps}$, we simply set ${\cal X}_{{\cal M}_1, {\cal M}_2}(\pi) = \emptyset$.

\paragraph{Fixing permutations.} In order to formalize the notion of ``pulling back'' items within a given permutation, as briefly sketched in Section~\ref{subsec:qptas2-overview}, we define a fixing procedure, $\myfix( \pi, {\cal M}^-, {\cal M}^+ )$. Here, we receive as input a permutation $\pi: {\cal Q} \to [|{\cal Q}|]$ over an item set ${\cal Q} \subseteq {\cal N}$, along with two disjoint sets of cluster indices, ${\cal M}^-$ and ${\cal M}^+$, which are assumed to satisfy $\max {\cal M}^- < \min {\cal M}^+$, i.e., any index in ${\cal M}^-$ is strictly smaller than any index in ${\cal M}^+$. As explained below, this procedure constructs in polynomial time a modified permutation $\bar{\pi} : \bar{\cal Q} \to [|\bar{\cal Q}|]$, over a subset $\bar{\cal Q} \subseteq {\cal Q}$, that satisfies the following properties:
\begin{enumerate}
    \item {\em Sparse $({\cal M}^-, {\cal M}^+)$-crossing:} $\mycross_{{\cal M}^-, {\cal M}^+}(\bar{\pi}) \leq \frac{ 1 }{ \eps }$.
    
    \item {\em Completion times:} $C_{\bar{\pi}}(i) \leq C_{\pi}(i)$, for every $i \in \bar{\cal Q}$.
    
    \item {\em Difference:} ${\cal Q} \setminus \bar{\cal Q}$ consists of at most one item, which is a member of ${\cal X}_{{\cal M}^-, {\cal M}^+}(\pi)$. 
\end{enumerate}

For this purpose, when $\mycross_{{\cal M}^-, {\cal M}^+}(\pi) < \frac{ 1 }{ \eps }$, the procedure $\myfix(\pi, {\cal M}^-, {\cal M}^+)$ returns exactly the same permutation (i.e., $\bar{\pi} = \pi$), without any alterations. In the opposite case, when $\mycross_{{\cal M}^-, {\cal M}^+}(\pi) \geq \frac{ 1 }{ \eps }$, let $i_{{\cal M}^-, {\cal M}^+}$ be the least profitable item in ${\cal X}_{{\cal M}^-, {\cal M}^+}(\pi)$ with respect to the optimal permutation $\pi^*$, namely, $i_{{\cal M}^-, {\cal M}^+} = \argmin \{\varphi_{\pi^*}(i): i \in {\cal X}_{{\cal M}^-, {\cal M}^+}(\pi) \}$. Our construction consists of eliminating $i_{{\cal M}^-, {\cal M}^+}$ and placing instead all items in ${\cal C}_{{\cal M}^-}$ appearing in $\pi$  after $i_{{\cal M}^-, {\cal M}^+}$; this alteration results in a permutation $\bar{\pi}$ over ${\cal Q} \setminus \{ i_{{\cal M}^-, {\cal M}^+} \}$. Formally, let ${\cal A}^-$ and $\bar{\cal A}^-$ be the items appearing after $i_{{\cal M}^-, {\cal M}^+}$ out of ${\cal C}_{{\cal M}^-}$ and ${\cal N} \setminus {\cal C}_{{\cal M}^-}$, respectively, i.e., 
\[ {\cal A}^-= \left\{i \in {\cal C}_{{\cal M}^-}: \pi(i) > \pi(i_{{\cal M}^-, {\cal M}^+}) \right\} \text{  and  }  \bar{\cal A}^- = \left\{i \in {\cal N} \setminus {\cal C}_{{\cal M}^-}: \pi(i)>\pi(i_{{\cal M}^-, {\cal M}^+}) \right\} \ . \]
For simplicity, we index the items in ${\cal A}^-$ according to their order within the permutation $\pi$, which results in having ${\cal A}^-=\{i_1,\dots,i_{|{\cal A}^-|}\}$ with $\pi(i_1) < \dots < \pi(i_{|{\cal A}^-|})$. Now, the modified permutation $\bar{ \pi}$ is constructed as follows:
\begin{itemize}
    \item {\em Before $i_{{\cal M}^-, {\cal M}^+}$:} Items in positions $1, \dots, \pi(i_{{\cal M}^-, {\cal M}^+})-1$ of the permutation $\pi$ remain within their original positions, meaning that $\bar{\pi}(i) = \pi(i)$ for every item $i$ with $\pi(i)\leq \pi(i_{{\cal M}^-, {\cal M}^+})-1$.
    
    \item {\em Instead of $i_{{\cal M}^-, {\cal M}^+}$:} Items in ${\cal A}^-$ will appear in place of $i_{{\cal M}^-, {\cal M}^+}$ following their relative order in $\pi$. That is, $\bar{\pi}(i_k) = \pi(i_{{\cal M}^-, {\cal M}^+}) - 1 + k$ for every $k \in [|{\cal A}^-|]$.
    
    \item {\em After $i_{{\cal M}^-, {\cal M}^+}$:} Items in $\bar{{\cal A}}^-$ will appear after those in ${\cal A}^-$, again following their relative order in $\pi$. In other words, $\bar{\pi}(i) = \pi(i) - 1 + | \{ k \in [|{\cal A}^-|] : \pi(i_k)>\pi(i)\} |$ for every item $i \in \bar{{\cal A}}^-$. 
\end{itemize}
In Appendix~\ref{app:proof_lem_mod_perm_properties}, we show that the resulting permutation satisfies its desired properties, as formally stated below. 

\begin{lemma} \label{lem:mod_perm_properties}
The permutation $\bar{\pi}$ satisfies properties~1-3. 
\end{lemma}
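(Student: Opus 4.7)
The plan is to verify properties~1--3 in turn, handling the trivial case $\mycross_{{\cal M}^-, {\cal M}^+}(\pi) < \frac{1}{\eps}$ separately. In that case $\bar{\pi} = \pi$ and $\bar{\cal Q} = {\cal Q}$, so all three properties are immediate. For the remainder I focus on the non-trivial case, in which $i_{{\cal M}^-, {\cal M}^+} \in {\cal X}_{{\cal M}^-, {\cal M}^+}(\pi)$ and $\bar{\cal Q} = {\cal Q} \setminus \{ i_{{\cal M}^-, {\cal M}^+} \}$, yielding Property~3 directly. A useful preliminary observation is that ${\cal A}^- \neq \emptyset$ here: since $i_{{\cal M}^-, {\cal M}^+} \in {\cal X}_{{\cal M}^-, {\cal M}^+}(\pi) \subseteq {\cal C}_{{\cal M}^+}$, it appears in $\pi$ strictly before some item of ${\cal C}_{{\cal M}^-}$, and any such item necessarily lies in ${\cal A}^-$.

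For Property~1, I would locate the last position occupied by a ${\cal C}_{{\cal M}^-}$-item in $\bar{\pi}$. By construction, the items of ${\cal A}^-$ fill the consecutive positions $\pi(i_{{\cal M}^-, {\cal M}^+}), \ldots, \pi(i_{{\cal M}^-, {\cal M}^+}) - 1 + |{\cal A}^-|$ of $\bar{\pi}$; items in $\bar{{\cal A}}^-$ lie outside ${\cal C}_{{\cal M}^-}$ and are placed at strictly later positions, so the last ${\cal C}_{{\cal M}^-}$-position equals $\pi(i_{{\cal M}^-, {\cal M}^+}) - 1 + |{\cal A}^-|$. Because $\bar{\pi}$ agrees with $\pi$ on positions below $\pi(i_{{\cal M}^-, {\cal M}^+})$ and the inserted block contains only ${\cal C}_{{\cal M}^-}$-items, $\mycross_{{\cal M}^-, {\cal M}^+}(\bar{\pi})$ equals the number of ${\cal C}_{{\cal M}^+}$-items strictly preceding $i_{{\cal M}^-, {\cal M}^+}$ in $\pi$. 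Since $i_{{\cal M}^-, {\cal M}^+}$ is itself one of the first $\frac{1}{\eps}$ such crossing items (being in ${\cal X}_{{\cal M}^-, {\cal M}^+}(\pi)$), this count is at most $\frac{1}{\eps} - 1$, giving Property~1.

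For Property~2, I would split into three regimes. Items $i$ with $\pi(i) < \pi(i_{{\cal M}^-, {\cal M}^+})$ keep their position, so $C_{\bar{\pi}}(i) = C_{\pi}(i)$. For $i = i_k \in {\cal A}^-$, directly enumerating the $\bar{\pi}$- and $\pi$-prefixes of $i_k$ gives
\[ C_{\pi}(i_k) - C_{\bar{\pi}}(i_k) \;=\; w_{i_{{\cal M}^-, {\cal M}^+}} + w\bigl( \bar{\cal A}^- \cap \{ j : \pi(i_{{\cal M}^-, {\cal M}^+}) < \pi(j) \leq \pi(i_k) \} \bigr) \;\geq\; w_{i_{{\cal M}^-, {\cal M}^+}} > 0 \ . \]
The most delicate regime is $i \in \bar{\cal A}^-$. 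Enumerating the two prefixes and canceling common terms yields
\[ C_{\bar{\pi}}(i) - C_{\pi}(i) \;=\; w\bigl( {\cal A}^- \cap \{ j : \pi(j) > \pi(i) \} \bigr) - w_{i_{{\cal M}^-, {\cal M}^+}} \;\leq\; w({\cal A}^-) - w_{i_{{\cal M}^-, {\cal M}^+}} \ , \]
so it remains to establish that $w({\cal A}^-) < w_{i_{{\cal M}^-, {\cal M}^+}}$.

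This last inequality is the only step I expect to be non-routine, and it is precisely where the well-spaced structure of ${\cal I}$ is used. Since $\max {\cal M}^- < \min {\cal M}^+$ and ${\cal A}^- \subseteq {\cal C}_{{\cal M}^-}$ while $i_{{\cal M}^-, {\cal M}^+} \in {\cal C}_{{\cal M}^+}$, property~2 of the well-spaced definition forces $w_j \leq w_{i_{{\cal M}^-, {\cal M}^+}}/n$ for every $j \in {\cal A}^-$. Combined with $|{\cal A}^-| \leq |{\cal Q}| - 1 \leq n - 1$, this gives $w({\cal A}^-) \leq (n-1) \cdot w_{i_{{\cal M}^-, {\cal M}^+}}/n < w_{i_{{\cal M}^-, {\cal M}^+}}$, closing Property~2. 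This is the one place the inter-cluster weight gap is invoked, and it materializes the intuition from Section~\ref{subsec:qptas2-overview} that the weight freed by removing a single ${\cal M}^+$-item is large enough to absorb the entire ${\cal M}^-$-mass of ${\cal A}^-$ being pulled back across it.
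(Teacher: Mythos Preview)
Your proof is correct and follows essentially the same three-case decomposition as the paper's own argument, including the crucial use of the well-spaced property to bound $w({\cal A}^-)$ by $w_{i_{{\cal M}^-, {\cal M}^+}}$ in the $\bar{\cal A}^-$ case. Your treatment is in fact slightly tighter in Property~1 (you obtain $\frac{1}{\eps}-1$ rather than $\frac{1}{\eps}$) and more explicit in the ${\cal A}^-$ case, but the structure and key ideas are the same.
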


\paragraph{The recursive construction.} We are now ready to explain how recursive applications of the fixing procedure allow us to conclude the proof of Lemma~\ref{lem:structure-cross}. At a high level, we bisect the cluster indices $[M]$, such that in each step the indices being considered are split into their lower half ${\cal M}^-$ and upper half ${\cal M}^+$, with respect to which the fixing procedure $\myfix(\cdot, {\cal M}^-, {\cal M}^+ )$ will be applied. The resulting permutation will then be divided into left and right parts, which are recursively bisected along the same lines.

To present the specifics of this bisection as simply as possible, we assume without loss of generality that the number of clusters $M$ is a power of $2$; otherwise, empty clusters can be appended to the sequence ${\cal C}_1, \ldots, {\cal C}_M$. At the upper level of the recursion, we bisect the entire collection of cluster indices $[M]$ into ${\cal M}_{ [1,\frac{M}{2}] }= \{1, \dots ,\frac{M}{2} \}$ and ${\cal M}_{ [\frac{M}{2}+1,M] } = \{\frac{M}{2}+1, \dots, M \}$. Designating the optimal permutation by $\pi_{[1,M]} = \pi^*$, we employ our fixing procedure with $\myfix(\pi_{[1,M]}, {\cal M}_{ [1,\frac{M}{2}] }, {\cal M}_{ [\frac{M}{2}+1,M] })$, to obtain the permutation $\bar{\pi}_{[1,M]}$. Now, we break the latter into its left and right part, $\pi_{[1,\frac{M}{2}]}$ and $\pi_{[\frac{M}{2}+1,M]}$, such that the left permutation $\pi_{[1,\frac{M}{2}]}$ is the prefix of $\bar{\pi}_{[1,M]}$ ending at the last item in ${\cal C}_{{\cal M}_{[1, \frac{M}{2}]}} \cup {\cal X}_{{\cal M}_{[1, \frac{M}{2}]},{\cal M}_{[\frac{M}{2}+1, M]}}(\pi_{[1,M]})$, whereas the right permutation $\pi_{[\frac{M}{2}+1,M]}$ is comprised of the remaining suffix.  

In the second level of the recursion, for the left permutation $\pi_{[1,\frac{M}{2}]}$, we bisect ${\cal M}_{ [1,\frac{M}{2}] }$ into ${\cal M}_{ [1,\frac{M}{4}] } = \{ 1, \dots, \frac{M}{4} \}$ and ${\cal M}_{ [\frac{M}{4}+1, \frac{M}{2}] } = \{ \frac{M}{4}+1, \dots, \frac{M}{2} \}$, followed by applying $\myfix( \pi_{[1,\frac{M}{2}]}, {\cal M}_{ [1,\frac{M}{4}] }, {\cal M}_{ [\frac{M}{4}+1, \frac{M}{2}] } )$. Similarly, for the right permutation $\pi_{[\frac{M}{2}+1,M]}$, its corresponding set of cluster indices ${\cal M}_{ [\frac{M}{2}+1,M] }$ is bisected into ${\cal M}_{ [\frac{M}{2}+1,\frac{3M}{4}] } = \{ \frac{M}{2}+1, \dots, \frac{3M}{4} \}$ and ${\cal M}_{ [\frac{3M}{4}+1,M] } =  \{ \frac{3M}{4}+1, \dots, M \}$, in which case we apply $\myfix( \pi_{[\frac{M}{2}+1,M]}, {\cal M}_{ [\frac{M}{2}+1,\frac{3M}{4}] }, {\cal M}_{ [\frac{3M}{4}+1,M] } )$. This recursive procedure continues up until the resulting sets of cluster indices are singletons. At that point in time, our final permutation $\pi_{\mysparse}$ is obtained by concatenating $\pi_{[1,1]}, \pi_{[2,2]}, \dots, \pi_{[M,M]}$.

\paragraph{Analysis.} For ease of presentation, we make use of $\Omega$ to denote the set of pairs of cluster index sets with respect to which $\myfix(\cdot,\cdot,\cdot)$ is employed throughout our recursive construction, meaning that 
\[ \begin{array}{llc}
\Omega = \biggl{\{} & \left( {\cal M}_{ [1,\frac{M}{2}] }, {\cal M}_{ [\frac{M}{2}+1,M] } \right) , & [\text{level 1}] \\
& \left( {\cal M}_{ [1,\frac{M}{4}] }, {\cal M}_{ [\frac{M}{4}+1, \frac{M}{2}] } \right), \left( {\cal M}_{ [\frac{M}{2}+1,\frac{3M}{4}] }, {\cal M}_{ [\frac{3M}{4}+1,M] } \right), & [\text{level 2}] \\
& \cdots \\
& \left( {\cal M}_{[1,1]}, {\cal M}_{[2,2]} \right), \left( {\cal M}_{[3,3]}, {\cal M}_{[4,4]} \right), \ldots, \left( {\cal M}_{[M-1,M-1]}, {\cal M}_{[M,M]} \right) \, \, \biggl{\}} . \qquad & [\text{level $\log_2 M$}]
\end{array} \]
With this notation, we show in the next two claims that the permutation $\pi_{\mysparse}$ indeed satisfies the sparse crossing and near-optimal profit properties of Lemma~\ref{lem:structure-cross}. 

\begin{lemma} \label{lem:cross-bound}
$\mycross_m( \pi_{\mysparse} ) \leq \frac{\log_2M}{\eps}$, for every $m \in [M]$.
\end{lemma}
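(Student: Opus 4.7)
The plan is to bound $\mycross_m(\pi_{\mysparse})$ by counting the number of items from clusters of index $>m$ that end up in the ``low'' leaves $\pi_{[1,1]}, \ldots, \pi_{[m,m]}$, and then attribute each such item to a single invocation of $\myfix$. As a preliminary observation, $\pi_{\mysparse}$ is simply the concatenation $\pi_{[1,1]} \pi_{[2,2]} \cdots \pi_{[M,M]}$, and since every $\myfix$ call only moves items of ${\cal C}_{{\cal M}^-}$ to earlier positions and only eliminates an item drawn from ${\cal X} \subseteq {\cal C}_{{\cal M}^+}$, a member of ${\cal C}_m$ can never end up in a leaf of index greater than $m$. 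Consequently the last occurrence of ${\cal C}_m$ in $\pi_{\mysparse}$ lies in some leaf $\pi_{[j^*,j^*]}$ with $j^* \leq m$, and $\mycross_m(\pi_{\mysparse})$ is upper bounded by the total number of items from ${\cal C}_{m+1} \cup \cdots \cup {\cal C}_M$ appearing in $\pi_{[1,1]} \cup \cdots \cup \pi_{[m,m]}$.

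I would then introduce the notion of \emph{migration}: item $i \in {\cal C}_{m'}$ migrates at a call $\myfix(\pi_{\cal M}, {\cal M}^-, {\cal M}^+)$ when $m' \in {\cal M}^+$ and $i$ is routed to the prefix of the ensuing left/right split. The key observation here is that each item migrates at most once, because after migration the native cluster of $i$ lies outside the cluster set of every deeper subproblem; in any such deeper call $i$ is therefore a passive bystander, ineligible to join the next ${\cal X}$ and, by property~2 of $\myfix$, merely shifted toward earlier positions. It follows that every item of a cluster $m' > m$ that ends up in a leaf $\pi_{[j,j]}$ with $j < m'$ is tagged by a unique migrating call.

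Third, the fact that every subproblem's cluster set is a contiguous interval $[a,b] \subseteq [M]$ (by the bisection rule) pins down which migrations are relevant. For an item of cluster $m'>m$ to land in a leaf $\pi_{[j,j]}$ with $j \leq m$, its migrating call must have operated on a subproblem whose interval $[a,b]$ satisfies $a \leq m < m' \leq b$, i.e.\ a subproblem containing~$m$. At each of the $\log_2 M$ levels of the recursion tree there is exactly one such subproblem --- the one on the root-to-leaf path ending at cluster~$m$ --- and property~1 of $\myfix$ bounds the number of its migrations by $1/\eps$.

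Summing the bounds across the $\log_2 M$ levels yields at most $\log_2 M / \eps$ items from clusters $>m$ residing in $\pi_{[1,1]} \cup \cdots \cup \pi_{[m,m]}$, which by the preliminary observation gives $\mycross_m(\pi_{\mysparse}) \leq \log_2 M / \eps$. The main obstacle will be to verify that the labelling introduced in the second step is a genuine bijection: one has to check, by a careful reading of the $\myfix$ construction together with property~3, that passive bystanders cannot be re-drawn into a later ${\cal X}$ or eliminated by a subsequent fix, so that the level-by-level count neither double-counts nor misses any relevant item.
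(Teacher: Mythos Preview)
Your proposal is correct and follows the same level-counting skeleton as the paper: attribute each crossing item to a single $\myfix$ call, bound the contribution of each relevant call by $1/\eps$, and observe that at most $\log_2 M$ calls are relevant. The one substantive difference is in \emph{which} call you attribute to. The paper tags a crossing item $i\in{\cal C}_{m^+}$ with the unique pair $({\cal M}^-,{\cal M}^+)\in\Omega$ for which $m\in{\cal M}^-$ and $m^+\in{\cal M}^+$ and then asserts $i\in{\cal X}_{{\cal M}^-,{\cal M}^+}(\pi_{[\min{\cal M}^-,\max{\cal M}^+]})$; but this can fail, since $i$ may already have been swept into a subproblem disjoint from $[\min{\cal M}^-,\max{\cal M}^+]$ at an earlier level (take $M=8$, $\eps=1/2$, items $a\in{\cal C}_1$, $b\in{\cal C}_5$, $c\in{\cal C}_6$, and $\pi^*=(c,a,b)$: then $c$ lands in leaf~$1$, is a crossing item for $m=5$, yet every pair with $5\in{\cal M}^-$ has ${\cal X}=\emptyset$, and the paper's displayed inequality reads $1\le 0$). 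Your attribution to the \emph{migrating} call --- the unique call where $m'\in{\cal M}^+$ and $i$ first passes to the left --- avoids this, because that call's interval $[a,b]$ necessarily contains $m$ (from $a\le m<m'\le b$), and there is exactly one such interval per level. So your accounting is the sound one.

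One small clean-up: the bound ``at most $1/\eps$ migrations per call'' is right, but it does not follow from property~1 as stated, since property~1 bounds $\mycross_{{\cal M}^-,{\cal M}^+}(\bar\pi)$ while the left prefix may extend beyond the last ${\cal C}_{{\cal M}^-}$ item to the last ${\cal X}$ item. The correct justification is that in the non-trivial case every ${\cal C}_{{\cal M}^+}$-item in the left prefix lies in ${\cal X}$ (hence at most $|{\cal X}|\le 1/\eps$ of them), and in the trivial case the count is $\mycross_{{\cal M}^-,{\cal M}^+}(\pi)<1/\eps$ directly. With that adjustment, and the verification you flag in the last paragraph (foreign items have cluster index outside the current interval, hence are never in ${\cal X}$ and never eliminated), the argument is complete.
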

\begin{proof}
By construction of $\pi_{\mysparse}$, every item belonging to one of the clusters ${\cal C}_{m+1}, \ldots, {\cal C}_M$ that appears in this permutation before the last item in cluster ${\cal C}_m$ necessarily resides in ${\cal X}_{{\cal M}^-, {\cal M}^+}(\pi_{[\min {\cal M}^-, \max {\cal M}^+]})$, for some pair $({\cal M}^-, {\cal M}^+) \in \Omega$ with $m \in {\cal M}^-$. To verify this claim, consider such a crossing item $i$, say belonging to cluster ${\cal C}_{m^+}$. By the way our recursive construction of $\Omega$ is defined, there exists a unique pair of cluster index sets $({\cal M}^-, {\cal M}^+) \in \Omega$ for which $m \in {\cal M}^-$ and $m^+ \in {\cal M}^+$; we argue that $i \in {\cal X}_{{\cal M}^-, {\cal M}^+}(\pi_{[\min {\cal M}^-, \max {\cal M}^+]})$. Indeed, in the next recursion level, the left permutation $\pi_{[\min {\cal M}^-, \max {\cal M}^-]}$ is the prefix of $\bar{\pi}_{[\min {\cal M}^-, \max {\cal M}^+]}$ ending with the last item in ${\cal C}_{{\cal M}^-} \cup {\cal X}_{{\cal M}^-, {\cal M}^+}(\pi_{[\min {\cal M}^-, \max {\cal M}^+]})$. Furthermore, by construction, all items in the right permutation $\pi_{[\min {\cal M}^+, \max {\cal M}^+]}$ will appear in $\pi_{\mysparse}$ after all items in the left permutation $\pi_{[\min {\cal M}^-, \max {\cal M}^-]}$. Therefore, since $i \in {\cal C}_{m^+}$ with $m^+ \in {\cal M}^+$ and since this item appears in $\pi_{\mysparse}$ before the last item in cluster ${\cal C}_m$, we know that $i$ appears as part of the left permutation $\pi_{[\min {\cal M}^-, \max {\cal M}^-]}$, implying that $i \in {\cal X}_{{\cal M}^-, {\cal M}^+}(\pi_{[\min {\cal M}^-, \max {\cal M}^+]})$.

As any such item $i \in {\cal X}_{{\cal M}^-, {\cal M}^+}(\pi_{[\min {\cal M}^-, \max {\cal M}^+]})$ contributes at most once toward $\mycross_{{\cal M}^-, {\cal M}^+}(\bar{\pi}_{[\min {\cal M}^-, \max {\cal M}^+]})$, we have
\begin{eqnarray*}
\mycross_m(\pi_{\mysparse}) & \leq & \sum_{\MyAbove{({\cal M}^-,{\cal M}^+) \in \Omega :}{ m \in {\cal M}^- }} \mycross_{{\cal M}^-, {\cal M}^+} \left( \bar{\pi}_{[\min {\cal M}^-, \max {\cal M}^+]} \right)\\
& \leq & \frac{ 1 }{ \eps } \cdot \left| \left\{({\cal M^-},{\cal M}^+) \in \Omega: m \in {\cal M}^-  \right\} \right| \\
& \leq & \frac{\log_2M }{\eps} \ .
\end{eqnarray*}
Here, the second inequality holds since $\mycross_{{\cal M}^-,{\cal M}^+}(\bar{\pi}_{[\min {\cal M}^-, \max {\cal M}^+]}) \leq \frac{1}{\eps}$ by property~1 of the fixing procedure. The third inequality is obtained by observing that, as the definition of $\Omega$ shows, all sets appearing in a single level of the recursion form a partition of $[M]$, implying that $m \in {\cal M}^-$ for at most one pair $({\cal M}^-, {\cal M}^+)$ in that level.  As there are $\log_2 M$ levels overall, it follows that $| \{({\cal M^-},{\cal M}^+) \in \Omega: m \in {\cal M}^-  \} | \leq \log_2 M$. 
\end{proof}

\begin{lemma}
$\newobj( \pi_{\mysparse} ) \geq (1 -  \eps ) \cdot \newobj( \pi^* )$.
\end{lemma}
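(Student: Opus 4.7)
The plan is to show that $\pi_\mysparse$ recovers almost all of the profit of $\pi^*$, losing only an $\eps$-fraction in total across all calls to the fixing procedure. I would let $\tilde{\cal N} \subseteq {\cal N}$ denote the set of items that survive every call to $\myfix$ in the recursive construction, and let ${\cal E} = {\cal N} \setminus \tilde{\cal N}$ be the set of eliminated items. The argument splits into two halves: (i) every surviving item contributes at least as much to $\newobj(\pi_\mysparse)$ as it does to $\newobj(\pi^*)$, and (ii) the cumulative profit of eliminated items (measured in $\pi^*$) is at most $\eps \cdot \newobj(\pi^*)$.

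For (i), I would prove by induction on the depth of the bisection tree that $C_{\pi_\mysparse}(i) \leq C_{\pi^*}(i)$ for every $i \in \tilde{\cal N}$; since $\varphi_\pi(i)$ is non-increasing in $C_\pi(i)$, this yields $\varphi_{\pi_\mysparse}(i) \geq \varphi_{\pi^*}(i)$. The inductive step combines property~2 of Lemma~\ref{lem:mod_perm_properties} (each $\myfix$ call only decreases completion times of surviving items) with the structural observation that splitting a fixed permutation $\bar{\pi}$ into its prefix (left) and suffix (right), recursively processing each, and concatenating the results cannot increase completion times: for an item in the left part this is inherited directly from the induction, while for an item in the right part one uses that the total weight of the recursively-fixed left sub-permutation is no larger than the total weight of the original left sub-permutation, so the concatenated completion time is at most $C_{\bar{\pi}}(i) \leq C_{\pi^*}(i)$.

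For (ii), I would use two facts. First, at each $\myfix$ call in which elimination actually occurs (i.e., $\mycross_{{\cal M}^-, {\cal M}^+}(\pi) \geq 1/\eps$), the eliminated item $i_{{\cal M}^-, {\cal M}^+}$ is by definition the cheapest element of ${\cal X}_{{\cal M}^-, {\cal M}^+}(\pi)$ with respect to $\varphi_{\pi^*}$, so
\[
\varphi_{\pi^*}\bigl(i_{{\cal M}^-, {\cal M}^+}\bigr) \;\leq\; \eps \cdot \sum_{j \in {\cal X}_{{\cal M}^-, {\cal M}^+}(\pi)} \varphi_{\pi^*}(j).
\]
Second, the crossing sets ${\cal X}$ obtained over the entire recursion tree are pairwise disjoint as subsets of ${\cal N}$: any item lying in ${\cal X}_{{\cal M}^-, {\cal M}^+}$ at some call belongs to ${\cal C}_{{\cal M}^+}$ and, by the construction of the left/right split, is routed into the left sub-permutation whose downstream cluster indices lie entirely inside ${\cal M}^-$ (disjoint from ${\cal M}^+$), hence such an item cannot appear in any later ${\cal X}$ set within the left sub-tree and is absent from the right sub-permutation altogether. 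Aggregating over all $\myfix$ calls yields $\sum_{i \in {\cal E}} \varphi_{\pi^*}(i) \leq \eps \cdot \sum_{j \in {\cal N}} \varphi_{\pi^*}(j) = \eps \cdot \newobj(\pi^*)$, and combining with (i) gives
\[
\newobj(\pi_\mysparse) \;\geq\; \sum_{i \in \tilde{\cal N}} \varphi_{\pi^*}(i) \;=\; \newobj(\pi^*) - \sum_{i \in {\cal E}} \varphi_{\pi^*}(i) \;\geq\; (1-\eps) \cdot \newobj(\pi^*).
\]

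The main obstacle I anticipate is the inductive completion-time argument in (i), which has to handle several layered operations at once: the $\myfix$ call whose completion-time property relies on the well-spaced weight gap between ${\cal M}^-$-items and ${\cal M}^+$-items, the prefix/suffix split which pulls items from ${\cal X}_{{\cal M}^-, {\cal M}^+}$ (originating in the higher-indexed cluster set) into the left sub-permutation, and the final concatenation in which the accumulated weight of each preceding sub-permutation must be bounded against what $\pi^*$ would have seen before any item in a later sub-permutation.
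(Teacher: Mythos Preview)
Your proposal is correct and follows essentially the same approach as the paper: the paper likewise splits the argument into showing that surviving items have non-increasing completion times (via an induction that invokes property~2 of the fixing procedure) and that the total $\pi^*$-profit of eliminated items is at most $\eps\cdot\newobj(\pi^*)$ (via the least-profitable choice within each ${\cal X}$ set together with pairwise disjointness of these sets across the recursion). Your disjointness justification and the concatenation/weight-monotonicity step for the completion-time induction match the paper's reasoning in the appendices.
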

\begin{proof}
To prove the desired claim, we first establish two auxiliary claims, that will enable us to relate between the profits $\newobj( \pi_{\mysparse} )$ and $\newobj( \pi^* )$. For ease of presentation, the corresponding proofs can be found in Appendices~\ref{app:proof_cl_pi_sparse_profit} and~\ref{app:proof_cl_disjoint_cross}, respectively. 

\begin{claim} \label{cl:pi-sparse-profit}
$\newobj(\pi_{\mysparse}) \geq \newobj(\pi^*) - \eps \cdot \sum_{({\cal M}^-, {\cal M}^+) \in \Omega} \varphi_{\pi^*}( {\cal X}_{{\cal M}^-, {\cal M}^+}({\pi}_{[\min {\cal M}^-, \max {\cal M}^+]}))$. 
\end{claim}

\begin{claim} \label{cl:disjoint-cross}
For any two distinct pairs $({\cal M}^-_1, {\cal M}^+_1)$ and $({\cal M}^-_2, {\cal M}^+_2)$ in $\Omega$, the item sets ${\cal X}_{{\cal M}^-_1, {\cal M}^+_1}({\pi}_{[\min {\cal M}^-_1, \max {\cal M}^+_1]})$ and ${\cal X}_{{\cal M}^-_2, {\cal M}^+_2}({\pi}_{[\min {\cal M}^-_2, \max {\cal M}^+_2]})$ are disjoint. 
\end{claim}

Consequently, the profit attained by the permutation $\pi_{\mysparse}$ can be bounded by noting that 
\begin{eqnarray*}
\newobj(\pi_{\mysparse}) & \geq & \newobj(\pi^*) - \eps \cdot \sum_{({\cal M}^-, {\cal M}^+) \in \Omega} \varphi_{\pi^*} \left( {\cal X}_{{\cal M}^-, {\cal M}^+} \left({\pi}_{[\min {\cal M}^-, \max {\cal M}^+]} \right) \right) \\
& \geq & \newobj(\pi^*) - \eps \cdot \sum_{i \in {\cal N}} \varphi_{\pi^*}(i) \\
& = & (1 - \eps) \cdot \newobj(\pi^*), 
\end{eqnarray*}
where the first inequality is precisely  Claim~\ref{cl:pi-sparse-profit}, and the second inequality follows from Claim~\ref{cl:disjoint-cross}.
\end{proof}

\subsection{The external dynamic program} \label{subsec:DP_general}

Given the sparse-crossing property of the near-optimal permutation $\pi_{\mysparse}$, whose existence has been established in Lemma~\ref{lem:structure-cross}, we turn our attention to formally presenting a dynamic programming approach for computing a permutation with a profit of at least $(1-2\eps) \cdot \newobj(\pi_{\mysparse})$. 

\paragraph{States.} Building on the intuition provided in Section~\ref{subsec:qptas2-overview}, we remind the reader that each state $(m, \psi_m, {\cal Q}_{>m})$ of our dynamic program consists of the following parameters:
\begin{itemize}
    \item The index of the current cluster $m$, taking values in $[M]_0$.
    
    \item The total profit $\psi_m$ collected thus far. Initially, $\psi_m$ will be treated as a continuous parameter, taking values in $[0, n p_{\max}]$, where $p_{\max}$ is the maximum profit attainable by any single item, i.e., $p_{\max} = \max \{ p_{it}: i \in [n], t \in [T], \text{ and } w_i \leq W_t \}$.
    
    \item The set of items ${\cal Q}_{>m}$ belonging to clusters ${\cal C}_{m+1}, \ldots, {\cal C}_M$ that will be crossing into lower-index clusters.
    Motivated by the sparse-crossing property established in Lemma~\ref{lem:structure-cross}, we only consider sets ${\cal Q}_{>m}$ of cardinality at most $\frac{\lceil \log_2M \rceil}{\eps}$.
\end{itemize}

\paragraph{Value function.} For a subset of items $S \subseteq {\cal N}$ and a permutation $\pi: S \to [|S|]$, we say that the pair $(S, \pi)$ is thin when $\mycross_m(\pi) \leq \frac{ \lceil \log_2 M \rceil }{ \eps }$ for all $m \in [M]$. Given this definition, the value function $F(m, \psi_m, {\cal Q}_{>m})$ represents the minimum makespan $w(S)$ that can be attained over all thin pairs $(S, \pi)$ that satisfy the following conditions:
\begin{enumerate}
    \item {\em Allowed items}: The set $S$ consists of items that belong to one of the clusters ${\cal C}_1 \dots, {\cal C}_m$ or to ${\cal Q}_{>m}$. In other words, $S \subseteq {\cal C}_{[1,m]}  \uplus {\cal Q}_{>m}$, where ${\cal C}_{[1,m]} = \biguplus_{\mu \in [1,m]} {\cal C}_{\mu}$ by convention. 
    
    \item {\em Required crossing items}: The set $S$ contains all items in ${\cal Q}_{>m}$, meaning that ${\cal Q}_{>m} \subseteq S$. 
    
    \item {\em Total profit}: $\newobj( \pi ) \geq \psi_m$.
\end{enumerate}
Recycling some of the notation introduced in Section~\ref{subsec:continuous_dp}, we use $\mythin(m, \psi_m, {\cal Q}_{>m})$ to denote the collection of thin pairs that meet  conditions~1-3 above. When the latter set is empty, we define $F(m, \psi_m, {\cal Q}_{>m}) = \infty$. With these definitions, Lemma~\ref{lem:structure-cross} proves the existence of a thin pair $(S,\pi) \in \mythin(M, \newobj(\pi_{\mysparse}), \emptyset)$ with $F(M, \newobj(\pi_{\mysparse}), \emptyset) \leq W_T$. It is worth pointing out that, for the item set ${\cal N}_{\mysparse}$ over which the permutation $\pi_{\mysparse}$ is defined, we can indeed assume that $w({\cal N}_{\mysparse}) \leq W_T$, as all items whose completion time with respect to $\pi_{\mysparse}$ exceeds $W_T$ can be eliminated, leaving us with a permutation that still satisfies Lemma~\ref{lem:structure-cross}. Therefore, had we been able to compute the maximal value $\psi^*$ for which $F(M, \psi^*, \emptyset) \leq W_T$, its corresponding permutation would have guaranteed a profit of at least $\psi^* \geq \newobj(\pi_{\mysparse}) \geq (1 -  \eps ) \cdot \newobj( \pi^* )$. Once again, since ${\psi}_m$ is a continuous parameter, we will eventually explain how to discretize $\psi_m$ to take polynomially-many values, incurring only an $\eps$-loss in profit.

\paragraph{Optimal substructure.} In what follows, we identify the optimal substructure that allows us to compute the value function $F$ by means of dynamic programming. To this end, suppose that $(S, \pi)$ is a thin pair that minimizes $w(S)$ over $\mythin(m, \psi_m, {\cal Q}_{>m})$. We will argue that by eliminating from $(S, \pi)$ a carefully-selected suffix of the permutation $\pi$ consisting of items in clusters ${\cal C}_m, \ldots, {\cal C}_M$, one obtains a thin pair that attains $F(m-1, \psi_{m-1}, {\cal Q}_{>m-1})$ for an appropriately defined state $(m-1, \psi_{m-1}, {\cal Q}_{>m-1})$. We proceed by first defining the latter state, for which a suitable alteration of $(S, \pi)$ will be shown to be optimal: 
\begin{itemize}
    \item {\em Crossing set}: ${\cal Q}_{>m-1}$ is defined as the set of items in ${\cal C}_m \uplus {\cal Q}_{>m}$ that appear before the last item in ${\cal C}_1, \dots, {\cal C}_{m-1}$ with respect to the permutation $\pi$. Namely, 
    \begin{equation} \label{eqn:def_qmmin1}
    {\cal Q}_{>m-1} = \left\{ i \in S \cap ({\cal C}_m \uplus {\cal Q}_{>m}): \pi(i) < \max_{ j \in S \cap {\cal C}_{[1,m-1]} } \pi(j) \right\} \ .
    \end{equation}
  
    \item {\em Profit requirement}: $\psi_{m-1} = [\psi_m - \sum_{i \in S \setminus ({\cal C}_{[1,m-1]} \uplus {\cal Q}_{>m-1})} \varphi_{\pi}(i)]^+$.
\end{itemize}
It is worth pointing out that, for this state to be well-defined, we should ensure that ${\cal Q}_{>m-1}$ indeed consists of at most $\frac{\lceil \log_2M \rceil}{\eps}$ items. To understand why this property is satisfied, note that since every item in ${\cal Q}_{>m-1}$ appears in the permutation $\pi$ before the last item in $S \cap {\cal C}_{[1,m-1]}$, we have $|{\cal Q}_{>m-1}| \leq \max_{\mu \in [m-1]} \mycross_{\mu}(\pi) \leq \frac{\lceil \log_2 M \rceil}{\eps}$, where the last inequality holds since $(S, \pi)$ is thin.

Now, let us define the pair $(\hat{S}, \hat{\pi})$, in which $\hat{S} = S \cap ({\cal C}_{[1,m-1]} \uplus {\cal Q}_{>m-1})$, meaning that $\hat{S}$ is the restriction of $S$ to items belonging to either one of the clusters ${\cal C}_1, \dots, {\cal C}_{m-1}$ or to ${\cal Q}_{>m-1}$. It is not difficult to verify that any item in $\hat{S}$ appears in $\pi$ before any item in $S \setminus \hat{S}$, as any item in $S \cap {\cal C}_{[m,M]}$ that appears before an item in ${\cal C}_{[1,m-1]}$ is necessarily a member of ${\cal Q}_{>m-1}$. Therefore, the items in $\hat{S}$ form a prefix of $\pi$, whereas those in $S \setminus \hat{S}$ form the remaining suffix. Given this observation, we define the permutation $\hat \pi: \hat{S} \to [|\hat{S}|]$ as the former prefix, or equivalently, as the restriction of $\pi$ to the items in $\hat{S}$.

In Lemma~\ref{lem:dp-feasibility-subproblem} below, we show that the pair $(\hat{S}, \hat{\pi})$ indeed resides within $\mythin (m-1, \psi_{m-1}, {\cal Q}_{>m-1})$. Subsequently, we prove in Lemma~\ref{lem:qptas-dp-substructure} that this pair is in fact makespan-optimal over the latter set. To avoid deviating from the overall flow of this section, the corresponding proofs are presented in Appendices~\ref{app:proof_lem_dp_feasibility_subproblem} and~\ref{app:proof_lem_qptas_dp_substructure}, respectively.

\begin{lemma} \label{lem:dp-feasibility-subproblem}
$(\hat{S}, \hat{\pi}) \in \mythin(m-1, \psi_{m-1}, {\cal Q}_{>m-1})$.
\end{lemma}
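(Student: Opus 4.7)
The plan is to verify the three membership conditions of $\mythin(m-1, \psi_{m-1}, {\cal Q}_{>m-1})$ together with the thinness of $(\hat{S}, \hat{\pi})$, exploiting the key structural observation (already established in the exposition) that $\hat{S}$ forms a \emph{prefix} of $\pi$, so that $\hat{\pi}$ is literally the restriction of $\pi$ to this prefix.

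First, I would dispatch the allowed-items and required-crossing-items conditions, which are essentially bookkeeping. The inclusion $\hat{S} \subseteq {\cal C}_{[1,m-1]} \uplus {\cal Q}_{>m-1}$ is immediate from the definition $\hat{S} = S \cap ({\cal C}_{[1,m-1]} \uplus {\cal Q}_{>m-1})$. For the reverse containment ${\cal Q}_{>m-1} \subseteq \hat{S}$, note that~\eqref{eqn:def_qmmin1} forces ${\cal Q}_{>m-1} \subseteq S$ directly, and ${\cal Q}_{>m-1} \subseteq {\cal C}_{[1,m-1]} \uplus {\cal Q}_{>m-1}$ trivially, so intersecting yields ${\cal Q}_{>m-1} \subseteq \hat{S}$.

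Next, I would handle thinness. For any cluster index $\mu \in [M]$, every item of $\hat{S}$ belonging to a cluster of index larger than $\mu$ that appears in $\hat{\pi}$ before the last $\mu$-cluster item of $\hat{S}$ also appears in $\pi$ before the last $\mu$-cluster item of $S$, simply because $\hat{\pi}$ is the restriction of $\pi$ to a prefix. Hence $\mycross_\mu(\hat{\pi}) \leq \mycross_\mu(\pi) \leq \frac{\lceil \log_2 M \rceil}{\eps}$, where the last inequality uses that $(S, \pi)$ is thin.

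The remaining, and mildly more substantive, step is the profit bound $\newobj(\hat{\pi}) \geq \psi_{m-1}$. Because $\hat{S}$ occupies the initial positions of $\pi$, each item $i \in \hat{S}$ has exactly the same completion time under $\hat{\pi}$ as under $\pi$, and therefore $\varphi_{\hat{\pi}}(i) = \varphi_{\pi}(i)$. Splitting the sum defining $\newobj(\pi)$ over the prefix $\hat{S}$ and the suffix $S \setminus \hat{S} = S \setminus ({\cal C}_{[1,m-1]} \uplus {\cal Q}_{>m-1})$, I get
\[
\newobj(\hat{\pi}) \;=\; \newobj(\pi) - \sum_{i \in S \setminus ({\cal C}_{[1,m-1]} \uplus {\cal Q}_{>m-1})} \varphi_{\pi}(i) \;\geq\; \psi_m - \sum_{i \in S \setminus ({\cal C}_{[1,m-1]} \uplus {\cal Q}_{>m-1})} \varphi_{\pi}(i),
\]
using $\newobj(\pi) \geq \psi_m$ from $(S, \pi) \in \mythin(m, \psi_m, {\cal Q}_{>m})$. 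Since $\newobj(\hat{\pi}) \geq 0$ anyway, this yields $\newobj(\hat{\pi}) \geq \psi_{m-1}$ by the definition of $\psi_{m-1}$ with the $[\cdot]^+$ operator. I expect the only place requiring care is to double-check the prefix claim (so that completion times are preserved and the suffix sum is exactly what appears in the definition of $\psi_{m-1}$), which in turn hinges on the fact that any item in $S \cap {\cal C}_{[m,M]}$ appearing before some item of ${\cal C}_{[1,m-1]}$ in $\pi$ is automatically captured in ${\cal Q}_{>m-1}$ by~\eqref{eqn:def_qmmin1}.
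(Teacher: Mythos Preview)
Your proposal is correct and follows essentially the same approach as the paper's proof: verify thinness via the prefix property, dispatch conditions~1 and~2 directly from the definitions, and derive the profit bound by splitting $\newobj(\pi)$ over the prefix $\hat{S}$ and suffix $S\setminus\hat{S}$, then applying the $[\cdot]^+$ definition of $\psi_{m-1}$. The paper carries out exactly these steps in the same order.
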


\begin{lemma} \label{lem:qptas-dp-substructure}
$w(\hat{S}) = F(m-1, \psi_{m-1}, {\cal Q}_{>m-1})$. 
\end{lemma}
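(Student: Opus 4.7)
The plan is to mirror the optimal-substructure argument from Section~\ref{subsec:continuous_dp}: assume for contradiction that there exists a thin pair $(\tilde{S}, \tilde{\pi}) \in \mythin(m-1, \psi_{m-1}, {\cal Q}_{>m-1})$ with $w(\tilde{S}) < w(\hat{S})$, and extend it into a thin pair for the state $(m, \psi_m, {\cal Q}_{>m})$ whose makespan is strictly smaller than $w(S)$, contradicting the optimality of $(S, \pi)$ over $\mythin(m, \psi_m, {\cal Q}_{>m})$.

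Writing $T = S \setminus \hat{S} = S \cap ({\cal C}_{[m,M]} \setminus {\cal Q}_{>m-1})$, I would first argue that $\tilde{S} \cap T = \emptyset$: the allowed-items condition for $\tilde{S}$ gives $\tilde{S} \subseteq {\cal C}_{[1,m-1]} \uplus {\cal Q}_{>m-1}$, whereas $T$ is disjoint from both ${\cal C}_{[1,m-1]}$ and ${\cal Q}_{>m-1}$ by construction. I would then define $\tilde{S}^+ = \tilde{S} \uplus T$ and let $\tilde{\pi}^+$ place items of $\tilde{S}$ first, in the order prescribed by $\tilde{\pi}$, followed by items of $T$ in their relative order under $\pi$. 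Since $w(\tilde{S}^+) = w(\tilde{S}) + w(T) < w(\hat{S}) + w(T) = w(S)$, proving $(\tilde{S}^+, \tilde{\pi}^+) \in \mythin(m, \psi_m, {\cal Q}_{>m})$ yields the desired contradiction.

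Verifying this membership involves four checks. Allowed items follows since ${\cal Q}_{>m-1} \subseteq {\cal C}_m \uplus {\cal Q}_{>m}$ by~\eqref{eqn:def_qmmin1} and $T \subseteq S \subseteq {\cal C}_{[1,m]} \uplus {\cal Q}_{>m}$. The required-crossing property ${\cal Q}_{>m} \subseteq \tilde{S}^+$ follows from a case split: each $i \in {\cal Q}_{>m}$ lies in $S$ and is captured either by $\tilde{S}$ (when $i \in {\cal Q}_{>m-1}$) or by $T$ (when $i \notin \hat{S}$). For the total-profit bound $\newobj(\tilde{\pi}^+) \geq \psi_m$, I would exploit the strict inequality $w(\tilde{S}) < w(\hat{S})$ to observe that every item of $T$ has a weakly smaller completion time under $\tilde{\pi}^+$ than under $\pi$, so $\varphi_{\tilde{\pi}^+}(i) \geq \varphi_{\pi}(i)$ for every $i \in T$; combined with $\newobj(\tilde{\pi}) \geq \psi_{m-1}$ and the defining relation $\psi_{m-1} + \sum_{i \in T} \varphi_{\pi}(i) \geq \psi_m$, this yields $\newobj(\tilde{\pi}^+) \geq \psi_m$.

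The main obstacle is the fourth check, namely $\mycross_{m'}(\tilde{\pi}^+) \leq \frac{\lceil \log_2 M \rceil}{\eps}$ for every $m' \in [M]$. For $m' < m$, since $T \subseteq {\cal C}_{[m,M]}$, the last ${\cal C}_{m'}$-item of $\tilde{\pi}^+$ lies in $\tilde{S}$, so $\mycross_{m'}(\tilde{\pi}^+) = \mycross_{m'}(\tilde{\pi})$, which is bounded by thin-ness of $(\tilde{S}, \tilde{\pi})$. For $m' \geq m$, the last ${\cal C}_{m'}$-item of $\tilde{\pi}^+$ is either in $T$ or in $\tilde{S} \cap {\cal Q}_{>m-1}$; the latter situation reduces again to $\mycross_{m'}(\tilde{\pi})$, while in the former I expect to compare against $\mycross_{m'}(\pi)$, using the preservation of $\pi$-order among $T$-items and the fact that items of $\tilde{S} \cap {\cal C}_{[m'+1,M]} \subseteq {\cal Q}_{>m-1}$ were already crossing in $\pi$ toward lower-index clusters. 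This case analysis is technically delicate but routine, and I would relegate it to the appendix in the spirit of Claim~\ref{clm:feasible_subproblem}.
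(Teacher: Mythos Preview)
Your proposal is correct and follows essentially the same approach as the paper's proof in Appendix~\ref{app:proof_lem_qptas_dp_substructure}: assume a lighter thin pair $(\tilde{S},\tilde{\pi})$ exists, append the suffix $S\setminus\hat{S}$ in its $\pi$-order, and derive a contradiction by verifying membership in $\mythin(m,\psi_m,{\cal Q}_{>m})$. The only noteworthy difference is in the thin-ness check for $m'\geq m$ when the last ${\cal C}_{m'}$-item lies in the appended tail: you compare against $\mycross_{m'}(\pi)$, whereas the paper argues more directly that every item of ${\cal C}_{[m'+1,M]}$ appearing in $\tilde{S}^+$ must belong to ${\cal Q}_{>m}$ (since $\tilde{S}\subseteq{\cal C}_{[1,m-1]}\uplus{\cal Q}_{>m-1}$ with ${\cal Q}_{>m-1}\subseteq{\cal C}_m\uplus{\cal Q}_{>m}$, and $S\setminus\hat{S}\subseteq{\cal C}_m\uplus{\cal Q}_{>m}$), yielding the bound $|{\cal Q}_{>m}|\leq\frac{\lceil\log_2 M\rceil}{\eps}$ in one line; both routes are valid, the paper's being slightly shorter. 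As a minor point, avoid reusing the symbol $T$ for $S\setminus\hat{S}$, since it already denotes the number of time periods.
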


\paragraph{Recursive equations.} Given the optimal substructure characterization discussed above, we proceed by explaining how to express $F(m, \psi_m, {\cal Q}_{>m})$ in recursive form. In essence, had we known what the preceding state $(m-1, \psi_{m-1}, {\cal Q}_{>m-1})$ is, the remaining question would have been that of identifying the lightest set of ``extra'' items ${\cal E}$ to be appended, along with their internal permutation $\pi_{\cal E} : {\cal E} \to [|{\cal E}|]$, under a marginal profit constraint. Formally, to capture the agreement between crossing items, we say that state $(m-1, \psi_{m-1}, {\cal Q}_{>m-1})$ is conceivable for state $(m, \psi_m, {\cal Q}_{>m})$ when ${\cal Q}_{>m-1} \setminus {\cal C}_m \subseteq {\cal Q}_{>m}$. In the opposite direction, $(m, \psi_m, {\cal Q}_{>m})$ is reachable from $(m-1, \psi_{m-1}, {\cal Q}_{>m-1})$ when there exist an item set ${\cal E}$ and permutation $\pi_{\cal E} : {\cal E} \to [|{\cal E}|]$ that simultaneously satisfy the following constraints:
\begin{enumerate}
    \item {\em Extra items}: The collection of extra items can be written as ${\cal E} = {\cal E}_{m} \uplus ({\cal Q}_{>m} \setminus {\cal Q}_{>m-1})$. Here, items in ${\cal E}_{m}$ are to be picked out of cluster ${\cal C}_m$, with the exclusion of those appearing in ${\cal Q}_{>m-1}$, meaning that we have the constraint ${\cal E}_{m} \subseteq {\cal C}_m \setminus {\cal Q}_{>m-1}$. Concurrently, each and every item in ${\cal Q}_{>m} \setminus {\cal Q}_{>m-1}$ should be picked. 
    
    \item {\em Marginal profit}: $\sum_{i \in {\cal E}} \varphi_{ \pi_{\cal E} }^{ \rightsquigarrow }( i ) \geq \psi_m - \psi_{m-1}$, where the term $\varphi_{ \pi_{\cal E} }^{ \rightsquigarrow }( i )$ denotes the profit of item $i$ with respect to the permutation $\pi_{\cal E}$, when its completion time is increased by $F(m-1, \psi_{m-1}, {\cal Q}_{>m-1})$. This constraint guarantees that, by appending  $\pi_{\cal E}$ to the permutation that achieves $F(m-1, \psi_{m-1}, {\cal Q}_{>m-1})$, we obtain a total profit of at least $\psi_m$. 
\end{enumerate}
Letting $\myextra[ \MyAbove{ (m, \psi_m, {\cal Q}_{>m}) }{ (m-1, \psi_{m-1}, {\cal Q}_{>m-1}) } ]$ denote the collection of item sets and permutations that satisfy these constraints, we mention in passing that this set may be empty. Moreover, it will be utilized only for purposes of analysis, and in particular, we will not assume that $\myextra[ \MyAbove{ (m, \psi_m, {\cal Q}_{>m}) }{ (m-1, \psi_{m-1}, {\cal Q}_{>m-1}) } ]$ can be efficiently constructed. Nevertheless, the function value $F(m, \psi_m, {\cal Q}_{>m})$ can still be expressed by minimizing $F(m-1, \psi_{m-1}, {\cal Q}_{>m-1}) + w( {\cal E} )$ over all conceivable states $(m-1, \psi_{m-1}, {\cal Q}_{>m-1})$ and over all item sets and permutations $({\cal E}, \pi_{\cal E}) \in \myextra[ \MyAbove{ (m, \psi_m, {\cal Q}_{>m}) }{ (m-1, \psi_{m-1}, {\cal Q}_{>m-1}) } ]$. For convenience, when $F(m, \psi_m, {\cal Q}_{>m}) \leq W_T$, we use $\mybest(m, \psi_m, {\cal Q}_{>m})$ to denote an arbitrary state $(m-1, \psi_{m-1}, {\cal Q}_{>m-1})$ chosen out of those for which the minimum value $F(m, \psi_m, {\cal Q}_{>m})$ is attained. As mentioned earlier, we wish to compute the maximal value $\psi^*$ that satisfies $F(M, \psi^*, \emptyset) \leq W_T$, as its corresponding permutation guarantees a profit of at least $(1 -  \eps ) \cdot \newobj( \pi^* )$.

\paragraph{Approximate recursion.} That said, due to having a lower bound on the marginal profit, even when $\mybest(m, \psi_m, {\cal Q}_{>m})$ is known, the recursive formulation above is expected to identify an item set and permutation $({\cal E}, \pi_{\cal E}) \in \myextra[ \MyAbove{ (m, \psi_m, {\cal Q}_{>m}) }{ \mybest(m, \psi_m, {\cal Q}_{>m}) } ]$ for which $w( {\cal E} )$ is minimized. This setting can be viewed as an ``inverse''  generalized incremental knapsack problem, where the objective is to minimize makespan rather than to maximize profit. To deal with this obstacle, we employ our QPTAS for bounded weight ratio instances (see Section~\ref{sec:qptas-one}) in order to approximately solve these recursive equations.

Specifically, for $\Delta \geq 0$, we say that constraint~2 is $(\eps,\Delta)$-satisfied when $\sum_{i \in {\cal E}} \varphi_{ \pi_{\cal E} }^{ +\Delta }( i ) \geq (1 - \eps) \cdot (\psi_m - \psi_{m-1})$, where $\varphi_{ \pi_{\cal E} }^{ +\Delta }( i )$ is the profit of item $i$ with respect to the permutation $\pi_{\cal E}$, when its completion time is increased by $\Delta$. As such, the standard sense of satisfying this constraint can be recovered by picking $\eps = 0$ and $\Delta = F(m-1, \psi_{m-1}, {\cal Q}_{>m-1})$. With this definition, we say that state $(m, \psi_m, {\cal Q}_{>m})$ is $(\eps,\Delta)$-reachable from state $(m-1, \psi_{m-1}, {\cal Q}_{>m-1})$ when there exist an item set ${\cal E}$ and permutation $\pi_{\cal E} : {\cal E} \to [|{\cal E}|]$ that satisfy constraint~1 and $(\eps,\Delta)$-satisfy constraint~2; as before, $\myextra_{ \eps,\Delta }[ \MyAbove{ (m, \psi_m, {\cal Q}_{>m}) }{ (m-1, \psi_{m-1}, {\cal Q}_{>m-1}) } ]$ will stand for the collection of such item sets and permutations. In what follows, we devise an auxiliary procedure for approximately solving the recursive equations, as summarized in the next claim; for readability purposes, the proof is deferred to Appendix~\ref{app:proof_lem_approx_rec_eqn}.

\begin{lemma} \label{lem:approx_rec_eqn}
Suppose that $(m, \psi_m, {\cal Q}_{>m})$ and $(m-1, \psi_{m-1}, {\cal Q}_{>m-1})$ are two given states, such that $F(m, \psi_m, {\cal Q}_{>m}) \leq W_T$ and $(m-1, \psi_{m-1}, {\cal Q}_{>m-1}) = \mybest(m, \psi_m, {\cal Q}_{>m})$. Given a parameter $\Delta \leq F(m-1, \psi_{m-1}, {\cal Q}_{>m-1})$, we can identify an item set $\hat{\cal E}$ and permutation $\hat{\pi}_{\hat{\cal E}} : \hat{\cal E} \to [|\hat{\cal E}|]$ for which:
\begin{enumerate}
    \item $(\hat{\cal E}, \hat{\pi}_{\hat{\cal E}}) \in \myextra_{ \eps,\Delta }[ \MyAbove{ (m, \psi_m, {\cal Q}_{>m}) }{ (m-1, \psi_{m-1}, {\cal Q}_{>m-1}) } ]$. 

    \item $w(\hat{\cal E}) \leq F(m, \psi_m, {\cal Q}_{>m}) - F(m-1, \psi_{m-1}, {\cal Q}_{>m-1})$.
\end{enumerate}
The running time of our algorithm is $O((nT)^{O(\frac{1}{\eps^6} \cdot( \log n + \log M))} \cdot |{\cal I}|^{O(1)})$, regardless of whether the assumptions above hold or not. 
\end{lemma}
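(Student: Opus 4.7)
The plan is to cast the inner optimization inside the recursive equation for $F(m,\psi_m,{\cal Q}_{>m})$ --- namely, the problem of finding ${\cal E}={\cal E}_m\uplus({\cal Q}_{>m}\setminus{\cal Q}_{>m-1})$ with ${\cal E}_m\subseteq{\cal C}_m\setminus{\cal Q}_{>m-1}$ and a permutation $\pi_{\cal E}$ on ${\cal E}$ that minimizes $w({\cal E})$ subject to the marginal-profit lower bound at shift $\Delta$ --- as a bounded-weight-ratio generalized incremental knapsack subinstance, to which Theorem~\ref{thm:qptas-bounded} then applies as a black box. Two features of the well-spaced structure make this reduction clean: items in ${\cal C}_m\setminus{\cal Q}_{>m-1}$ have pairwise weight ratio at most $n^{1/\eps}$ (by the first well-spaced property), and ${\cal Q}_{>m}\setminus{\cal Q}_{>m-1}$ contains at most $\lceil \log_2 M\rceil/\eps$ items, few enough to be dealt with by brute-force enumeration.

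First I would strip off the forced items. For every relative ordering $\sigma$ of ${\cal Q}_{>m}\setminus{\cal Q}_{>m-1}$ inside $\pi_{\cal E}$ and for every vector specifying how many ${\cal C}_m$-items populate each of the $|{\cal Q}_{>m}\setminus{\cal Q}_{>m-1}|+1$ gaps created by $\sigma$, I obtain a ``skeleton'' of $\pi_{\cal E}$. The number of skeletons is at most $(n+1)^{O(\log M/\eps)}$. For each skeleton, the completion time of every forced item after the global shift $\Delta$ is fully determined by the weights of the preceding ${\cal C}_m$-items; summing the corresponding $\varphi^{+\Delta}$-profits of the forced items yields a contribution that, once subtracted, leaves a residual target $\psi'$ that the ${\cal C}_m$-items still have to meet.

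Next, for each skeleton, the residual problem is a generalized incremental knapsack subinstance on ${\cal C}_m\setminus{\cal Q}_{>m-1}$ whose time-period capacities are translated by $\Delta$ and by the cumulative forced-item weights committed to preceding gaps, and in which I want to minimize total weight subject to profit $\geq \psi'$. Since this subinstance has weight ratio at most $n^{1/\eps}$, I invoke Theorem~\ref{thm:qptas-bounded} inside a binary search on a geometric grid $W^{*}\in\{(1+\eps')^{k}w_{\min}\}$ of total-weight budgets: for each $W^{*}$, further cap the final capacity of the subinstance at $W^{*}$, run Theorem~\ref{thm:qptas-bounded} with accuracy $\eps'=\Theta(\eps)$, and accept the smallest $W^{*}$ whose returned profit exceeds $(1-\eps/2)\psi'$. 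Because the genuine witness of $F(m,\cdot)-F(m-1,\cdot)$ is, for the right skeleton, a feasible solution of profit $\geq \psi_{m}-\psi_{m-1}$ at weight $\leq F(m,\cdot)-F(m-1,\cdot)$, Theorem~\ref{thm:qptas-bounded} is guaranteed to succeed once $W^{*}$ reaches the grid point just above that true minimum. A final ``shrink-and-refit'' pass, tightening the cap to the exact weight of the items actually picked and dropping any low-profit item whose removal still preserves the $(1-\eps)\psi'$ threshold, then delivers the precise bound $w(\hat{\cal E})\leq F(m,\cdot)-F(m-1,\cdot)$ demanded by the lemma.

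The main obstacle I anticipate lies in the bookkeeping that certifies correctness: the cascaded slack coming from Theorem~\ref{thm:qptas-bounded} ($\eps'$-approximation of max profit) and from the geometric discretization of $W^{*}$ must compose precisely into the $(\eps,\Delta)$-satisfaction of constraint~2, all while preserving the exact-weight condition; in particular the shrink-and-refit step needs to be shown to never degrade profit below $(1-\eps)(\psi_{m}-\psi_{m-1})$. The running-time accounting, by contrast, is transparent: $(n+1)^{O(\log M/\eps)}$ skeletons, $O(\log|{\cal I}|/\eps)$ binary-search iterations per skeleton, and a call to Theorem~\ref{thm:qptas-bounded} at cost $O((nT)^{O(\log n/\eps^{6})})$ per iteration (since $\log(n\cdot n^{1/\eps})=O(\log n/\eps)$), multiplying out to the stated $O((nT)^{O((\log n+\log M)/\eps^{6})}\cdot |{\cal I}|^{O(1)})$ bound, which indeed does not depend on the two structural assumptions placed on $F(m,\cdot)$ and $\mybest$.
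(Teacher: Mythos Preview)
Your skeleton enumeration has a genuine gap. You fix, for each relative ordering of the forced items ${\cal Q}_{>m}\setminus{\cal Q}_{>m-1}$, only the \emph{number} of ${\cal C}_m$-items occupying each gap. But the completion time of a forced item, and hence its $\varphi^{+\Delta}$-profit, depends on the \emph{total weight} of the ${\cal C}_m$-items preceding it, not on their count. So after committing to a skeleton you cannot compute the forced items' contribution and subtract it to obtain a fixed residual target $\psi'$; that target is itself a function of which ${\cal C}_m$-items end up in each gap. Consequently the residual problem is not a standalone generalized incremental knapsack instance on ${\cal C}_m\setminus{\cal Q}_{>m-1}$, and Theorem~\ref{thm:qptas-bounded} does not apply in the black-box way you describe. (Since $\varphi^{+\Delta}$ is a step function of completion time, you also cannot argue that the ${\cal C}_m$-weights are negligible perturbations of the forced items' completion times.)

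The paper sidesteps this coupling by switching to the chain formulation rather than staying in the permutation world: it enumerates all feasible chains ${\cal G}$ that insert exactly the forced items, i.e., it guesses each forced item's \emph{insertion time period}, giving $T^{O(\log M/\eps)}$ options. This pins down the forced items' profits as $p_{it}$-values independently of the ${\cal C}_m$-items, and the residual instance $\hat{\cal I}^{\omega,-{\cal G}}$ is then a bona fide generalized incremental knapsack over ${\cal C}_m\setminus{\cal Q}_{>m-1}$ (with capacities shifted by $\Delta$ and by $w(G_\tau)$), to which Theorem~\ref{thm:qptas-bounded} applies. The exact weight bound $w(\hat{\cal E})\le F(m,\cdot)-F(m-1,\cdot)$ comes from a binary search over \emph{integer} weight caps $\omega\in[0,w({\cal N})]$ (after scaling to integer weights), so no geometric grid or shrink-and-refit is needed: once $\omega\ge w({\cal E}^*)$ the procedure is shown to succeed, hence $\omega_{\min}\le w({\cal E}^*)$ and the returned solution automatically has weight at most $\omega_{\min}$.
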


With this procedure in-hand, we define an approximate value function $\hat{F}$, whose state space is identical to that of $F$. However, rather than attempting to solve an inverse  generalized incremental knapsack problem, the recursive equations through which $\hat{F}$ is defined will tackle the latter problem in an approximate way via our auxiliary procedure. To formalize this approach, the function value $\hat{F}(m, \psi_m, {\cal Q}_{>m})$ is evaluated as follows:
\begin{itemize}
    \item {\em Terminal states ($m=0$)}:  Here, we simply define $\hat{F}(0, \psi_0, {\cal Q}_{>0}) = F(0, \psi_0, {\cal Q}_{>0})$. While $F$-values are unknown in general, $F(0, \psi_0, {\cal Q}_{>0})$ evaluates to either $w({\cal Q}_{>0})$, when there exists a permutation $\pi_{ {\cal Q}_{>0} } : {\cal Q}_{>0} \to [| {\cal Q}_{>0} |]$ with profit $\newobj( \pi_{ {\cal Q}_{>0} } ) \geq \psi_0$, or to $\infty$ otherwise. This distinction can be made by enumerating over all permutations of ${\cal Q}_{>0}$ in time $O( (\frac{ 1 }{ \eps } \log M)^{ O(\frac{ 1 }{ \eps } \log M ) } ) = O( |{\cal I}|^{O((\frac{1}{\epsilon}\log |{\cal I}|)^{O(1)})} )$, since $|{\cal Q}_{>0}| \leq \frac{\lceil \log_2 M \rceil}{\eps}$.
    
    \item {\em General states $(m \in [M])$}: For each  state $(m-1, \psi_{m-1}, {\cal Q}_{>m-1})$, we instantiate Lemma~\ref{lem:approx_rec_eqn} with $\Delta = \hat{F}(m-1, \psi_{m-1}, {\cal Q}_{>m-1})$, to obtain the item set $\hat{\cal E}$ and its permutation $\hat{\pi}_{\hat{\cal E}} : \hat{\cal E} \to [|\hat{\cal E}|]$. The value $\hat{F}(m, \psi_m, {\cal Q}_{>m})$ is determined by minimizing $\hat{F}(m-1, \psi_{m-1}, {\cal Q}_{>m-1}) + w( \hat{\cal E} )$ over all conceivable states $(m-1, \psi_{m-1}, {\cal Q}_{>m-1})$ for which $(\hat{\cal E}, \hat{\pi}_{\hat{\cal E}}) \in \myextra_{ \eps,\Delta }[ \MyAbove{ (m, \psi_m, {\cal Q}_{>m}) }{ (m-1, \psi_{m-1}, {\cal Q}_{>m-1}) } ]$, noting that the latter condition can easily be tested.
\end{itemize}
It is important to emphasize that, when employing our auxiliary procedure above, we have no way of knowing a-priori whether the assumptions made in Lemma~\ref{lem:approx_rec_eqn} hold or not. Nevertheless, as we show in the next lemma, whose proof is provided in Appendix~\ref{app:proof_lem_approximate_F}, any profit requirement which is attainable by the original dynamic program $F$ can be attained up to factor $1-\eps$ by our approximate program $\hat{F}$. The precise relationship we establish between these functions can be formally stated as follows.

\begin{lemma} \label{lem:approximate-F}
Let $(m, \psi_m, {\cal Q}_{>m})$ be a state for which $F(m, \psi_m, {\cal Q}_{>m}) \leq W_T$. Then, $\hat{F}(m, \psi_m, {\cal Q}_{>m}) \leq F(m, \psi_m, {\cal Q}_{>m})$, where the makespan $\hat{F}(m, \psi_m, {\cal Q}_{>m})$  is attained by an item set $\hat{S}_m$ and a permutation $\hat{\pi}_{\hat{S}_m} : \hat{S}_m \to [| \hat{S}_m |]$ for which:
\begin{itemize}
    \item {\em Allowed and required items}: $\hat{S}_m \subseteq {\cal C}_{[1,m]}  \uplus {\cal Q}_{>m}$ and ${\cal Q}_{>m} \subseteq \hat{S}_m$.
    
    \item {\em Profit}: $\newobj(\hat{\pi}_{\hat{S}_m}) \geq (1 - \eps) \cdot \psi_m$. 
\end{itemize}
\end{lemma}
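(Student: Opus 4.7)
The plan is to prove both claims simultaneously by induction on $m$. For the base case $m = 0$, $\hat{F}(0, \psi_0, {\cal Q}_{>0})$ is defined to coincide with $F(0, \psi_0, {\cal Q}_{>0})$ via enumeration over permutations of ${\cal Q}_{>0}$, so when this value is finite the attaining solution $\hat{S}_0 = {\cal Q}_{>0}$ already has profit at least $\psi_0 \geq (1-\eps)\psi_0$, while trivially satisfying the allowed-item and required-item conditions since ${\cal C}_{[1,0]} = \emptyset$ and ${\cal Q}_{>0} \subseteq \hat{S}_0$.

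For the inductive step, fix a state $(m, \psi_m, {\cal Q}_{>m})$ with $F(m, \psi_m, {\cal Q}_{>m}) \leq W_T$ and let $(m-1, \psi_{m-1}, {\cal Q}_{>m-1}) = \mybest(m, \psi_m, {\cal Q}_{>m})$ be its makespan-optimal conceivable predecessor under $F$. The recursive characterization of $F$ yields $F(m-1, \psi_{m-1}, {\cal Q}_{>m-1}) \leq F(m, \psi_m, {\cal Q}_{>m}) \leq W_T$, so the inductive hypothesis produces $\hat{S}_{m-1}$ and $\hat{\pi}_{\hat{S}_{m-1}}$ with $w(\hat{S}_{m-1}) \leq \hat{F}(m-1, \psi_{m-1}, {\cal Q}_{>m-1}) \leq F(m-1, \psi_{m-1}, {\cal Q}_{>m-1})$, with profit $\newobj(\hat{\pi}_{\hat{S}_{m-1}}) \geq (1-\eps)\psi_{m-1}$, and satisfying the allowed/required conditions at level $m-1$. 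Setting $\Delta = \hat{F}(m-1, \psi_{m-1}, {\cal Q}_{>m-1})$, the assumption $\Delta \leq F(m-1, \psi_{m-1}, {\cal Q}_{>m-1})$ of Lemma~\ref{lem:approx_rec_eqn} is satisfied, and invoking the lemma produces $(\hat{\cal E}, \hat{\pi}_{\hat{\cal E}}) \in \myextra_{\eps,\Delta}[\MyAbove{(m, \psi_m, {\cal Q}_{>m})}{(m-1, \psi_{m-1}, {\cal Q}_{>m-1})}]$ with $w(\hat{\cal E}) \leq F(m, \psi_m, {\cal Q}_{>m}) - F(m-1, \psi_{m-1}, {\cal Q}_{>m-1})$. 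In particular $\mybest(m, \psi_m, {\cal Q}_{>m})$ is a valid candidate in the minimization defining $\hat{F}(m, \psi_m, {\cal Q}_{>m})$.

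I then define $\hat{S}_m = \hat{S}_{m-1} \cup \hat{\cal E}$ and $\hat{\pi}_{\hat{S}_m}$ by appending $\hat{\pi}_{\hat{\cal E}}$ after $\hat{\pi}_{\hat{S}_{m-1}}$. Disjointness of $\hat{S}_{m-1}$ and $\hat{\cal E}$, and the allowed/required conditions for $\hat{S}_m$, are routine consequences of the extra-item constraint $\hat{\cal E} \subseteq ({\cal C}_m \setminus {\cal Q}_{>m-1}) \uplus ({\cal Q}_{>m} \setminus {\cal Q}_{>m-1})$ combined with the conceivability requirement ${\cal Q}_{>m-1} \setminus {\cal C}_m \subseteq {\cal Q}_{>m}$ and the inductive inclusions $\hat{S}_{m-1} \subseteq {\cal C}_{[1,m-1]} \uplus {\cal Q}_{>m-1}$ and ${\cal Q}_{>m-1} \subseteq \hat{S}_{m-1}$. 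The makespan bound then chains directly: $\hat{F}(m, \psi_m, {\cal Q}_{>m}) \leq w(\hat{S}_m) = w(\hat{S}_{m-1}) + w(\hat{\cal E}) \leq \hat{F}(m-1, \psi_{m-1}, {\cal Q}_{>m-1}) + (F(m, \psi_m, {\cal Q}_{>m}) - F(m-1, \psi_{m-1}, {\cal Q}_{>m-1})) \leq F(m, \psi_m, {\cal Q}_{>m})$.

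The delicate step, which I expect to be the main point of care, is the profit bound. Items of $\hat{S}_{m-1}$ retain their completion times (and contributions) under concatenation since they form a prefix, whereas each item $i \in \hat{\cal E}$ acquires actual completion time $w(\hat{S}_{m-1}) + C_{\hat{\pi}_{\hat{\cal E}}}(i) \leq \Delta + C_{\hat{\pi}_{\hat{\cal E}}}(i)$. Because $\varphi_{\hat{\pi}_{\hat{\cal E}}}^{+\delta}(i)$ is non-increasing in $\delta$ (a smaller shift exposes at least as many feasible periods in the defining max), the true contribution of $i$ under $\hat{\pi}_{\hat{S}_m}$ is at least $\varphi_{\hat{\pi}_{\hat{\cal E}}}^{+\Delta}(i)$. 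Summing via the $(\eps, \Delta)$-satisfaction property guaranteed by Lemma~\ref{lem:approx_rec_eqn} yields a marginal contribution of at least $(1-\eps)(\psi_m - \psi_{m-1})$, which combines with the inductive bound $(1-\eps)\psi_{m-1}$ to give $\newobj(\hat{\pi}_{\hat{S}_m}) \geq (1-\eps)\psi_m$. Critically, $\eps$ does not compound across levels since each step only discounts the fresh marginal profit rather than the already-accumulated profit, which is what permits a single $(1-\eps)$ factor at the outermost level $M$.
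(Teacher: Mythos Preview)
Your inductive structure and the profit argument are both correct, and the observation that the $(1-\eps)$ factor does not compound across levels is exactly the point. However, there is a genuine gap in how you establish the ``attained by'' clause of the lemma.

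You construct $\hat{S}_m$ by appending $\hat{\cal E}$ (from Lemma~\ref{lem:approx_rec_eqn} applied at $\mybest(m,\psi_m,{\cal Q}_{>m})$) to the inductively given $\hat{S}_{m-1}$ at that same predecessor. This yields $w(\hat{S}_m) = \hat{F}(m-1,\psi_{m-1},{\cal Q}_{>m-1}) + w(\hat{\cal E})$, which is merely \emph{one} candidate value in the minimization defining $\hat{F}(m,\psi_m,{\cal Q}_{>m})$. Nothing rules out that $\hat{F}$'s minimum is achieved at a different conceivable predecessor, in which case $w(\hat{S}_m) > \hat{F}(m,\psi_m,{\cal Q}_{>m})$ and your pair does not \emph{attain} $\hat{F}$. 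You acknowledge this yourself when you write $\hat{F}(m,\psi_m,{\cal Q}_{>m}) \leq w(\hat{S}_m)$ rather than equality. Consequently the induction does not close: your hypothesis requires equality at level $m-1$, but you only re-establish an inequality at level $m$.

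The paper separates the two halves of the argument. For $\hat{F} \leq F$ it uses $\mybest$, exactly as you do. For the attaining solution it switches to the predecessor state at which $\hat{F}(m,\psi_m,{\cal Q}_{>m})$ is \emph{actually} minimized, applies the inductive hypothesis there, and concatenates; then $w(\hat{S}_{m-1}) = \Delta$ holds with equality (so your monotonicity-in-$\delta$ step is unnecessary) and $w(\hat{S}_m)$ equals $\hat{F}(m,\psi_m,{\cal Q}_{>m})$ by construction. Your argument does prove the weaker statement ``there exists $\hat{S}_m$ with $w(\hat{S}_m)\leq F(m,\psi_m,{\cal Q}_{>m})$ satisfying the item and profit conditions,'' which happens to suffice for the downstream application, but it does not prove the lemma as stated.
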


As previously mentioned, the primary intent of this section is to compute a permutation with a profit of at least $(1-2\eps) \cdot \newobj(\pi_{\mysparse})$. To argue that we have nearly achieved this objective, recall that Lemma~\ref{lem:structure-cross} proves the existence of a thin pair $(S, \pi) \in \mythin(M, \newobj(\pi_{\mysparse}), \emptyset)$ with $F(M,\newobj(\pi_{\mysparse}), \emptyset) \leq W_T$. Therefore, as an immediate consequence of Lemma~\ref{lem:approximate-F}, we infer that $\hat{F}(M,\newobj(\pi_{\mysparse}), \emptyset) \leq W_T$, which is attained by a permutation $\pi$ with a profit of $\newobj(\pi) \geq (1 - \eps) \cdot \newobj(\pi_{\mysparse})$.

\paragraph{The discrete program $\bs{\tilde{F}}$.} That said, the above-mentioned existence proof still does not correspond to a constructive algorithm, due to the continuity of the profit requirement parameter $\psi_m$. To discretize this parameter, similarly to Section~\ref{subsec:discretize_DP}, we restrict $\psi_m$ to a finite set of values, ${\cal D}_{\psi} = \{  d \cdot  \frac{ \eps p_{\max} }{ 2n } : d \in [\frac{2n^2}{\eps}]_0 \}$. In turn, we use  $\tilde{F}(m, \psi_m, {\cal Q}_{>m})$ to denote the resulting dynamic program over the discretized set of states, whose recursive equations are identical to those of $\hat{F}$, except for instantiating Lemma~\ref{lem:approx_rec_eqn} with $\Delta = \tilde{F}(m-1, \psi_{m-1}, {\cal Q}_{>m-1})$. 

We conclude our analysis by lower-bounding the best-possible profit achievable through this dynamic program, showing that it indeed matches that of the  permutation $\pi_{\mysparse}$ up to $\eps$-related terms. To avoid redundancy, we omit the corresponding proof, as it is nearly identical to that of Lemma~\ref{lem:descretize_psi}.

\begin{lemma} \label{lem:discretize-hat-F}
There exists a value $\tilde{\psi} \in {\cal D}_{\psi}$ such that $\tilde{\psi} \geq (1 - \eps) \cdot \newobj(\pi_{\mysparse})$ and such that $\tilde{F}(M, \tilde{\psi}, \emptyset) \leq W_T$. This makespan is attained by an item set $\tilde{S}$ and a permutation $\tilde{\pi}_{\tilde{S}}$ whose profit is $\newobj(\tilde{\pi}_{\tilde{S}}) \geq (1 - \eps) \cdot \tilde{\psi} \geq (1 - 2 \eps) \cdot \newobj(\pi_{\mysparse})$.
\end{lemma}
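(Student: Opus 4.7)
The plan is to mirror the argument of Lemma~\ref{lem:descretize_psi} almost verbatim, now layered on top of Lemma~\ref{lem:approximate-F}. First I would chain together what is already available: by Lemma~\ref{lem:structure-cross} there is a thin pair in $\mythin(M, \newobj(\pi_{\mysparse}), \emptyset)$ with makespan at most $W_T$, so $F(M, \newobj(\pi_{\mysparse}), \emptyset) \leq W_T$. Lemma~\ref{lem:approximate-F} then converts this into $\hat{F}(M, \newobj(\pi_{\mysparse}), \emptyset) \leq W_T$, attained by a permutation whose profit is at least $(1-\eps) \cdot \newobj(\pi_{\mysparse})$. The remaining task is to transfer this reachability statement from $\hat{F}$ to $\tilde{F}$ at a nearby grid value and then invoke the profit guarantee that $\tilde{F}$ inherits from $\hat{F}$.

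Next I would set $\tilde{\psi}$ to be the largest element of ${\cal D}_{\psi}$ with $\tilde{\psi} \leq \newobj(\pi_{\mysparse})$, which gives $\tilde{\psi} \geq \newobj(\pi_{\mysparse}) - \frac{\eps p_{\max}}{2n}$ by the grid spacing. To derive $\tilde{\psi} \geq (1-\eps)\newobj(\pi_{\mysparse})$, I would combine two bounds: (i) $\newobj(\pi^*) \geq p_{\max}$, since a single item attaining $p_{\max}$ can always be packed at its best-fitting time period; and (ii) $\newobj(\pi_{\mysparse}) \geq (1-\eps)\newobj(\pi^*)$ from Lemma~\ref{lem:structure-cross}. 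Together these give $\newobj(\pi_{\mysparse}) \geq (1-\eps) p_{\max} \geq p_{\max}/2$, so the additive gap $\frac{\eps p_{\max}}{2n}$ is comfortably smaller than $\eps \cdot \newobj(\pi_{\mysparse})$.

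The core of the proof is an induction on $m$ showing: for every state $(m, \psi_m, {\cal Q}_{>m})$ with $\hat{F}(m, \psi_m, {\cal Q}_{>m}) \leq W_T$, there is a grid value $\tilde{\psi}_m \in {\cal D}_{\psi}$ with $\tilde{\psi}_m \geq \psi_m - m \cdot \frac{\eps p_{\max}}{2n}$ and $\tilde{F}(m, \tilde{\psi}_m, {\cal Q}_{>m}) \leq \hat{F}(m, \psi_m, {\cal Q}_{>m})$. The base case $m=0$ is immediate since $\tilde{F}$ and $\hat{F}$ coincide there. For the inductive step, I take the predecessor $(m-1, \psi_{m-1}, {\cal Q}_{>m-1})$ realizing $\hat{F}(m, \psi_m, {\cal Q}_{>m})$, apply the inductive hypothesis to obtain a nearby grid state $(m-1, \tilde{\psi}_{m-1}, {\cal Q}_{>m-1})$ with $\tilde{\psi}_{m-1} \geq \psi_{m-1} - (m-1)\cdot \frac{\eps p_{\max}}{2n}$, and then reuse the item set and permutation $(\hat{\cal E}, \hat{\pi}_{\hat{\cal E}})$ produced by Lemma~\ref{lem:approx_rec_eqn}. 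This pair still $(\eps, \Delta)$-satisfies constraint~2 with respect to the weaker target $\tilde{\psi}_m$, where $\tilde{\psi}_m$ is chosen as the largest grid value at most $\tilde{\psi}_{m-1} + \sum_{i \in \hat{\cal E}} \varphi_{\hat{\pi}_{\hat{\cal E}}}^{+\Delta}(i)$; by construction $\tilde{\psi}_m \geq \psi_m - m\cdot \frac{\eps p_{\max}}{2n}$, and the corresponding makespan satisfies the required inequality.

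Finally, applying the induction at $m=M$, $\psi_M = \newobj(\pi_{\mysparse})$, and ${\cal Q}_{>M} = \emptyset$ yields $\tilde{F}(M, \tilde{\psi}, \emptyset) \leq W_T$, using $M \leq n$ together with $\newobj(\pi_{\mysparse}) \geq (1-\eps)p_{\max}$ to absorb the accumulated additive loss $M \cdot \frac{\eps p_{\max}}{2n}$ into the multiplicative factor $1-\eps$. The profit bound $\newobj(\tilde{\pi}_{\tilde{S}}) \geq (1-\eps)\tilde{\psi}$ follows from the direct analogue of Lemma~\ref{lem:approximate-F} for $\tilde{F}$, since $\tilde{F}$ also uses the $(\eps, \Delta)$-approximate procedure of Lemma~\ref{lem:approx_rec_eqn} inside its recursion, and the $(\eps, \Delta)$ slack telescopes over the $M$ levels exactly as in the $\hat{F}$ analysis. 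Chaining with $\tilde{\psi} \geq (1-\eps)\newobj(\pi_{\mysparse})$ gives the final $(1-2\eps)$-bound. The main technical point, and essentially the only one that requires care, is the inductive bookkeeping of the additive grid error and the verification that the same $(\hat{\cal E}, \hat{\pi}_{\hat{\cal E}})$ remains feasible for the slightly relaxed discretized target; everything else is mechanical.
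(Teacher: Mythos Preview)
Your high-level plan—mirroring Lemma~\ref{lem:descretize_psi} and layering it over Lemma~\ref{lem:approximate-F}—is exactly what the paper intends; the paper in fact omits the proof entirely, citing it as ``nearly identical to that of Lemma~\ref{lem:descretize_psi}.'' There is, however, one genuine gap in your inductive step. You propose to ``reuse the item set and permutation $(\hat{\cal E}, \hat{\pi}_{\hat{\cal E}})$ produced by Lemma~\ref{lem:approx_rec_eqn}'' in $\hat{F}$'s recursion, and verify that this pair still $(\eps,\Delta)$-satisfies constraint~2 for the relaxed target $\tilde{\psi}_m$. But the $\tilde{F}$ of Section~\ref{subsec:DP_general} is \emph{not} defined as a minimum over thin pairs (unlike the $\tilde{F}$ of Section~\ref{subsec:discretize_DP}, which is literally $F$ restricted to grid profits); here $\tilde{F}$'s recursion makes its \emph{own} fresh call to the black-box of Lemma~\ref{lem:approx_rec_eqn}, with $\Delta=\tilde{F}(m-1,\tilde{\psi}_{m-1},{\cal Q}_{>m-1})$, and you have no direct control over what that call returns. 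Knowing that your reused pair lies in $\myextra_{\eps,\Delta}[\cdot]$ does not by itself upper-bound $\tilde{F}(m,\tilde{\psi}_m,{\cal Q}_{>m})$.

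The clean fix is to bypass $\hat{F}$ and trace $F$'s optimal trajectory instead, exactly as the proof of Lemma~\ref{lem:approximate-F} does: take the $\mybest$ predecessors of $F$, round their profit parameters to the grid with the additive slack of Lemma~\ref{lem:descretize_psi}, and then argue that when $\tilde{F}$ invokes Lemma~\ref{lem:approx_rec_eqn} at the rounded predecessor, the mechanics of that lemma's proof (Appendix~\ref{app:proof_lem_approx_rec_eqn}) still go through—because $F$'s genuine extra set ${\cal E}_m^*$ satisfies the \emph{exact} constraint~2 with the smaller shift and the smaller profit gap $\tilde{\psi}_m-\tilde{\psi}_{m-1}\le \psi_m^*-\psi_{m-1}^*$, so the internal binary search terminates at some $\omega_{\min}\le w({\cal E}_m^*)$. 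Going through $F$ rather than $\hat{F}$ also removes the stray $(1-\eps)$ factor that would otherwise leak into your claimed inductive bound $\tilde{\psi}_m\ge \psi_m - m\cdot\tfrac{\eps p_{\max}}{2n}$, since $\hat{F}$'s extra set only satisfies the $(\eps,\Delta)$-relaxed profit constraint, not the exact one.
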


\paragraph{Running time.} We first observe that the function $\tilde{F}(m, \psi_m, {\cal Q}_{>m})$ is being evaluated over $O(n^{O(\frac{1}{\eps}\log M)} \cdot |{\cal I}|^{O(1)})$  possible states. To verify this claim, note that there are $O(M) = O(|{\cal I}|)$ choices for the cluster index $m$, and that the discretized profit parameter $\psi_m$ takes values in ${\cal D}_{\psi}$, with $|{\cal D}_{\psi}| = O(\frac{n^2}{\eps})$. In addition, the set of crossing items ${\cal Q}_{>m}$ is of cardinality at most $\frac{\lceil \log_2 M \rceil}{\eps}$, implying that there are only $O(n^{O(\frac{1}{\eps}\log M)})$ subsets to consider for this parameter. Now, evaluating $\tilde{F}(m, \psi_m, {\cal Q}_{>m})$ for a given state depends on its type:
\begin{itemize}
    \item {\em Terminal states ($m=0$)}: As previously explained, such states are handled by enumerating over all permutations of ${\cal Q}_{>0}$ in time $O( |{\cal I}|^{O((\frac{1}{\epsilon}\log |{\cal I}|)^{O(1)})} )$.
    
    \item {\em General states $(m \in [M])$}: Here, each state $(m-1, \psi_{m-1}, {\cal Q}_{>m-1})$ would involve a single application of our auxiliary procedure, running in $O((nT)^{O(\frac{1}{\eps^6} \cdot( \log n + \log M))} \cdot |{\cal I}|^{O(1)})$ according to Lemma~\ref{lem:approx_rec_eqn}. As argued above, there are only $O(n^{O(\frac{1}{\eps}\log M)} \cdot |{\cal I}|^{O(1)})$ states of the form $(m-1, \psi_{m-1}, {\cal Q}_{>m-1})$ to be considered.
\end{itemize}
Overall, we incur a running time of $O( |{\cal I}|^{O((\frac{1}{\epsilon}\log |{\cal I}|)^{O(1)})} )$, as stated in Theorem~\ref{thm:qptas2}.

\bibliographystyle{plainnat}
\bibliography{BIB-GIK}

\appendix
\addtocontents{toc}{\setcounter{tocdepth}{1}}

\section{Additional Proofs from Section~\ref{sec:2-approx}}

\subsection{Proof of Claim~\ref{clm:feasible_subproblem}} \label{app:proof_clm_feasible_subproblem}

We first show that $(\tilde{S}^+, \tilde{\pi}^+)$ is indeed a bulky pair. For this purpose, since $(\tilde{S}, \tilde{\pi} )$ is bulky, it suffices to explain why each item $i \in Q$ is necessarily $k_i$-heavy, where $k_i$ is the unique index for which $C_{ \tilde{\pi}^+ }( i ) \in {\cal I}_{k_i}$. This claim follows by noting that, for such items, the way we construct $(\tilde{S}^+, \tilde{\pi}^+)$ leads to a completion time of 
\begin{eqnarray}
C_{ \tilde{\pi}^+ }( i ) & = & w ( \tilde{S} ) + \sum_{j \in Q : \pi(j) \leq \pi(i)} w_j \nonumber \\
& < & w ( \hat{S} ) + \sum_{j \in Q : \pi(j) \leq \pi(i)} w_j \nonumber \\
& = & C_{ \pi }( i ) \ . \label{eqn:proof_bulky_comp}
\end{eqnarray}
Recalling that $Q = \{ i \in S : C_{ \pi }( i ) \in {\cal I}_k \}$, we have just shown that $k_i \leq k$, and since item $i$ is $k$-heavy due to the bulkiness of $(S,\pi)$, it is $k_i$-heavy as well.

We proceed by showing that $(\tilde{S}^+, \tilde{\pi}^+)$ satisfies conditions 1-3:
\begin{enumerate}
\item {\em Top index}: $\mytop(\tilde{S}^+, \tilde{\pi}^+) \leq k$. To verify this property, note that when $Q = \emptyset$, we clearly have $w( \tilde{S}^+ ) = w( \tilde{S} ) < w( \hat{S} ) = w(S)$, and therefore, $\mytop(\tilde{S}^+, \tilde{\pi}^+) \leq \mytop(S, \pi ) \leq k$. In the opposite case, where $Q \neq \emptyset$, the makespans of both $\tilde{S}^+$ and $S$ are attained by the respective completion times of precisely the same item in $Q$. However, by inequality~\eqref{eqn:proof_bulky_comp}, we have $C_{ \tilde{\pi}^+ }( i ) \leq C_{ \pi }( i )$ for every $i \in Q$, and it follows that $\mytop(\tilde{S}^+, \tilde{\pi}^+) \leq \mytop(S, \pi ) \leq k$.

\item {\em Total profit}: $\newobj( \tilde{\pi}^+ ) 
 \geq \psi_k$. Along the same lines, since $C_{ \tilde{\pi}^+ }( i ) \leq C_{ \pi }( i )$ for every $i \in Q$, it follows that $\varphi_{\tilde{\pi}^+ }( i ) \geq \varphi_{\pi}( i )$ for such items. Thus,
\begin{eqnarray*}
\newobj \left( \tilde{\pi}^+ \right) &=& \sum_{i \in \tilde{S}} \varphi_{\tilde{\pi}^+}(i) + \sum_{i \in Q} \varphi_{\tilde{\pi}^+}(i) \\
& = & \sum_{i \in \tilde{S}} \varphi_{\tilde{\pi}}(i) + \sum_{i \in Q} \varphi_{\tilde{\pi}^+}(i) \\
& \geq & \psi_{k-1} + \sum_{i \in Q}\varphi_{{\pi}}(i) \\
& = & \left[ \psi_k - \sum_{i \in Q} \varphi_{ \pi }( i ) \right]^+ + \sum_{i \in Q}\varphi_{{\pi}}(i) \\
& \geq & \psi_k \ .
\end{eqnarray*}
Here, the second equality holds since the permutations $\tilde{\pi}^+$ and $\tilde{\pi}$ are identical when restricted to items in $\tilde{S}$. The first inequality follows by recalling that $(\tilde{S},\tilde{\pi}) \in \mybulky(k-1, \psi_{k-1}, {\cal Q}_{k-1})$, meaning in particular that $\sum_{i \in \tilde{S}} \varphi_{\tilde{\pi}}(i) = \newobj( \tilde{\pi} ) \geq \psi_{k-1}$.

\item {\em Core}: $\mycore(\tilde{S}^+) = {\cal Q}_k$. One can easily verify that, for any pair of disjoint sets of items, $S_1$ and $S_2$, we have $\mycore(S_1 \cup S_2) = \mycore( \mycore(S_1) \cup \mycore(S_2))$. Therefore, 
\begin{eqnarray*}
\mycore ( {\tilde S}^+ ) & = & \mycore ( \tilde{S} \cup Q ) \\
& = & \mycore ( \mycore ( \tilde{S} ) \cup \mycore(Q) ) \\
& = & \mycore ( \mycore(S \setminus Q) \cup \mycore( Q)) \\
& = & \mycore(S) \\
& = & {\cal Q}_k \ ,
\end{eqnarray*}
where the second equality follows by noting that $\tilde{S}$ and $Q$ are disjoint, and similarly, the fourth equality holds since $S \setminus Q$ and $Q$ are clearly disjoint.
\end{enumerate}

\subsection{Proof of Lemma~\ref{lem:descretize_psi}} \label{app: proof_lem_descretize_psi}

Let us consider the sequence of states traversed by the dynamic program $F$, as it arrives to the optimal state $(K, \psi_K^*, {\cal Q}^*_K)$; the latter is ``optimal'' in the sense that $\psi_K^* = \psi^*$ and $F(K, \psi_K^*, {\cal Q}^*_K) < \infty$. This sequence, along with the specific parameters and the bulky pair corresponding to each state will be designated by: 
\begin{eqnarray*}
& & \MyAbove{ (0, \psi^*_0, {\cal Q}^*_0) }{ (S^*_0, \pi_{S^*_0}) } \xrightarrow[Q_1^*, \pi_{ Q_1^* }]{} \MyAbove{ (1, \psi^*_1, {\cal Q}^*_1) }{ (S^*_1, \pi_{S^*_1}) } \xrightarrow[Q_2^*, \pi_{ Q_2^* }]{} \MyAbove{ (2, \psi^*_2, {\cal Q}^*_2) }{ (S^*_2, \pi_{S^*_2}) } \xrightarrow[\cdots\cdots]{} \cdots \xrightarrow[Q_k^*, \pi_{ Q_k^* }]{} \MyAbove{ (k, \psi^*_k, {\cal Q}^*_k) }{ (S^*_k, \pi_{S^*_k}) } \\
& & \qquad \qquad \qquad \qquad \qquad \qquad \qquad \qquad \qquad \qquad \qquad \qquad  \xrightarrow[\cdots\cdots]{} \cdots \xrightarrow[Q_K^*, \pi_{ Q_K^* }]{} \MyAbove{ (K, \psi^*_K, {\cal Q}^*_K) }{ (S^*_K, \pi_{S^*_K}) } \ .
\end{eqnarray*}
To better understand this illustration, we note that for every $k \in [K]$, the collection of items $Q^*_k$ and their internal permutation $\pi_{Q^*_k}$ are precisely those by which the dynamic program $F$ transitions from state $(k-1, \psi^*_{k-1}, {\cal Q}^*_{k-1})$ to state $(k, \psi^*_k, {\cal Q}^*_k)$. Consequently, the resulting item set is $S^*_k = S^*_{k-1} \uplus Q^*_k$, whereas the resulting permutation $\pi_{S^*_k}$ is obtained by appending $\pi_{Q^*_k}$ to $\pi_{S^*_{k-1}}$. In addition, for the starting state, we have $\psi^*_0 = 0$ and ${\cal Q}^*_0  = \emptyset$. 

To prove the desired claim, we argue that one feasible sequence of states that can be traversed by the approximate program $\tilde{F}$ is obtained when each profit parameter $\psi_k^*$ is substituted by $\tilde{\psi}_k = \lceil \psi_k^* - \min\{k,|S_k^*|\} \cdot \frac{ \eps p_{\max} }{ n }  \rceil_{ {\cal D}_{\psi} }$. Here, the operator $\lceil \cdot \rceil_{ {\cal D}_{\psi} }$ rounds its argument up to the nearest value in ${\cal D}_{\psi}$. In other words, as shown in Claim~\ref{clm:tilde_psi_feasible} below, we prove that 
\begin{eqnarray*}
& & \MyAbove{ (0, \tilde{\psi}_0, {\cal Q}^*_0) }{ (S^*_0, \pi_{S^*_0}) } \xrightarrow[Q_1^*, \pi_{ Q_1^* }]{} \MyAbove{ (1, \tilde{\psi}_1, {\cal Q}^*_1) }{ (S^*_1, \pi_{S^*_1}) } \xrightarrow[Q_2^*, \pi_{ Q_2^* }]{} \MyAbove{ (2, \tilde{\psi}_2, {\cal Q}^*_2) }{ (S^*_2, \pi_{S^*_2}) } \xrightarrow[\cdots\cdots]{} \cdots \xrightarrow[Q_k^*, \pi_{ Q_k^* }]{} \MyAbove{ (k, \tilde{\psi}_k, {\cal Q}^*_k) }{ (S^*_k, \pi_{S^*_k}) } \\
& & \qquad \qquad \qquad \qquad \qquad \qquad \qquad \qquad \qquad \qquad \qquad \qquad  \xrightarrow[\cdots\cdots]{} \cdots \xrightarrow[Q_K^*, \pi_{ Q_K^* }]{} \MyAbove{ (K, \tilde{\psi}_K, {\cal Q}^*_K) }{ (S^*_K, \pi_{S^*_K}) } 
\end{eqnarray*}
forms a feasible sequence of states, action parameters, and bulky pairs for $\tilde{F}$. That is, we have $(S^*_k, \pi_{S^*_k}) \in \widetilde{\mybulky}(k, \tilde{\psi}_k, {\cal Q}^*_k)$, for every $k \in [K]_0$. In light of this result, we conclude in particular that  
$\tilde{F}(K, \tilde \psi_K, {\cal Q}^*_K) < \infty$ with
\begin{eqnarray*}
\tilde{\psi}_K & = & \left\lceil \psi_K^* - \min\{K,|S_K^*|\} \cdot \frac{ \eps p_{\max} }{ n }  \right\rceil_{ {\cal D}_{\psi} } \\
& \geq & \psi^* - \eps p_{\max} \\
& \geq & (1 - \eps) \cdot \psi^* \ .
\end{eqnarray*}
Here, the first inequality holds since $\psi_K^* = \psi^*$ and $|S_K^*| \leq n$. To understand the second inequality, note that for every item $i \in [n]$, the pair that consists of introducing this item and nothing more is necessarily bulky. Indeed, as a result, the completion time of item $i$ would fall within the interval ${\cal I}_{k_i}$, where $k_i$ is the unique integer for which $(1 + \eps)^{k_i-1} < w_i \leq (1 + \eps)^{k_i}$. However, since $w_i > (1 + \eps)^{k_i-1} \geq \eps^2 \cdot (1+ \eps)^{k_i}$ for $\eps \leq \frac{1}{2}$, it follows that item $i$ is $k$-heavy, implying in turn that the pair in question is bulky. Now, noting that this pair guarantees a profit of $\max \{ p_{it}: t \in [T] \text{ and } w_i \leq W_t \}$, any such expression provides a lower bound on $\psi^*$, meaning that $\psi^* \geq \max \{ p_{it}: i \in [n], t \in [T], \text{ and } w_i \leq W_t \} = p_{\max}$.

\begin{claim} \label{clm:tilde_psi_feasible}
$(S^*_k, \pi_{S^*_k}) \in \widetilde{\mybulky}(k, \tilde{\psi}_k, {\cal Q}^*_k)$, for every $k \in [K]_0$. 
\end{claim}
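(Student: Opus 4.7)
The plan is to reduce the claim to a single per-level rounding estimate. Membership in $\widetilde{\mybulky}(k, \tilde{\psi}_k, {\cal Q}^*_k)$ requires the pair $(S^*_k, \pi_{S^*_k})$ to be bulky, to satisfy $\mytop(S^*_k, \pi_{S^*_k}) \leq k$ and $\mycore(S^*_k) = {\cal Q}^*_k$, and to meet the profit inequality $\newobj(\pi_{S^*_k}) \geq \tilde{\psi}_k$, with the implicit restriction $\tilde{\psi}_k \in {\cal D}_\psi$. I would first observe that bulkiness, the top-index bound, and the core identity are inherited at no cost from $(S^*_k, \pi_{S^*_k}) \in \mybulky(k, \psi^*_k, {\cal Q}^*_k)$, which was already established along the original traversal of $F$; none of these three properties refers to the profit target. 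The discretization $\tilde{\psi}_k \in {\cal D}_\psi$ is immediate from the ceiling in the definition of $\tilde{\psi}_k$, and does not exceed the endpoint $n p_{\max}$ of ${\cal D}_\psi$ since $\tilde{\psi}_k \leq \psi^*_k \leq \newobj(\pi_{S^*_k}) \leq n p_{\max}$.

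The entire claim therefore reduces to proving $\tilde{\psi}_k \leq \psi^*_k$, after which the profit condition follows from $\newobj(\pi_{S^*_k}) \geq \psi^*_k$. I would split on the value of the offset $\min\{k, |S^*_k|\}$. When $\min\{k, |S^*_k|\} \geq 1$, at least one grid unit $\frac{\eps p_{\max}}{n}$ is subtracted from $\psi^*_k$ before the ceiling, whereas rounding up into ${\cal D}_\psi$ adds back at most one grid unit, which yields $\tilde{\psi}_k \leq \psi^*_k$. In the boundary case $\min\{k, |S^*_k|\} = 0$, either $k = 0$, in which case $\tilde{\psi}_0 = \lceil 0 \rceil_{{\cal D}_\psi} = 0 = \psi^*_0$, or $|S^*_k| = 0$, which forces $\newobj(\pi_{S^*_k}) = 0$ and hence $\psi^*_k = 0 = \tilde{\psi}_k$.

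I do not expect a substantial obstacle. The only calibration point is the observation that the offset $\min\{k, |S^*_k|\} \cdot \frac{\eps p_{\max}}{n}$ in the definition of $\tilde{\psi}_k$ is chosen precisely to absorb the grid-sized error added by the ceiling at level $k$. No cross-level bookkeeping is needed, because the claim is a per-$k$ containment statement; the non-emptiness of $\widetilde{\mybulky}(k, \tilde{\psi}_k, {\cal Q}^*_k)$ at each level will in turn yield the intended finite bound $\tilde{F}(K, \tilde{\psi}_K, {\cal Q}^*_K) \leq w(S^*_K) < \infty$ from the definition of $\tilde{F}$ as a minimum over $\widetilde{\mybulky}$.
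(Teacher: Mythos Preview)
Your proof is correct and takes a genuinely different route from the paper. The paper argues by induction on $k$, splitting into the cases $Q_k^* = \emptyset$ and $Q_k^* \neq \emptyset$; in the non-empty case it verifies the per-step transition inequality $\tilde{\psi}_{k-1} + \sum_{i \in Q_k^*} \varphi_{\pi_{S_k^*}}(i) \geq \tilde{\psi}_k$ and then appeals to the induction hypothesis. You instead bypass induction entirely: since bulkiness, the top-index bound, and the core identity are profit-independent and already hold for $(S_k^*, \pi_{S_k^*}) \in \mybulky(k, \psi_k^*, {\cal Q}_k^*)$, the whole claim collapses to the single per-level estimate $\tilde{\psi}_k \leq \psi_k^*$, which you dispatch with a one-line rounding argument (the offset $\min\{k,|S_k^*|\} \cdot \frac{\eps p_{\max}}{n}$ absorbs the at-most-one-grid-unit overshoot of $\lceil \cdot \rceil_{{\cal D}_\psi}$, and the boundary case $\min\{k,|S_k^*|\}=0$ forces $\psi_k^*=0$). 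Your approach is shorter and isolates exactly the arithmetic fact that matters. The paper's approach, by also certifying the transition inequalities level-by-level, makes explicit that the entire sequence $(k,\tilde{\psi}_k,{\cal Q}_k^*)$ with actions $(Q_k^*,\pi_{Q_k^*})$ is a valid traversal of the discretized recursion, which connects slightly more transparently to how the algorithm actually evaluates $\tilde{F}$; your final sentence correctly notes that the same conclusion follows from the definition of $\tilde{F}$ as a minimum over $\widetilde{\mybulky}$.
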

\begin{proof}
We first note that the parameter $\tilde \psi_k$ is indeed well-defined for all $k \in [K]_0$, since $\tilde{\psi}_k \leq \lceil \psi^*_K \rceil_{ {\cal D}_{\psi} } \leq n p_{\max} = \max {\cal D}_\psi$. Given this observation, we proceed to prove the claim by induction on $k$.

In the base case of $k=0$, the claim trivially holds since $\tilde{\psi}_0 = 0$, ${\cal Q}^*_0  = \emptyset$, $S^*_0 = \emptyset$, and $\pi_{S^*_0}$ is the empty permutation. In the general case of $k \geq 1$, to argue that $(S^*_k, \pi_{S^*_k}) \in \tilde{B}(k, \tilde{\psi}_k, {\cal Q}^*_k)$, we consider two scenarios, depending on whether $Q_k^*$ is empty or not:
\begin{itemize}
\item {\em Case 1: $Q_k^* = \emptyset$.} We first observe that, since $S^*_k = S^*_{k-1} \cup Q^*_{k}$, we have $S^*_k = S^*_{k-1}$ by the case hypothesis,  implying in turn that $\psi^*_k = \psi^*_{k-1}$ and ${\cal Q}^*_{k} = {\cal Q}^*_{k-1}$. Consequently, 
\begin{eqnarray*}
\tilde \psi_k & = & \left\lceil \psi_k^* - \min\{k,|S_k^*|\} \cdot \frac{ \eps p_{\max} }{ n }  \right\rceil_{ {\cal D}_{\psi} } \\
& \leq & \left\lceil \psi_{k-1}^* - \min\{k-1,|S_{k-1}^*|\} \cdot \frac{ \eps p_{\max} }{ n } \right\rceil_{ {\cal D}_{\psi} } \\
& = & \tilde \psi_{k-1} \ .
\end{eqnarray*}
and it follows that $\widetilde{\mybulky}(k, \tilde{\psi}_k, {\cal Q}^*_k) \supseteq  \widetilde{\mybulky}(k, \tilde{\psi}_{k-1}, {\cal Q}^*_{k-1}) \supseteq \widetilde{\mybulky}(k-1, \tilde{\psi}_{k-1}, {\cal Q}^*_{k-1})$,
where the first inclusion holds since $\tilde \psi_k \leq \tilde \psi_{k-1}$ and ${\cal Q}_k^* = {\cal Q}^*_{k-1}$. Thus, $(S^*_k, \pi_{S^*_k}) = (S^*_{k-1}, \pi_{S^*_{k-1}}) \in \widetilde{\mybulky}({k-1}, \tilde{\psi}_{k-1}, {\cal Q}^*_{k-1}) \subseteq \widetilde{\mybulky}(k, \tilde{\psi}_k, {\cal Q}^*_k)$, where the middle transition is precisely our induction hypothesis.

\item {\em Case 2: $Q_k^* \neq \emptyset$.} In this case, $|S^*_{k}| = |S^*_{k-1}| + |Q_k^*| \geq |S^*_{k-1}|+1$, as $S^*_k$ is the disjoint union of $S^*_{k-1}$ and $Q_k^*$. By the inductive hypothesis, $(S^*_{k-1}, \pi_{S^*_{k-1}}) \in \widetilde{\mybulky}(k-1, \tilde \psi_{k-1}, {\cal Q}^*_{k-1}) $, meaning that for the purpose of proving $(S^*_k, \pi_{S^*_k}) \in \widetilde{\mybulky}(k, \tilde{\psi}_k, {\cal Q}^*_k)$, it suffices to show that $\tilde \psi_{k-1} + \sum_{i \in Q^*_{k}} \varphi_{\pi_{S^*_{k}}}(i) \geq \tilde \psi_{k}$. We establish the latter inequality by noting that 
\begin{eqnarray*}
\tilde \psi_{k-1} + \sum_{i \in Q^*_{k}}  \varphi_{\pi_{S^*_{k}}}(i) & = & \tilde \psi_{k-1} + \psi^*_{k} - \psi^*_{k-1} \\
& \geq & \left( \psi^*_{k-1} - \min \{k-1, |S_{k-1}^*| \} \cdot \frac{\eps p_{\max}}{n} \right) + \psi^*_{k} - \psi^*_{k-1} \\
& \geq &  \left\lceil \psi_{k}^* - \min\{k,|S_{k}^*|\} \cdot \frac{ \eps p_{\max} }{ n }  \right\rceil_{ {\cal D}_{\psi} } \\
& = & \tilde \psi_{k}\ ,
\end{eqnarray*}
where the first equality holds since $\psi^*_{k} = \psi^*_{k-1} + \sum_{i \in Q^*_{k}}  \varphi_{\pi_{S^*_{k}}}(i)$, by the optimality of $\psi^*_{k}$.
\end{itemize}
\end{proof}

\subsection{Proof of Lemma~\ref{lem:ip-to-perm}} \label{app:proof_lem_ip-to-perm}

In order to construct the required permutation, for every $k \in [K-1]$, let $\pi_k$ be an arbitrary permutation of the items that were assigned by $x$ to bucket ${\cal B}_k$, i.e., $\{ i \in [n] : x_{ik} = 1 \}$. In addition, let $\pi_{-}$ be an arbitrary permutation of the remaining items, i.e., those that were not to assigned to any bucket. The permutation $\pi_x$ is now defined by concatenating these permutations in order of increasing index, with $\pi_{-}$ appended at the end, namely, $\pi_x = \langle \pi_1, \ldots, \pi_{K-1}, \pi_{-} \rangle$. It is easy to verify that this construction can be implemented in $O(nK)$ time.

To obtain a lower bound of $\sum_{i \in [n]} \sum_{k \in [K-1]: i \in L_{k+1}} q_{ik} x_{ik}$ on the profit of this permutation, $\newobj( \pi_x ) = \sum_{i \in [n]} \varphi_{ \pi_x }( i )$, note that since each item is assigned to at most one bucket, it suffices to show that for every $i \in [n]$ and $k \in [K-1]$ with $x_{ik} = 1$, we necessarily have $\varphi_{ \pi_x }( i ) \geq q_{ik}$. For this purpose, we observe that
\begin{eqnarray*}
\varphi_{ \pi_x }( i ) & = & \max \left\{ p_{i,t} : t \in [T+1] \text{ and } W_t \geq C_{ \pi_x }( i ) \right\} \\
& \geq & \max \left\{ p_{i,t} : t \in [T+1] \text{ and } W_t \geq (1 + \eps)^k \right\} \\
& = & q_{ik} \ ,
\end{eqnarray*}
where the inequality above holds since $C_{ \pi_x }( i ) \leq (1 + \eps)^k$. Indeed, this bound on the completion time of item $i$ can be derived by observing that every item $j$ that appears before $i$ in the permutation $\pi_x$ (i.e., $\pi_x(j) < \pi_x(i)$) was assigned by the solution $x$ to one of the buckets ${\cal B}_1, \ldots, {\cal B}_k$, and therefore,
\begin{eqnarray*}
C_{ \pi_x }( i ) & = & \sum_{j \in [n] : \pi_x(j) \leq \pi_x(i)} w_j \\
& \leq & \sum_{\kappa \in [k]} \sum_{j \in L_{\kappa+1}} w_j x_{j\kappa} \\
&\leq & \sum_{\kappa \in [k]} \mycap( {\cal B}_{\kappa} ) \\
& = & \sum_{\kappa \in [k]} \left( (1 + \eps)^{\kappa} - (1 + \eps)^{ \kappa-1 } \right) \\
& \leq & (1 + \eps)^k \ ,
\end{eqnarray*}
where the second inequality follows from the second constraint of~\eqref{eqn:IP-formulation}.

\section{Additional Proofs from Section~\ref{sec:qptas-one}}

\subsection{Proof of Lemma~\ref{lem:union-chain}} \label{app:proof_lem_union-chain}

Clearly, ${\cal R}\cup{\cal G}$ is a chain for ${\cal I}$, as each of ${\cal R}$ and ${\cal G}$ is such a chain by itself. To verify the feasibility of ${\cal R}\cup{\cal G}$, note that for any time period $t \in [T]$, since ${\cal R}$ is feasible for ${\cal I}^{-{\cal G}}$ we have
\begin{eqnarray*}
w(R_t) & \leq & W^{-{\cal G}}_t \\
& = & \min_{t \leq \tau\leq T} \left( W_{\tau} -w({G}_{\tau}) \right) \\
& \leq & W_t - w(G_t) \ .
\end{eqnarray*}
By recalling that $G_1 \subseteq \cdots \subseteq G_T$ and $R_1 \subseteq \cdots \subseteq R_T \subseteq {\cal N}^{-{\cal G}}={\cal N}\setminus G_T$, it follows in particular that $G_t$ and $R_t$ are disjoint, implying in turn that $w( R_t \cup G_t ) = w( R_t ) + w( G_t ) \leq W_t$ as required.

Now, to account for the profit of ${\cal R}\cup{\cal G}$, we conclude that
\begin{eqnarray*}
\objfunc({\cal R}\cup{\cal G}) & = &  \sum_{t \in [T]} \sum_{i \in (R_t \cup G_t) \setminus (R_{t-1} \cup G_{t-1})}  p_{it} \\ 
& =  & \sum_{t \in [T]} \left( \sum_{i \in R_t \setminus R_{t-1}} p_{it} + \sum_{i \in G_t \setminus G_{t-1}} p_{it} \right) \\
& = & \objfunc({\cal R}) + \objfunc({\cal G}) \ .
\end{eqnarray*}
Here, the second equality holds again due to the observation above, since having both $G_1 \subseteq \cdots \subseteq G_T$ and $R_1 \subseteq \cdots \subseteq R_T \subseteq {\cal N}\setminus G_T$ means that $(R_t \cup G_t) \setminus (R_{t-1} \cup G_{t-1})$ can be written as the disjoint union of $R_t \setminus R_{t-1}$ and $G_t \setminus G_{t-1}$.

\subsection{Proof of Lemma~\ref{lem:residual-chain}} \label{app:proof_lem_residual_chain}

For convenience, let us denote the chain in question by ${\cal R}={\cal S}|_{{\cal N} \setminus G}$. By observing that $R_T = (S_T \cap ({\cal N} \setminus G) )=(S_T \setminus G_T) \subseteq {\cal N}\setminus G_T$, it follows that ${\cal R}$ is also a chain for ${\cal I}^{-{\cal G}}$. We proceed by arguing that ${\cal R}$ is in fact feasible for the latter instance. To this end, note that for every $t \leq \tau$,
\begin{eqnarray*}
w(R_t) & \leq & w(R_{\tau}) \\
& =& w(S_{\tau})-w(G_{\tau}) \\
&\leq & W_{\tau} - w(G_{\tau}) \ ,
\end{eqnarray*}
where the middle equality follows by recalling that $S_t$ is the disjoint union of $G_t$ and $R_t$, and the last inequality is implied by the feasibility of ${\cal S}$ for ${\cal I}$. As a result, $w(R_t) \leq  \min_{t \leq \tau\leq T} (W_{\tau} -w({G}_{\tau}) ) = W^{-{\cal G}}_t$, which proves that ${\cal R}$ is a feasible chain for ${\cal I}^{-{\cal G}}$. 

We now turn our attention to showing that $\objfunc({\cal R}) = \objfunc({\cal S}) - \objfunc({\cal G})$. Again, based on the observation that $S_t$ is the disjoint union of $G_t$ and $R_t$ for every $t \in [T]$, we conclude that
\begin{eqnarray*}
\objfunc({\cal R}) + \objfunc({\cal G}) & = & \sum_{t \in [T]} \left( \sum_{i \in R_t \setminus R_{t-1}} p_{it} + \sum_{i \in G_t \setminus G_{t-1}} p_{it} \right) \\ 
& = &  \sum_{t \in [T]} \sum_{i \in (R_t \cup G_t) \setminus (R_{t-1} \cup G_{t-1})}  p_{it} \\
& = & \sum_{t \in [T]} \sum_{i \in S_t \setminus S_{t-1}} p_{it} \\
& = & \objfunc({\cal S}) \ .
\end{eqnarray*}
Finally, suppose that ${\cal S}$ is optimal for ${\cal I}$, but on the other hand, ${\cal R}$ is not optimal for ${\cal I}^{-{\cal G}}$, meaning that there exists a feasible chain ${\cal R}'$ for ${\cal I}^{-{\cal G}}$ with profit $\objfunc({\cal R}')>\objfunc({\cal R})$. Then, by Lemma~\ref{lem:union-chain}, we infer that ${\cal R}'\cup{\cal G}$ is a feasible chain for ${\cal I}$, with profit $\objfunc( {\cal R}'\cup{\cal G} ) = \objfunc({\cal G})+\objfunc({\cal R}')>\objfunc({\cal G})+\objfunc({\cal R})=\objfunc({\cal S})$, contradicting the optimality of ${\cal S}$.

\subsection{Proof of Lemma~\ref{lem:heavy_bound}} \label{app:proof_lem_heavy_bound}

We say that an interval ${\cal I}_k$ is non-empty with respect to the permutation $\pi_{{\cal S}^*}$ if it contains the completion time of at least one item. Note that, since the latter completion time is within $[w_{\min}, n w_{\max}]$ and we assume that $w_{\min} = 3$ (see Section~\ref{subsec:decomp_outline}), the interval ${\cal I}_0 = [0,1]$ is clearly empty. Furthermore, any non-empty interval ${\cal I}_k = ((1+\eps)^{k-1}, (1+ \eps)^k]$ necessarily has $\lfloor \log_{1 + \eps} (w_{\min}) \rfloor \leq k \leq \lceil \log_{1+ \eps}(n w_{\max}) \rceil$. Therefore, the number of non-empty intervals with respect to $\pi_{{\cal S}^*}$ is at most $\lceil \log_{1 + \eps}(n w_{\max}) \rceil - \lfloor \log_{1 + \eps}(w_{\min}) \rfloor + 1\leq 2 \cdot \lceil \log_{1 + \eps} (n \rho) \rceil$. Now, any such interval ${\cal I}_k$ is of length $(1+\eps)^{k} - (1+ \eps)^{k-1}$, meaning that the number of $k$-heavy items with a completion time in this interval is at most $\frac{ (1 + \eps)^k - (1 + \eps)^{k-1} }{ \eps^2 \cdot (1 + \eps)^k } \leq \frac{ 1 }{ \eps }$, as every $k$-heavy item has a weight of at least $\eps^2 \cdot (1 + \eps)^k$. All in all, we have just shown that $|G^{* \myheavy}| \leq \frac{2 \cdot \lceil \log_{1 + \eps} (n \rho) \rceil}{\eps} \leq \frac{ 3 \log(n \rho) }{\eps^2}$.

\subsection{Proof of Lemma~\ref{lem:heavy_profit}} \label{app:proof_lem_heavy_profit}

For every item $i \in {\cal N}$, let $t_i$ be its insertion time with respect to the optimal chain ${\cal S}^*$. By convention, for non-inserted items (i.e., those in ${\cal N}\setminus S^*_T$), we say that their ``insertion time'' is $T+1$, with a profit of $p_{i, T+1} = 0$. As explained during the proof of Lemma~\ref{lem:reformulation}, our construction of the permutation $\pi_{{\cal S}^*}$ guarantees that $\varphi_{\pi_{{\cal S}^*}} (i) \geq p_{i,t_i}$ for every item $i \in {\cal N}$. While this inequality was established for any chain-to-permutation mapping, one can easily notice that, due to the optimality of ${\cal S}^*$, we actually have $\varphi_{\pi_{{\cal S}^*}} (i) = p_{i,t_i}$ for every $i \in {\cal N}$. Otherwise, there would have been at least one item with $\varphi_{\pi_{{\cal S}^*}} (i) > p_{i,t_i}$, implying that $\newobj(\pi_{{\cal S}^*}) > \objfunc({\cal S}^*)$. By Lemma~\ref{lem:reformulation}, the permutation $\pi_{{\cal S}^*}$ can then be mapped to a feasible chain ${\cal S}$ with $\objfunc({\cal S}) = \newobj(\pi_{{\cal S}^*}) > \objfunc({\cal S}^*)$, contradicting the optimality of ${\cal S}^*$. Thus, $\objfunc({\cal H}^*) = \sum_{i \in G^{* \myheavy}} p_{i,t_i} = \sum_{i \in G^{* \myheavy}} \varphi_{\pi_{{\cal S}^*}} (i) = \newobj_{\myheavy}(\pi_{{\cal S}^*})$.

\subsection{Proof of Lemma~\ref{lem:alpha_sequence}} \label{app:proof_lem_alpha_sequence}

We prove the lower bound $\alpha_r \geq \frac{r}{r+1} - r \delta$ by induction on $r$. For $r=0$, we have $\alpha_0 = 0$ and the claim clearly holds. Now, for $r \geq 1$,  
\begin{eqnarray*}
\alpha_r & = & \frac{1 - \delta}{2 - \alpha_{r-1}} \\
& \geq & \frac{1-\delta}{2 - (\frac{r-1}{r} - (r-1) \delta )} \\ 
& = & \frac{r(1- \delta)}{r + 1 + r(r-1) \delta}  \\
& \geq &\frac{r(1- \delta)}{(r+1)(1 + (r-1)\delta)} \\
& = &\frac{r}{r+1} \cdot \left( 1 - \frac{r \delta}{1 + (r-1) \delta}\right)\\
& \geq & \frac{r}{r+1} - r \delta.
\end{eqnarray*}

\section{Additional Proofs from Section~\ref{sec:qptas-two}}

\subsection{Proof of Lemma~\ref{lem:mod_perm_properties}} \label{app:proof_lem_mod_perm_properties}

\paragraph{Sparse $\bs{({\cal M}^-, {\cal M}^+)}$-crossing.} On the one hand, our construction guarantees that the last item in ${\cal C}_{{\cal M}^-}$ appears in position $\pi(i_{{\cal M}^-, {\cal M}^+}) - 1 + |{\cal A}^-|$ of the permutation $\bar{\pi}$. On the other hand, every item in ${\cal C}_{{\cal M}^+}$ that appears before this position necessarily belongs to ${\cal X}_{{\cal M}^-, {\cal M}^+}(\pi)$. It follows that there are at most $| {\cal X}_{{\cal M}^-, {\cal M}^+}(\pi) | = \frac{ 1 }{ \eps }$ such items, and therefore, $\mycross_{{\cal M}^-, {\cal M}^+}(\bar{\pi}) \leq \frac{ 1 }{ \eps }$.

\paragraph{Completion times.} We establish this property by considering three cases, depending on whether the item in question appears before $i_{{\cal M}^-, {\cal M}^+}$, belongs to ${\cal A}^-$, or belongs to $\bar{\cal A}^-$.
\begin{itemize}
    \item \emph{Before $i_{{\cal M}^-, {\cal M}^+}$:} For every item $i \in {\cal N}$ with $\pi(i) \leq \pi(i_{{\cal M}^-, {\cal M}^+})-1$ we clearly have $C_{\bar{\pi}}(i) = C_{\pi}(i)$, since the permutations $\bar{\pi}$ and $\pi$ are identical up to position $\pi(i_{{\cal M}^-, {\cal M}^+})-1$.
    
    \item \emph{Items in ${\cal A}^-$:} For every item $i \in {\cal A}^-$, we have $C_{\bar{\pi}}(i) \leq C_{\pi}(i)$, since the collection of items appearing before $i$ in $\bar{\pi}$ is a subset of those appearing before $i$ in $\pi$. 
    
    \item \emph{Items in $\bar{\cal A}^-$:} For every item $i \in \bar{\cal A}^-$, the important observation is that the collection of items appearing before $i$ in $\bar{\pi}$ consists of: (1)~The same items appearing before $i$ in $\pi$, except for the eliminated item $i_{{\cal M}^-, {\cal M}^+}$; as well as (2)~All items in ${\cal A}^-$ appearing after $i$ in $\pi$. Therefore, 
    \[ C_{\bar{\pi}}(i) \leq C_{\pi}(i) - w_{i_{{\cal M}^-, {\cal M}^+}} + w( {\cal A}^- ) \leq C_{\pi}(i) \ . \]
    To understand the last inequality, recall that $i_{{\cal M}^-, {\cal M}^+} \in {\cal X}_{{\cal M}^-, {\cal M}^+}(\pi)$, meaning in particular that this item resides within ${\cal C}_{{\cal M}^+}$. Since ${\cal I}=({\cal N}, W)$ is well-spaced, property~2 of such instances implies that $w_{i_{{\cal M}^-, {\cal M}^+}}$ is greater than the weight of any item in ${\cal C}_{{\cal M}^-}$ by a multiplicative factor of at least $n^{ 1 + (\min {\cal M}^+ - \max {\cal M}^-  - 1 ) / \eps } \geq n$, as $\max {\cal M}^- < \min {\cal M}^+$. Consequently, since all items in ${\cal A}^-$ reside within ${\cal C}_{{\cal M}^-}$, we indeed have $w_{i_{{\cal M}^-, {\cal M}^+}} \geq n \cdot \max_{ j \in {\cal C}_{{\cal M}^-} } w_j \geq w( {\cal A}^- )$.
\end{itemize}

\paragraph{Difference.} This property is straightforward, by construction of $\bar{\pi}$.

\subsection{Proof of Claim~\ref{cl:pi-sparse-profit}} \label{app:proof_cl_pi_sparse_profit}

For simplicity of notation, let ${\cal D} = \{ i_{{\cal M}^-, {\cal M}^+}: ({\cal M}^-, {\cal M}^+) \in \Omega, {\cal X}_{{\cal M}^-, {\cal M}^+} \neq \emptyset \}$ be the collection of items that were removed throughout all recursive calls to our fixing procedure. Then, the profit of the resulting permutation $\pi_{\mysparse}$ can be lower-bounded by observing that
\begin{eqnarray*}
\newobj(\pi_{\mysparse}) & = & \sum_{i \in {\cal N} \setminus {\cal D}} \varphi_{\pi_{\mysparse}}(i) \\
& \geq & \sum_{i \in {\cal N} \setminus {\cal D}} \varphi_{\pi^*}(i) \\
& = & \newobj(\pi^*) - \sum_{i \in {\cal D}} \varphi_{\pi^*}(i) \\
& \geq & \newobj(\pi^*) - \eps \cdot \sum_{({\cal M}^-,{\cal M}^+) \in \Omega}\varphi_{\pi^*}\left({\cal X}_{{\cal M}^-,{\cal M}^+} \left(\pi_{[\min {\cal M}^-, \max {\cal M}^+]} \right) \right).
\end{eqnarray*}
Here, the first inequality holds since, for any remaining item $i \in {\cal N} \setminus {\cal D}$, it is not difficult to verify (by induction on the recursion level) that property~2 of the fixing procedure implies $C_{\pi_{\mysparse}}(i) \leq C_{\pi^*}(i)$, and we therefore have $\varphi_{\pi_{\mysparse}}(i) \geq \varphi_{\pi^*}(i)$. The second inequality is obtained by recalling that any item $i_{{\cal M}^-, {\cal M}^+} \in {\cal D}$ was chosen as the least profitable item in ${\cal X}_{{\cal M}^-,{\cal M}^+}(\pi_{[\min {\cal M}^-, \max {\cal M}^+]}) $ with respect to $\pi^*$, thus
\begin{eqnarray*}
\varphi_{\pi^*}(i_{{\cal M}^-, {\cal M}^+}) & \leq & \frac{\varphi_{\pi^*}({\cal X}_{{\cal M}^-,{\cal M}^+}(\pi_{[\min {\cal M}^-, \max {\cal M}^+]}))}{|{\cal X}_{{\cal M}^-,{\cal M}^+}(\pi_{[\min {\cal M}^-, \max {\cal M}^+]})|} \\
& = & \eps \cdot \varphi_{\pi^*} \left( {\cal X}_{{\cal M}^-,{\cal M}^+} \left(\pi_{[\min {\cal M}^-, \max {\cal M}^+]}\right) \right) \ .
\end{eqnarray*}

\subsection{Proof of Claim~\ref{cl:disjoint-cross}} \label{app:proof_cl_disjoint_cross}

By definition, ${\cal X}_{{\cal M}^-_1, {\cal M}^+_1}({\pi}_{[\min {\cal M}^-_1, \max {\cal M}^+_1]})$ and ${\cal X}_{{\cal M}^-_2, {\cal M}^+_2}({\pi}_{[\min {\cal M}^-_2, \max {\cal M}^+_2]})$ contain only items in ${\cal M}^+_1$-indexed clusters and ${\cal M}^+_2$-indexed clusters, respectively. Thus, when ${\cal M}^+_1$ and ${\cal M}^+_2$ are disjoint, ${\cal X}_{{\cal M}^-_1, {\cal M}^+_1}({\pi}_{[\min {\cal M}^-_1, \max {\cal M}^+_1]})$ and ${\cal X}_{{\cal M}^-_2, {\cal M}^+_2}({\pi}_{[\min {\cal M}^-_2, \max {\cal M}^+_2]})$ must be disjoint as well. Hence, it remains to consider the scenario where ${\cal M}^+_1$ and ${\cal M}^+_2$ are not disjoint. In this case, the permutations $\pi_{[\min {\cal M}^-_1, \max {\cal M}^+_1]}$ and ${\pi}_{[\min {\cal M}^-_2, \max {\cal M}^+_2]}$ must have been created at different levels of the recursive construction; we assume without loss of generality that ${\pi}_{[\min {\cal M}^-_1, \max {\cal M}^+_1]}$ was created at a lower-index level. Therefore, ${\cal M}^+_2 \subseteq {\cal M}^+_1$, and ${\cal X}_{{\cal M}^-_2, {\cal M}^+_2}({\pi}_{[\min {\cal M}^-_2, \max {\cal M}^+_2]})$ consists of only items in the right permutation, $\pi_{[\min {\cal M}^+_1, \max {\cal M}^+_1]}$. On the other hand, by construction, any item in ${\cal X}_{{\cal M}^-_1, {\cal M}^+_1}({\pi}_{[\min {\cal M}^-_1, \max {\cal M}^+_1]})$ ends up in the left permutation, ${\pi}_{[\min {\cal M}^-_1, \max {\cal M}^-_1]}$, implying the disjointness of ${\cal X}_{{\cal M}^-_1, {\cal M}^+_1}({\pi}_{[\min {\cal M}^-_1, \max {\cal M}^+_1]})$ and ${\cal X}_{{\cal M}^-_2, {\cal M}^+_2}(\pi_{[\min {\cal M}^-_2, \max {\cal M}^+_2]})$. 

\subsection{Proof of Lemma~\ref{lem:dp-feasibility-subproblem}} \label{app:proof_lem_dp_feasibility_subproblem}

We first observe that the pair $(\hat{S}, \hat{\pi})$ is indeed thin. To this end, note that since the permutation $\hat{\pi}$ is a prefix of $\pi$, for every $m \in [M]$ we clearly have $\mycross_m(\hat{\pi}) \leq \mycross_m(\pi) \leq \frac{\lceil \log_2 M \rceil}{\eps}$, where the last inequality holds since $(S,\pi)$ is thin. Next, we show that $(\hat{S}, \hat{\pi})$ satisfies conditions~1-3:
\begin{enumerate}
    \item {\em Allowed items}: By construction, $\hat{S} = S \cap ({\cal C}_{[1,m-1]} \uplus {\cal Q}_{>m-1})$, implying that $\hat{S}$ forms a subset of ${\cal C}_{[1,m-1]} \uplus {\cal Q}_{>m-1}$. 
    
    \item {\em Required crossing items}: An additional implication of our definition of $\hat{S}$ is that ${\cal Q}_{>m-1} \subseteq \hat{S}$, since ${\cal Q}_{>m-1} \subseteq S$ by~\eqref{eqn:def_qmmin1}.

    \item {\em Total profit}: To obtain a lower bound on the profit of $\hat{\pi}$, we observe that  
    \begin{eqnarray*}
    \newobj(\hat{\pi}) & = & \sum_{i \in \hat{S}} \varphi_{\hat{\pi}}(i) \\
    & = & \sum_{i \in \hat{S}} \varphi_{\pi}(i) \\
    & = & \newobj(\pi) - \sum_{i \in S \setminus ({\cal C}_{[1,m-1]} \uplus {\cal Q}_{>m-1})} \varphi_{\pi}(i) \\
    & \geq & \left[\psi_m - \sum_{i \in S \setminus ({\cal C}_{[1,m-1]} \uplus {\cal Q}_{>m-1})} \varphi_{\pi}(i) \right]^+ \\
    & = & \psi_{m-1} \ .
    \end{eqnarray*}
    Here, the second equality holds since $\hat{\pi}$ is a prefix of $\pi$, as previously mentioned. The third equality follows by noting that $S \setminus \hat{S} = S \setminus ({\cal C}_{[1,m-1]} \uplus {\cal Q}_{>m-1})$. The inequality above is obtained by observing that its left-hand-side is non-negative, and by recalling that $(S, \pi) \in \mythin(m, \psi_m, {\cal Q}_{>m})$, implying that  $\newobj(\pi) \geq \psi_m$. The last equality is precisely the definition of $\psi_{m-1}$.
\end{enumerate}

\subsection{Proof of Lemma~\ref{lem:qptas-dp-substructure}} \label{app:proof_lem_qptas_dp_substructure}

By way of contradiction, suppose there exists a pair $(\tilde{S}, \tilde{\pi}) \in \mythin(m-1, \psi_{m-1}, {\cal Q}_{>m-1})$ whose makespan is smaller than that of $\hat{S}$, namely, $w(\tilde{S}) < w(\hat{S})$. We begin by noticing that the item sets
$S \setminus \hat{S}$ and $\tilde{S}$ are disjoint, since $S \setminus \hat{S} \subseteq ({\cal C}_m \uplus {\cal Q}_{>m} ) \setminus {\cal Q}_{>m-1} \subseteq {\cal C}_{[m,M]} \setminus {\cal Q}_{>m-1}$ whereas $\tilde{S} \subseteq {\cal C}_{[1,m-1]} \uplus {\cal Q}_{>m-1}$, as $(\tilde{S}, \tilde{\pi}) \in \mythin(m-1, \psi_{m-1}, {\cal Q}_{>m-1})$. Taking advantage of this observation, we define a new pair $(\tilde{S}^+, \tilde{\pi}^+)$ as follows:
\begin{itemize}
    \item The underlying set of items is given by $\tilde{S}^+ = \tilde{S} \uplus (S \setminus \hat{S})$.
    
    \item The permutation $\tilde{\pi}^+: \tilde{S}^+ \to [|\tilde{S}^+|]$ is constructed by appending the items in $S \setminus \hat{S}$ to $\tilde{\pi}$, following their internal order in $\pi$. 
\end{itemize}
The next claim shows that the resulting pair is a feasible solution to exactly the same subproblem for which $(S, \pi)$ is optimal. 

\begin{claim} \label{cl:feasible-S}
$(\tilde{S}^+, \tilde{\pi}^+) \in \mythin(m, \psi_{m}, {\cal Q}_{>m})$.
\end{claim}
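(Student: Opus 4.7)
The plan is to verify, in turn, each of the defining properties of the set $\mythin(m, \psi_m, {\cal Q}_{>m})$, namely the thinness condition together with conditions~1--3 appearing in the definition of the value function $F$. Throughout, I would exploit the block structure of $\tilde{\pi}^+$: items of $\tilde{S}$ occupy the first $|\tilde{S}|$ positions in their $\tilde{\pi}$-order, followed by items of $S\setminus\hat{S}$ in their $\pi$-order.

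I would start with the easy conditions. For allowed items, combine $\tilde{S} \subseteq {\cal C}_{[1,m-1]} \uplus {\cal Q}_{>m-1}$ (coming from $(\tilde{S}, \tilde{\pi}) \in \mythin(m-1, \psi_{m-1}, {\cal Q}_{>m-1})$) with the inclusion ${\cal Q}_{>m-1} \subseteq {\cal C}_m \uplus {\cal Q}_{>m}$ implied by~\eqref{eqn:def_qmmin1}, to conclude $\tilde{S} \subseteq {\cal C}_{[1,m]} \uplus {\cal Q}_{>m}$; since $S\setminus\hat{S} \subseteq ({\cal C}_m \uplus {\cal Q}_{>m})\setminus{\cal Q}_{>m-1}$, the union $\tilde{S}^+$ fits too. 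For required crossing, split ${\cal Q}_{>m} = ({\cal Q}_{>m} \cap {\cal Q}_{>m-1}) \uplus ({\cal Q}_{>m} \setminus {\cal Q}_{>m-1})$; the first piece lies in $\tilde{S}$ by the required-crossing condition for $(\tilde{S}, \tilde{\pi})$, and the second piece lies in $S \setminus \hat{S}$, since each such item is in $S$ (it is in ${\cal Q}_{>m}$, which is contained in $S$) but not in $\hat{S} = S \cap ({\cal C}_{[1,m-1]} \uplus {\cal Q}_{>m-1})$.

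For the profit condition, decompose $\newobj(\tilde{\pi}^+) = \sum_{i\in\tilde{S}}\varphi_{\tilde{\pi}^+}(i) + \sum_{i\in S\setminus\hat{S}}\varphi_{\tilde{\pi}^+}(i)$. On $\tilde{S}$, the restriction of $\tilde{\pi}^+$ is just $\tilde{\pi}$, so the first sum equals $\newobj(\tilde{\pi}) \geq \psi_{m-1}$. On $S \setminus \hat{S}$, I would observe that, because $w(\tilde{S}) < w(\hat{S})$ and $\tilde{\pi}^+$ respects the $\pi$-order on $S\setminus\hat{S}$, each such item satisfies $C_{\tilde{\pi}^+}(i) \leq C_\pi(i)$ and hence $\varphi_{\tilde{\pi}^+}(i) \geq \varphi_\pi(i)$. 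Combined with the definition $\psi_{m-1} = [\psi_m - \sum_{i \in S\setminus\hat{S}} \varphi_\pi(i)]^+$, this yields $\newobj(\tilde{\pi}^+) \geq \psi_{m-1} + \sum_{i \in S\setminus\hat{S}} \varphi_\pi(i) \geq \psi_m$, exactly as in the derivation used in Lemma~\ref{lem:dp-feasibility-subproblem}.

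The main obstacle will be the thin property: bounding $\mycross_\mu(\tilde{\pi}^+)$ uniformly in $\mu$. My plan is to split on the location of the last ${\cal C}_\mu$-item in $\tilde{\pi}^+$. For $\mu \leq m-1$, the entire block $S\setminus\hat{S}$ lies in ${\cal C}_{[m,M]} \uplus {\cal Q}_{>m}$ and hence contributes no items from ${\cal C}_\mu$, so the last ${\cal C}_\mu$-item in $\tilde{\pi}^+$ coincides with that of $\tilde{\pi}$, and the items from higher-index clusters appearing before it are exactly those counted by $\mycross_\mu(\tilde{\pi})$, already bounded by $\frac{\lceil \log_2 M\rceil}{\eps}$. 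For $\mu \geq m$ there are two sub-cases: if ${\cal C}_\mu \cap (S\setminus\hat{S}) = \emptyset$ we again recover $\mycross_\mu(\tilde{\pi}^+) \leq \mycross_\mu(\tilde{\pi})$; otherwise the last ${\cal C}_\mu$-item in $\tilde{\pi}^+$ is the same item as the last one in $\pi$ (since items of $\hat{S}$ precede those of $S\setminus\hat{S}$ in $\pi$), and the contribution of the $\tilde{S}$-block and of the $\hat{S}$-block to crossings from clusters $>\mu$ must be shown to agree. I would establish this agreement by proving the identities ${\cal C}_{[\mu+1,M]} \cap \tilde{S} = {\cal Q}_{>m-1} \cap {\cal C}_{[\mu+1,M]} = {\cal C}_{[\mu+1,M]} \cap \hat{S}$, which follow from the allowed-items and required-crossing constraints for $(\tilde{S}, \tilde{\pi})$ and $(\hat{S}, \hat{\pi})$ respectively, using that $\mu+1 \geq m+1$ disqualifies items in ${\cal C}_{[1,m-1]}$. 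Combined with the fact that crossings inside the $S\setminus\hat{S}$-block are preserved by construction, this gives $\mycross_\mu(\tilde{\pi}^+) = \mycross_\mu(\pi) \leq \frac{\lceil \log_2 M\rceil}{\eps}$, completing the verification that $(\tilde{S}^+, \tilde{\pi}^+) \in \mythin(m, \psi_m, {\cal Q}_{>m})$.
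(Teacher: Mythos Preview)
Your proposal is correct and largely parallels the paper's proof for conditions~1--3 (allowed items, required crossing items, total profit), including the same completion-time comparison $C_{\tilde{\pi}^+}(i) \leq C_\pi(i)$ on $S\setminus\hat{S}$ and the same use of $\psi_{m-1} = [\psi_m - \sum_{i\in S\setminus\hat{S}}\varphi_\pi(i)]^+$.

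The one place where you genuinely diverge from the paper is the thinness argument. The paper's case split is on whether the last ${\cal C}_\mu$-item $i_\mu$ lies in the prefix $\tilde{\pi}$ or in the suffix $S\setminus\hat{S}$. In the latter case it observes that $\mu\geq m$ and hence every item of $\tilde{S}^+$ lying in ${\cal C}_{[\mu+1,M]}\subseteq{\cal C}_{[m+1,M]}$ must belong to ${\cal Q}_{>m}$ (since $\tilde{S}^+\subseteq{\cal C}_{[1,m]}\uplus{\cal Q}_{>m}$), giving the one-line bound $\mycross_\mu(\tilde{\pi}^+)\leq|{\cal Q}_{>m}|\leq\frac{\lceil\log_2 M\rceil}{\eps}$. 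Your route instead establishes the exact equality $\mycross_\mu(\tilde{\pi}^+)=\mycross_\mu(\pi)$ via the identity ${\cal C}_{[\mu+1,M]}\cap\tilde{S}={\cal Q}_{>m-1}\cap{\cal C}_{[\mu+1,M]}={\cal C}_{[\mu+1,M]}\cap\hat{S}$, which is correct but noticeably longer. The paper's shortcut is simpler and sufficient; your argument buys a sharper (but unneeded) identity.
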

\begin{proof}
First, we show that $(\tilde{S}^+, \tilde{\pi}^+)$ is a thin pair. To this end, for every $\mu \in [M]$ with ${\cal C}_{\mu} \cap \tilde{S}^+ \neq \emptyset$, let $i_{\mu} \in {\cal C}_{\mu}$ be the item that appears last in $\tilde{\pi}^+$ out of this cluster, i.e., $i_{\mu} = \argmax_{i \in \tilde{S}^+ \cap {\cal C}_{\mu}} \tilde{\pi}^+(i)$. We proceed by considering two cases:
\begin{itemize}
    \item {\em Item $i_{\mu}$ appears in $\tilde{\pi}$}: By construction, $\tilde{\pi}$ is a prefix of $\tilde{\pi}^+$, and therefore $\mycross_{\mu}(\tilde{\pi}^+) = \mycross_{\mu}(\tilde{\pi}) \leq \frac{\lceil \log_2 M \rceil}{\eps}$, where the last inequality holds since $(\tilde{S}, \tilde{\pi})$ is a thin pair.
    
    \item {\em Item $i_{\mu}$ does not appear in $\tilde{\pi}$}: In this case, $i_{\mu} \in S \setminus \hat{S} \subseteq {\cal C}_{[m,M]} \setminus {\cal Q}_{>m-1}$, implying that $\mu \geq m$. Thus, all items in clusters ${\cal C}_{\mu+1}, \ldots, {\cal C}_M$ that appear before $i_{\mu}$ in the permutation $\tilde{\pi}^+$ necessarily belong to ${\cal Q}_{>m}$, and we conclude that $\mycross_{\mu}(\tilde{\pi}^+) \leq |{\cal Q}_{>m}| \leq \frac{\lceil \log_2 M \rceil}{\eps}$. 
\end{itemize}
Next, we show that $(\tilde{S}^+, \tilde{\pi}^+)$ satisfies conditions~1-3: 
\begin{enumerate}
    \item {\em Allowed items}: First note that $\tilde{S} \subseteq {\cal C}_{[1,m-1]} \uplus {\cal Q}_{>m-1} \subseteq {\cal C}_{[1,m]} \uplus {\cal Q}_{>m}$, where the first inclusion holds since $(\tilde{S}, \tilde{\pi}) \in \mythin(m-1, \psi_{m-1}, {\cal Q}_{>m-1})$ and the second follows by definition of ${\cal Q}_{>m-1}$ in~\eqref{eqn:def_qmmin1}. In addition, $S \subseteq {\cal C}_{[1,m]} \uplus {\cal Q}_{>m}$, since $(S, \pi) \in \mythin(m, \psi_m, {\cal Q}_{>m})$. Combining these two observations, we have $\tilde{S}^+ = \tilde{S} \uplus (S \setminus \hat{S}) \subseteq {\cal C}_{[1,m]} \uplus {\cal Q}_{>m}$ as required.
    
    \item {\em Required crossing items}: To prove ${\cal Q}_{>m} \subseteq \tilde{S}^+$, we observe that  
    \begin{eqnarray*}
    {\cal Q}_{>m} & \subseteq & {\cal Q}_{>m-1} \uplus ({\cal Q}_{>m} \setminus {\cal Q}_{>m-1}) \\
    & \subseteq & \tilde{S} \cup (S \setminus \hat{S}) \\
    & = & \tilde{S}^+ \ .
    \end{eqnarray*}
    To better understand the second inclusion, note that ${\cal Q}_{>m-1} \subseteq \tilde{S}$, since $(\tilde{S}, \tilde{\pi}) \in \mythin(m-1, \psi_{m-1}, {\cal Q}_{>m-1})$. In addition, ${\cal Q}_{>m} \setminus {\cal Q}_{>m-1} \subseteq S \setminus \hat{S}$, since ${\cal Q}_{>m} \subseteq S$ due to having $(S,\pi) \in \mythin(m, \psi_{m}, {\cal Q}_{>m})$, and since $({\cal Q}_{>m} \setminus {\cal Q}_{>m-1}) \cap \hat{S} = \emptyset$, due to having ${\cal Q}_{>m} \subseteq {\cal C}_{[m+1,M]}$ and $\hat{S} \subseteq {\cal C}_{[1,m-1]} \uplus {\cal Q}_{>m-1}$, where the latter inclusion holds since $\hat{S} \in \mythin(m-1, \psi_{m-1}, {\cal Q}_{>m-1})$. 
    
    \item {\em Total profit}: By construction, any item $i \in S \setminus \hat{S}$ appears in the permutation $\tilde{\pi}^+$ after all items in $\tilde{S}$, and moreover, the internal order between the items in $S \setminus \hat{S}$ is determined according to $\pi$. Hence, we can bound the completion time of any item $i \in S \setminus \hat{S}$ by noting that
    \begin{eqnarray*}
     C_{\tilde{\pi}^+}(i) & = & w ( \tilde{S} ) + \sum_{\MyAbove{j \in S \setminus \hat{S}:}{ \pi(j)< \pi(i)}} w_j  \\
     & < & w ( \hat{S} ) + \sum_{\MyAbove{j \in S \setminus \hat{S}:}{ \pi(j)< \pi(i)}}w_j  \\
     & = & C_{\pi}(i) \ , 
    \end{eqnarray*}
    where the inequality above follows from our initial assumption that $w(\tilde{S}) < w(\hat{S})$. Consequently, $\varphi_{\tilde{\pi}^+}(i) \geq \varphi_{\pi}(i)$ for such items, and we have
    \begin{eqnarray}
    \newobj \left( \tilde{\pi}^+ \right) &=& \newobj \left( \tilde{\pi} \right) + \sum_{i \in S \setminus \hat{S}} \varphi_{\tilde{\pi}^+}(i) \label{eqn:pair_profit_1} \\
    & \geq & \psi_{m-1} + \sum_{i \in S \setminus \hat{S}}\varphi_{{\pi}}(i) \label{eqn:pair_profit_2} \\
    & = & \left[ \psi_m - \sum_{i \in S \setminus ( {\cal C}_{[1,m-1]} \uplus {\cal Q}_{>m-1})} \varphi_{\pi}(i) \right]^+ + \sum_{i \in S \setminus \hat{S}}\varphi_{{\pi}}(i) \label{eqn:pair_profit_3} \\
    & \geq & \psi_m - \left( \sum_{i \in S \setminus ({\cal C}_{[1,m-1]} \uplus {\cal Q}_{>m-1})} \varphi_{\pi}(i) - \sum_{i \in S \setminus \hat{S}}\varphi_{{\pi}}(i) \right) \nonumber\\
    & = & \psi_m \ . \label{eqn:pair_profit_4}
    \end{eqnarray}
    Here, equality~\eqref{eqn:pair_profit_1} holds since $\tilde{\pi}$ is a prefix of $\tilde{\pi}^+$, with the items in $S \setminus \hat{S}$ forming the remaining suffix. Inequality~\eqref{eqn:pair_profit_2} holds since $(\tilde{S}, \tilde{\pi}) \in \mythin(m-1, \psi_{m-1}, {\cal Q}_{m-1})$, meaning that $\newobj( \tilde{\pi} ) \geq \psi_{m-1}$, and since $\varphi_{\tilde{\pi}^+}(i) \geq \varphi_{\pi}(i)$ for all $i \in S \setminus \hat{S}$, as shown above. Equality~\eqref{eqn:pair_profit_3} follows from the definition of $\psi_{m-1}$. Equality~\eqref{eqn:pair_profit_4} is obtained by noting that $S \setminus ( {\cal C}_{[1,m-1]} \uplus {\cal Q}_{>m-1}) = S \setminus \hat{S}$.
\end{enumerate}
\end{proof}

Consequently, by combining our initial assumption that $w(\tilde{S}) < w(\hat{S})$ along with Claim~\ref{cl:feasible-S}, we have just identified a pair $(\tilde{S}^+, \tilde{\pi}^+) \in \mythin(m, \psi_{m}, {\cal Q}_{>m})$ with a makespan of 
\begin{eqnarray*}
w(\tilde{S}^+) & = & w(\tilde{S}) + w(S \setminus \hat{S}) \\
& < & w(\hat{S}) + w(S \setminus \hat{S}) \\
& = & w(S) \ ,
\end{eqnarray*}
contradicting the fact that $(S, \pi)$ minimizes $w(S)$ over the set $\mythin(m, \psi_m, {\cal Q}_{>m})$.

\subsection{Proof of Lemma~\ref{lem:approx_rec_eqn}} \label{app:proof_lem_approx_rec_eqn}

\paragraph{Overview.} Prior to delving into the nuts-and-bolts of our approach, we provide a high-level overview of its main ideas. For this purpose, to make sure condition~2 of Lemma~\ref{lem:approx_rec_eqn} is satisfied, meaning that the item set $\hat{\cal E}$ we compute has a total weight of at most $F(m, \psi_m, {\cal Q}_{>m}) - F(m-1, \psi_{m-1}, {\cal Q}_{>m-1})$, our algorithm relies on ``knowing'' the latter difference, which will be justified through binary search. With this limitation, restricting ourselves to the item set $({\cal C}_m \uplus {\cal Q}_{>m}) \setminus {\cal Q}_{>m-1}$, we aim to identify a feasible chain whose associated permutation $(\eps, \Delta)$-satisfies constraint~2. To this end, our algorithm ``guesses'' the insertion time of every item in ${\cal Q}_{>m} \setminus {\cal Q}_{>m-1}$ by enumerating over all feasible chains ${\cal G} = (G_1, \dots, G_T)$ whose set of introduced items is $G_T = {\cal Q}_{>m} \setminus {\cal Q}_{>m-1}$. Since there are at most $\frac{\lceil \log_2 M \rceil }{\eps}$ such items, the number of required guesses is only $O(T^{O(\frac{\log M}{\eps})})$. For each guess, we construct the residual generalized incremental knapsack instance, as explained in Section~\ref{subsec:qptas_prelim}, which will be solved to near-optimality via the approximation scheme proposed in Theorem~\ref{thm:qptas-bounded}.

\paragraph{Algorithm.} For ease of presentation, on top of all input ingredients mentioned in Lemma~\ref{lem:approx_rec_eqn}, we feed into the upcoming algorithm an additional parameter $\omega \geq 0$, whose role will be explained later on. With this parameter, our algorithm operates as follows: 
\begin{enumerate}
    \item We define the generalized incremental knapsack instance $\hat{\cal I}^{\omega} = (\hat{\cal N}, \hat{W}^{\omega})$, where:
    \begin{itemize}
        \item The set of items $\hat{\cal N}$ is comprised of those allowed by constraint~1, namely, $\hat{\cal N} = ({\cal C}_m \uplus {\cal Q}_{>m}) \setminus {\cal Q}_{>m-1}$.
        
        \item Additionally, we reduce the capacity $W_t$ of each period $t \in [T]$ by $\Delta$, while ensuring that the maximum resulting capacity does not exceed $\omega$, meaning that $\hat{W}_t^{\omega} = \min \{ [W_t - \Delta]^+, \omega \}$.
    \end{itemize}
    
    \item For every feasible chain ${\cal G} = (G_1, \dots, G_T)$ for the instance $\hat{\cal I}^{\omega}$ with $G_T = {\cal Q}_{>m} \setminus {\cal Q}_{>m-1}$, we construct the residual instance $\hat{\cal I}^{\omega,-{\cal G}} = (\hat{\cal N}^{-{\cal G}}, \hat{W}^{\omega,-{\cal G}})$. The approximation scheme we proposed in Section~\ref{sec:qptas-one} is now applied to this instance, thereby obtaining a feasible chain ${\cal R}^{\cal G}$ whose profit is within factor $1-\eps$ of the residual optimum (see Theorem~\ref{thm:qptas-bounded}). When there are no feasible chains with $G_T = {\cal Q}_{>m} \setminus {\cal Q}_{>m-1}$, we abort and report this finding. 
    
    \item Out of all chains ${\cal G}$ considered in step~2, let ${\cal G}^{\omega}$ be the one for which the sum of profits $\objfunc({\cal G}^{\omega})+\objfunc( {\cal R}^{{\cal G}^{\omega}} )$ is maximized. The item set we return is ${\cal E}_{\omega} = R^{{\cal G}^{\omega}}_T \uplus ({\cal Q}_{>m} \setminus {\cal Q}_{>m-1})$, i.e., all items inserted by the chain ${\cal R}^{{\cal G}^{\omega}}$ along with those in ${\cal Q}_{>m} \setminus {\cal Q}_{>m-1}$. We define the corresponding permutation ${\pi}_{{\cal E}_{\omega}} : {\cal E}_{\omega} \to [|{\cal E}_{\omega}|]$ as the one constructed by Lemma~\ref{lem:reformulation} for the chain ${\cal G}^{\omega} \cup {\cal R}^{{\cal G}^{\omega}}$.
\end{enumerate}

\paragraph{The binary search.} We assume without loss of generality that all item weights take integer values. This property can easily be enforced by uniform scaling, which produces an equivalent instance whose input length is polynomial in that of the original instance. Now, knowing in advance that the total weight of any item set is an integer within $[0, w({\cal N})]$, we employ our $\omega$-parameterized algorithm to conduct a binary search over this interval, with the objective of identifying the smallest integer $\omega_{\min}$ such that:
\begin{itemize}
    \item For $\omega_{\min}$, the algorithm returns a permutation ${\pi}_{{\cal E}_{\omega_{\min}}}$ that satisfies $\sum_{i \in {{\cal E}_{\omega_{\min}}}} \varphi_{ {\pi}_{{\cal E}_{\omega_{\min}}} }^{ +\Delta }( i ) \geq (1 - \eps) \cdot ( \psi_{m} - \psi_{m-1} )$.
    
    \item In contrast, for $\omega_{\min}-1/2$, the algorithm either aborts at step~2, or returns a permutation ${\pi}_{{\cal E}_{\omega_{\min}-1/2}}$ satisfying $\sum_{i \in {{\cal E}_{\omega_{\min}-1/2}}} \varphi_{ {\pi}_{{\cal E}_{\omega_{\min}-1/2}} }^{ +\Delta }( i ) < (1 - \eps) \cdot (\psi_{m} - \psi_{m-1})$.
\end{itemize}
To verify that this search procedure is well-defined, let us examine the endpoints of $[0, w({\cal N})]$. For $\omega = 0$, if we obtain a permutation ${\pi}_{{\cal E}_0}$ that satisfies $\sum_{i \in {{\cal E}_{ 0 }}} \varphi_{ {\pi}_{{\cal E}_{ 0 }} }^{ +\Delta }( i ) \geq (1 - \eps) \cdot ( \psi_{m} - \psi_{m-1} )$, our immediate conclusion is that $\omega_{\min} = 0$. For $\omega = w({\cal N})$, as shown in Lemma~\ref{lem:profit-omega-eps-satisfied} below, we are guaranteed to obtain a permutation ${\pi}_{{\cal E}_{w({\cal N})}}$ that satisfies $\sum_{i \in {{\cal E}_{w({\cal N})}}} \varphi_{ {\pi}_{{\cal E}_{w({\cal N})}} }^{ +\Delta }( i ) \geq (1 - \eps) \cdot ( \psi_{m} - \psi_{m-1} )$. 

\paragraph{Running time.} Clearly, the number of binary search iterations we incur is linear in the input size. Now, within each iteration, since there are $O(T)$ guesses for the insertion time of every item $i \in {\cal Q}_{>m} \setminus {\cal Q}_{>m-1}$ and since $|{\cal Q}_{>m} \setminus {\cal Q}_{>m-1}| \leq \frac{ \lceil \log_2 M \rceil}{\eps}$, there are only $O(T^{O(\frac{\log M}{\eps})})$ chains ${\cal G}$ to consider in step~2. The crucial observation is that, for each such chain, the residual instance $\hat{\cal I}^{\omega,-{\cal G}}$ is defined over the set of items
\begin{eqnarray}
\hat{\cal N}^{-{\cal G}} & = & \hat{\cal N} \setminus G_T \nonumber \\
& = & (({\cal C}_m \uplus {\cal Q}_{>m}) \setminus {\cal Q}_{>m-1}) \setminus ({\cal Q}_{>m} \setminus {\cal Q}_{>m-1}) \nonumber \\
& \subseteq & {\cal C}_m \setminus {\cal Q}_{>m-1} \nonumber \\
& \subseteq & {\cal C}_m \label{eq:restricted-items} \ .
\end{eqnarray}
Thus, $\hat{\cal I}^{\omega,-{\cal G}}$ is in fact a single-cluster instance, where the weights of any two items differ by a multiplicative factor of at most $n^{ 1/\eps }$, by property~1 of well-spaced instances (see Section~\ref{subsec:qptas2-overview}). By Theorem~\ref{thm:qptas-bounded}, the running time of our approximation scheme for such instances is truly quasi-polynomial, being $O((nT)^{O(\frac{1}{\eps^{6}} \cdot \log n ) } \cdot |{\cal I}|^{O(1)})$ . All in all, we incur a running time of $O((nT)^{O(\frac{1}{\eps^6} \cdot (\log n + \log M) ) } \cdot |{\cal I}|^{O(1)})$, with room to spare.

\paragraph{Final solution and analysis.} In the remainder of this section, we argue that the item set ${\cal E}_{\omega_{\min}}$ and its permutation ${\pi}_{{\cal E}_{\omega_{\min}}} : {\cal E}_{\omega_{\min}} \to [|{\cal E}_{\omega_{\min}}|]$ satisfy the properties required by Lemma~\ref{lem:approx_rec_eqn}. For this purpose, recalling that the latter lemma assumes $F(m, \psi_m, {\cal Q}_{>m}) \leq W_T$ and $(m-1, \psi_{m-1}, {\cal Q}_{>m-1}) = \mybest(m, \psi_m, {\cal Q}_{>m})$, let ${\cal E}^*$ and $\pi^*_{{\cal E}^*} : {\cal E}^* \to [|{\cal E}^*|]$ be the item set and permutation attaining the minimum makespan $w( {\cal E}^* )$ over $\myextra[ \MyAbove{ (m, \psi_m, {\cal Q}_{>m}) }{ (m-1, \psi_{m-1}, {\cal Q}_{>m-1}) } ]$, noting that by definition,  
\begin{equation} \label{eqn:weight_best}
F \left( m, \psi_m, {\cal Q}_{>m} \right) = F \left( m-1, \psi_{m-1}, {\cal Q}_{>m-1} \right) + w \left( {\cal E}^* \right) \ .
\end{equation}
At the heart of our analysis lies the following claim, showing that whenever the $\omega$-parameterized algorithm is employed with $\omega \geq w({\cal E}^*)$, we obtain a permutation whose $\Delta$-shifted profit is at least $(1 - \eps) \cdot ( \psi_{m} - \psi_{m-1} )$. We provide the proof in Appendix~\ref{app:proof_lem_profit-omega-eps-satisfied}.

\begin{lemma} \label{lem:profit-omega-eps-satisfied}
For any $\omega \geq w({\cal E}^*)$, the $\omega$-parameterized algorithm computes an item set ${\cal E}_{\omega}$ and a permutation ${\pi}_{{\cal E}_{\omega}} : {\cal E}_{\omega} \to [|{\cal E}_{\omega}|]$ that satisfy $\sum_{i \in {{\cal E}_{\omega}}} \varphi_{ {\pi}_{{\cal E}_{\omega}} }^{ +\Delta }( i ) \geq (1 - \eps) \cdot ( \psi_{m} - \psi_{m-1} )$.
\end{lemma}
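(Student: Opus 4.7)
The plan is to use the pair $({\cal E}^*, \pi^*_{{\cal E}^*})$ that witnesses the optimal substructure of $F$ as a certificate guaranteeing that the $\omega$-parameterized enumeration encounters a near-optimal candidate. Concretely, I will (i) convert $\pi^*_{{\cal E}^*}$ into a feasible chain ${\cal S}^*_{{\cal E}^*}$ for the auxiliary instance $\hat{\cal I}^{\omega}$ via Lemma~\ref{lem:reformulation}; (ii) split this chain along the crossing set ${\cal Q}_{>m}\setminus{\cal Q}_{>m-1}$ into a prefix ${\cal G}^*$ and complement; (iii) verify that ${\cal G}^*$ is one of the chains enumerated in step~2 and that the complement is a feasible competitor for the resulting residual instance; and (iv) invoke Theorem~\ref{thm:qptas-bounded} on the residual, whose items lie inside the single cluster ${\cal C}_m$ by~\eqref{eq:restricted-items}, to obtain a $(1-\eps)$-approximate replacement.

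The analytic crux is a careful capacity calibration for $\hat{\cal I}^{\omega}$. Two observations align everything. First, every $i \in {\cal E}^*$ satisfies $C_{\pi^*_{{\cal E}^*}}(i) \leq w({\cal E}^*) \leq \omega$, so the $\omega$-cap in $\hat{W}^{\omega}_t = \min\{[W_t - \Delta]^+, \omega\}$ is inactive along $\pi^*_{{\cal E}^*}$. Second, because $\Delta \leq F(m-1, \psi_{m-1}, {\cal Q}_{>m-1})$ and $F(m, \psi_m, {\cal Q}_{>m}) \leq W_T$, we have $\Delta + w({\cal E}^*) \leq W_T$, so even the largest completion time of $\pi^*_{{\cal E}^*}$ fits within the reduced horizon. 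Together these facts make the event $\hat{W}^{\omega}_t \geq C_{\pi^*_{{\cal E}^*}}(i)$ equivalent to $W_t \geq C_{\pi^*_{{\cal E}^*}}(i) + \Delta$, which in turn identifies the profit of $\pi^*_{{\cal E}^*}$ in $\hat{\cal I}^{\omega}$ with $\sum_{i \in {\cal E}^*} \varphi_{\pi^*_{{\cal E}^*}}^{+\Delta}(i)$. The latter is at least $\sum_{i \in {\cal E}^*} \varphi_{\pi^*_{{\cal E}^*}}^{\rightsquigarrow}(i) \geq \psi_m - \psi_{m-1}$, by monotonicity of the shifted-profit operator in the shift together with $({\cal E}^*, \pi^*_{{\cal E}^*}) \in \myextra[\MyAbove{(m, \psi_m, {\cal Q}_{>m})}{(m-1, \psi_{m-1}, {\cal Q}_{>m-1})}]$. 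Applying the forward direction of Lemma~\ref{lem:reformulation} then yields a chain ${\cal S}^*_{{\cal E}^*}$ that is feasible for $\hat{\cal I}^{\omega}$, inserts all of ${\cal E}^*$ (since $\hat{W}^{\omega}_T \geq w({\cal E}^*)$), and inherits this profit lower bound.

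Setting ${\cal G}^* = {\cal S}^*_{{\cal E}^*}|_{{\cal Q}_{>m}\setminus{\cal Q}_{>m-1}}$, Observation~\ref{obs:subset} gives the feasibility of ${\cal G}^*$ for $\hat{\cal I}^{\omega}$; since all items of ${\cal Q}_{>m}\setminus{\cal Q}_{>m-1}$ are inserted by ${\cal S}^*_{{\cal E}^*}$, we have $G^*_T = {\cal Q}_{>m}\setminus{\cal Q}_{>m-1}$, so ${\cal G}^*$ is one of the enumerated chains in step~2. Lemma~\ref{lem:residual-chain} then makes ${\cal S}^*_{{\cal E}^*}|_{\hat{\cal N}^{-{\cal G}^*}}$ a feasible chain for $\hat{\cal I}^{\omega,-{\cal G}^*}$ with profit $\objfunc({\cal S}^*_{{\cal E}^*}) - \objfunc({\cal G}^*)$. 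Since the residual instance is single-cluster by~\eqref{eq:restricted-items} (hence of weight ratio at most $n^{1/\eps}$), Theorem~\ref{thm:qptas-bounded} returns a chain ${\cal R}^{{\cal G}^*}$ with $\objfunc({\cal R}^{{\cal G}^*}) \geq (1-\eps)(\objfunc({\cal S}^*_{{\cal E}^*}) - \objfunc({\cal G}^*))$. Chaining these estimates with Lemma~\ref{lem:union-chain} and the step-3 maximization yields $\objfunc({\cal G}^{\omega}) + \objfunc({\cal R}^{{\cal G}^{\omega}}) \geq (1-\eps)(\psi_m - \psi_{m-1})$. Converting back via Lemma~\ref{lem:reformulation} and using the pointwise inequality $\hat{W}^{\omega}_t \leq [W_t - \Delta]^+$ to transfer the $\hat{\cal I}^{\omega}$-profit of $\pi_{{\cal E}_{\omega}}$ into a lower bound on its $+\Delta$-shifted profit in the original instance then seals the claim.

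The main obstacle I anticipate is exactly this capacity bookkeeping: simultaneously verifying that the $\omega$-cap is benign for $\pi^*_{{\cal E}^*}$, that the reduced capacities of $\hat{\cal I}^{\omega}$ faithfully mimic the $+\Delta$ profit operator on items whose completion times fit within $\omega$, and that every item of ${\cal E}^*$ is actually inserted by the chain extracted from $\pi^*_{{\cal E}^*}$ rather than dropped by the optional step of Lemma~\ref{lem:reformulation}. Once these three ingredients are in place, the remainder is a routine chaining of Observation~\ref{obs:subset}, Lemmas~\ref{lem:union-chain}--\ref{lem:residual-chain}, and Theorem~\ref{thm:qptas-bounded}.
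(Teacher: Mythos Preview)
Your proposal is correct and follows essentially the same route as the paper: build a feasible chain for $\hat{\cal I}^{\omega}$ out of $({\cal E}^*,\pi^*_{{\cal E}^*})$, restrict it to ${\cal Q}_{>m}\setminus{\cal Q}_{>m-1}$ to obtain one of the enumerated chains, apply the bounded-ratio QPTAS to the single-cluster residual, and translate the resulting $\hat{\cal I}^{\omega}$-profit back to a $+\Delta$-shifted profit. The only cosmetic differences are that the paper constructs the chain ${\cal S}_*$ by hand (choosing insertion times that realize $\varphi^{\rightsquigarrow}$) rather than invoking Lemma~\ref{lem:reformulation}, and proves the final profit-transfer step as an equality (your inequality $\hat{W}^{\omega}_t \leq [W_t-\Delta]^+$ suffices).
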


With this result in place, the properties required by Lemma~\ref{lem:approx_rec_eqn} can easily be established, as we show next. 

\begin{lemma}
The item set ${\cal E}_{\omega_{\min}}$ and  permutation ${\pi}_{{\cal E}_{\omega_{\min}}}$ satisfy properties~1 and~2.
\end{lemma}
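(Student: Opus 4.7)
The plan is to verify the two required properties in order.

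For property~1, the goal is to show that $({\cal E}_{\omega_{\min}}, \pi_{{\cal E}_{\omega_{\min}}})$ belongs to $\myextra_{\eps,\Delta}[ \MyAbove{ (m, \psi_m, {\cal Q}_{>m}) }{ (m-1, \psi_{m-1}, {\cal Q}_{>m-1}) } ]$. I would first unpack constraint~1, which demands a decomposition of the form ${\cal E}_m \uplus ({\cal Q}_{>m}\setminus {\cal Q}_{>m-1})$ with ${\cal E}_m \subseteq {\cal C}_m \setminus {\cal Q}_{>m-1}$. By the construction in step~3 of the algorithm, we have ${\cal E}_{\omega_{\min}} = R^{{\cal G}^{\omega_{\min}}}_T \uplus ({\cal Q}_{>m}\setminus {\cal Q}_{>m-1})$, where $R^{{\cal G}^{\omega_{\min}}}_T$ is the set of items introduced by the near-optimal chain for the residual instance $\hat{\cal I}^{\omega_{\min},-{\cal G}^{\omega_{\min}}}$; its item set $\hat{\cal N}^{-{\cal G}^{\omega_{\min}}}$ is contained in ${\cal C}_m\setminus {\cal Q}_{>m-1}$ by~\eqref{eq:restricted-items}, hence so is ${\cal E}_m := R^{{\cal G}^{\omega_{\min}}}_T$. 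The $(\eps,\Delta)$-version of constraint~2 is then precisely the termination criterion of the binary search defining $\omega_{\min}$, so property~1 holds.

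For property~2, the crux is to bound $w({\cal E}_{\omega_{\min}})$ from above by $w({\cal E}^*)$. I would proceed in two steps. First, by applying Lemma~\ref{lem:profit-omega-eps-satisfied} with $\omega = w({\cal E}^*)$ (which is integer-valued under our scaling assumption on weights), the $\omega$-parameterized algorithm succeeds at this value. Hence, by minimality of $\omega_{\min}$ as the smallest integer for which the binary-search condition is met, we obtain $\omega_{\min}\leq w({\cal E}^*)$. Second, the combined chain ${\cal G}^{\omega_{\min}} \cup {\cal R}^{{\cal G}^{\omega_{\min}}}$ is feasible for $\hat{\cal I}^{\omega_{\min}}$ by Lemma~\ref{lem:union-chain}, which implies that the total weight of items introduced by time $T$ is at most $\hat{W}^{\omega_{\min}}_T = \min\{[W_T-\Delta]^+, \omega_{\min}\}\leq \omega_{\min}$. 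Since the disjoint union $R^{{\cal G}^{\omega_{\min}}}_T \uplus ({\cal Q}_{>m}\setminus{\cal Q}_{>m-1})$ has total weight exactly $w({\cal E}_{\omega_{\min}})$, we conclude $w({\cal E}_{\omega_{\min}})\leq \omega_{\min}\leq w({\cal E}^*)$. Combining this with~\eqref{eqn:weight_best} yields property~2.

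The main subtlety I expect is the integer/half-integer discretization in the binary search: passing from ``$\omega_{\min}-1/2$ fails'' to ``$\omega_{\min}\leq w({\cal E}^*)$'' uses integrality of item weights (and therefore of total weights) to align the half-integer thresholds with the integer value $w({\cal E}^*)$. Everything else is bookkeeping over the constructions from steps~1--3 of the algorithm.
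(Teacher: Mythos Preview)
Your proposal is correct and follows essentially the same approach as the paper: property~1 via the decomposition from step~3 together with~\eqref{eq:restricted-items} and the binary-search termination condition, and property~2 via the chain $w({\cal E}_{\omega_{\min}})\leq \hat{W}^{\omega_{\min}}_T\leq \omega_{\min}\leq w({\cal E}^*)$ using Lemma~\ref{lem:union-chain}, Lemma~\ref{lem:profit-omega-eps-satisfied}, and~\eqref{eqn:weight_best}. Your remark on the integrality subtlety is apt and matches how the paper handles the binary search.
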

\begin{proof}
We begin by explaining why $({\cal E}_{\omega_{\min}}, {\pi}_{{\cal E}_{\omega_{\min}}}) \in \myextra_{ \eps,\Delta }[ \MyAbove{ (m, \psi_m, {\cal Q}_{>m}) }{ (m-1, \psi_{m-1}, {\cal Q}_{>m-1}) } ]$, as stated in property~1:
\begin{itemize}
    \item {\em Constraint~1 is satisfied}: We first show that ${\cal E}_{\omega_{\min}} \subseteq ({\cal C}_m \uplus {\cal Q}_{>m}) \setminus {\cal Q}_{>m-1}$ and ${\cal Q}_{>m} \setminus {\cal Q}_{>m-1} \subseteq {\cal E}_{\omega_{\min}}$. Since the item set in question is defined in step~3 as ${\cal E}_{\omega_{\min}} = R^{{\cal G}^{\omega_{\min}}}_T \uplus ({\cal Q}_{>m} \setminus {\cal Q}_{>m-1})$, it suffices to explain why $R^{{\cal G}^{\omega_{\min}}}_T \subseteq {\cal C}_m \setminus {\cal Q}_{>m-1}$. The latter inclusion follows by noting that ${\cal R}^{{\cal G}^{\omega_{\min}}}$ is a feasible chain for the instance $\hat{\cal I}^{\omega_{\min},-{\cal G}^{\omega_{\min}}}$, where the set of items is $\hat{\cal N}^{-{\cal G}^{\omega_{\min}}} \subseteq {\cal C}_m \setminus {\cal Q}_{>m-1}$, as shown in the first inclusion of~\eqref{eq:restricted-items}.
    
    \item {\em Constraint~2 is $(\eps, \Delta)$-satisfied}: To argue that $\sum_{i \in {{\cal E}_{\omega_{\min}}}} \varphi_{ {\pi}_{{\cal E}_{\omega_{\min}}} }^{ +\Delta }( i ) \geq (1 - \eps) \cdot ( \psi_{m} - \psi_{m-1} )$, following Lemma~\ref{lem:profit-omega-eps-satisfied}, there exists a value $\omega \leq w({\cal N})$ for which $\sum_{i \in {{\cal E}_{\omega}}} \varphi_{ {\pi}_{{\cal E}_{\omega}} }^{ +\Delta }( i ) \geq (1 - \eps) \cdot ( \psi_{m} - \psi_{m-1} )$, and the desired claim is implied by the termination condition of our binary search.
\end{itemize}
We now turn our attention to proving that  $w({\cal E}_{\omega_{\min}}) \leq F(m, \psi_m, {\cal Q}_{>m}) - F(m-1, \psi_{m-1}, {\cal Q}_{>m-1})$, as stated in property~2. To this end, since $F ( m, \psi_m, {\cal Q}_{>m} ) = F ( m-1, \psi_{m-1}, {\cal Q}_{>m-1} ) + w ( {\cal E}^*)$ by equation~\eqref{eqn:weight_best}, it remains to argue that $w( {\cal E}_{\omega_{\min}} ) \leq w({\cal E}^*)$. To verify this relation, note that
\begin{eqnarray*}
w \left({\cal E}_{\omega_{\min}} \right) & = & w \left(R^{{\cal G}^{\omega_{\min}}}_T \uplus ({\cal Q}_{>m} \setminus {\cal Q}_{>m-1}) \right) \\
& = & w \left( R^{{\cal G}^{\omega_{\min}}}_T \right) + w \left({G}^{\omega_{\min}}_T \right) \\
& \leq & \hat{W}_T \\
& = & \min \left\{ [W_T - \Delta]^+, \omega_{\min} \right\} \\
& \leq & \omega_{\min} \\
& \leq & w\left({\cal E}^* \right) \ .
\end{eqnarray*}
Here, the second equality holds since ${G}^{\omega_{\min}}_T = {\cal Q}_{>m} \setminus {\cal Q}_{>m-1}$, as stated in step~2. The first inequality follows by observing that the chain ${\cal R}^{{\cal G}^{\omega_{\min}}} \cup {\cal G}^{\omega_{\min}}$ is feasible for $\hat{\cal I}^{\omega_{\min}}$, due to Lemma~\ref{lem:union-chain}, meaning in particular that for period $T$ we have $w( R^{{\cal G}^{\omega_{\min}}}_T ) + w ({G}^{\omega_{\min}}_T ) \leq \hat{W}_T$. The final inequality is derived by combining Lemma~\ref{lem:profit-omega-eps-satisfied} and the termination condition of our binary search. 
\end{proof}

\subsection{Proof of Lemma~\ref{lem:profit-omega-eps-satisfied}} \label{app:proof_lem_profit-omega-eps-satisfied}

\paragraph{Constructing a feasible chain for $\bs{\hat{\cal I}^{\omega}}$.} With respect to the item set ${\cal E}^*$ and permutation $\pi^*_{{\cal E}^*}$, let us define a chain ${\cal S}_*$ for the instance $\hat{\cal I}^{\omega}$ as follows: 
\begin{itemize}
    \item The collection of inserted items is $S_{*T} = {\cal E}^*$.
    
    \item The insertion time $t_i$ of each item $i \in S_{*T}$ is the one maximizing $p_{i t_i}$ over $\{t \in [T] : W_t \geq F(m-1, \psi_{m-1}, {\cal Q}_{>m-1}) + C_{\pi^*_{{\cal E}^*}}(i) \}$. Note that the latter set is indeed non-empty, since
    \begin{eqnarray*}
    C_{\pi^*_{{\cal E}^*}}(i) & \leq & w \left( {\cal E}^* \right) \\
    & = & F \left( m, \psi_m, {\cal Q}_{>m} \right) - F \left( m-1, \psi_{m-1}, {\cal Q}_{>m-1} \right) \\
    & \leq & W_T - F \left( m-1, \psi_{m-1}, {\cal Q}_{>m-1} \right) \ ,
    \end{eqnarray*}
    where the equality above is exactly~\eqref{eqn:weight_best}, and the last inequality holds since $F( m, \psi_m, {\cal Q}_{>m}) \leq W_T$, as assumed in Lemma~\ref{lem:approx_rec_eqn}. 
\end{itemize}
The next claim establishes the feasibility and profit guarantee of ${\cal S}_*$ with respect to $\hat{\cal I}^{\omega}$. Below, $\objfunc_{\omega}(\cdot)$ stands for the profit function with respect to this instance. 

\begin{claim} \label{cl:opt-is-feasible}
The chain ${\cal S}_*$ is feasible for $\hat{\cal I}^{\omega}$, with a profit of $\objfunc_{\omega}({\cal S}_*) = \sum_{i \in {\cal E}^*} \varphi_{ \pi^*_{{\cal E}^*} }^{ \rightsquigarrow }( i )$. 
\end{claim}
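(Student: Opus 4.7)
The plan is to separately verify feasibility of ${\cal S}_*$ with respect to $\hat{\cal I}^{\omega}$ and compute its profit under $\objfunc_{\omega}$. For feasibility, the main inequality to establish is that for every $i \in {\cal E}^*$ and every time period $t$ with $t_i \leq t$, one has $C_{\pi^*_{{\cal E}^*}}(i) \leq W_t - F(m-1, \psi_{m-1}, {\cal Q}_{>m-1})$. This follows directly from how $t_i$ was chosen (the defining set forces $W_{t_i} \geq F(m-1, \psi_{m-1}, {\cal Q}_{>m-1}) + C_{\pi^*_{{\cal E}^*}}(i)$) together with the monotonicity $W_{t_i} \leq W_t$.

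Next I would argue that $w(S_{*t})$ is dominated by the completion time of its latest-positioned item in $\pi^*_{{\cal E}^*}$. Concretely, let $i_{\max} \in \argmax\{\pi^*_{{\cal E}^*}(i) : i \in S_{*t}\}$; then every item of $S_{*t}$ appears at or before position $\pi^*_{{\cal E}^*}(i_{\max})$ in $\pi^*_{{\cal E}^*}$, which gives $w(S_{*t}) \leq C_{\pi^*_{{\cal E}^*}}(i_{\max})$. Combining with the previous bound and the hypothesis $\Delta \leq F(m-1, \psi_{m-1}, {\cal Q}_{>m-1})$ yields $w(S_{*t}) \leq W_t - \Delta$, and hence $w(S_{*t}) \leq [W_t - \Delta]^+$ since weights are non-negative. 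On the other hand, $w(S_{*t}) \leq w({\cal E}^*) \leq \omega$ by the standing assumption of the lemma. Taking the minimum of these two bounds gives exactly $w(S_{*t}) \leq \hat{W}^{\omega}_t$, establishing feasibility.

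For the profit identity, the key observation is that the set of time periods over which $t_i$ is chosen in the construction of ${\cal S}_*$ coincides, up to the convention $p_{i,T+1} = 0$, with the set appearing in the definition of $\varphi_{\pi^*_{{\cal E}^*}}^{\rightsquigarrow}(i) = \max\{ p_{i,t} : t \in [T+1], \, W_t \geq F(m-1, \psi_{m-1}, {\cal Q}_{>m-1}) + C_{\pi^*_{{\cal E}^*}}(i) \}$. Since the feasibility calculation already showed this set is non-empty within $[T]$, and since $t_i$ is a maximizer of $p_{i,\cdot}$ over it, we get $p_{i t_i} = \varphi_{\pi^*_{{\cal E}^*}}^{\rightsquigarrow}(i)$. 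As each item in $S_{*T} = {\cal E}^*$ is inserted exactly once in ${\cal S}_*$, summing the per-item identity gives $\objfunc_{\omega}({\cal S}_*) = \sum_{i \in {\cal E}^*} \varphi_{\pi^*_{{\cal E}^*}}^{\rightsquigarrow}(i)$.

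The only subtle point I would be careful about is the step bounding $w(S_{*t})$ by the completion time of the last-positioned item: this relies on $S_{*t} \subseteq {\cal E}^*$ so that $\pi^*_{{\cal E}^*}$ orders all elements of $S_{*t}$, a fact that is immediate but worth stating explicitly. Everything else amounts to unpacking definitions and applying monotonicity, so I do not anticipate any real obstacle.
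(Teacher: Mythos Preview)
Your proposal is correct and follows essentially the same argument as the paper's proof: both verify feasibility by showing $w(S_{*t}) \leq \omega$ and $w(S_{*t}) \leq [W_t - \Delta]^+$ separately via the defining property of $t_i$ and the assumption $\Delta \leq F(m-1, \psi_{m-1}, {\cal Q}_{>m-1})$, and both derive the profit identity by matching the definition of $t_i$ with that of $\varphi_{\pi^*_{{\cal E}^*}}^{\rightsquigarrow}$. Your explicit use of the last-positioned item $i_{\max}$ to bound $w(S_{*t})$ by $C_{\pi^*_{{\cal E}^*}}(i_{\max})$ makes transparent a step the paper leaves implicit; just remember to treat the trivial case $S_{*t} = \emptyset$ (where no $i_{\max}$ exists) separately.
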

\begin{proof}
To prove the feasibility of ${\cal S}_*$, we first observe that, for every time period $t \in [T]$, 
\begin{equation} \label{eq:guessed-weight}
w\left( {S}_{*t} \right) \leq w({\cal E}^*) \leq \omega \ ,
\end{equation}
where the first inequality holds since $S_{*t} \subseteq S_{*T} = {\cal E}^*$, and the second inequality is precisely what Lemma~\ref{lem:profit-omega-eps-satisfied} assumes. In addition, by definition of ${\cal S}_*$, every item $i \in S_{*t}$ is associated with a completion time of $C_{\pi^*_{{\cal E}^*}}(i) \leq W_t - F(m-1, \psi_{m-1}, {\cal Q}_{>m-1})$. Thus, when the latter difference is negative, we have $S_{*t} = \emptyset$ and therefore $w(S_{*t}) = 0 \leq [W_t - \Delta]^+$. In the opposite case,
\begin{eqnarray}
 w\left( {S}_{*t} \right) & \leq & W_t - F(m-1, \psi_{m-1}, {\cal Q}_{>m-1}) \nonumber\\
& \leq & W_t - \Delta \nonumber \\
& \leq & [W_t - \Delta]^+ \ , \label{eq:overall-feasibility}
\end{eqnarray}
where the second inequality holds since $\Delta \leq F(m-1, \psi_{m-1}, {\cal Q}_{>m-1})$, as assumed in Lemma~\ref{lem:approx_rec_eqn}. Putting together inequalities~\eqref{eq:guessed-weight} and~\eqref{eq:overall-feasibility}, we have $w({S}_{*t} ) \leq \min \{[W_t- \Delta]^+, \omega \} =\hat{W}_t^{\omega}$, meaning that the chain ${\cal S}_*$ is indeed feasible for $\hat{\cal I}^{\omega}$.

Now, to derive the profit guarantee $\objfunc_{\omega}({\cal S}_*) = \sum_{i \in {\cal E}^*} \varphi_{ \pi^*_{{\cal E}^*} }^{ \rightsquigarrow }( i )$, we observe that since $\objfunc_{\omega}({\cal S}_*) = \sum_{ i \in {\cal E}^* } p_{i t_i}$, it suffices to show that $p_{it_i} = \varphi_{ \pi^*_{{\cal E}^*} }^{ \rightsquigarrow }( i )$ for each item $i \in {\cal E}^*$. To this end, note that our choice for the insertion time $t_i$ of each item $i \in {\cal E}^*$ exactly follows the definition of $\varphi_{ \pi^*_{{\cal E}^*} }^{ \rightsquigarrow }( i )$, implying that $p_{it_i} = \varphi_{ \pi^*_{{\cal E}^*} }^{ \rightsquigarrow }( i )$. 
\end{proof}

\paragraph{Concluding the proof.} Having established this claim, we are now ready to show that the item set ${\cal E}_{\omega}$ and permutation ${\pi}_{{\cal E}_{\omega}}$ satisfy $\sum_{i \in {{\cal E}_{\omega}}} \varphi_{ {\pi}_{{\cal E}_{\omega}} }^{ +\Delta }( i ) \geq (1 - \eps) \cdot ( \psi_{m} - \psi_{m-1} )$. For this purpose, similarly to $\objfunc_{\omega}(\cdot)$, let $\newobj_{\omega}(\cdot)$ be the profit function of a given permutation with respect to the instance $\hat{\cal I}^{\omega}$ in its sequencing formulation. With this notation, we obtain the required lower bound by arguing that
\begin{eqnarray*}
\sum_{i \in {{\cal E}_{\omega}}} \varphi_{ {\pi}_{{\cal E}_{\omega}} }^{ +\Delta }( i ) & = & 
\newobj_{\omega} \left( {\pi}_{{\cal E}_{\omega}} \right) \\
& \geq & \objfunc_{\omega} \left({\cal G}^{\omega} \cup {\cal R}^{{\cal G}^{\omega}} \right) \\
& \geq & (1 - \eps) \cdot (\psi_m - \psi_{m-1}) \ .
\end{eqnarray*}
We prove the first equality and second inequality in Claims~\ref{cl:delayed-completion} and~\ref{cl:marginal-profit}, respectively. To understand the first inequality, recall that the permutation ${\pi}_{{\cal E}_{\omega}}$ is constructed in step~3 according to Lemma~\ref{lem:reformulation} for the chain ${\cal G}^{\omega} \cup {\cal R}^{{\cal G}^{\omega}}$, which guarantees $\newobj_{\omega} ( {\pi}_{{\cal E}_{\omega}} ) \geq \objfunc_{\omega} ({\cal G}^{\omega} \cup {\cal R}^{{\cal G}^{\omega}} )$.

\begin{claim} \label{cl:delayed-completion}
$\sum_{i \in {{\cal E}_{\omega}}} \varphi_{ {\pi}_{{\cal E}_{\omega}} }^{ +\Delta }( i )  = \newobj_{\omega}({\pi}_{{\cal E}_{\omega}})$. 
\end{claim}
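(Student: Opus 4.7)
The plan is to unfold both sides to a sum of per-item profits and then show the equality holds term-by-term. On the right-hand side, by definition of the sequencing objective for the instance $\hat{\cal I}^{\omega}$,
\[ \newobj_{\omega}({\pi}_{{\cal E}_{\omega}}) \;=\; \sum_{i \in {\cal E}_{\omega}} \max\bigl\{ p_{i,t}: t \in [T+1] \text{ and } \hat{W}_t^{\omega} \geq C_{{\pi}_{{\cal E}_{\omega}}}(i) \bigr\}, \]
while on the left-hand side, by definition of the $\Delta$-shifted profit,
\[ \sum_{i \in {\cal E}_{\omega}} \varphi^{+\Delta}_{{\pi}_{{\cal E}_{\omega}}}(i) \;=\; \sum_{i \in {\cal E}_{\omega}} \max\bigl\{ p_{i,t}: t \in [T+1] \text{ and } W_t \geq C_{{\pi}_{{\cal E}_{\omega}}}(i) + \Delta \bigr\}. \]
It therefore suffices to prove that, for every item $i \in {\cal E}_{\omega}$ and every period $t \in [T+1]$, the conditions $\hat{W}_t^{\omega} \geq C_{{\pi}_{{\cal E}_{\omega}}}(i)$ and $W_t \geq C_{{\pi}_{{\cal E}_{\omega}}}(i) + \Delta$ are equivalent (with the convention $W_{T+1} = \infty$, which also makes $\hat{W}_{T+1}^{\omega} = \infty$, so the $t = T+1$ case is trivial).

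The key preparatory step is to observe that every completion time appearing in $\pi_{{\cal E}_{\omega}}$ is at most $\omega$. Indeed, by construction in step~3 of the algorithm, ${\cal E}_{\omega} = R_T^{{\cal G}^{\omega}} \uplus ({\cal Q}_{>m} \setminus {\cal Q}_{>m-1})$ coincides with the set of items introduced by the chain ${\cal G}^{\omega} \cup {\cal R}^{{\cal G}^{\omega}}$, which is feasible for $\hat{\cal I}^{\omega}$ by Lemma~\ref{lem:union-chain}. In particular, $w({\cal E}_{\omega}) \leq \hat{W}_T^{\omega} \leq \omega$, so that $C_{{\pi}_{{\cal E}_{\omega}}}(i) \leq w({\cal E}_{\omega}) \leq \omega$ for every $i \in {\cal E}_{\omega}$.

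With this bound in hand, recall that $\hat{W}_t^{\omega} = \min\{[W_t - \Delta]^+, \omega\}$. For the forward direction, if $W_t \geq C_{{\pi}_{{\cal E}_{\omega}}}(i) + \Delta$, then $[W_t - \Delta]^+ = W_t - \Delta \geq C_{{\pi}_{{\cal E}_{\omega}}}(i)$, and since $\omega \geq C_{{\pi}_{{\cal E}_{\omega}}}(i)$ as well, the minimum defining $\hat{W}_t^{\omega}$ is at least $C_{{\pi}_{{\cal E}_{\omega}}}(i)$. Conversely, if $\hat{W}_t^{\omega} \geq C_{{\pi}_{{\cal E}_{\omega}}}(i)$, then in particular $[W_t - \Delta]^+ \geq C_{{\pi}_{{\cal E}_{\omega}}}(i) \geq 0$, so $[W_t - \Delta]^+ = W_t - \Delta$ and $W_t \geq C_{{\pi}_{{\cal E}_{\omega}}}(i) + \Delta$, as required. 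Hence the two per-item maxima coincide, and summing over $i \in {\cal E}_{\omega}$ yields the claim. The only mild subtlety is ensuring that capping by $\omega$ cannot cause the reduced instance to exclude a period that the shifted formulation would permit, which is precisely what the bound $C_{{\pi}_{{\cal E}_{\omega}}}(i) \leq \omega$ prevents; this is the single substantive step of the argument, and the rest is mechanical definition chasing.
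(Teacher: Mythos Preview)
Your proof is correct and mirrors the paper's own argument: both reduce to a per-item equality by showing that the feasibility condition $\hat{W}_t^{\omega} \geq C_{\pi_{{\cal E}_{\omega}}}(i)$ coincides with $W_t \geq C_{\pi_{{\cal E}_{\omega}}}(i) + \Delta$, using the key bound $C_{\pi_{{\cal E}_{\omega}}}(i) \leq w({\cal E}_\omega) \leq \hat{W}_T^{\omega} \leq \omega$ derived from feasibility of ${\cal G}^{\omega} \cup {\cal R}^{{\cal G}^{\omega}}$. One harmless slip: for $t=T{+}1$ you have $\hat{W}_{T+1}^{\omega} = \min\{\infty,\omega\} = \omega$, not $\infty$, but since $C_{\pi_{{\cal E}_{\omega}}}(i) \leq \omega$ the equivalence still holds trivially there.
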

\begin{proof}
Let us use $\varphi^{\omega}_{{\pi}_{{\cal E}_{\omega}}}(i)$ to denote the profit contribution of item $i$ with respect to the permutation ${\pi}_{{\cal E}_{\omega}}$ in the instance $\hat{\cal I}^{\omega}$. In other words, $\varphi^{\omega}_{{\pi}_{{\cal E}_{\omega}}}(i) = \max \{ p_{it} : t \in [T+1] \text{ and } \hat{W}_t \geq C_{{\pi}_{{\cal E}_{\omega}} }( i ) \}$. With this notation, we have $\newobj_{ \omega }({\pi}_{{\cal E}_{\omega}})  = \sum_{i \in {\cal E}_{\omega} } \varphi^{\omega}_{ {\pi}_{{\cal E}_{\omega}} }( i )$, meaning that to prove the desired equality, it remains to show that $\varphi_{ {\pi}_{{\cal E}_{\omega}} }^{ +\Delta }( i ) = \varphi^{\omega}_{ {\pi}_{{\cal E}_{\omega}} }( i )$ for every item $i \in {\cal E}_{\omega}$. To verify this claim, note that
\begin{eqnarray*}
\varphi_{{\pi}_{{\cal E}_{\omega}}}^{+ \Delta}(i) & = & \max \left\{ p_{it} : t \in [T+1] \text{ and } W_t - \Delta  \geq C_{ {\pi}_{{\cal E}_{\omega}} }( i ) \right\} \\
& = & \max \left\{ p_{it} : t \in [T+1] \text{ and } [W_t - \Delta]^+  \geq C_{{\pi}_{{\cal E}_{\omega}} }( i ) \right\}\\
& = & \max \left\{ p_{it} : t \in [T+1] \text{ and } \min \{ [W_t - \Delta]^+, \omega \}  \geq C_{ {\pi}_{{\cal E}_{\omega}} }( i ) \right\}\\
& = & \max \left\{ p_{it} : t \in [T+1] \text{ and } \hat{W}_t \geq C_{ {\pi}_{{\cal E}_{\omega}} }( i ) \right\}\\ 
& = & \varphi^{\omega}_{{\pi}_{{\cal E}_{\omega}}}(i) \ .
\end{eqnarray*}
Here, the second equality holds since $C_{{\pi}_{{\cal E}_{\omega}}}(i)\geq 0$. The third equality is obtained by noting that $C_{{\pi}_{{\cal E}_{\omega}}}(i) \leq w({\cal E}_{\omega}) = w(G^{\omega}_T) + w(R^{{\cal G}^{\omega}}_T) \leq \hat{W}_T \leq \omega$, where the equality follows by definition of ${\cal E}_{\omega}$ and the second inequality is implied by the feasibility of ${\cal G}^{\omega} \cup {\cal R}^{{\cal G}^{\omega}}$ for the instance $\hat{\cal I}^{\omega}$. The last two equalities follow from the definitions of $\hat{W}_t$ and $\varphi^{\omega}_{{\pi}_{{\cal E}_{\omega}}}(i)$. 
\end{proof}

\begin{claim}\label{cl:marginal-profit}
$\objfunc_{\omega}( {\cal G}^{\omega} \cup {\cal R}^{{\cal G}^{\omega}} ) \geq (1 - \eps) \cdot (\psi_m - \psi_{m-1})$. 
\end{claim}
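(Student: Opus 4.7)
The plan is to use the ``optimal'' item set ${\cal E}^*$ and permutation $\pi^*_{{\cal E}^*}$ to construct a specific guess for the chain enumerated in step~2, then invoke the approximation guarantee of Theorem~\ref{thm:qptas-bounded} on the residual instance, and finally compare against ${\cal G}^{\omega}$ via its optimality.

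Specifically, I would start from the chain ${\cal S}_*$ built in Claim~\ref{cl:opt-is-feasible}, whose inserted set equals ${\cal E}^*$, and define ${\cal G}_* = {\cal S}_* |_{{\cal Q}_{>m} \setminus {\cal Q}_{>m-1}}$. Since ${\cal E}^* \supseteq {\cal Q}_{>m} \setminus {\cal Q}_{>m-1}$ by constraint~1 of reachability, ${\cal G}_*$ introduces exactly the items ${\cal Q}_{>m} \setminus {\cal Q}_{>m-1}$; by Observation~\ref{obs:subset} it is feasible for $\hat{\cal I}^{\omega}$, so it is one of the chains enumerated in step~2 of the algorithm. Next, Lemma~\ref{lem:residual-chain} says that ${\cal S}_* |_{\hat{\cal N} \setminus ({\cal Q}_{>m} \setminus {\cal Q}_{>m-1})}$ is a feasible chain for the residual instance $\hat{\cal I}^{\omega, -{\cal G}_*}$ whose profit equals $\objfunc_{\omega}({\cal S}_*) - \objfunc_{\omega}({\cal G}_*)$. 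As step~2 applies the approximation scheme from Theorem~\ref{thm:qptas-bounded} to that residual instance, the resulting chain ${\cal R}^{{\cal G}_*}$ satisfies
\[ \objfunc({\cal R}^{{\cal G}_*}) \;\geq\; (1-\eps)\cdot \bigl(\objfunc_{\omega}({\cal S}_*) - \objfunc_{\omega}({\cal G}_*) \bigr). \]

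To conclude, I would chain together three estimates. First, by the choice of ${\cal G}^{\omega}$ as the maximizer of $\objfunc(\cdot) + \objfunc({\cal R}^{\cdot})$ in step~3, and using Lemma~\ref{lem:union-chain} to split the union profit additively,
\[ \objfunc_{\omega}({\cal G}^{\omega} \cup {\cal R}^{{\cal G}^{\omega}}) \;=\; \objfunc_{\omega}({\cal G}^{\omega}) + \objfunc({\cal R}^{{\cal G}^{\omega}}) \;\geq\; \objfunc_{\omega}({\cal G}_*) + \objfunc({\cal R}^{{\cal G}_*}). \]
Second, plugging in the bound on $\objfunc({\cal R}^{{\cal G}_*})$ and simplifying gives $\objfunc_{\omega}({\cal G}^{\omega} \cup {\cal R}^{{\cal G}^{\omega}}) \geq (1-\eps)\cdot \objfunc_{\omega}({\cal S}_*)$. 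Third, Claim~\ref{cl:opt-is-feasible} identifies $\objfunc_{\omega}({\cal S}_*) = \sum_{i \in {\cal E}^*} \varphi_{\pi^*_{{\cal E}^*}}^{\rightsquigarrow}(i)$, which is at least $\psi_m - \psi_{m-1}$ by constraint~2 of the definition of $\myextra[\frac{(m,\psi_m,{\cal Q}_{>m})}{(m-1,\psi_{m-1},{\cal Q}_{>m-1})}]$, to which $({\cal E}^*,\pi^*_{{\cal E}^*})$ belongs. Combining all three delivers the desired bound.

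I expect the only subtle point to be making sure the profit accounting for the residual chain is consistent between $\objfunc$ and $\objfunc_{\omega}$: since the algorithm of Theorem~\ref{thm:qptas-bounded} is applied to the instance $\hat{\cal I}^{\omega,-{\cal G}_*}$, the approximation guarantee is with respect to the optimum of that instance, and I must observe that ${\cal S}_* |_{\hat{\cal N} \setminus ({\cal Q}_{>m} \setminus {\cal Q}_{>m-1})}$ is a feasible competitor for the same instance, so its profit lower bounds that optimum. This is exactly where Lemma~\ref{lem:residual-chain} is needed; everything else is bookkeeping.
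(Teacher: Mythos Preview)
Your proposal is correct and follows essentially the same route as the paper: define ${\cal G}_* = {\cal S}_*|_{{\cal Q}_{>m}\setminus{\cal Q}_{>m-1}}$, invoke Lemma~\ref{lem:residual-chain} to certify that the complementary restriction of ${\cal S}_*$ is feasible for the residual instance, apply the $(1-\eps)$-approximation of Theorem~\ref{thm:qptas-bounded} there, and use the maximality of ${\cal G}^{\omega}$ together with Claim~\ref{cl:opt-is-feasible} and constraint~2 to conclude. The only cosmetic difference is your mixed use of $\objfunc$ and $\objfunc_{\omega}$, which is harmless since the profit function depends only on the $p_{it}$'s and not on the capacity vector.
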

\begin{proof}
We begin by noting that since $({\cal E}^*, \pi^*_{{\cal E}^*}) \in \myextra[ \MyAbove{ (m, \psi_m, {\cal Q}_{>m}) }{ (m-1, \psi_{m-1}, {\cal Q}_{>m-1}) } ]$, this item set and permutation necessarily satisfy constraint~1, which informs us that ${\cal E}^* \subseteq ({\cal C}_m \uplus {\cal Q}_{>m}) \setminus {\cal Q}_{>m-1}$ and ${\cal Q}_{>m} \setminus {\cal Q}_{>m-1} \subseteq {\cal E}^*$. As a result, recalling that the collection of items introduced by the chain ${\cal S}_*$ is precisely ${\cal E}^*$, it follows that the latter chain can be expressed as ${\cal S}_* = {\cal S}_*|_{{\cal Q}_{>m} \setminus {\cal Q}_{>m-1}} \cup {\cal S}_*|_{{\cal C}_m \setminus {\cal Q}_{>m-1}}$. We remind the reader that, based on the terminology of Section~\ref{sec:qptas-one}, the first term ${\cal S}_*|_{{\cal Q}_{>m} \setminus {\cal Q}_{>m-1}}$ is the restriction of ${\cal S}_*$ to the items in ${\cal Q}_{>m} \setminus {\cal Q}_{>m-1}$, whereas the second term ${\cal S}_*|_{{\cal C}_m \setminus {\cal Q}_{>m-1}}$ is its restriction to ${\cal C}_m \setminus {\cal Q}_{>m-1}$. 

The crucial observation is that, since the chain ${\cal S}_*$ introduces all items in ${\cal Q}_{>m} \setminus {\cal Q}_{>m-1}$, its restriction ${\cal G}_* = {\cal S}_*|_{{\cal Q}_{>m} \setminus {\cal Q}_{>m-1}}$ is necessarily considered in step~2 of our algorithm; moreover, ${\cal S}_*|_{{\cal C}_m \setminus {\cal Q}_{>m-1}}$ constitutes a feasible chain for the residual instance $\hat{\cal I}^{\omega,-{\cal G}_*}$, by Lemma~\ref{lem:residual-chain}. As such, the corresponding chain ${\cal R}^{{\cal G}_*}$ we compute for the latter instance is guaranteed to have a profit of $\objfunc_{ \omega }( {\cal R}^{{\cal G}_*} ) \geq (1 - \eps) \cdot \objfunc_{ \omega }( {\cal S}_*|_{{\cal C}_m \setminus {\cal Q}_{>m-1}} )$. Consequently, since the chain ${\cal G}^{\omega}$ is the one  maximizing $\objfunc_{ \omega }({\cal G}^{\omega}) + \objfunc_{ \omega }( {\cal R}^{{\cal G}^{\omega}} )$ over all chains considered in step~2, we conclude that ${\cal G}^{\omega} \cup {\cal R}^{{\cal G}^{\omega}}$ is a feasible chain for $\hat{\cal I}^{\omega}$ with a profit of
\begin{eqnarray*}
\objfunc_{\omega} \left({\cal G}^{\omega} \cup {\cal R}^{{\cal G}^{\omega}} \right) & = & \objfunc_{\omega} \left({\cal G}^{\omega} \right) + \objfunc_{\omega} \left( {\cal R}^{{\cal G}^{\omega} } \right) \\ 
& \geq & \objfunc_{\omega} \left({\cal G}_* \right) + \objfunc_{\omega} \left( {\cal R}^{{\cal G}_*} \right)  \\
& \geq & \objfunc_{\omega} \left( {\cal S}_*|_{{\cal Q}_{>m} \setminus {\cal Q}_{>m-1}} \right) + (1 - \eps) \cdot \objfunc_{\omega} \left({\cal S}_*|_{{\cal C}_m \setminus {\cal Q}_{>m-1}} \right)  \\
& \geq & (1 - \eps) \cdot \objfunc_{\omega} \left({\cal S}_* \right)  \\
& = & (1 - \eps) \cdot \sum_{i \in {\cal E}^*} \varphi_{ \pi^*_{{\cal E}^*} }^{ \rightsquigarrow }( i ) \label{eq:shifted-opt} \\
& \geq & (1 - \eps) \cdot \left(\psi_m - \psi_{m-1} \right)  \ .
\end{eqnarray*}
Here, the first and second equalities follow from Lemma~\ref{lem:union-chain} and Claim~\ref{cl:opt-is-feasible}, respectively. The last inequality holds since $({\cal E}^*, \pi^*_{{\cal E}^*}) \in \myextra[ \MyAbove{ (m, \psi_m, {\cal Q}_{>m}) }{ (m-1, \psi_{m-1}, {\cal Q}_{>m-1}) } ]$ by definition, and hence, constraint~2 is necessarily satisfied.
\end{proof}

\subsection{Proof of Lemma~\ref{lem:approximate-F}} \label{app:proof_lem_approximate_F}

We prove the lemma by induction on $m$.

\paragraph{Base case: $\bs{m=0}$.} In this case, for any state with $F(0, \psi_0, {\cal Q}_{>0}) \leq W_T$, we actually have $\hat{F}(0, \psi_0, {\cal Q}_{>0}) = F(0, \psi_0, {\cal Q}_{>0})$, by the way terminal states of $\hat{F}$ are handled. In addition, letting $\hat{\pi}_{\hat{S}_0}$ be the permutation of $\hat{S}_0 = {\cal Q}_{>0}$ that attains $\hat{F}(0, \psi_0, {\cal Q}_{>0})$, it follows that $\hat{S}_0 \subseteq {\cal C}_{[1,0]}  \uplus {\cal Q}_{>0}$, ${\cal Q}_{>0} \subseteq \hat{S}_0$, and $\newobj(\hat{\pi}_{\hat{S}_0}) \geq \psi_0$, again by definition.

\paragraph{General case: $\bs{m \geq 1}$.} Let $(m, \psi_m, {\cal Q}_{>m})$ be a state for which $F(m, \psi_m, {\cal Q}_{>m}) \leq W_T$. We first show that $\hat{F}(m, \psi_m, {\cal Q}_{>m}) \leq F(m, \psi_m, {\cal Q}_{>m})$. To this end, recall that the function value $\hat{F}(m, \psi_m, {\cal Q}_{>m})$ is determined by minimizing $\hat{F}(m-1, \psi_{m-1}, {\cal Q}_{>m-1}) + w( \hat{\cal E} )$ over all conceivable states $(m-1, \psi_{m-1}, {\cal Q}_{>m-1})$, where the item set $\hat{\cal E}$ and its permutation $\hat{\pi}_{\hat{\cal E}} : \hat{\cal E} \to [|\hat{\cal E}|]$ are obtained by instantiating Lemma~\ref{lem:approx_rec_eqn} with $\Delta = \hat{F}(m-1, \psi_{m-1}, {\cal Q}_{>m-1})$ and satisfy $(\hat{\cal E}, \hat{\pi}_{\hat{\cal E}}) \in \myextra_{ \eps,\Delta }[ \MyAbove{ (m, \psi_m, {\cal Q}_{>m}) }{ (m-1, \psi_{m-1}, {\cal Q}_{>m-1}) } ]$. Therefore, specifically for the state $(m-1, \psi_{m-1}^*, {\cal Q}_{>m-1}^*) = \mybest(m, \psi_m, {\cal Q}_{>m})$, we have $\Delta = \hat{F}(m-1, \psi_{m-1}^*, {\cal Q}_{>m-1}^*) \leq F(m-1, \psi_{m-1}^*, {\cal Q}_{>m-1}^*)$ by the induction hypothesis. In turn, our auxiliary procedure computes a corresponding item set and permutation $(\hat{\cal E}^*, \hat{\pi}_{\hat{\cal E}}^*) \in \myextra_{ \eps,\Delta }[ \MyAbove{ (m, \psi_m, {\cal Q}_{>m}) }{ (m-1, \psi_{m-1}^*, {\cal Q}_{>m-1}^*) } ]$ with total weight $w(\hat{\cal E}^*) \leq F(m, \psi_m, {\cal Q}_{>m}) - F(m-1, \psi_{m-1}^*, {\cal Q}_{>m-1}^*)$, as guaranteed by Lemma~\ref{lem:approx_rec_eqn}. Consequently, 
\begin{eqnarray*}
\hat{F} \left( m, {\psi}_m, {\cal Q}_{>m} \right) & \leq & \hat{F} \left( m-1, \psi_{m-1}^*, {\cal Q}_{>m-1}^* \right) + w \left( \hat{\cal E}^* \right) \\
& \leq & F \left( m-1, \psi_{m-1}^*, {\cal Q}_{>m-1}^* \right) \\
& & \mbox{} + \left( F \left( m, \psi_m, {\cal Q}_{>m} \right) - F \left( m-1, \psi_{m-1}^*, {\cal Q}_{>m-1}^* \right) \right) \\
& = & F(m, \psi_m, {\cal Q}_{>m}) \ ,
\end{eqnarray*}
which is precisely the required upper bound on $\hat{F}(m, \psi_m, {\cal Q}_{>m})$.

Next, we show that $\hat{F}(m, \psi_m, {\cal Q}_{>m})$ is attained by an item set $\hat{S}_m$ and a permutation $\hat{\pi}_{\hat{S}_m}$ satisfying $\hat{S}_m \subseteq {\cal C}_{[1,m]} \uplus {\cal Q}_{>m}$, ${\cal Q}_{>m} \subseteq \hat{S}_m$, and $\newobj(\hat{\pi}_{\hat{S}_m}) \geq (1 - \eps) \cdot \psi_m$. For this purpose, let 
$(m-1, \psi_{m-1}, {\cal Q}_{>m-1})$, $\hat{\cal E}$, and $\hat{\pi}_{\hat{\cal E}}$ be the conceivable state, item set, and permutation at which $\hat{F}(m, \psi_m, {\cal Q}_{>m}) =\hat{F}(m-1, \psi_{m-1}, {\cal Q}_{>m-1}) + w( \hat{\cal E} )$ is attained, meaning in particular that ${\cal Q}_{>m-1} \setminus {\cal C}_m \subseteq {\cal Q}_{>m}$ by definition of conceivable states, and that $(\hat{\cal E}, \hat{\pi}_{\hat{\cal E}}) \in \myextra_{ \eps,\Delta }[ \MyAbove{ (m, \psi_m, {\cal Q}_{>m}) }{ (m-1, \psi_{m-1}, {\cal Q}_{>m-1}) } ]$ by the way general states of $\hat{F}$ are handled. We proceed by observing that, by the induction hypothesis, $\hat{F}(m-1, \psi_{m-1}, {\cal Q}_{>m-1})$ is attained by an item set $\hat{S}_{m-1}$ and a permutation $\hat{\pi}_{\hat{S}_{m-1}}$ satisfying $\hat{S}_{m-1} \subseteq {\cal C}_{[1,{m-1}]} \uplus {\cal Q}_{>{m-1}}$, ${\cal Q}_{>{m-1}} \subseteq \hat{S}_{m-1}$, and $\newobj(\hat{\pi}_{\hat{S}_{m-1}}) \geq (1 - \eps) \cdot \psi_{m-1}$. With these ingredients, let us define the item set $\hat{S}_m$ and permutation  $\hat{\pi}_{\hat{S}_m}$ as follows:
\begin{itemize}
    \item The item set $\hat{S}_m$ is given by $\hat{S}_m = \hat{S}_{m-1} \uplus \hat{\cal E}$. To understand why $\hat{S}_{m-1}$ and $\hat{\cal E}$ are disjoint, recall that $(\hat{\cal E}, \hat{\pi}_{\hat{\cal E}}) \in \myextra_{ \eps,\Delta }[ \MyAbove{ (m, \psi_m, {\cal Q}_{>m}) }{ (m-1, \psi_{m-1}, {\cal Q}_{>m-1}) } ]$, which implies by constraint~1 that $\hat{\cal E} \subseteq ({\cal C}_m \uplus {\cal Q}_{>m}) \setminus {\cal Q}_{>m-1} \subseteq {\cal C}_{[m,M]} \setminus {\cal Q}_{>m-1}$; however, $\hat{S}_{m-1} \subseteq {\cal C}_{[1,{m-1}]} \uplus {\cal Q}_{>{m-1}}$ by the induction hypothesis. These observations allow us to concurrently argue that $\hat{S}_m \subseteq {\cal C}_{[1,m]} \uplus {\cal Q}_{>m}$ as required, since $\hat{\cal E} \subseteq ({\cal C}_m \uplus {\cal Q}_{>m}) \setminus {\cal Q}_{>m-1} \subseteq {\cal C}_{[1,m]} \uplus {\cal Q}_{>m}$ and since 
    \begin{eqnarray*}
    \hat{S}_{m-1} & \subseteq & {\cal C}_{[1,{m-1}]} \uplus {\cal Q}_{>{m-1}} \\
    & \subseteq & {\cal C}_{[1,m]} \uplus ({\cal Q}_{>{m-1}} \setminus {\cal C}_m) \\
    & \subseteq & {\cal C}_{[1,m]} \uplus {\cal Q}_{>m} \ ,
    \end{eqnarray*}
    where the last inclusion follows by noting that ${\cal Q}_{>{m-1}} \setminus {\cal C}_m \subseteq {\cal Q}_{>m}$ due to state $(m-1, \psi_{m-1}, {\cal Q}_{>m-1})$ being conceivable. In addition, 
    \begin{eqnarray*}
    {\cal Q}_{>m} & \subseteq & {\cal Q}_{>m-1} \uplus ({\cal Q}_{>m} \setminus {\cal Q}_{>m-1}) \\
    & \subseteq & \hat{S}_{m-1} \uplus \hat{\cal E} \\
    & = & \hat{S}_m \ ,
    \end{eqnarray*}
    where the second inclusion holds since ${\cal Q}_{>{m-1}} \subseteq \hat{S}_{m-1}$ by the induction hypothesis and since ${\cal Q}_{>m} \setminus {\cal Q}_{>m-1} \subseteq \hat{\cal E}$, again by constraint~1. 

\item To define the permutation $\hat{\pi}_{\hat{S}_m} : \hat{S}_m \to [| \hat{S}_m |]$, we simply append $\hat{\pi}_{\hat{\cal E}}$ to $\hat{\pi}_{\hat{S}_{m-1}}$. As a result, we obtain a profit of
\begin{eqnarray*}
    \newobj \left( \hat{\pi}_{\hat{S}_m} \right) & = & \newobj \left( \hat{\pi}_{\hat{S}_{m-1}} \right) + \sum_{i \in \hat{\cal E}}  \varphi_{ \hat{\pi}_{\hat{\cal E}} }^{ +w( \hat{S}_{m-1} ) }( i ) \\
    & = & \newobj \left( \hat{\pi}_{\hat{S}_{m-1}} \right) + \sum_{i \in \hat{\cal E}}  \varphi_{ \hat{\pi}_{\hat{\cal E}} }^{ +\Delta }( i ) \\
    & \geq & (1 - \eps) \cdot {\psi}_{m-1} + (1 - \eps) \cdot (\psi_m - {\psi}_{m-1}) \\
    & = & (1 - \eps) \cdot {\psi}_m \ .
\end{eqnarray*}
Here, the second equality holds since $w( \hat{S}_{m-1} ) = \hat{F}(m-1, \psi_{m-1}, {\cal Q}_{>m-1}) = \Delta$. To understand the inequality above, note that $\newobj(\hat{\pi}_{\hat{S}_{m-1}}) \geq (1 - \eps) \cdot {\psi}_{m-1}$ by the inductive hypothesis, and in addition, $\sum_{i \in \hat{\cal E}}  \varphi_{ \hat{\pi}_{\hat{\cal E}} }^{ +\Delta }( i ) \geq (1 - \eps) \cdot (\psi_m - {\psi}_{m-1})$, since $(\hat{\cal E}, \hat{\pi}_{\hat{\cal E}}) \in \myextra_{ \eps,\Delta }[ \MyAbove{ (m, \psi_m, {\cal Q}_{>m}) }{ (m-1, \psi_{m-1}, {\cal Q}_{>m-1}) } ]$ implies that constraint~2 is $(\eps, \Delta)$-satisfied.
\end{itemize}

\end{document}